\documentclass[letterpaper,11pt,onecolumn,oneside]{article}

\usepackage[top=1in,bottom=1in,left=1in,right=1in]{geometry}
\usepackage{amsmath}
\usepackage{amsthm}
\usepackage{tikz}
\usetikzlibrary{calc,shapes,arrows.meta,backgrounds}
\usepackage{todonotes}
\setuptodonotes{inline}
\usepackage{paralist}
\usepackage{thm-restate}
\usepackage{framed}
\usepackage{xifthen}
\usepackage{tabularx}
\usepackage{nicefrac}
\usepackage{thm-restate}
\usepackage{subcaption}
\usepackage[colorlinks=true, allcolors=blue]{hyperref}
\usepackage[capitalize,noabbrev]{cleveref}

\newtheorem{theorem}{Theorem}
\newtheorem{proposition}{Proposition}
\newtheorem{lemma}{Lemma}
\newtheorem{observation}{Observation}
\newtheorem{corollary}{Corollary}

\tikzstyle{ver} = [circle, inner sep=3pt, draw]
\tikzstyle{te} = [circle, inner sep=3pt, draw, label=$t$, fill=brown!60!white]
\tikzstyle{por} = [circle, inner sep=3pt, draw, fill=green!30!white]
\tikzstyle{ter} = [circle, inner sep=3pt, draw,fill=blue!20!white]

\newcommand{\prd}{\textsc{Perfect Resilience Decision}}
\newcommand{\prs}{\textsc{Perfect Resilience Synthesis}}

\newcommand{\nice}{nice}

\newcommand{\rou}[3]{\ensuremath{\pi_{#1}(#3,\{#2\})}}
\newcommand{\rouns}[3]{\ensuremath{\pi_{#1}(#3,#2)}}
\newcommand{\roue}[2]{\ensuremath{\pi_{#1}(#2,\emptyset)}}
\newcommand{\ski}[2]{\ensuremath{\pi_{#1}(#2)}}

\newcommand{\Kt}{\ensuremath{K_{3,3} \setminus e}}
\newcommand{\Kf}{\ensuremath{K_5 \setminus e}}
\newcommand{\sK}{\text{subdivided}~\ensuremath{K_{2,4}}}
\newcommand{\Ktf}{\ensuremath{K_{3,4} \setminus 2e}}

\newlength{\RoundedBoxWidth}
\newsavebox{\GrayRoundedBox}
\newenvironment{GrayBox}[1]%
   {\setlength{\RoundedBoxWidth}{.93\textwidth}
    \def\boxheading{#1}
    \begin{lrbox}{\GrayRoundedBox}
       \begin{minipage}{\RoundedBoxWidth}}%
   {   \end{minipage}
    \end{lrbox}
    \begin{center}
    \begin{tikzpicture}%
       \node(Text)[draw=black!20,fill=white,rounded corners,%
             inner sep=2ex,text width=\RoundedBoxWidth]%
             {\usebox{\GrayRoundedBox}};
        \coordinate(x) at (current bounding box.north west);
        \node [draw=white,rectangle,inner sep=3pt,anchor=north west,fill=white] 
        at ($(x)+(6pt,.75em)$) {\boxheading};
    \end{tikzpicture}
    \end{center}}
    
\newenvironment{defproblemx}[3]{
    \noindent\ignorespaces%
    \FrameSep=6pt%
    \parindent=0pt%
    \vspace*{-1.5em}
    \begin{GrayBox}{\textsc{#1}}%
    \begin{tabular*}{\textwidth}{@{\hspace{.1em}} >{\itshape} p{#2} p{#3} @{}}}{
        \end{tabular*}%
        \end{GrayBox}%
        \ignorespacesafterend
    } 

\newcommand{\defproblem}[3]{
  \begin{defproblemx}{#1}{1.5cm}{.865\textwidth}
    Input:  & #2 \\
    Question: & #3
  \end{defproblemx}
}%

\newcommand{\taskproblem}[3]{
  \begin{defproblemx}{#1}{.9cm}{.9\textwidth}
        Input:  & #2 \\
        Task: & #3
  \end{defproblemx}
}%

\title{Perfect Network Resilience in Polynomial Time}

\author{Matthias Bentert \and Stefan Schmid}

\date{TU Berlin}

\begin{document}

\maketitle

\begin{abstract}
Modern communication networks support local fast rerouting mechanisms to quickly react to link failures: nodes store a set of conditional rerouting rules which define how to forward an incoming packet in case of incident link failures.    
Ideally, such rerouting mechanisms provide \emph{perfect resilience}: any packet is routed from its source~$s$ to its target~$t$ as long as~$s$ and~$t$ are still connected in the underlying graph after the link failures. However, ensuring perfect resilience
is algorithmically challenging as the rerouting decisions at any node~$v$ must rely solely on the \emph{local} information available at~$v$: the link from which a packet arrived at~$v$ (known as the \emph{in-port}), the target of the packet, and the \emph{incident} link failures at~$v$.
Already in their seminal paper at ACM PODC~'12, Feigenbaum, Godfrey, Panda, Schapira, Shenker, and Singla showed that there are instances in which perfect resilience cannot be achieved.
While the design of local rerouting algorithms has received much attention since then, we still lack a detailed understanding of when perfect resilience is achievable.

This paper closes this gap and presents a complete characterization of when perfect resilience can be achieved.
This characterization also allows us to design an~$O(n)$-time algorithm to decide whether a given instance is perfectly resilient and an~$O(nm)$-time algorithm to compute perfectly resilient rerouting rules whenever it is.
Our algorithm is also attractive for the simple structure of the rerouting rules it uses, known as \emph{skipping} in the literature: alternative links are chosen according to an ordered priority list (per in-port), where failed links are simply skipped.
This is also naturally supported in hardware.
The size of such an encoding is in~$\Theta(nm)$ and therefore the running time of our algorithm is optimal when considering skipping rerouting rules.
Intriguingly, our result also implies that in the context of perfect resilience, skipping rerouting rules are as powerful as more general rerouting rules that define the out-port for each set of incident failed links explicitly. This partially answers a long-standing open question by Chiesa, Nikolaevskiy, Mitrovic, Gurtov, Madry, Schapira, and Shenker~[IEEE/ACM Transactions on Networking, 2017] in the affirmative.

While our algorithm is simple, its analysis is intricate. 
A key concept in the analysis are links whose two endpoints also form a node separator.
We prove that removing those links does not change whether a given instance is perfectly resilient or not.
We also show that once all such links are removed, any instance either contains one of four specific rooted minors or belongs to one of three classes. If one of the four rooted minors is contained, then we are dealing with a no-instance (this was previously known for only two of them). Lastly, we show that any instance in any of the three remaining classes is a yes-instance, completing the characterization of perfectly resilient graphs.
We do this by showing that simply following a particular face of a planar embedding of the reduced instance using the right-hand rule until a link directly to the target is found is sufficient.
\end{abstract}

\newpage
\tableofcontents
\newpage

\section{Introduction}

To provide a high dependability, modern communication networks rely on fully decentralized packet rerouting mechanisms
which allow nodes (routers) to respond to link failures \emph{locally} and hence orders of magnitudes faster than traditional approaches. This reduces packet losses and improves throughput. 
Rather than requiring the communication of 
failure information (via the network's control plane), 
these local rerouting mechanisms allow to predefine
conditional failover rules at each node~$v$. 
These rules are static and can depend only on local information 
at a node $v$: they can be conditioned on the incoming port from which a packet
arrives at $v$ (the so-called \emph{in-port}), the target of the packet, the status of links incident to $v$, but \emph{not} on failures in other parts of the network. 

While such local rerouting mechanisms enable a fast reaction, they raise the question of how conditional rerouting rules can be defined to maintain a high resilience under multiple
link failures. 
Ideally, the local rerouting mechanisms always provide for a given target node~$t$
a valid routing from any source $s$ to~$t$ as long as~$s$ and $t$ are still connected in the underlying graph after the failures: 
a property which is known as \emph{perfect resilience} in the literature.
Intuitively, given the locally limited failure information at nodes, this may be difficult to achieve in the presence of multiple link failures: as nodes only have local information, about incident link failures, but not about the status of links in other parts of the network, packets may be forwarded in a loop.
Indeed, it was shown already in the seminal work at ACM PODC~'12~\cite{FGPSSS12} that there are examples where perfect resilience is impossible.
In other words, there is a \emph{price of locality}: local fast rerouting comes at a cost of reduced resilience.
They posed the open problem of characterizing instances that allow for perfectly resilient local rerouting rules and this question was repeated for example in APOCS~'21~\cite{FHPST21} and in a recent keynote talk at DISC~'24~\cite{disc-keynote}.

The design of local fast rerouting algorithms has received
much attention over the last years.
Different variations have been considered and we next give an overview of some of these variants and the current state of the art.
\begin{itemize}
    \item Researchers studied restricted failure scenarios in which the number $f$ of link failures is bounded. In this context, a local fast rerouting scheme which tolerates $f$ link failures is called \emph{$f$-resilient}. It is known that $2$-resilient routing is always possible if the graph is $3$-link-connected~\cite{frr-infocom16} or when rerouting rules can also depend on the source of a packet in addition to the target~\cite{dai2023tight}.

    \item Significant research focused on the special case of highly connected graphs, namely $k$-link-connected graphs.
    In this context, a local fast rerouting scheme which is $(k-1)$-resilient is called \emph{ideally resilient}.
    Note that since $k-1$ link failures never disconnect a $k$-link-connected graph, ideal resilience describes a weaker notion of resilience than perfect resilience.
    It is known that perfect resilience is always possible when~$k \leq 5$~\cite{frr-ton} as well as in scenarios where the rerouting rules can additionally also depend on the source~\cite{frr-ton}.
    However, these results do not provide insights into the more general setting of perfect resilience.

    \item In 2021, Foerster, Hirvonen, Pignolet, Schmid, and Trédan~\cite{FHPST21} showed that instances with outerplanar graphs are always perfectly resilient and allow for simple and efficient rerouting algorithms based on \emph{skipping}: each node stores an ordered priority list of alternative links to try per in-port; failed links are then simply skipped.
    While skipping leads to compact routing tables and is naturally supported by router hardware, it is an open question whether rerouting algorithms which are restricted to skipping (as well as to circular arborescence routing) come at a price of reduced resilience~\cite{frr-ton,FHPST21}.
    We will show in this paper that this is not the case in the context of perfect resilience.

    \item Bentert, Ceylan-Kettler, Hübner, Schmid, and Srba~\cite{bentert2026fast} recently showed that
    \emph{verifying} whether a given forwarding pattern for a network provides perfect resilience is coNP-complete.
    Amusingly, this implies that it is significantly easier to compute a solution (in case it exists) than to verify it.
\end{itemize}

In order to gain additional insights into the structure of perfectly resilient scenarios, tools based on formal methods have been developed to generate routing tables and counterexamples in an automated manner.
However, these algorithms have super-polynomial running times~\cite{frr-syper24,frr-syrep24}. 
The problem was further studied from a randomized perspective~\cite{frr-icalp16,disc21} as well as in more general settings with dynamic packet headers~\cite{stephens2013plinko,yang2014keep,elhourani2016ip,frr-ton}, dynamic node states~\cite{gafni2003distributed,liu2011data}, or the aforementioned scenario where the routing can depend on the source~\cite{FHPST21,dai2023tight}; these models are, however, less practical~\cite{FGPSSS12,FHPST21}.
While the focus of our paper is on the fundamental theoretical structure of perfect resilience, the topic is also studied intensively in the networking community from a more practical perspective~\cite{frr-ankit,holterbach2017swift,foerster2018ti,conext19failover,jensen2020aalwines}, with several protocols being subject to IETF standardization\footnote{E.g.\,\url{https://datatracker.ietf.org/doc/draft-ietf-rtgwg-segment-routing-ti-lfa/}}. We refer the reader to a  recent survey~\cite{frr-survey} for a detailed overview.

Before we present our contributions in detail, let us introduce our model more formally.
The input consists of a graph and a target node~$t$. For each node~$v$ in the graph, we need to compute a rerouting table containing conditional failover rules, which are described by a \textbf{\textit{forwarding function}}~${\pi_v \colon (E_v \cup \{\bot\}) \times 2^{E_v} \rightarrow (E_v \cup \{\bot\})}$, where~$E_v$ denotes the set of all links incident to~$v$.
The forwarding function takes as input a link~$e$ incident to~$v$ (called the in-port; $\bot$ models that the routing starts in~$v$) and a subset of incident links (called the failed links) and outputs an incident link~$e'$ (called the out-port) or~$\bot$ if all incident links fail.
A \textbf{\textit{forwarding pattern}} is a collection~$\pi = (\pi_v)_{v \in V \setminus \{t\}}$ of forwarding functions for each node except for the target node~$t$.
A forwarding pattern, a starting node~$s$, and a set~$F \subseteq E$ of failed links determine a \textbf{\textit{routing}}~$\rho=(\rho_1,\rho_2,\ldots)$ (a sequence of nodes) as follows.
The sequence starts with~$\rho_1 = s$ and the second entry is~$\rho_2 = (\pi_s(\bot,E_s \cap F))$.
For~$i \geq 3$, the~$i$\textsuperscript{th} entry of~$\rho$ is~$\rho_i = \pi_{\rho_{i-1}}(\{\rho_{i-2},\rho_{i-1}\}, E_{\rho_{i-1}} \cap F)$.
The sequence ends if~$\rho_i = t$ at some point.
Otherwise, it never ends.
We can now define the two computational problems we study in this work.

\smallskip

\defproblem{Perfect Resilience Decision}
{A graph~$G=(V,E)$ and a target node~$t \in V$.}
{Is there a forwarding pattern such that for each starting node~$s \in V$ and each set~$F \subseteq E$ of link failures, it holds that the resulting routing contains~$t$ as long as~$s$ and~$t$ are in the same connected component in the graph~$(V, E \setminus F)$?}

\taskproblem{Perfect Resilience Synthesis}
{A graph~$G=(V,E)$ and a target node~$t \in V$.}
{Compute a forwarding pattern such that for each starting node~$s$ and each set~${F \subseteq E}$ of link failures, it holds that the resulting routing contains~$t$ as long as~$s$ and~$t$ are in the same connected component in the graph~$(V,E \setminus F)$ or correctly determine that such a forwarding pattern does not exist.}

An even stronger requirement would be to ask whether a given graph is perfectly resilient for any target node~$t$.
The benefit of considering a specific target node~$t$ (and presumably the reason this variant is introduced in the literature) is that it gives more fine-grained results.
Even if the input does not satisfy this stronger notion, it still allows to construct perfectly resilient forwarding patterns for a subset of all possible target nodes.
Clearly, any algorithm for \prd{} can be used to solve the variant with the stronger requirement at the expense of an additional factor of~$n$ in the running time.
We note that our results generalize to this setting directly and thus this additional factor can be avoided.

\subsection{Our Contribution}

This paper presents a complete characterization of when perfect resilience can be achieved. 
This answers the aforementioned long-standing open question~\cite{FGPSSS12}.
Formally, we show the following.

\begin{restatable}{maintheorem}{structure}
    \label{main:characterization}
    A rooted graph~$(G,t)$ is perfectly resilient if and only if it does not contain the~$\Kf$, the~$\Kt$, the~$\Ktf$, or the~$\sK$ (see \cref{fig:obstructions}) as a rooted minor.
\end{restatable}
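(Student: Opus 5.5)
The proof splits into the two directions, following the outline in the abstract.

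\emph{Necessity.} First I show that perfect resilience is closed under taking rooted minors. Deleting a link can be simulated by failing it. Contracting a link~$uv$ (with~$t$ kept as the root, i.e., contracting into~$t$ if~$t$ is an endpoint) is simulated as follows. The forwarding functions of~$u$ and~$v$ are combined into one function for the merged node. The link~$uv$ stays non-failed, and a packet bouncing inside it is treated as internal movement. Since incident failures at the merged node determine failures at~$u$ and~$v$, this combined rule is local. Connectivity of~$s$ and~$t$ is preserved under this simulation.

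With minor-closure in hand, it suffices to show that each of the four rooted graphs $\Kf$, $\Kt$, $\Ktf$, and $\sK$ is a no-instance. For the two previously known obstructions I cite the literature. For the two new ones I plan an adversarial argument: fix any forwarding pattern and build a failure set~$F$ step by step. At each node the adversary fails the links that would lead toward~$t$, while preserving connectivity through some link that the routing, by the already-fixed choices, never uses. The symmetry of the gadget keeps the case analysis finite. The goal is to close a cycle of in-port/out-port pairs, so that the routing loops forever even though~$s$ and~$t$ remain connected.

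\emph{Sufficiency.} I start with a reduction. Call a link~$uv$ (not incident to~$t$) \emph{bad} if~$\{u,v\}$ is a node separator. I will prove that deleting all bad links preserves both perfect resilience and the absence of the four minors. The routing direction is the substantive one: any packet that would use~$uv$ can be rerouted inside the component of~$G-\{u,v\}$ containing the respective side, using a pattern for the reduced graph augmented with a local rule at~$u$ and at~$v$.

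On the reduced graph I then prove a structural theorem: a reduced instance either contains one of the four rooted minors, or it belongs to one of three explicit classes. All three are planar after adding a suitable structure around~$t$, for instance "$G - t$ outerplanar-like with every neighbour of~$t$ on a common face." I expect to derive this from 2- and 3-connectivity decompositions (SPQR trees) together with rooted-minor arguments that locate each obstruction when a face condition fails. This step is the main obstacle, since the case analysis must be exhaustive.

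Finally, for each class I fix a planar embedding with~$t$ in a suitable position. The skipping rule at each node~$v$ is: forward to~$t$ directly if that link is alive; otherwise take the next alive link clockwise after the in-port (the right-hand rule). I then show that from any~$s$ connected to~$t$, face traversal in~$G-F$ reaches a node adjacent to~$t$ via a live link. Failures only merge faces, and the class structure guarantees that the face being traced contains such a neighbour. This also gives the claimed skipping-rule encoding.
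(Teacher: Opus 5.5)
Your overall architecture matches the paper's: minor-closure plus four traps, deletion of separating links, a trichotomy, and the right-hand rule on each class. However, the reduction lemma you plan to prove is false as stated, because you never first reduce to biconnected instances.

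Take a \Kf{} rooted at $y$: $y$ is adjacent to $u,v,w$, and $u,v,w,p$ form a $K_4$. Attach $t$ as a pendant node at $y$. Contracting $\{t,y\}$ shows that $(G,t)$ is a trap. Yet $\{y,u\}$, $\{y,v\}$ and $\{y,w\}$ are bad, since removing their endpoints isolates $t$. After deleting them, the component of $t$ is the single link $\{t,y\}$, which is trivially perfectly resilient. Even deleting only $\{y,u\}$ already gives a perfectly resilient instance. The block on $\{y,u,v,w,p\}$, rooted at the cut node $y$, is then dipole outerplanar: $y$ has neighbours $v,w$ there, and $K_4$ minus $\{v,w\}$ is outerplanar. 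So \cref{prop:pre2dec,prop:dipole} apply.

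Rerouting ``through the separated side'' needs every component of $G-\{u,v\}$ to touch both $u$ and $v$, and that is exactly what biconnectivity provides. The paper therefore first splits at cut nodes, re-rooting each block not containing $t$ at the cut node closest to $t$ (\cref{prop:pre2dec}; a packet arriving there from a farther block is restarted with in-port $\bot$). Only then does it remove separating links from a biconnected graph (\cref{lem:noseplink}), one at a time. \cref{lem:stay2con,lem:onego} guarantee that 2-connectivity and the separating status of the remaining links survive each deletion.

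Beyond this, the genuinely new mathematics is only announced. The trichotomy is the heart of sufficiency, and you neither name the classes nor give an argument. Your heuristic is also not the right invariant: \Kt{} is a trap although both neighbours of $t$ lie on a common face of $G-t=K_{2,3}$. The classes are: $G-t$ outerplanar, dipole outerplanar graphs, and rings of outerplanar graphs.

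The paper does not use SPQR trees but a direct embedding argument. It places $t$ on the outer face and takes a node $z$ not on the outer face of $G-t$. It lets $C$ be the outer cycle of the block of $G-t$ containing $z$, and $P\subseteq C$ the access and fracture nodes on it; fracture nodes are cut nodes of $G-t$, and both new obstructions contain one. It then branches on $|C|$, $|P|$, fracture nodes, interval-boundedness and shortcut links. In every branch it finds one of the four minors, except the branches that are exactly a ring or a dipole.

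For the two new traps, the step that makes your adversary argument close is \cref{lem:orbit}. At a node with at least two relevant neighbours, every relevant neighbour lies in the orbit of every other. Up to symmetry this fixes the cyclic order at $x$ (and at $u$ in \Ktf). After that, one explicit failure set makes a packet starting at $q$ loop forever.

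Finally, the right-hand rule only works if the start links $e_v$ are chosen so that every packet begins on the designated face. Otherwise a packet started on an inner face of an outerplanar piece containing no access node circles that face forever.
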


\begin{figure}[t]
    \centering
    \hfill
    \begin{subfigure}{0.27\textwidth}
        \begin{tikzpicture}
            \node[ver] at(-3,5) (s) {};
            \node[ver,label=$t$] at(-3,7) (t) {};
            \node[ver] at(-5,4) (u) {} edge(s) edge(t);
            \node[ver] at(-1,4) (v) {} edge(s) edge(t) edge(u);
            \node[ver] at(-3,6) (w) {} edge(s) edge(t) edge(u) edge(v);
        \end{tikzpicture}
        \caption{\Kf}
        \label{fig:k5}
    \end{subfigure}
    \hfill
    \begin{subfigure}{0.27\textwidth}
        \begin{tikzpicture}
            \node[ver,label=$t$] at(3,7) (t) {};
            \node[ver] at(1,5) (u) {} edge(t);
            \node[ver] at(5,5) (v) {} edge(t);
            \node[ver] at(3,6) (p) {} edge(u) edge(v);
            \node[ver] at(3,4) (q) {} edge(u) edge(v);
            \node[ver] at(3,5) {} edge(p) edge(q);            
        \end{tikzpicture}
        \caption{\Kt}
        \label{fig:k3}
    \end{subfigure}
    \hfill\hfill
    
    \medskip

    \hfill
    \begin{subfigure}{0.27\textwidth}
        \begin{tikzpicture}
            \node[ver,label=$t$] at(3,2) (t) {};
            \node[ver] at(3,1) (w) {} edge(t);
            \node[ver] at(1,0) (u) {} edge(t);
            \node[ver] at(5,0) (v) {} edge(t);
            \node[ver] at(3,0) {} edge(u) edge(v) edge(w);
            \node[ver] at(3,-1) {} edge(u) edge(v);
            \node[ver] at(3,-2) {} edge(u) edge(v);
        \end{tikzpicture}
        \caption{\Ktf}
        \label{fig:k34}
    \end{subfigure}
    \hfill
    \begin{subfigure}{0.27\textwidth}
        \begin{tikzpicture}
            \node[ver] at(-4,2) (w) {};
            \node[ver,label=$t$] at(-2,2) (t) {} edge(w);
            \node[ver] at(-5,0) (u) {} edge(w);
            \node[ver] at(-1,0) (v) {} edge(t);
            \node[ver] at(-3,1) {} edge(u) edge(v);
            \node[ver] at(-3,0) {} edge(u) edge(v);
            \node[ver] at(-3,-1) {} edge(u) edge(v);            
        \end{tikzpicture}
        \caption{\sK}
        \label{fig:sk}
    \end{subfigure}
    \hfill\hfill
    \caption{The four structures inhibiting perfect resilience.}
    \label{fig:obstructions}
\end{figure}
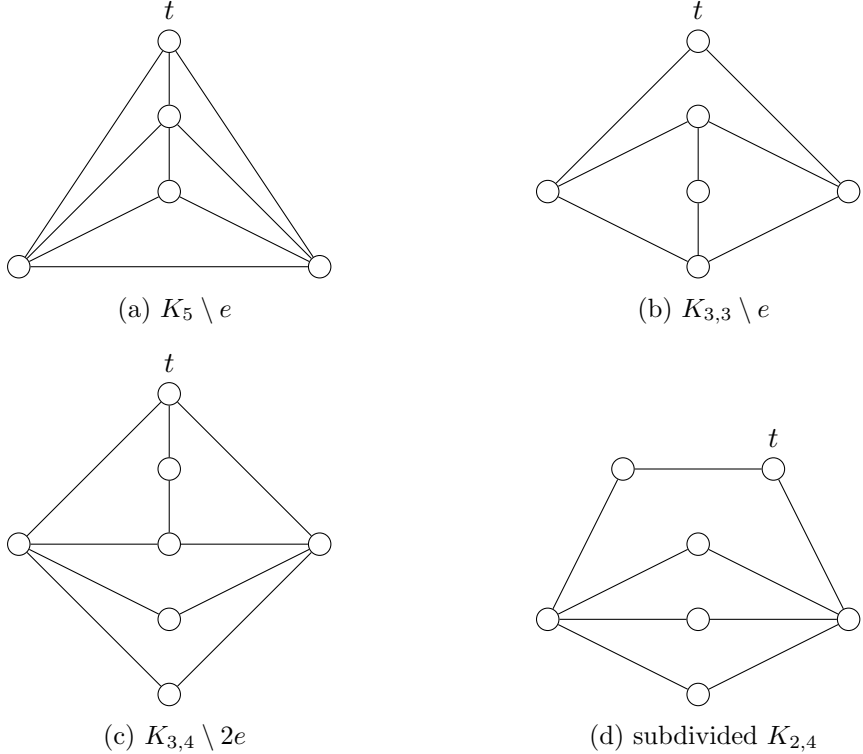

Our characterization is of particular interest in that it implies a polynomial-time algorithm to compute perfectly resilient rerouting rules whenever these exist.
\begin{restatable}{maintheorem}{algo}
    \label{main:algorithm}
    \prd{} can be solved in~$O(n)$ time.
    \prs{} can be solved in~$O(nm)$ time such that the output for perfectly resilient inputs is a collection of skipping forwarding rules.
\end{restatable}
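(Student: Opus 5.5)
The plan is to derive \cref{main:algorithm} from the characterization in \cref{main:characterization}, together with the structural decomposition outlined in the abstract. The decomposition has three parts: deleting links whose endpoints form a node separator preserves perfect resilience; after these deletions, an instance either contains one of the four rooted minors or lies in one of three classes; and every instance in these classes admits a right-hand-rule forwarding pattern along a face of a planar embedding. I would turn each part into an algorithmic step and bound its running time separately.

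For \prd{}, I would first reduce the graph. Links~$\{u,v\}$ for which~$\{u,v\}$ is a separator are exactly the chords/virtual-edge situations exposed by the SPQR-tree (2-separation pairs) of each biconnected component. After a linear-time block--cut and SPQR decomposition (Hopcroft--Tarjan, Gutwenger--Mutzel), we can identify and delete all such links in~$O(n+m)$ time. Components not containing~$t$ are irrelevant, and cut vertices let us restrict to the block structure along paths to~$t$. Next comes the trivial bound: if~$m$ is large relative to~$n$, the reduced graph must contain one of the forbidden rooted minors. Every yes-class is planar-like (an embedding with~$t$ on a designated face), so yes-instances have~$m \le 3n-6$. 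Hence, if~$m > 3n$, we answer ``no'' immediately. This check only needs~$m$ counted up to~$3n+1$, which makes the whole decision procedure~$O(n)$ rather than~$O(n+m)$. After that, we test in linear time whether the reduced rooted graph belongs to one of the three classes. I expect each class to be recognizable via a planarity test plus a check that the relevant nodes lie on a common face with~$t$, for instance by adding an apex adjacent to~$t$ and to the required nodes and testing planarity. We answer ``yes'' exactly in that case, which is correct by \cref{main:characterization} and the class decomposition.

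For \prs{}, we run the decision procedure and, on yes-instances, compute the planar embedding of the reduced graph. For every node~$v$ and every in-port~$e$, we output the priority list that starts with the link to~$t$ if present and then lists the incident links in the cyclic (right-hand) order starting after~$e$. Deleted separator links are reinserted at an appropriate priority, in a way justified by the reduction lemma. Each list has length~$\deg(v)$ per in-port, so the total size is~$\sum_v \deg(v)^2 \le n\sum_v\deg(v) = O(nm)$, and writing it down dominates the running time. Correctness of these skipping rules is exactly the yes-direction of \cref{main:characterization}.

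The main obstacle is the linear-time recognition of the three yes-classes and the linear-time removal of separator links. Separator pairs must be found without testing each link individually, which would cost~$O(m\cdot(n+m))$. I would handle this with the SPQR-tree: a link~$\{u,v\}$ is a separator link iff~$\{u,v\}$ is a separation pair, or the link is a bridge-like P-node member. One subtlety remains, namely whether removing such links can create new separator links, which would require iterating the reduction. I would attempt to show that a single pass over the SPQR-tree suffices, since removing a separator link does not change which vertex pairs are separation pairs of the remaining graph.
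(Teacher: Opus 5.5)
\paragraph{The synthesis part has a genuine gap.}
It sits exactly where the paper does most of its work. You write that deleted separating links are ``reinserted at an appropriate priority, in a way justified by the reduction lemma''. However, the construction in \cref{lem:noseplink} does not produce skipping rules.

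At an endpoint $u$ of $e$ and an in-port outside $S_e$, that construction follows the original list when $e$ has failed. When $e$ is alive, it instead hides all links of $S_e\cap E_u$ except $p_u$ and substitutes $e$ for $p_u$. Suppose the original list is $(a,g,b,p_u,\ldots)$ with $g\in S_e\setminus\{p_u\}$ and $a,b\notin S_e$. With $a$ failed and $g,b,e$ alive, the rule outputs $b$. A skipping list must agree with the original list once $e$ fails, so it is that list with $e$ inserted somewhere. Any such list outputs $g$ or $e$ here, never $b$.

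\cref{main:characterization} does not rescue this either, since it only asserts that some forwarding pattern exists. What the paper needs is the machinery of \cref{sec:prehelp2,sec:pre3}:
\begin{itemize}
\item links in $S_t$ are put first in every list of their endpoint (\cref{prop:pre3syn});
\item the other separating links are inserted one at a time, in the order of \cref{lem:rootlink}, into a hierarchical planar embedding in which each $e$ encloses exactly $S_e$ (\cref{lem:hierembedding});
\item an endpoint $\mu(e)$ is found by simulating the current pattern from a node inside the region cut off by $e$;
\item $e$ goes into a priority segment for in-ports in $S_e$ and into the counterclockwise part elsewhere (\cref{thm:pre4syn}).
\end{itemize}

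This requires the pattern on the reduced graph to be a right-hand pattern for a \emph{specific} embedding with \emph{specific} start links $e_v$. ``Compute the planar embedding'' with arbitrary $\bot$-lists fails. For example, let $G-t$ be a hexagon $1,\ldots,6$ with chords $\{1,3\},\{3,5\},\{1,5\}$ and $N(t)=\{2,4,6\}$; this instance is nice and $G-t$ is outerplanar. A bad $e_1$ makes a packet started at $1$ circle the inner face bounded by $1,3,5$ forever, even with $F=\emptyset$. You need the embeddings and links $e_v$ of \cref{thm:gt,prop:dipole,prop:ring}. You also never glue the per-block and per-component patterns together (\cref{prop:pre2syn,cor:pre1syn}).

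\paragraph{Your decision procedure is a genuinely different and essentially sound route.}
The paper removes no separating links for \prd{}; it remarks that it knows no linear-time way to do so. Instead it tests planarity, bounds branchwidth by $9$ via the $4\times 4$-grid argument, and runs a bounded-branchwidth rooted-minor test for the four obstructions.

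Your SPQR-tree step does find all separating links in linear time: for adjacent $u,v$ in a simple biconnected graph, $G-\{u,v\}$ is disconnected iff $u,v$ are the poles of a P-node. Recognizing the three classes block by block is then correct by \cref{thm:structure,lem:noseplink,prop:pre2dec}. This avoids the heavy minor-testing machinery. Three repairs are needed:
\begin{itemize}
\item Apply the cutoff $m>3n$ only to the component of $t$, and do so before the decomposition, since other components may be dense (cf.\ \cref{fig:planar}). The justification is \cref{prop:planar}, not the shape of the reduced classes.
\item Test every block, rooted at its cut node nearest to $t$.
\item Your single-pass argument is false as stated: deleting the chord $\{u,v\}$ of the $4$-cycle $u,a,v,b$ makes $\{a,b\}$ a new separation pair. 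What you actually need is \cref{lem:onego}, which says that deleting separating links creates and destroys no separating \emph{link}.
\end{itemize}
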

Therein, $n$ and~$m$ denote the number of nodes and links in the input graph, respectively.
Our algorithm also implies that the compact encoding of skipping forwarding patterns is as powerful as general (exponential-size) encodings in the context of perfect resilience.
This partially answers another major open research question~\cite{frr-ton,FHPST21,disc-keynote}.

We next discuss that both of our algorithms are asymptotically optimal.
Simple examples show that the output for an instance of \prs{} with skipping priority lists can have size~$\Theta(nm)$.
We present such an example in \cref{fig:planar} in \cref{sec:prevresults}.
This shows that the running time of our algorithm is optimal in the context of skipping priority lists.
Amusingly, if the input graph is connected, then the running time can be improved to~$O(n^2)$.
For \prd, the running time is clearly optimal.
It is even a little surprising that the running time is in fact faster than~$O(n+m)$ time and that not all links of the input need to be read.
This is a consequence of using an algorithm with the same property to test whether a connected component in a given graph is planar or not.

\subsection{Our Methods}

We next give a high-level overview of the proofs of our two main results.

\paragraph{Structure.}
We first show (in \cref{sec:notpr}) that the \Ktf{} and the \sK{} are not perfectly resilient.
Combined with the known results that the \Kf{} and the \Kt{} are not perfectly resilient and the result that if a rooted graph~$(G,t)$ contains a rooted minor that is not perfectly resilient, then also~$(G,t)$ is not~\cite{FHPST21}, this shows that any rooted graph which contains one of the four mentioned structures as a rooted minor is not perfectly resilient.
Showing the reverse---any rooted graph which is not perfectly resilient contains one of the four structures as a rooted minor---is trickier.
We first show (in \cref{sec:pre}) that we can assume without loss of generality that the input graph has some nice properties.
These include that it is planar (already known), biconnected (new), and does not contain \emph{separating links}, that is, links such that removing both endpoints of the link disconnects the graph (new).
Ensuring the last property is the key novel idea compared to previous approaches.
To prove that such separating links can be ignored, we consider the interplay between routings corresponding to different starting nodes and different sets of failed links.
With the above properties, we then show (in \cref{sec:traps}) via an intricate analysis that if a rooted graph is not perfectly resilient, then it contains one of the four mentioned structures as a rooted minor.

\paragraph{Decision Problem.}
The linear-time algorithm for \prd{} follows a similar strategy to the above. 
We first check whether the connected component containing~$t$ is planar and if so, then we compute the connected component and a planar embedding for it in~$O(n)$~time.
If the connected component is not planar, then we can immediately conclude that the input is not perfectly resilient.
We then show (in \cref{sec:alg}) that if the connected component contains a $4 \times 4$ grid as a minor (ignoring~$t$), then it cannot be perfectly resilient as it contains \Kf{} as a rooted minor.
Using known results due to Gu and Tamaki~\cite{GT12}, this bounds the branchwidth (and therefore treewidth) of the connected component containing~$t$ in all yes-instances to a constant.
This allows us to use known linear-time algorithms for finding rooted minors in graphs of constant branchwidth/treewidth.
While the well-known result due to Courcelle suffices for a linear-time algorithm, the hidden constants would be quite large and fortunately, we can rely on a significantly faster algorithm due to Adler, Dorn, Fomin, Sau, and Thilikos~\cite{ADFST11}.

\paragraph{Synthesis Problem.}
We next present our main result, the~$O(nm)$-time algorithm for \prs.
First, we present the three main steps on a high level and then present them in a little more detail.
The first step (discussed in full detail in \cref{sec:pre}) consists of performing different pre- and postprocessing procedures to simplify the structure.
We mention that in order to show that skipping priority lists can always be used, we require the skipping priority lists for the simplified graph have a particularly simple structure related to a planar embedding of the graph.
The second step is then to show (in \cref{sec:traps}) that if the simplified graph is perfectly resilient, then it has one of three particular structures.
The third step is to show (in \cref{sec:newpr}) how to construct perfectly resilient skipping priority lists with the required simple structure for each rooted graph of each of these three structures in~$O(nm)$ time. 

Let us give a few more details for each of the three main steps of our algorithm for \prs, starting with the third.
We first introduce two classes of rooted graphs and show (in \cref{sec:newpr}) that each rooted graph in these classes is perfectly resilient.
Moreover, we show that \emph{right-hand forwarding patterns}---the aforementioned restriction of skipping priority lists---is sufficient in these cases.
It was previously known that these also suffices for instances in which~$G-t$ (the graph where~$t$ and all incident links are removed) is outerplanar~\cite{FHPST21}.
This forms the third class of graphs which can remain after performing our preprocessing.

Let us now discuss the first main step of our algorithm.
\cref{tab:pre} gives an overview of the different preprocessing rules which we discuss in the following.
\begin{table}[t]
    \centering
    \begin{tabular}{l r}
        connected $\&$ planar & skipping $\rightarrow$ skipping \\
        biconnected & skipping $\rightarrow$ skipping\\
        no separating links & right-hand $\rightarrow$ skipping \\ 
    \end{tabular}
    \caption{An overview of the different pre- and postprocessing steps we employ. The left column describes what the reduction rule achieves and $A \rightarrow B$ in the right column stands for the following: If the reduced instance produces a forwarding pattern of type~$A$, then the postprocessing step produces forwarding patterns of type~$B$.}
    \label{tab:pre}
\end{table}
To this end, let~$(G,t)$ be the input.
We show in \cref{sec:pre} how to compute a connected planar graph~$G_1$ (this will be the connected component of the input graph containing~$t$) such that~$(G_1,t)$ is perfectly resilient if and only if~$(G,t)$ is.
Moreover, given a perfectly resilient skipping forwarding pattern for~$(G_1,t)$, we show how to compute a perfectly resilient skipping forwarding pattern for~$(G,t)$ in~$O(nm)$ time.
We then show how to handle cut nodes, that is, nodes whose removal disconnects the graph.
We construct in~$O(n^2)$ time a set of instances~$(G'_i,t_i)$ of total size~$O(n)$ such that~$(G_1,t)$ is perfectly resilient if and only if all instances~$(G'_i,t_i)$ are.
Moreover, given a perfectly resilient skipping forwarding pattern for all instances, we show how to compute a perfectly resilient skipping forwarding pattern for~$(G_1,t)$ in~$O(n^2)$ time.
We then solve each of the constructed instances individually in~$O(|G'_i|^2)$ time.
This results in an overall running time of~$O(n^2)$.
Let~$(G_2,t')$ be one of these instances.
The final preprocessing rule deals with separating links.
We show how to compute the set~$S$ of separating links in~$G_2$ in~$O(n^2)$ time.
Let~$G_3$ be the result of removing all links in~$S$ from~$G_2$.
We show that~$(G_3,t')$ is perfectly resilient if and only if~$(G_2,t')$ is.
Moreover, given a perfectly resilient right-hand forwarding pattern, we construct a perfectly resilient skipping forwarding pattern for~$(G_2,t')$ in~$O(n^2)$ time.
We then show (in \cref{sec:traps}) that if~$(G_3,t')$ is perfectly resilient, then~$G_3 - t'$ is outerplanar or~$(G_3,t')$ belongs to one of the two classes of rooted graphs we analyzed in the beginning.
We note that this statement is not true for general instances~$(G,t)$ that have not been preprocessed.
We can check in~$O(n^2)$ time whether~$(G_3,t')$ fulfills any of these requirements (and also which case applies).
This allows us to compute in~$O(n^2)$ time a perfectly resilient right-hand forwarding pattern for~$(G_3,t')$ or correctly conclude that~$(G_3,t')$ is not perfectly resilient and thus the input~$(G,t)$ is a no-instance.
If~$(G_3,t')$ is perfectly resilient, then we use the above results to compute a perfectly resilient skipping forwarding pattern for~$(G_2,t')$.
Given perfectly resilient skipping forwarding patterns for all instances~$(G'_i,t)$, we compute a perfectly resilient skipping forwarding pattern for~$(G_1,t)$ and from that a perfectly resilient skipping forwarding pattern for~$(G,t)$.
The overall running time is in~$O(nm)$.
Interestingly, the slowest part is the construction of arbitrary skipping priority lists for nodes in different connected components than~$t$.
Assuming the input graph is connected, our algorithm runs in~$O(n^2)$ time.

\subsection{Future Work}
While perfect resilience is the most basic and fundamental problem variant, we believe that our approach allows to characterize additional related models studied in the literature. In particular, the networking community sometimes also considers more powerful router models, in which nodes can match the source or dynamically rewrite header bits. While requiring additional hardware capabilities, such additional information can help realize resilient routings even in scenarios where perfect resilience can otherwise not be achieved.
In order to see how our framework can be applied in such alternative contexts, let us revisit the three steps of our algorithm:
First, one should analyze whether yes-instances (of the decision problem) are closed under taking rooted minors.
This is already known for source matching~\cite{FHPST21} and we believe that this also holds for any constant number of header bits.
It then follows from the graph minor theorem of Robertson and Seymour that there is a polynomial-time algorithm for recognizing these.
However, the theorem is non-constructive and characterizing the set of minimal obstructions is the second step.
Here, we believe that our preprocessing, especially removing separating links, will again be useful for the final analysis.
We mention in passing that our preprocessing for the decision problem directly applies to the setting with source matching as well.
The third and final step is to construct perfectly resilient forwarding patterns whenever this is possible.
Our work shows that skipping priority lists are as powerful as arbitrary forwarding functions in the context of perfect resilience in the setting we consider.
We conjecture that this also holds in these other two settings.

\section{Discussion and Overview of Previous Results}
\label{sec:prevresults}
In this section, we state and discuss results from the existing literature that will later be relevant for proving our main results.
We start with results regarding perfect resilience and graph minors and then continue with graph-theoretic results regarding planar graphs.
We next give informal definitions for some concepts required in the following discussion.
For formal definitions, we refer to \cref{sec:prelim}.
A \emph{rooted graph} is a graph with a set of specified special nodes.
In this paper, we only consider rooted graphs with a single root (the target node~$t$).
We call a rooted graph~$(G,t)$ that is not perfectly resilient a \emph{trap}.
A trap is \emph{minimal} if it does not contain any other trap as a rooted minor.

\subsection{Perfect Resilience and Graph Minors}
Most of the relevant results for perfect resilience are due to Foerster, Hirvonen, Pignolet, Schmid, and Trédan~\cite{FHPST21}.
They showed among other things the following.
\begin{proposition}[\cite{FHPST21}]
    \label{prop:subgraph}
    If a rooted graph~$(G,t)$ is perfectly resilient, then so is~$(G',t)$ for any subgraph~$G'$ of~$G$ that contains~$t$.
\end{proposition}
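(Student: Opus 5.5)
The plan is to simulate $G'$ inside $G$: take a perfectly resilient forwarding pattern $\pi$ for $(G,t)$ and build one for $(G',t)$ by treating every link of $E(G)\setminus E(G')$ as permanently failed. Write $E'=E(G')$, and for a node $v$ of $G'$ let $E_v$ and $E'_v$ be its incident links in $G$ and in $G'$. Note $E'_v\subseteq E_v$.

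For each node $v\neq t$ of $G'$, define
\[
\pi'_v(e,F') \;=\; \pi_v\bigl(e,\; F'\cup (E_v\setminus E'_v)\bigr)
\]
for every in-port $e\in E'_v\cup\{\bot\}$ and every $F'\subseteq E'_v$. This is a valid forwarding function on $G'$. The reason is that a perfectly resilient $\pi_v$ never outputs a failed link whenever a non-failed incident link exists, so the output lies in $E'_v\setminus F'$, or is $\bot$ when all of $E'_v$ has failed. If $\pi$ is allowed to do strange things on inputs that never occur, one first normalizes it: replace any output that is a failed link by an arbitrary live link. Such outputs are never traversed in a routing, so this change preserves perfect resilience.

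Next, fix a start $s\in V(G')$ and failures $F'\subseteq E'$ such that $s$ and $t$ are connected in $G'-F'$. Set $F=F'\cup(E(G)\setminus E')$, and also treat nodes of $G$ outside $G'$ as isolated by failing all their links.

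\begin{itemize}
\item The graph $(V(G),E(G)\setminus F)$ has edge set exactly $E'\setminus F'$. Hence $s$ and $t$ are connected in $G-F$ as well.
\item By induction along the sequence, the routing of $\pi'$ in $G'$ from $s$ under $F'$ coincides step by step with the routing of $\pi$ in $G$ from $s$ under $F$. At each node $v$ the local failure set $E_v\cap F$ equals $F'_v\cup(E_v\setminus E'_v)$, and the in-port is a link of $G'$ because the routing only uses non-failed links.
\item Since $\pi$ is perfectly resilient, that routing reaches $t$, and therefore so does the routing of $\pi'$.
\end{itemize}

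The only delicate point is the bookkeeping that the routing never leaves $G'$ and that the local failure information seen by $\pi_v$ is exactly what $\pi'_v$ feeds in. Both follow because every link outside $G'$ is declared failed, so no step of the induction is a real obstacle.
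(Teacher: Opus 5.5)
Your proof is correct and takes essentially the same approach as the source: the paper does not reprove this proposition but cites it from prior work, and the standard argument there is the same simulation you give. You declare every link of $G$ outside $G'$ permanently failed and note that each $\pi_v$ only sees $E_v\cap F = (E'_v\cap F')\cup(E_v\setminus E'_v)$, so the two routings coincide step by step. Your normalization caveat is harmless but unnecessary: in this model an out-port is always a non-failed incident link (or $\bot$ when all incident links fail), so $\pi'_v$ is automatically a valid forwarding function for $G'$.
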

\begin{proposition}[\cite{FHPST21}]
    \label{prop:contraction}
    If a rooted graph~$(G,t)$ is perfectly resilient, then so is~$(G',t)$ for any graph~$G'$ that is the result of contracting a link in~$G$.
\end{proposition}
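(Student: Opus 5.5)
We prove the statement by simulation: from a perfectly resilient forwarding pattern~$\pi$ for~$(G,t)$ we construct one for~$(G',t)$, where~$G'$ arises by contracting a link~$e=\{u,w\}$ into a new node~$x$. If~$t\in\{u,w\}$, we identify~$x$ with~$t$. Every node other than~$x$ keeps its forwarding function. The only changes are that a link~$\{v,u\}$ or~$\{v,w\}$ is renamed to~$\{v,x\}$, and parallel links created by the contraction are merged. A merged link fails exactly when all of its preimages fail. For definiteness, we treat a merged link as the preimage of the alive original link that~$\pi_v$ would choose.

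The key step is to define~$\pi'_x$. Its local view is the set of links incident to~$u$ or~$w$ in~$G$, excluding~$e$. The rule for~$\pi'_x$ simulates~$\pi$ on the subpath inside~$\{u,w\}$ with~$e$ assumed alive. A packet arriving at~$x$ via a link~$f$ is treated as arriving at the corresponding endpoint, say~$u$, via~$f$. We then apply~$\pi_u$ with the failure set restricted to~$E_u$. Whenever~$\pi$ would send the packet across~$e$, we continue at~$w$ with in-port~$e$, and so on. We stop as soon as the simulated packet leaves~$\{u,w\}$ on a link other than~$e$, and~$\pi'_x$ outputs that link. This is well defined because the simulation depends only on the in-port and the failures incident to~$u$ and~$w$, which are exactly the local information available at~$x$. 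If~$t\in\{u,w\}$, nothing has to be defined, since~$x=t$.

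One subtlety must be handled: the simulation could loop forever between~$u$ and~$w$. If it loops, we argue that the corresponding routing in~$G$ with failure set~$F\cup$ (preimage of~$F'$) never reaches~$t$, since it never leaves~$\{u,w\}$. Perfect resilience of~$\pi$ then forces~$u$, and with it~$w$ via~$e$, to be disconnected from~$t$ in~$G-F$. This component structure transfers to~$G'$. So in that case~$x$ is disconnected from~$t$, and any output (for example the first alive link) is acceptable.

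Correctness then follows by comparing routings. Fix~$s$ and~$F'$ in~$G'$, and let~$F$ be the preimage of~$F'$ in~$G$, with~$e\notin F$. Two facts make the comparison work. First, contracting the alive link~$e$ preserves connectivity of~$s$ and~$t$, so~$s$ and~$t$ are connected in~$G-F$ exactly when they are connected in~$G'-F'$. Second, the routing of~$\pi$ in~$G$, with each maximal segment inside~$\{u,w\}$ collapsed to the single node~$x$, is exactly the routing of~$\pi'$ in~$G'$. If~$s=x$, we start the simulation at~$u$ with in-port~$\bot$. Hence if the routing in~$G$ reaches~$t$, so does the routing in~$G'$.

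The main obstacle is the bookkeeping for merged parallel links. When a node~$v$ is adjacent to both~$u$ and~$w$, the two links become one, and~$v$ cannot tell which of them is meant. Our fix is the failure convention fixed above: the merged link is alive if and only if at least one preimage is alive. When we construct~$F$ as a preimage, we take exactly one surviving preimage and let the other fail; this is harmless because connectivity is unchanged. We also choose that survivor to be consistent with the in-port the simulation at~$x$ uses. Checking that one such choice of~$F$ works simultaneously for~$v$'s decisions and~$x$'s simulation is the delicate point. We would handle it by choosing, per merged pair, the preimage that~$\pi_v$ selects when both are alive, and otherwise the unique alive one.
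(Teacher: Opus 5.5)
Your overall strategy is the standard argument behind the cited result. You let the contracted node $x$ simulate $\pi_u$ and $\pi_w$ with $e$ alive, collapse the routing in $G$ onto $x$, and identify $x$ with $t$ when $e$ is incident to $t$. The paper only cites this result, adding precisely your remark about making the new node the target. Your loop case is also fine, once read as follows: a loop that the routing from $s$ actually reaches means that routing in $G$ fails, so $s$ is cut off from $t$.

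The gap is in the step you yourself call delicate, the merged parallel links, and your proposed resolution does not work. Take a common neighbor $v$ of $u$ and $w$. The single link $\{v,x\}$ must stand for one \emph{fixed} original link throughout the routing, for two reasons. First, a packet arriving at $x$ via $\{v,x\}$ must start the simulation at $u$ or at $w$, but $x$ sees neither $v$'s in-port nor $v$'s failures. Second, a packet arriving at $v$ from $x$ needs an original in-port to feed to $\pi_v$, but $v$ does not see the state of $x$'s simulation.

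``The preimage that $\pi_v$ selects when both are alive'' depends on $v$'s in-port and on the other failures at $v$. So different visits to $v$ can pick different preimages. Then no single failure set $F$ in $G$ has a routing that collapses to the routing in $G'$, and $x$ cannot tell where the packet entered. Your initial convention, treating the merged link as ``the alive original link that $\pi_v$ would choose'', is in-port dependent in the same way and fails for the same reason. The fallback ``otherwise the unique alive one'' has no meaning here, because $F$ is built from $F'$ and $G'$ contains only one such link.

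The fix is a convention that depends on nothing but $v$. For every common neighbor $v$ of $u$ and $w$, identify $\{v,x\}$ with $\{v,u\}$ and always regard $\{v,w\}$ as failed. Apply this both in the local failure set you pass to $\pi_v$ when defining $\pi'_v$ and in $x$'s simulation of $\pi_w$. Correspondingly, let $F$ be the preimage of $F'$ together with all these links $\{v,w\}$, and keep $e \notin F$.

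Because $e$ is alive, $s$ and $t$ are connected in $G\setminus F$ whenever they are connected in $G'\setminus F'$. Under this convention $\pi_v$ never outputs $\{v,w\}$, and the simulation never leaves $w$ towards $v$. Entry into and exit from $x$ through $\{v,x\}$ are then unambiguous, and your collapse argument goes through unchanged. Any such fixed choice also handles $t$ being a common neighbor of $u$ and $w$, and common neighbors of the two endpoints when $e$ is incident to $t$.
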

\begin{proposition}[\cite{FHPST21}]
    \label{prop:k5k33notpr}
    The complete graph~$K_5$ on five nodes and the complete bipartite graph~$K_{3,3}$ with three nodes on each side are not perfectly resilient independent of where the target node~$t$ is.
\end{proposition}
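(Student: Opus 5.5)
By symmetry it suffices to treat one target position in each graph: $K_5$ and $K_{3,3}$ are vertex-transitive, so every choice of~$t$ gives an isomorphic rooted graph. The plan is a direct adversary argument. Fix an arbitrary forwarding pattern~$\pi$. We will construct a starting node~$s$ and a failure set~$F$ such that $s$ and~$t$ stay connected in $(V,E\setminus F)$, yet the routing determined by~$\pi$, $s$ and~$F$ enters a cycle that avoids~$t$. Since routing is deterministic and each step depends only on the current node, the in-port and the incident failures, it is enough to exhibit a repeated pair (node, in-port) before~$t$ is reached.

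For $K_5$ with $t$ and the other nodes $a,b,c,d$, the basic move is to fail all links to~$t$ except one, say $\{d,t\}$. Then $t$ can only be entered through~$d$, and we try to keep the packet inside $\{a,b,c\}$ (plus possibly visits to~$d$ that we force to leave again). Concretely:
\begin{itemize}
\item Start at~$a$ with $\{a,t\}$ failed. The value of $\pi_a$ on the remaining incident links determines the first hop.
\item At each node we reach, we decide adaptively which further incident links (other than $\{d,t\}$) to fail, so that the next out-port keeps us away from~$d$, or, if~$d$ is reached, so that $\pi_d$ does not choose $\{d,t\}$ from that in-port.
\end{itemize}
The key fact that makes adaptive choices possible is locality. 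The behaviour of $\pi_v$ on a visit depends only on $E_v\cap F$. So a failure on a link $\{x,y\}$ may be added later, provided that neither endpoint $x$ nor $y$ has already been visited with a different local failure set. In particular, this is safe if the link has so far been incident only to nodes that have not yet been visited. For $K_{3,3}$ with $t$ on one side, the same scheme applies: there are two other nodes on $t$'s side and three on the opposite side, all three adjacent to~$t$, so we again keep exactly one $t$-link alive.

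The proof then becomes a finite case analysis on the first few out-ports chosen by~$\pi$, which are each one of at most three or four options. In every branch, I would exhibit a closed walk that returns to an already visited node with the same in-port, while the surviving graph still contains a path from~$s$ through~$d$ to~$t$. Throughout, I would record in every branch the local failure set at each visited node and check that it never changes. Symmetry among the non-target nodes should cut the cases down substantially: we may relabel so that the first hops go to canonical nodes.

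The main obstacle is exactly this consistency bookkeeping. Revisiting a node with a different incident failure set invalidates the earlier trace, and the failures that steer the packet away from~$d$ must not disconnect~$d$ from~$s$. If the direct case analysis becomes unwieldy, my fallback is to use a known trap instead. I would first show that $(K_5-e,t)$, respectively $(K_{3,3}-e,t)$ with a suitable edge~$e$, is a trap. Since these are subgraphs of $K_5$ and $K_{3,3}$, \cref{prop:subgraph} (in contrapositive form) then finishes the proof, and the smaller graphs have fewer cases.
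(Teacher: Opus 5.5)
Your reduction to a single target position and your adaptive-adversary framework are sound. The routing state is the pair (current node, in-port), so a repeated pair certifies a loop, and a link's status can be decided late as long as neither endpoint has been visited.

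\textbf{The gap.} The proposal stops exactly where the proof has to start. The claim that ``in every branch'' you can close a loop while keeping~$s$ connected to~$t$ is asserted, not shown, and nothing you wrote makes the branches close. $\pi_v$ is an arbitrary function of the in-port and of the local failure set, which you yourself are choosing, so ``at most three or four options per hop'' does not control the analysis.

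The escape hatch of making~$\pi_d$ avoid~$\{d,t\}$ is also unavailable. If~$\pi_d$ lists~$t$ first, every visit to~$d$ ends the routing. The loop must therefore live entirely on nodes whose $t$-link has failed, and which failure set achieves this depends on~$\pi$, including its $\bot$-entries.

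Your fallback has the same status. It is in fact the route the paper relies on: the paper gives no proof of this proposition but cites~\cite{FHPST21}, and the edge-deleted versions in \cref{prop:notres} imply it via \cref{prop:subgraph}. But you do not show that the smaller graph is a trap. Moreover, the deleted link must be incident to~$t$; otherwise $G-t$ is outerplanar and \cref{thm:gt} makes the rooted graph perfectly resilient.

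\textbf{The missing ingredient} is \cref{lem:orbit}, which is what makes such case analyses collapse (compare the proofs of \cref{prop:sk,prop:ktf}). At a node all of whose active neighbors are relevant, the map from in-port to out-port permutes those neighbors in a single cycle. So a node with exactly two active neighbors, both relevant, always sends the packet on to the other one, and a node with three acts as a 3-cycle on them. With this, your one-live-$t$-link scheme closes in a few lines.

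In~$K_5$, fix~$p \neq t$ with local failure set~$F_p=\{\{p,t\}\}$; then $\pi_p(\cdot,F_p)$ is a 3-cycle on the remaining three nodes. Let~$y=\pi_p(\bot,F_p)$, let~$x$ be the node with~$\pi_p(\{x,p\},F_p)=y$, and let~$z$ be the third one. Take~$F=\{\{p,t\},\{x,t\},\{y,t\},\{x,z\},\{y,z\}\}$ and~$s=p$. Both~$x$ and~$y$ then have exactly two active neighbors, both relevant, so the routing is~$(p,y,x,p,y,\ldots)$, although the path~$(p,z,t)$ survives.

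For~$K_{3,3}$ with sides~$\{t,p,r\}$ and~$\{x,y,z\}$, take~$F_p=\emptyset$ and define~$x,y,z$ in the same way. Fail~$\{x,t\}$, $\{y,t\}$, and~$\{r,z\}$. Starting at~$p$, the routing is~$(p,y,r,x,p,y,\ldots)$, while~$(p,z,t)$ survives.
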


The last result was generalized in follow-up work, where the authors showed the following.
\begin{theorem}[\cite{FHPST22}]
    \label{prop:notres}
    The rooted graphs~$(K_5 \setminus e,t)$ and~$(K_{3,3} \setminus e,t)$ are not perfectly resilient.
\end{theorem}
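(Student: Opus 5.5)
We need to show that $(K_5\setminus e,t)$ and $(K_{3,3}\setminus e,t)$ are not perfectly resilient. The plan is to assume a perfectly resilient forwarding pattern $\pi$ exists and exhibit a start node $s$ and a failure set $F$ with $s$ and $t$ still connected in $G-F$, while the routing loops forever without reaching $t$. Since the forwarding functions are deterministic and finite, any routing that revisits the same directed link (same node, same in-port, same incident failures) is periodic. So it suffices to force the packet into a closed directed walk that avoids $t$.

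The basic tool is an adversarial argument. By \cref{prop:subgraph} we may pick $F$ freely, and each node's behaviour depends only on its incident failures. Fix a node $v\neq t$ and an in-port. Failing all links from $v$ to $t$ except one forces $v$ to forward somewhere. The adversary then exploits the choice $v$ makes and fails the link that would have led "the other way". The key observation is this: if $v$ is adjacent to $t$ and the link $vt$ is alive, perfect resilience does not force $v$ to use it. But whenever $v$ does not take $vt$, the adversary can fail every link that leads back to $t$ along the other branches, keeping exactly one path alive. I would first handle the degrees. In $K_5\setminus e$, let the missing link be $e=\{a,b\}$. Place $t$ either as an endpoint of $e$ or as one of the three full-degree nodes; by symmetry there are only these two cases. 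Similarly, $K_{3,3}\setminus e$ gives two cases: $t$ is an endpoint of the missing link or not. In each case I would reduce to the graph drawn in \cref{fig:obstructions} (the missing-link case visible there), using \cref{prop:contraction} and \cref{prop:subgraph}: if one placement of $t$ contains another trap as a rooted minor, that case follows.

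For the core case I would run a "case tree" over the decisions of the neighbours $u,v,w$ of $t$. Fail all links to $t$ except the one at a single node $x$, so only $x$ can deliver the packet. Perfect resilience then forces every other node, on every in-port, to eventually route toward $x$. Comparing two failure sets that agree locally at some node $y$ but differ elsewhere (say, $x$ versus $x'$ as the surviving exit), node $y$ must behave identically in both. Non-planarity-type crossing (the $K_5/K_{3,3}$ structure minus one link still leaves enough crossing paths) then yields the contradiction: the cyclic order in which nodes pass the packet around $G-t$ must be consistent with reaching every possible exit, which is impossible. Concretely, I would enumerate what the middle node(s) do on each in-port with all incident links alive. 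For each choice I would name an explicit $F$, keeping a single path to $t$ alive, such that the walk cycles.

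The main obstacle is keeping this case analysis finite and complete: each node's function has several in-ports and many failure subsets. The trick I would use is to restrict attention to failure sets in which each non-$t$ node has at most one or two failed incident links. This makes the number of relevant local inputs small, after which symmetry of the graph (automorphisms fixing $t$) collapses most branches. Since the paper cites this as \cite{FHPST22}, one may alternatively just invoke that reference; my proof would reproduce its loop-forcing case analysis.
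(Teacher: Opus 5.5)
The paper does not prove this theorem; it simply cites \cite{FHPST22}. Your last sentence therefore matches what the paper does. The self-contained argument you outline, however, has genuine gaps.

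\emph{The case split includes a false case.} In the paper, $\Kf$ and $\Kt$ are defined with the missing link incident to $t$. The other placement you propose to treat, with $t$ not an endpoint of $e$, is not a trap at all. There $G-t$ is $K_4$ minus a link, respectively a $4$-cycle with a pendant link, and both are outerplanar. So the rooted graph is perfectly resilient by \cref{thm:gt}; the paper notes this at the end of Section~5. No reduction via \cref{prop:subgraph,prop:contraction} can rescue that case, since by \cref{cor:rootedminor} any such reduction would contradict perfect resilience.

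\emph{The genuine case has no argument.} ``Non-planarity-type crossing'' cannot be the reason: both graphs are planar, and whether they are traps depends only on where $t$ sits. The adaptive adversary (``fail the link that would have led the other way'' after seeing a choice) is not legitimate as stated. $F$ is fixed before the routing starts, and failing a link changes the local input of both its endpoints. That can invalidate decisions you already relied on, and your sketch never controls this. What is missing is a way to pin down individual forwarding decisions from local views alone, which is exactly \cref{lem:orbit}, followed by one explicit pair $(s,F)$, as in the proofs of \cref{prop:sk,prop:ktf}.

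Concretely, both cases are short; labels below are those of \cref{fig:obstructionsrepeat}.

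\textbf{The $\Kf$ case.} With $F_p=\emptyset$, all of $u,v,w$ are relevant for $p$, so by \cref{lem:orbit} $\pi_p(\cdot,\emptyset)$ permutes them cyclically. Let $x_1=\pi_p(\bot,\emptyset)$, let $x_0$ be the neighbor with $\pi_p(x_0,\emptyset)=x_1$, and let $x_2$ be the third. Take $s=p$ and $F=\{\{x_0,t\},\{x_1,t\},\{x_0,x_2\},\{x_1,x_2\}\}$. Then $x_0$ and $x_1$ each have exactly two active neighbors, both relevant (the other $x_j$ via its link to $t$, and $p$ via $(p,x_2,t)$), so they must alternate. The routing is $(p,x_1,x_0,p,x_1,\dots)$, while the path $(p,x_2,t)$ survives.

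\textbf{The $\Kt$ case.} Here no node ever has three relevant neighbors, so use the symmetry $p\leftrightarrow r$ to assume $\pi_q(\bot,\emptyset)=p$, take $s=q$, and split on $\pi_p(q,\emptyset)$:
\begin{itemize}
\item If it is $u$, take $F=\{\{u,t\},\{r,v\}\}$. The routing is $(q,p,u,r,q,\dots)$, while $(q,p,v,t)$ survives.
\item If it is $v$, take the mirror image $F=\{\{v,t\},\{r,u\}\}$.
\item If it is $q$, take $F=\{\{r,u\},\{r,v\}\}$. The routing is $(q,p,q,r,q,p,\dots)$, while $(q,p,u,t)$ survives.
\end{itemize}
In all three cases $F_p=F_q=\emptyset$. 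Every other forced step follows either from \cref{lem:orbit} (a node with exactly two active, relevant neighbors) or from a node having a single active neighbor.
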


Unfortunately, there are two minor flaws in their argument.
First, in the proof of \cref{prop:contraction} the authors did not consider the case where a link incident to~$t$ is contracted.
However, using the same arguments as in the original proof, it is easy to show that simply making the newly created node the new target~$t$ is enough to generalize the proposition to contracting links incident to~$t$.
Second, from the above three propositions they concluded that perfect resilience is closed under taking minors and any non-planar graph is not perfectly resilient.
Here, they again did not consider the role of the target~$t$.
On the one hand, speaking about perfectly resilient graphs does not make sense as the same graph can be perfectly resilient for some target and not perfectly resilient for another.
An example of this is shown in \cref{fig:planar}.
\begin{figure}[t]
    \centering
    \begin{tikzpicture}
        \foreach \i in {1,2,3,4,5}{
            \node[ver] (v\i) at(72*\i+18:1) {};
        }
        \foreach \i in {1,2,3,4,5}{
            \foreach \j in {\i,...,5}{
                \ifnum \i<\j
                    \draw (v\i) to (v\j);
                \fi
            }
        }
        \node[label=$u$] at(0,1) {};
        \node[ver,label=$t$] at(-3,1) (t) {};
        \node[ver] at(-3,-.85) {} edge(t);
    \end{tikzpicture}
    \caption{A perfectly resilient instance with a non-planar graph. If the target node was~$u$ instead, then the instance would not be perfectly resilient.}
    \label{fig:planar}
\end{figure}
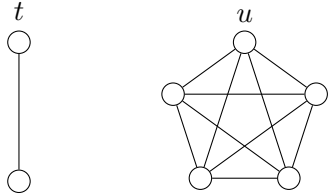%
On the other hand, there are graphs that are not planar but perfectly resilient for some target.
\cref{fig:planar} shows an example of such a case.
Admittedly, this only happens for disconnected graphs and we will show later that we can assume without loss of generality that the input graph is connected.
The reason we included this discussion in such detail is that due to the graph minor theorem by Robertson and Seymour~\cite{RS95,RS04}, any graph class that is closed under taking minors can be recognized in polynomial time.
Fortunately, \cref{prop:subgraph,prop:contraction} show that perfectly resilient rooted graphs are closed under taking \emph{rooted minors}.
\begin{corollary}
    \label{cor:rootedminor}
    If a rooted graph~$(G,t)$ is perfectly resilient, so is each rooted minor~$(H,t)$ of~$(G,t)$.
\end{corollary}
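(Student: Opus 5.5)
The plan is to induct on the number of operations needed to obtain the rooted minor $(H,t)$ from $(G,t)$. By the definition of a rooted minor, $(H,t)$ arises from $(G,t)$ by a finite sequence of three kinds of operations, each of which keeps a designated target node. The first is deleting a link. The second is deleting a node other than the target. The third is contracting a link, where the merged node becomes the target whenever one endpoint of the contracted link was the target. Perfect resilience of the intermediate rooted graphs will then propagate along this sequence one step at a time, so it suffices to treat a single operation.

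\textbf{Deletions.} If the operation deletes a link, or deletes a node different from the current target, the result $(G',t)$ has $G'$ a subgraph of $G$ containing $t$. Hence \cref{prop:subgraph} applies directly and $(G',t)$ is perfectly resilient.

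\textbf{Contraction of a link not incident to $t$.} Here \cref{prop:contraction} applies directly.

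\textbf{Contraction of a link $\{t,u\}$ incident to $t$.} This is the one case not literally covered by \cref{prop:contraction}, as remarked in the discussion above, and I expect it to be the main point requiring care. The plan is to construct a forwarding pattern for the contracted graph $G'$, with the merged node $t'$ as target, by keeping $\pi_v$ for every node $v \neq u$. Links of $v$ that went to $t$ or to $u$ now go to $t'$. If $v$ had links to both $t$ and $u$, the contraction creates a parallel pair; in the simple-graph setting these are merged into one link to $t'$. Correspondingly, $v$ treats that single link as failed exactly when both original links fail, and it redirects any rule whose out-port is one of them to the merged link.

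Given a source $s$ and failure set $F'$ in $G'$, choose a corresponding failure set $F$ in $G$. Keep the link $\{t,u\}$ alive. For every other $G'$-link, let both original links fail exactly when the $G'$-link fails; when the $G'$-link $\{v,t'\}$ stems from two parallel links, one surviving original link is enough. Then $s$ is connected to $t'$ in $G'-F'$ if and only if $s$ is connected to $t$ in $G-F$, because $\{t,u\}$ survives. For nodes other than $u$ and $t$, the routing in $G$ under $F$ coincides step by step with the routing in $G'$ under $F'$ up to the first visit of $t$ or $u$, and in $G'$ that visit is a visit of $t'$. By perfect resilience of $(G,t)$ the routing in $G$ reaches $t$, so the routing in $G'$ reaches $t'$. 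If the source is $t'$ itself there is nothing to prove.

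The one delicate bookkeeping point is ensuring that the induced local rules stay well defined when two parallel links are merged, and that the set of failed links seen locally at each node $v$ is consistent with $F$. Both are handled by the correspondence between $F$ and $F'$ described above. Combining the three cases and applying the induction completes the proof.
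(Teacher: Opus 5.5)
Your proposal is correct and takes essentially the paper's route. The paper obtains this corollary by combining \cref{prop:subgraph} and \cref{prop:contraction}, extending the latter to links incident to $t$ by making the merged node the new target, a step it only asserts follows from the original argument; your explicit simulation fills this in soundly, provided you fix one convention consistently, namely that $F$ is the full preimage of $F'$ (both original links to $t'$ survive whenever the merged link survives), so that the failures seen locally at each $v$ in $G$ are exactly those used to define its new rule in $G'$.
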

Moreover, a similar result to the graph minor theorem by Robertson and Seymour is also known for rooted minors.
In particular, it states that for any graph class that is closed under taking rooted minors, the set of all minimal obstructions (rooted graphs not contained in the graph class but where each rooted minor belongs to the graph class) is finite~\cite{RS90}.
Since the~\Kf{} is planar (see \cref{fig:k5}) and not perfectly resilient by \cref{prop:notres}, this yields the following.
\begin{corollary}
    \label{cor:mintrap}
    The set of minimal traps is finite and a rooted graph~$(G,t)$ is perfectly resilient if and only if it does not contain a minimal trap as a rooted minor.
\end{corollary}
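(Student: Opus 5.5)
The plan is to derive \cref{cor:mintrap} directly from \cref{cor:rootedminor} together with the rooted analogue of the graph minor theorem~\cite{RS90}. The latter says that rooted graphs (with one root) are well-quasi-ordered under the rooted minor relation. Equivalently, any class closed under taking rooted minors has a finite set of minimal obstructions.

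\textbf{Step 1: closure.} Let~$\mathcal{C}$ be the class of perfectly resilient rooted graphs. By \cref{cor:rootedminor}, which rests on \cref{prop:subgraph,prop:contraction} with the fix for contracting links incident to~$t$ discussed above, $\mathcal{C}$ is closed under taking rooted minors. Its complement is exactly the set of traps. A minimal trap in the sense defined earlier is then precisely a minimal obstruction for~$\mathcal{C}$: it is a trap, and each of its proper rooted minors lies in~$\mathcal{C}$.

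\textbf{Step 2: finiteness.} The minimal traps are pairwise incomparable under the rooted minor relation. If one were a proper rooted minor of another, the larger one would contain a trap as a proper rooted minor, contradicting its minimality. Hence the minimal traps form an antichain, up to isomorphism of rooted graphs, and the theorem of~\cite{RS90} gives that this antichain is finite.

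\textbf{Step 3: the characterization.} Suppose $(G,t)$ contains a minimal trap as a rooted minor. Then $(G,t)$ is not perfectly resilient, since otherwise \cref{cor:rootedminor} would make that trap perfectly resilient. Conversely, suppose $(G,t)$ is a trap. Among all its rooted minors that are traps (a nonempty set, since it contains $(G,t)$ itself), pick one with the fewest nodes plus links. Every proper rooted minor of it has strictly smaller size and is itself a rooted minor of $(G,t)$, so by the choice of size it is not a trap. Hence the chosen rooted minor is a minimal trap.

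The main obstacle is checking that~\cite{RS90} applies in the form needed: one fixed root, with minors taken so that the root is never deleted and stays the root under contraction. The root-merging convention for contractions incident to~$t$ must match the rooted-minor notion in that theorem. I would treat this as a citation, observing that single-root rooted graphs are the case of one labelled vertex in Robertson--Seymour's labelled well-quasi-ordering. The remark about $\Kf$ being planar plays no role in the proof itself.
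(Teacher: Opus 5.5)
Your proof is correct and follows the paper's argument: perfect resilience is closed under rooted minors (\cref{cor:rootedminor}), and the finiteness of minimal obstructions for rooted-minor-closed classes is taken from~\cite{RS90}, with your Step~3 spelling out the minimal-size argument the paper leaves implicit. Your dismissal of the planarity of~\Kf{}, which the paper mentions next to~\cite{RS90}, is fine under the general single-root well-quasi-ordering you invoke; it only matters if one has just a bounded-treewidth version, since minimal traps other than~\Kf{} are connected and do not contain~\Kf{} as a rooted minor, and so (as \Kf{} is planar) have bounded treewidth.
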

Using a recent improvement for finding rooted minors in almost linear time~\cite{KPS24}, this immediately implies the following.
\begin{corollary}
    \label{cor:almostlinear}
    There exists an algorithm solving \prd{} in~$n^{1+o(1)}$ time.
\end{corollary}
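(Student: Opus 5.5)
The plan is to combine \cref{cor:mintrap} with an off-the-shelf rooted-minor testing algorithm. By \cref{cor:mintrap}, the family $\mathcal{M}$ of minimal traps is finite. Moreover, a rooted graph $(G,t)$ is perfectly resilient exactly when no $(H,t_H)\in\mathcal{M}$ is a rooted minor of $(G,t)$. So \prd{} reduces to a constant number $|\mathcal{M}|$ of rooted minor containment tests. In each test the pattern $(H,t_H)$ is fixed, so its size is a constant independent of the input.

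First, fix the notion of rooted minor used in the earlier results. A rooted minor is obtained by deleting nodes and links and by contracting links, where the branch set containing $t$ becomes the new root; this matches the discussion after \cref{prop:contraction}. This notion is a special case of the rooted (or ``labelled'') minor relation handled in~\cite{KPS24}. There, for a fixed rooted graph $H$ with $O(1)$ roots, containment of $H$ as a rooted minor in an $n$-node, $m$-link graph is decided in $(n+m)^{1+o(1)}$ time. Applying this once for every member of $\mathcal{M}$ gives total time $|\mathcal{M}|\cdot(n+m)^{1+o(1)}$.

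Second, the bound must depend on $n$ rather than $n+m$. The plan is to first restrict to the connected component of $t$. Routing from nodes in other components is vacuous, and by \cref{prop:subgraph} the other components do not affect the answer: any minor model of a connected pattern $(H,t_H)$ lies in one component, and the component of $t$ is the only relevant one. Then, before reading all links, use a linear-time planarity test on this component that aborts once more than $3n-6$ links have been read. If the component is non-planar, it contains $K_5$ or $K_{3,3}$ as a minor, hence the $\Kf$ or $\Kt$ as a rooted minor. To see this, take a minor model of $K_5$ or $K_{3,3}$, connect the branch set nearest to $t$ along a path to $t$, and delete one link; by \cref{prop:notres} the result is a trap, so we answer ``no''. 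Otherwise $m=O(n)$ and the previous step runs in $n^{1+o(1)}$ time.

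The main obstacle I expect is the non-planar case just described, namely that containing $K_5$ or $K_{3,3}$ as an ordinary minor forces the $\Kf$ or $\Kt$ as a \emph{rooted} minor when $t$ lies in the same component. I would handle it by contracting a shortest path from $t$ into one branch set and then deleting a suitable link so that $t$ lands where \cref{fig:obstructions} places it. If this turns out to be delicate, the fallback is to skip the planarity shortcut and simply invoke~\cite{KPS24} in $(n+m)^{1+o(1)}$ time, noting that this equals $n^{1+o(1)}$ whenever $m=O(n)$.
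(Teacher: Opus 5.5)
Your proposal is correct and takes the same approach as the paper, whose entire argument is to combine \cref{cor:mintrap} (there are finitely many minimal traps, each of constant size) with the almost-linear-time rooted-minor test of~\cite{KPS24}. Your extra step is a valid way to make the dependence on~$n$ rather than~$n+m$ explicit, a detail the paper leaves implicit: you restrict to the component of~$t$, and use the aborting planarity test together with the argument of \cref{prop:planar} to either reject or guarantee~$m=O(n)$.
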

Unfortunately, the result by Robertson and Seymour is non-constructive, so even though we know such an algorithm exists, the theory does not help us designing it.
Coming back to specific rooted graphs that are or are not perfectly resilient, we next discuss a useful lemma regarding perfectly resilient forwarding patterns~\cite{FHPST21}.
Therein, an \emph{active neighbor} of a node~$v$ is a node~$u$ such that there is a link~$\{u,v\}$ and this link does not fail.
A \emph{relevant neighbor} of~$v$ is an active neighbor~$u$ such that there exists a path in~$G$ from~$u$ to~$t$ that does not contain any other active neighbors of~$v$.
Given a forwarding function~$\pi_v$, a set~$F_v$ of incident failed links, and an active neighbor~$u$ of~$v$, the authors of the lemma considered the following sequence~$\sigma=(w_1,w_2,\ldots)$ of active neighbors~$w_i$ of~$v$, which they call the \emph{orbit} of~$u$.
The sequence starts with~$w_1 = u$.
Then, the next entry is iteratively constructed by~$w_{i+1} = \rouns{v}{F_v}{w_i}$.
That is, they simulate what happens if each active neighbor immediately routes back to~$v$.
The authors then observed the following.
\begin{lemma}[\cite{FHPST21}]
    \label{lem:orbit}
    Let~$(G,t)$ be a rooted graph and let~$v$ be a node in~$G$.
    Let~$\pi_v$ be a forwarding function for~$v$.
    If~$\pi_v$ is part of a perfectly resilient forwarding pattern for~$(G,t)$, then it holds for each possible set~$F_v$ of incident failed links and each relevant neighbor~$u$ of~$v$ with respect to~$F_v$ that all relevant neighbors of~$v$ with respect to~$F_v$ appear in the orbit of~$u$.
\end{lemma}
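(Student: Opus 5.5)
We argue by contradiction. Fix a perfectly resilient forwarding pattern~$\pi$ containing~$\pi_v$, a set~$F_v$ of failed links incident to~$v$, and a relevant neighbor~$u$ of~$v$. Suppose some relevant neighbor~$w$ of~$v$ does not appear in the orbit~$\sigma$ of~$u$. We will build a global failure set~$F \supseteq F_v$ and a starting node~$s$ such that~$s$ and~$t$ remain connected in~$(V,E\setminus F)$, yet the routing from~$s$ cycles forever through~$v$ and the nodes of~$\sigma$. This contradicts perfect resilience.

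\emph{Construction of~$F$.} Since~$w$ is relevant, there is a path~$P$ from~$w$ to~$t$ that avoids every other active neighbor of~$v$; we may take~$P$ not to pass through~$v$. Let~$A$ be the set of active neighbors of~$v$ that appear in~$\sigma$; by assumption~$w \notin A$. For every~$a \in A$ we fail all links incident to~$a$ except~$\{a,v\}$, so that~$a$ becomes a leaf attached to~$v$. We also fail every link incident to~$v$ that is in~$F_v$, and no other link incident to~$v$. Hence the set of failed links at~$v$ is exactly~$F_v$. All other links stay intact; in particular all links of~$P$ survive. This is because~$P$ contains no node of~$A$ (its interior avoids all active neighbors of~$v$ other than~$w$, and~$w \notin A$), and~$P$ avoids~$v$.

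\emph{The routing loops.} Every node~$a \in A$ now has degree one, so any forwarding function must send a packet arriving at~$a$ back along its only link to~$v$. Take~$s=u$, and for the start at~$u$ let the packet go to~$v$ (all other links at~$u$ are failed). By induction, the packet arrives at~$v$ from~$w_i$ with failed set~$F_v$ and is therefore sent to~$w_{i+1} = \pi_v(w_i,F_v)\in A$, which returns it to~$v$. The orbit is a deterministic sequence over the finite set~$A$, so the routing is eventually periodic, visiting only~$v$ and nodes of~$A$. It never reaches~$t$, since~$t \notin A$: otherwise~$t$ would be an active neighbor in~$\sigma$, and reaching it would end the routing. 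We must check this case separately. If~$t$ were an active neighbor of~$v$, the only relevant neighbor would be~$t$ itself, because every path from another neighbor to~$t$ ends at the active neighbor~$t$. In that case~$u=w=t$ and~$w$ trivially lies in the orbit.

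\emph{Connectivity.} It remains to show that~$u$ and~$t$ are connected in~$G-F$. The path~$u$--$v$--$w$ followed by~$P$ uses only surviving links, since~$\{u,v\}$ and~$\{v,w\}$ are active. This contradicts perfect resilience.

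The main obstacle is the bookkeeping around the start. The initial step at~$u$ uses~$\pi_u(\bot,\cdot)$ with all links but~$\{u,v\}$ failed, so it is forced to go to~$v$. The first arrival at~$v$ then has in-port~$\{u,v\}$, matching~$w_1=u$. A second subtlety is the ``no other active neighbors'' condition in the definition of relevance: it is exactly what guarantees that~$P$ is untouched when we isolate the nodes of~$A$. Note that relevance of~$u$ is not even needed; any active neighbor~$u$ would work.
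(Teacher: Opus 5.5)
Your argument is correct, and it is the natural isolation argument for this lemma. The paper does not prove the statement itself but imports it from \cite{FHPST21}. Your closing remark is also right: relevance of~$u$ is never used, so the orbit of every active neighbor of~$v$ must contain all relevant neighbors.

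The one step that needs more care is ``we may take~$P$ not to pass through~$v$''. This does not follow without loss of generality from the paper's wording (``a path in~$G$''). Such a path may pass through~$v$ via links of~$F_v$, and under that literal reading the statement is false.

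For instance, take the links $\{u,t\},\{u,v\},\{v,w\},\{w,x\},\{x,v\},\{v,y\},\{y,t\}$ and $F_v=\{\{v,x\},\{v,y\}\}$. Then~$w$ is relevant in the literal sense via $(w,x,v,y,t)$. Since~$G-t$ is outerplanar, you can take the updated right-hand pattern of \cref{thm:gt} and override $\pi_v(\cdot,F_v)$ to always return~$\{v,u\}$. This stays perfectly resilient, because under~$F_v$ the node~$v$ can reach~$t$ only through~$\{u,t\}$. Yet the orbit of~$u$ is $(u,u,\ldots)$, which misses~$w$.

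So you should state explicitly that relevance is read as a path in~$G\setminus F_v$ (equivalently, in~$G-v$) that avoids the other active neighbors. Under that reading, avoiding~$v$ is automatic. Passing through~$v$ in~$G\setminus F_v$ means entering and leaving it via two distinct active neighbors, and at least one of them is not~$w$.

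With this made explicit, the rest of your argument goes through as written: $P$ survives the isolation of~$A$, $t\notin A$, and the routing from~$u$ is forced to alternate between~$v$ and nodes of~$A$.
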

Note that the above lemma immediately implies that if there are at least two relevant neighbors of~$v$, then~$\rouns{v}{F_v}{u} \neq u$ for any relevant neighbor~$u$.
Moreover, if there are exactly three active neighbors~$x,y$, and~$z$ of~$v$ with respect to~$F_v$ that are all relevant and~$\rouns{v}{F_v}{x}=y$, then~${\rouns{v}{F_v}{y}=z}$ as otherwise~$z$ is not part of the orbit of~$x$.
Similarly~$\rouns{v}{F_v}{z}=x$ as otherwise~$x$ would not be part of the orbit of~$y$.
In the same paper, the authors also showed that instances with outerplanar graphs are perfectly resilient independent of where the root~$t$ is. They also generalized this to the following theorem which will be useful for us later.
Therein, the updated right-hand rule is a special case of a skipping forwarding function.
\begin{theorem}[\cite{FHPST21}]
    \label{thm:gt}
    Let~$(G,t)$ be a rooted graph. If~$G-t$ is outerplanar, then~$(G,t)$ is perfectly resilient.
    Moreover, the updated right-hand rule~$\lambda^{e_v}_v$ for each node~$v$ and a chosen link~$e_v$ results in a perfectly resilient routing.
\end{theorem}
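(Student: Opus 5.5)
The plan is to fix a suitable planar embedding of $G$ and show that the updated right-hand rule makes every packet walk along the outer face of the surviving part of $G-t$ until it reaches a node with a working link to $t$.

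\textbf{The embedding.} Since $G-t$ is outerplanar, fix an embedding of $G-t$ in which every node lies on the boundary of the outer face. Place $t$ inside this outer face.

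Each link $\{v,t\}$ can be drawn inside the outer face without crossings, because all neighbors of $t$ lie on the outer boundary. This may require drawing $t$ on different sides of cut nodes. The rigorous alternative is to never draw the links to $t$: for each node $v$, pick a position in its rotation system that lies in an angle of $v$ incident to the outer face of $G-t$, and use it as a virtual "link to $t$".

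\textbf{The forwarding rule at $v$.}
\begin{compactitem}
\item If the link $\{v,t\}$ exists and is active, forward to $t$.
\item Otherwise, from the in-port $e$, take the next active link of $G-t$ in clockwise order, skipping failed links.
\item If the packet starts at $v$ (in-port $\bot$), act as if it arrived via the chosen link $e_v$, which is the real or virtual link to $t$ sitting in an outer angle of $v$.
\end{compactitem}
This is exactly the skipping rule $\lambda^{e_v}_v$.

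\textbf{Correctness.} Let $F$ be the set of failed links and $H=(G-t)-F$. Let $C$ be the component of $H$ containing $s$. Assume $s$ and $t$ are connected in $G-F$; then some node of $C$ has an active link to $t$.

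As long as the packet has not reached $t$, it only uses links of $C$, and the clockwise-next-active-link rule is the standard face-tracing rule for the induced embedding of $C$. Hence the walk traverses the boundary of a single face of $C$.

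Because the first step leaves $s$ from an angle that lies in the outer face of $G-t$, the traced face is the outer face of $C$. This uses that the outer face of $G-t$ is contained in the outer face of any subgraph, so the starting angle is still an outer angle of $C$.

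Every node of $C$ lies on the outer boundary of $G-t$. Deleting links only enlarges faces, so every node of $C$ also lies on the boundary of the outer face of $C$. The outer-face walk is periodic and visits every node on that boundary. It therefore reaches a node with an active link to $t$, where the first clause of the rule forwards the packet to $t$.

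Two degenerate cases must be handled: if $C$ is a single node, $v$ has an active link to $t$ and forwards there directly; bridges are traversed in both directions by the face walk, which is consistent with the argument.

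\textbf{Main obstacle.} The delicate step is showing that the walk, started with the virtual in-port $e_v$, really traces the \emph{outer} face of $C$ and stays on it. Here I need the following precise facts.
\begin{compactitem}
\item $e_v$ can be chosen consistently inside an outer angle of $v$, even when $v$ is a cut node with several outer angles; any one of them suffices.
\item After skipping failed links, the angle between two consecutive active links containing $e_v$ belongs to the outer face of $C$.
\item The right-hand rule at intermediate nodes never switches faces.
\end{compactitem}
I would prove the second fact by noting that faces of $C$ are unions of faces of $G-t$, and the union containing the outer face of $G-t$ is the outer face of $C$.
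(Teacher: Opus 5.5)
Your proof is correct and uses the same idea as the argument the paper cites from \cite{FHPST21}, and as the paper's own proofs of the dipole and ring propositions: under the updated right-hand rule the packet traces the outer face of its surviving component of $G-t$, and that face still contains every node of the component, so the packet reaches a node with a working link to $t$. The paper chooses $e_v$ to be the link entering $v$ along the clockwise outer-face traversal; this gives exactly the same priority list for in-port $\bot$ as your real or virtual link to $t$ in that outer angle, while making $e_v$ a genuine element of $E_v$. Your clockwise convention is just the mirror image of the paper's counterclockwise one.
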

The chosen link~$e_v$ in the above theorem for a node~$v$ is the link that enters~$v$ if traversing the outer face in clockwise order.
While not specifically analyzed in the paper, it is not difficult to verify that we can compute this link for all nodes in overall~$O(n)$~time and then compute the corresponding updated right-hand rule in~$O(n^2)$ time.
Note that this is optimal as a node with~$O(n)$ incident links has~$O(n)$ priority lists each of size~$O(n)$, so the output can have size~$\Theta(n^2)$.

\subsection{Planar Graphs and Branchwidth}

We continue with known graph-theoretic and algorithmic results that will be important for us later.
We start with relevant results for planar graphs.
First, we can check whether a given graph is planar in~$O(n)$ time using the classic algorithm by Hopcroft and Tarjan~\cite{HT74}.
Given a node~$v$, the algorithm can in the same time also decide whether the connected component containing~$v$ is planar and compute an embedding only for this component.
We will use the data structure known as doubly connected half-edge lists to represent planar graphs.
The exact definition is not relevant for the following discussion and thus deferred to \cref{sec:prelim}.
\begin{proposition}[\cite{HT74}]
    \label{prop:embedding}
    Given a graph~$G$ with~$n$ nodes and a node~$v$, we can decide in~$O(n)$ time whether~$G$ and/or the connected component of~$G$ containing~$v$ are planar and compute a planar embedding of~$G$ and/or the connected component if it is (and initialize a doubly connected half-edge list data structure for this embedding).
\end{proposition}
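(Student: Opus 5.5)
This statement is a cited result (Hopcroft--Tarjan), so the plan is to assemble it from the classical linear-time planarity machinery rather than prove it from scratch. The point needing care is that the running time is $O(n)$ and not $O(n+m)$, so the algorithm may not read every link.

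The first step handles the whole graph. Every simple planar graph on $n\ge 3$ nodes has at most $3n-6$ links. We therefore scan the adjacency lists and stop as soon as $3n-5$ links have been read; if this happens, we answer ``not planar'' for $G$. Otherwise $m=O(n)$, and we run the Hopcroft--Tarjan path-addition algorithm (or any linear-time successor such as Boyer--Myrvold) on $G$. When $G$ is planar, this yields a combinatorial embedding, that is, a cyclic rotation of the incident links around every node.

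The second step handles the component containing $v$. Run a BFS from $v$ while maintaining a counter of links traversed. Let $n_v$ be the number of nodes discovered so far. If the counter ever exceeds $3n_v-6$, hence in particular $3n-6$, then this component already has more links than a planar graph on its node set can have, so we stop and report it non-planar. The abort happens after only $O(n)$ work. Otherwise the BFS finishes in $O(n)$ time, and we run the planarity test on the extracted component alone, which costs time linear in its size.

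The third step builds the doubly connected half-edge list. From the rotation system, create two half-edges per link, each with twin pointers, and set $\mathrm{next}(h)$ to the half-edge that follows the twin of $h$ in the rotation at the head of $h$. Faces are then the orbits of $\mathrm{next}$, and we label them by traversing each orbit once. All of this takes $O(n+m)=O(n)$ time.

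The main obstacle is the edge-count cutoff: we must make sure that no step reads $\omega(n)$ links before the cutoff fires. In particular, the BFS must check the bound incrementally during the scan, not after it. The remaining steps rely directly on the correctness of the classical planarity test.
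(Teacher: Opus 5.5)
Your proposal is correct and matches the justification the paper relies on: the paper only cites Hopcroft and Tarjan, whose $O(n)$ rather than $O(n+m)$ bound comes from exactly the $3n-6$ link-count cutoff you use, and the half-edge list is then built from the computed rotation system as you describe. One detail to state precisely is that your BFS counter must count distinct links (each link appears in two adjacency lists), or equivalently compare scanned adjacency entries against $2(3n_v-6)$; otherwise a planar component with close to $3n_v-6$ links would be wrongly rejected.
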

It will be convenient for the presentation to assume that we have an embedding where~$t$ is incident to the outer face.
It is folklore knowledge that for any planar embedding of a planar graph~$G$ and each face in this embedding, there exists an embedding in which this chosen face is the outer face and where the doubly connected half-edge list data structures for both embeddings are identical.
Hence, we can make the above assumption without loss of generality.
As a consequence of \cref{prop:embedding}, it is folklore knowledge that outerplanar graphs can also be recognized in~$O(n)$~time.
Simply add a new node and make it adjacent to all other nodes.
The resulting graph is planar if and only if the original graph was outerplanar and a planar embedding of the constructed graph with the new node on the outer face results in an outerplanar embedding for the original graph after removing the new node with all incident links.
\begin{corollary}
    \label{cor:outerplanar}
    Given a graph~$G$, we can decide in~$O(n)$ time whether~$G$ is outerplanar and compute an outerplanar embedding in case it exists.
\end{corollary}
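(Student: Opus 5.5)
The plan is to reduce outerplanarity testing to planarity testing via the apex construction described just before the statement, and then invoke \cref{prop:embedding}. Given $G=(V,E)$ with $n$ nodes, we first check whether $|E| \le 2n-3$ (for $n \ge 2$). Every outerplanar graph satisfies this bound, so if it fails we reject immediately. This check also guarantees that all subsequent work is $O(n)$ even though we count only nodes. We then build $G^+$ by adding a new node $a$ adjacent to every node of $V$. $G^+$ has $n+1$ nodes and at most $3n-3$ links, and constructing it takes $O(n)$ time. Finally, we run the algorithm of \cref{prop:embedding} on $G^+$, which decides planarity and returns an embedding in $O(n)$ time.

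Correctness rests on showing that $G$ is outerplanar if and only if $G^+$ is planar. For the forward direction, take an outerplanar embedding of $G$, in which every node lies on the outer face. Place $a$ inside the outer face and connect it to all nodes by non-crossing curves. This is possible because the boundary walk of the outer face visits every node, so we can draw the curves as spokes to the points of that boundary walk, one visit per node. For the converse, take a planar embedding of $G^+$ and remove $a$ together with its incident links. The face of the resulting embedding of $G$ that contained $a$ has every node on its boundary, since each node was joined to $a$ by a curve lying in that face. Re-embedding so that this face is the outer face, which is allowed by the folklore fact mentioned above and leaves the half-edge structure unchanged, yields an outerplanar embedding. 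To extract it in $O(n)$ time, we delete $a$ and its half-edges from the doubly connected half-edge list and merge the faces around $a$ into a single face, which we declare to be the outer face. Each of these operations costs $O(\deg(a)) = O(n)$ time.

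The main subtlety is the converse direction when $G$ is disconnected. In that case the face containing $a$ must be shared by all components, and we must make sure the extracted half-edge structure correctly represents a single outer face touching every component. The cleanest way to avoid this is to handle each connected component separately, using the per-component planarity test allowed by \cref{prop:embedding}; the total work remains $O(n)$. The remaining steps are routine.
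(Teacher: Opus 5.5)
Your proposal is correct and takes the paper's route: add a node adjacent to all nodes, test planarity of the result with \cref{prop:embedding}, and delete the added node to obtain an outerplanar embedding. The paper only sketches this and picks an embedding with the added node on the outer face. Your edge-count pre-check, your proofs of both directions, and your per-component handling of disconnected graphs fill in details the paper leaves implicit.
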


Next, by the well-known result due to Wagner~\cite{Wag37}, a graph is planar if and only if it does not contain the graphs~$K_5$ or~$K_{3,3}$ as a minor.

\begin{theorem}[\cite{Wag37}]
    \label{thm:wagner}
    A graph~$G$ is planar if and only if it does not contain~$K_5$ or~$K_{3,3}$ as a minor.
\end{theorem}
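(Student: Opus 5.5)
The easy direction (``only if'') uses Euler's formula together with the fact that planarity is minor-closed. Deleting a vertex or an edge of a plane graph clearly leaves a plane graph. Contracting an edge~$\{u,v\}$ in a plane embedding can be realized by shrinking the curve representing the edge and dragging $v$ along it onto $u$, so the result is again plane. Hence it suffices to show that $K_5$ and $K_{3,3}$ are not planar. For a connected simple plane graph with $n \ge 3$ vertices, each face has length at least~$3$, and Euler's formula $n - m + f = 2$ then gives $m \le 3n-6$. This fails for $K_5$, since $10 > 9$. In a bipartite graph every face has length at least~$4$, which gives $m \le 2n-4$. This fails for $K_{3,3}$, since $9 > 8$.

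For the hard direction (``if'') I would follow Thomassen's proof via contractible edges and argue by a minimal counterexample. Let $G$ be a non-planar graph with no $K_5$ or $K_{3,3}$ minor and with as few vertices (and then edges) as possible; minors of $G$ inherit the excluded-minor property, so all proper minors are planar.

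\textbf{Step 1: $G$ is $3$-connected.} If $G$ is disconnected or has a cut vertex, embed the blocks separately and glue them at the cut vertex, which can always be placed on the outer face; so $G$ would be planar. Now suppose $\{x,y\}$ is a $2$-separation with sides $G_1$ and $G_2$. Form $G_i^+$ by adding the edge $xy$ to $G_i$. Each $G_i^+$ is a minor of $G$, because the other side contains an $x$--$y$ path that can be contracted to a single edge. Both $G_i^+$ are smaller than $G$, so they are planar. Embed each with $xy$ on the outer face and glue them along $xy$ (deleting $xy$ afterwards if it is not an edge of $G$). This gives a planar embedding of $G$, a contradiction.

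\textbf{Step 2: a contractible edge.} Since $G$ is $3$-connected and, because $K_4$ is planar, has at least $5$ vertices, I would invoke the standard lemma that such a graph has an edge $e=\{u,v\}$ for which $G/e$ is still $3$-connected. The proof is by contradiction: suppose every edge $e$ has a vertex $z_e$ such that $\{u,v,z_e\}$ separates $G$. Choose $e$ and $z_e$ so that the smallest resulting component $D$ is as small as possible. A neighbor $w$ of $z_e$ in $D$ then yields, via its own separator $\{z_e,w,z'\}$, a strictly smaller component, contradicting the choice of~$e$.

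\textbf{Step 3: extending the embedding.} The graph $G/e$ is a proper minor of $G$, so it is planar; fix an embedding. Let $x$ be the contracted vertex. Since $G/e$ is $3$-connected, $G/e - x$ is $2$-connected, so the face of $G/e - x$ that contained $x$ is bounded by a cycle $C$. Every neighbor of $u$ or $v$ other than $u,v$ themselves lies on~$C$. Delete $x$ and try to place $u$ and $v$ inside $C$. Let the neighbors of $v$ on $C$ split $C$ into segments. If all neighbors of $u$ on $C$ lie in the closure of a single segment, then $u$ and $v$ can be embedded inside $C$ without crossings, so $G$ is planar, a contradiction. Otherwise, one of two configurations occurs:
\begin{itemize}
\item $u$ and $v$ have two neighbors each on $C$ that alternate around $C$; then $C$ together with $u$, $v$ and the edge $uv$ contains a $K_{3,3}$ subdivision;
\item $u$ and $v$ share three common neighbors on $C$; then they form a $K_5$ subdivision.
\end{itemize}
Either configuration contradicts the assumption on $G$.

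I expect the main obstacle to be this case analysis together with the contractible-edge lemma of Step~2. Steps~1 and~3 are otherwise routine gluing arguments. In Step~3 the care goes into showing that ``not all neighbors of $u$ in one segment'' really forces one of the two bad configurations.
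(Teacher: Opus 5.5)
The paper does not prove this statement. It quotes Wagner~\cite{Wag37} and uses the result as a black box, and only in the direction ``non-planar $\Rightarrow$ contains $K_5$ or $K_{3,3}$ as a minor'' (in \cref{prop:planar} and \cref{cor:nice}), which is the hard direction of your proposal.

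Your sketch is a correct rendering of Thomassen's standard proof. The ingredients are minor-closedness plus Euler's formula, a minimal counterexample, gluing the two sides of a $2$-separation along the virtual edge $xy$, the contractible-edge lemma, and re-inserting $u$ and $v$ into the face cycle $C$ of $G/e-x$. The gluing is legitimate because $2$-connectivity gives each side an $x$--$y$ path through its interior.

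A few details you only gestured at do go through. In Step~2, passing from ``$G/e$ is not $3$-connected'' to a separator $\{u,v,z_e\}$ uses $|G/e|\ge 4$. It also uses that a separator of $G/e$ of size at most two that avoids the contracted vertex would separate $G$. In the minimality argument, take the component $D'$ of $G-\{z_e,w,z'\}$ that avoids both $u$ and $v$; it exists because $uv$ is an edge. Then $w$ has a neighbour in $D'$, so $D'$ meets $D$, avoids $u,v,z_e$, and therefore $D'\subseteq D\setminus\{w\}$.

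In Step~3 your dichotomy is exhaustive, and your ``alternating pairs'' formulation is the right one. Suppose some neighbour of $u$ lies in the interior of a segment $S_i$ and another lies outside $S_i$. Then these two alternate with the ends of $S_i$. Otherwise every neighbour of $u$ on $C$ is a neighbour of $v$. If there are at least three such common neighbours, you get a $K_5$ subdivision. If there are exactly two, they are non-consecutive among $v$'s neighbours, so they alternate with two further neighbours of $v$ and give a $K_{3,3}$ subdivision. This last case is easy to miss if one phrases the dichotomy only as ``interior neighbour versus three common neighbours''.

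So your route buys a self-contained argument, whereas the paper simply relies on the classical theorem, which is all it needs.
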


Next, it is folklore knowledge that in a 2-node-connected plane graph, each face is bounded by a cycle~\cite{Diestel}.
\begin{proposition}
    \label{prop:cycle}
    In a 2-node-connected plane graph, every face is bounded by a simple cycle.
\end{proposition}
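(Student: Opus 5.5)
We prove \cref{prop:cycle} by induction via an ear decomposition, tracking how faces change when an ear is added. Let $G$ be a 2-node-connected plane graph with at least three nodes, so $G$ contains a cycle. By Whitney's theorem, $G$ has an ear decomposition: a cycle $C_0$ and paths $P_1,\dots,P_k$ such that each $P_i$ has both endpoints, and only its endpoints, in $G_{i-1}=C_0\cup P_1\cup\dots\cup P_{i-1}$, and $G_k=G$. The embedding of $G$ restricts to a plane embedding of each $G_i$. We show by induction on $i$ that every face of $G_i$ is bounded by a simple cycle.

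For the base case, $G_0=C_0$ is a simple closed curve in the plane. By the Jordan curve theorem it has exactly two faces, each bounded by $C_0$.

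For the inductive step, consider adding $P_i$ to $G_{i-1}$. The interior of $P_i$ avoids $G_{i-1}$, so it lies entirely in a single face $f$ of $G_{i-1}$. By induction, $f$ is bounded by a simple cycle $C$. The endpoints $x\neq y$ of $P_i$ lie on the boundary of $f$, hence on $C$. They split $C$ into two $x$--$y$ paths $Q_1$ and $Q_2$. The arc $P_i$ runs through $f$, which is an open disk (or the complement of a closed disk, if $f$ is the outer face), and joins two boundary points. It therefore splits $f$ into exactly two faces, bounded by $Q_1\cup P_i$ and $Q_2\cup P_i$ respectively. Both are simple cycles: $Q_j$ and $P_i$ are internally disjoint paths with common endpoints $x\neq y$. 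All other faces of $G_{i-1}$ remain faces of $G_i$ with unchanged boundaries, which completes the induction.

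The main obstacle is the topological step that the arc $P_i$ splits $f$ into exactly two regions with the stated boundaries. This is the $\theta$-graph lemma: a plane embedding of three internally disjoint $x$--$y$ paths has exactly three faces, each bounded by the union of two of the paths. That lemma follows from the Jordan curve theorem applied to $C$ and to $Q_j\cup P_i$. Since the paper treats \cref{prop:cycle} as folklore, the plan is to cite this lemma, for instance from~\cite{Diestel}, rather than re-derive it.
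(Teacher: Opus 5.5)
Your proof is correct and is essentially the standard argument from \cite{Diestel}, which the paper invokes as folklore without reproducing it: induction along an ear decomposition, with the Jordan curve theorem for the starting cycle and the $\theta$-graph lemma showing that the new ear splits the face containing it into two faces bounded by $Q_1\cup P_i$ and $Q_2\cup P_i$. One step deserves to be made explicit, and one description can be dropped. The explicit step is that a face $f$ of $G_{i-1}$ whose frontier is the cycle $C$ is itself a face of $C$: a connected open set that avoids $C$ and has frontier contained in $C$ cannot be a proper subset of a component of the complement of $C$, and this is what licenses applying the $\theta$-lemma inside $f$. The description of $f$ as an open disk is unnecessary for your argument, and it would in any case need Schoenflies rather than just Jordan.
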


The last important result about planar graphs is that each planar graph either has constant branchwidth or contains a large grid as a minor.
The currently best known bound is due to Gu and Tamaki~\cite{GT12}.
\begin{theorem}[\cite{GT12}]
    Let~$G$ be a planar graph and let~$g$ be the largest integer such that~$G$ contains a~$g \times g$ grid as a minor. Then, the branchwidth of~$G$ is at most~$3g$.
\end{theorem}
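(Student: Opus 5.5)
The plan is to prove the contrapositive: if $\mathrm{bw}(G) > 3g$, then $G$ contains a $(g+1)\times(g+1)$ grid as a minor, which contradicts the maximality of~$g$. Since the statement is imported from Gu and Tamaki~\cite{GT12}, I would follow the standard planar-duality route rather than anything specific to perfect resilience.

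First I reduce to a convenient case. We may assume $G$ is connected, because branchwidth is the maximum over the components. We may also assume $G$ is 2-node-connected: branchwidth is the maximum over blocks (up to a trivial case for very small blocks), and a grid minor of a block is a grid minor of $G$. Next, fix a plane embedding. By the Seymour--Thomas ratcatcher duality, a planar graph of branchwidth $k$ has a \emph{sphere-cut decomposition} of width $k$. Here every middle set is realized by a noose, a closed curve meeting the drawing only in nodes. Equivalently, it is dual to a carving decomposition of the medial graph $M(G)$, with $\mathrm{bw}(G)=\mathrm{cw}(M(G))/2$. Consequently, $\mathrm{bw}(G) > 3g$ yields a tangle, or \emph{slope}, in the sphere of order greater than $3g$. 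Concretely, every noose meeting at most $3g$ nodes has a well-defined ``small side'', and no three small sides cover the sphere.

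Second, I use this tangle to build the two families of paths that make up a grid. I pick a face $f$ lying deep inside the tangle, i.e., on the ``big side'' of every short noose. Then I grow concentric nooses around a small disc $D_0$ containing $f$ and around a far region $D_\infty$. By Menger's theorem in the medial graph, the absence of a separating noose of length at most $3g$ between $D_0$ and $D_\infty$ gives more than $g$ node-disjoint radial paths between them. Applying the tangle again to nested regions, I obtain $g+1$ disjoint nested cycles separating $D_0$ from $D_\infty$. Any cycle meeting every radial path, together with the radial paths, then gives a $(g+1)\times(g+1)$ cylindrical-wall structure. Contracting this structure yields the grid minor.

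The main obstacle is the constant. A naive version of this argument (Robertson--Seymour--Thomas) loses a factor of roughly $4$ to $5$. This happens because the radial paths and the concentric cycles must be found simultaneously, and routing the radial paths through each annulus costs extra noose length. To reach $3g$, I would follow Gu--Tamaki's refinement. Instead of separately requiring $g+1$ cycles and $g+1$ paths, I would show that a single noose family of length at most $3g$ suffices to certify the absence of a $(g+1)$-grid. The idea is to take a maximal family of nested cycles and paths and show that if it cannot be extended, then one can stitch together a noose of length at most $g + g + g$: two boundary cycle portions plus one crossing segment. Its small side then contradicts the tangle axioms. Verifying that this stitched curve is genuinely a noose with the correct small side, in particular that it does not wrap around and change sides, is the delicate step I expect to require the most care.
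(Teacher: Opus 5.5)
The paper gives no proof of this statement. It is quoted from Gu and Tamaki and used only as a black box, to bound the branchwidth by nine. Your sketch therefore has to stand on its own, and it has a genuine gap exactly where the theorem's content lies.

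\textbf{The missing key step.} Everything that separates $3g$ from the Robertson--Seymour--Thomas bound $4g$ is in your last paragraph, which announces a lemma rather than proving it.
\begin{itemize}
\item You never say what the ``maximal family'' is, or what its non-extendability yields. For the radial paths, Menger gives a separating noose through at most $g$ nodes. For the nested cycles you would need a planar min--max relation between disjoint cycles separating two regions and the fewest nodes on a curve joining them, which you do not state.
\item The count $g+g+g$ is unsupported. The length of a noose is the number of nodes it meets. A noose running along a ``portion of a boundary cycle'' meets as many nodes as the local structure forces, which has no relation to $g$.
\item Suppose instead the three pieces are one separating noose plus one radial curve traversed out and back. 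Then the closed curve meets the radial nodes twice, so it is not a noose. It does not define a separation of order at most $3g$ with well-defined sides.
\end{itemize}
Turning such a curve into a genuine noose without losing length or flipping sides, and saying which tangle axiom is then violated, is precisely the work done in Gu and Tamaki's paper. You do neither.

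\textbf{Earlier steps that are wrong as written.}
\begin{itemize}
\item A face ``on the big side of every short noose'' does not exist in general. Take a face $f$ bounded by at most $3g$ nodes. A noose hugging its boundary has $f$ on one side, and the other side contains every node of $G$. The tangle axiom that no small side contains all of $V(G)$ forces the side containing $f$ to be small. In tangle-based proofs the centre lies on the \emph{small} side of the concentric nooses, which bound balls of the metric induced by the tangle; centrality is a distance condition.
\item ``No separating noose of length at most $3g$ between $D_0$ and $D_\infty$'' is false when $D_0$ is a small disc, since the noose around $D_0$ itself is such a separator. The number of disjoint radial paths is bounded by the number of boundary nodes of the inner region, so that region must be chosen with a long boundary.
\item If your hypothesis did hold, Menger would give more than $3g$ paths, not $g+1$. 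This shows the budget is not being tracked: in a $3g$ argument the paths can cost only about $g$.
\item Nested cycles plus disjoint radial paths do not automatically form a cylindrical wall. A path can meet a cycle many times and interleave with the other paths on it. You need a rerouting or minimality lemma that makes each path meet each cycle in a single segment before contracting.
\end{itemize}
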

We will be interested in planar graphs not containing a $4 \times 4$ grid as a minor.
Moreover, it is known that the treewidth of a graph is at most~$\nicefrac{3}{2}$ of its branchwidth (minus one)~\cite{RS91}.
This yields the following corollary.
\begin{corollary}
    \label{cor:branchwidth}
    Let~$G$ be a planar graph that does not contain a $4 \times 4$ grid as a minor.
    Then, the branchwidth of~$G$ is at most nine and the treewidth of~$G$ is at most twelve.
\end{corollary}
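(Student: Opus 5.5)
The bound follows by chaining the theorem of Gu and Tamaki with the Robertson--Seymour comparison between treewidth and branchwidth.

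\textbf{Branchwidth.} Let $G$ be planar with no $4\times 4$ grid minor, and let $g$ be the largest integer such that $G$ contains a $g\times g$ grid as a minor. A $g \times g$ grid with $g \geq 4$ contains the $4\times 4$ grid as a subgraph, and minor containment is transitive. Hence $g\le 3$. The theorem of Gu and Tamaki then gives $\mathrm{bw}(G)\le 3g\le 9$.

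Two degenerate cases need a remark. If $G$ has no edges, then $g$ is not really defined and the branchwidth is $0$; every graph with at least one node contains the $1\times 1$ grid, so these cases cause no trouble.

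\textbf{Treewidth.} For any graph $G$ with branchwidth at least $2$, the result of Robertson and Seymour~\cite{RS91} gives $\mathrm{tw}(G)\le \lfloor \tfrac32\,\mathrm{bw}(G)\rfloor - 1$. With $\mathrm{bw}(G)\le 9$ this yields $\mathrm{tw}(G)\le \lfloor 13.5\rfloor-1=12$.

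If the branchwidth is $0$ or $1$, then $G$ is a forest of stars (or edgeless), so its treewidth is at most $1\le 12$.

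There is no real obstacle here. The only points to check are that a larger grid contains the $4\times 4$ grid as a minor, and that the small-branchwidth exceptions in the treewidth bound are covered, which they are trivially.
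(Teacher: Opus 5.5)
Your proposal is correct and is the paper's own argument: the absence of a $4\times 4$ grid minor forces $g\le 3$, so the theorem of Gu and Tamaki gives branchwidth at most $9$, and the Robertson--Seymour bound $\mathrm{tw}(G)\le\lfloor 3\,\mathrm{bw}(G)/2\rfloor-1$ then gives treewidth at most $12$. Your treatment of the small cases goes slightly beyond what the paper writes, but the remark is garbled: an edgeless graph with at least one node has a well-defined $g\ge 1$, so only the graph with no nodes is genuinely degenerate. This does not affect the argument.
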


Using the fact that we will deal with graphs of constant branchwidth and treewidth, we can compute a branch decomposition of width at most nine or a tree decomposition of width at most~25 in~$O(n)$ time~\cite{BT97,Kor21}.
Using either of these decompositions, we can then use standard algorithms to find a given rooted minor in linear time~\cite{RS86,ADFST11,KPS24}.
\begin{corollary}
    \label{cor:algforbw}
    Let~$(G,t)$ be a rooted graph where~$G$ is a planar graph with~$n$ nodes and let~$(H,t)$ be a rooted graph.
    Let a branch decomposition or tree decomposition of~$G$ of constant width be given.
    Then, we can test whether~$(H,t)$ is a rooted minor of~$(G,t)$ in~$O(n)$ time.
\end{corollary}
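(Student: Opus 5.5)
The corollary follows by combining the known minor-checking machinery with the constant-width decomposition that is assumed to be given. I would invoke the dynamic-programming framework for rooted minor containment on graphs of bounded branchwidth, e.g., the algorithm of Adler, Dorn, Fomin, Sau, and Thilikos~\cite{ADFST11}. It tests whether a fixed graph~$H$ (of constant size, with a constant number of roots) is a rooted minor of~$G$ in time~$f(|H|,w)\cdot n$, where~$w$ is the width of the given branch decomposition. A tree decomposition of constant width can be converted into a branch decomposition of constant width, namely at most~$w+1$, in linear time. So it suffices to treat the branch-decomposition case.

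\textbf{Reduction to the minor-testing problem.} I would treat the root~$t$ as a single labeled terminal. Concretely, I would ask for a minor model of~$H$ in~$G$ in which the branch set of~$t$ in~$H$ contains~$t$ in~$G$. This is exactly the rooted-minor notion with one root, and it is covered by the framework of~\cite{ADFST11}, which handles terminals (equivalently, labeled or anchored minors). If one prefers a self-contained argument, there is a gadget reduction. Attach to~$t$ in~$G$, and correspondingly to~$t$ in~$H$, a fixed-size rigid gadget, say a large clique minus one vertex that cannot otherwise appear. Then a minor of the gadgeted~$H$ in the gadgeted~$G$ forces the gadget to map onto itself and hence~$t$ to map to~$t$. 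However, this would destroy planarity and increase the width only by a constant, so I would rather cite the labeled version directly. Either way the width stays constant, and since~$H$ has constant size (the four obstructions have at most seven nodes), the running time is~$f(|H|,w)\cdot n = O(n)$.

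\textbf{Main obstacle.} The only delicate point is confirming that the cited algorithms genuinely support a root constraint rather than plain minor containment. If I had to argue it, I would sketch the standard dynamic program. For each edge of the branch decomposition, the middle set has size at most~$w$. The DP state records which branch sets of~$H$ meet the current middle set, how the middle vertices are partitioned among partial branch sets and their connectivity, which edges of~$H$ have already been realized, and one extra bit saying whether the branch set containing~$t$ has been fixed. The number of states is bounded by a function of~$|H|$ and~$w$, which gives linear time.
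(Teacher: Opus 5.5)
Your proposal is correct and matches the paper, which also obtains this corollary purely by citation: given a constant-width branch or tree decomposition, the standard rooted-minor algorithms \cite{RS86,ADFST11,KPS24} run in linear time for a fixed~$H$. Your additional steps only spell out what the paper takes for granted: the linear-time conversion to a branch decomposition of width at most~$w+1$, the sketch of the dynamic program, and the clique-gadget fallback (which works when the clique has more than~$w+1$ nodes, so~$G$ cannot contain it as a minor, and~$t$ has a neighbor in~$H$).
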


\section{Preliminaries}
\label{sec:prelim}
In this section, we introduce notation and concepts used throughout the paper.
We assume the reader to be familiar with the Bachmann-Landau notation (also known as big-O notation).
For an introduction, we refer the reader to the textbook by Cormen, Leiserson, Rivest, and Stein~\cite{introduction}.
For a positive integer~$k$, we denote the set~$\{1,2,\ldots,k\}$ by~$[k]$.

\subsection{Graphs and (Rooted) Minors}
We use standard graph notation.
In particular, a graph~$G=(V,E)$ is a tuple where~$V$ is the set of nodes and~$E \subseteq \binom{V}{2}$ is a set of (undirected) links.
We will occasionally also use directed graphs.
A directed graph~$D=(V,A)$ is a tuple where~$V$ is the set of nodes and~$A \subseteq V \times V$ is the set of directed arcs.
We will denote the size of~$V$ by~$n$ and the size of~$E$ or~$A$ by~$m$.
Two nodes~$u$ and~$v$ are called adjacent if there is a link~$\{u,v\}$ between them.
In this case, we also say that~$u$ and~$v$ are incident to the link~$\{u,v\}$ and that~$u$ and~$v$ are endpoints of~$\{u,v\}$.
For a node~$v$, we denote the set of all adjacent nodes by~$N(v)$ and the set of all incident links by~$E_v$.
A \textbf{\textit{path}} in a graph~$G$ is a sequence~$(v_0,v_1,\dots,v_\ell)$ of distinct nodes such that each consecutive pair~$\{v_{i-1},v_i\}$ is connected by a link in~$G$.
The first and last node $v_0$ and $v_\ell$ are called the end points of $P$ and all other nodes are called internal nodes of~$P$.
We also say that~$P$ is a path from $v_0$ to $v_\ell$ or a~$v_0$-$v_\ell$-path.
Two paths are \textbf{\textit{internally disjoint}} if the internal nodes of either path do not appear in the other path.
Equivalently, any node that appears in both paths is an endpoint of each.
A \textbf{\textit{walk}} is similar to a path but nodes may repeat in a walk.
A \textbf{\textit{cycle}} (or sometimes called simple cycle) in a graph is a path~$(v_0,v_1,\ldots,v_\ell)$ with an additional link~$\{v_0,v_\ell\}$.

A \textbf{\textit{subgraph}} of~$G$ is a graph~$H = (U,F)$ with~$U \subseteq V$ and~$F \subseteq E$.
If~$H$ is a subgraph of~$G$, then we also say that~$G$ is a supergraph of~$H$.
For a subset $V' \subseteq V$ of nodes in a graph~$G$, we denote by~$G[V']$ the \textbf{\textit{subgraph of $G$ induced by~$V'$}}, that is, $G[V']=(V',\{e \in E \mid e \subseteq V'\})$.
For a node~$v$, a set~$V'$ of nodes, a link~$e$, and a set~$E'$ of links, we also use~$G-v, G-V', G \setminus e$ and~$G \setminus E'$ to denote~$G[V \setminus \{v\}], G[V \setminus V'], (V,E\setminus \{e\})$, and~$(V,E \setminus E')$, respectively.
A graph is \textbf{\textit{connected}} if there is a path between each pair of nodes.
A \textbf{\textit{connected component}} in a graph~$G$ is a maximal subset~$V' \subseteq V$ such that for every pair of nodes~$u, v \in V'$, there exists a $u$-$v$-path in~$G[V']$.
That is, the subgraph~$G[V']$ is connected, and no proper superset of~$V'$ induces a connected subgraph.
Two graphs~$G=(V,E)$ and~$H=(U,F)$ are \textbf{\textit{isomorphic}} if there exists a link-preserving bijection between the nodes in~$V$ and the nodes in~$U$.
That is, $G$ and~$H$ are isomorphic if and only if there exists a bijection~$f \colon V \rightarrow U$ such that for any pair~$u,v \in V$ of nodes in~$G$, there is a link~$\{u,v\} \in E$ if and only if there is a link~$\{f(u),f(v)\} \in F$.

A \textbf{\textit{cut node}} in a connected graph~$G$ is a node~$v$ such that~$G - v$ is disconnected.
A graph~$G$ is \textbf{\textit{biconnected}} if it is connected and does not contain any cut nodes.
A graph~$G$ is \textbf\textit{{2-node-connected}} if for every pair~$u,v$ of nodes, there are at least two internally disjoint~$u$-$v$-paths.
A connected graph is 2-node-connected if and only if it is biconnected and it is not the graph with exactly 2 nodes and one link.
A graph is~$k$-link-connected if for each pair~$u,v$ of nodes, there are~$k$ paths between~$u$ and~$v$ that pairwise do not share any links.
We will abbreviate 2-node-connectivity to simply 2-connectivity and mention explicitly when we talk about link-connectivity.
A \textbf{\textit{biconnected component}} is a maximal biconnected subgraph.
Any connected graph decomposes into a tree of biconnected components which are attached to each other at shared cut nodes.
A \textbf{\textit{separating link}} is a link~$\{u,v\}$ such that removing both~$u$ and~$v$ (and all incident links) disconnects the graph.
We say that such a link \emph{separates} a node~$w \notin \{u,v\}$ from a node~$x \notin \{u,v,w\}$ if~$w$ and~$x$ are in different connected components in~$G - \{u,v\}$.
We say that~$\{u,v\}$ \emph{separates} another link~$e \neq \{u,v\}$ from~$x$ if~$\{u,v\}$ separates at least one endpoint of~$e$ from~$x$.
We also say that~$e$ is \emph{separated} from~$w$ by~$\{u,v\}$.

The \textbf{\textit{treewidth}} and the \textbf{\textit{branchwidth}} of a graph are measures for how ``tree-like'' it is.
As the precise definitions are not essential for our purposes, we refer the reader to the work of Robertson and Seymour~\cite{RS91} for details.

A graph is \textbf{\textit{planar}} if it can be drawn in the plane without any link crossings.
A graph with such an embedding is called \emph{plane} and the maximal regions (bounded by links) that do not contain any nodes or links are called \textbf{\textit{faces}}.
The infinite outer region bounded by links is called the outer face and a face is incident to a link if it is bounded by the link.
A face is incident to a node~$v$ if it is incident to a link~$\{u,v\}$ for some other node~$u$.
A graph is outerplanar if there is a planar embedding where all nodes are incident to the outer face.
We will represent planar graphs using \textbf{\textit{doubly connected half-edge lists}}~\cite{MP78,BCKO08}.
Given a planar graph together with a planar embedding, this data structure can be initialized in~$O(n)$ time.
It allows to remove a link, add a link (which can be added to the planar embedding without link crossings), and find a node or link incident to a given face in constant time.
Using this data structure, one can also traverse a given face in clockwise or counterclockwise order in time linear in the size of a face, that is, the number of incident links.
With this approach, we can enumerate all nodes and links incident to a given face in the same time.
Finally, given a node~$v$ and an incident link~$e$, the data structure can return the next link incident to~$v$ in clockwise or counterclockwise order in constant time.
It is known that a planar graph with~$n$ nodes contains at most~$3n$ links.

A \textbf{\textit{link contraction}} in~$G$ is the operation of replacing a link~$\{u,v\} \in E$ by a single node~$w$, making~$w$ adjacent to all former neighbors of~$u$ and~$v$ (except possibly removing duplicate links or self-loops, that is, a link~$\{w,w\}$).
A graph~$H$ is a \textbf{\textit{minor}} of a graph~$G$ if~$H$ can be obtained from a subgraph of~$G$ by a sequence of link contractions.
Equivalently, $H$ is a minor of~$G$ if there is a connected subgraph~$G_u$ of~$G$ for each~$u \in U$ such that~$G_u$ and~$G_v$ do not share any nodes for each~$u \neq v \in U$ and for each link~$\{u,v\} \in F$, there is a link in~$E$ between a node in~$G_u$ and a node in~$G_v$.
A \textbf{\textit{rooted graph}} is a pair~$(G,\sigma)$, where~$G$ is a graph and~$\sigma$ is a sequence of nodes in~$G$ called the roots.
A rooted graph~$(H,\sigma_1=(u_1,u_2,\ldots,u_\ell))$ is a \textbf{\textit{minor}} of a rooted graph~$(G,\sigma_2=(v_1,v_2,\ldots,v_k))$ if
\begin{compactitem}
    \item $\sigma_1$ and~$\sigma_2$ have the same length, that is, $\ell = k$,
    \item there is a connected subgraph~$G_u$ of~$G$ for each~$u \in U$ such that~$G_u$ and~$G_v$ are node-disjoint for each~$u \neq v \in U$,
    \item for each link~$\{u,v\} \in F$, there is a link in~$E$ between a node in~$G_u$ and a node in~$G_v$, and
    \item $v_i$ is a node in~$G_{u_i}$ for each~$i \in [k]$.
\end{compactitem}
If the roots are clear from the context, we also simply say that~$H$ is a \textbf{\textit{rooted minor}} of~$G$.
In this work, we only work with rooted graphs with a single root.
For the sake of notational convenience, we will denote this root always by~$t$ in both the target graph~$H$ and the host graph~$G$.
We next define two special types of nodes in a rooted graph~$(G,t)$.
First, we say that all nodes in~$N(t)$ are \textbf{\textit{access nodes}}.
Second, all cut nodes in~$G-t$ are called \textbf{\textit{fracture nodes}}.
In all figures, we will color access nodes green and fracture nodes blue for easier recognition.

To conclude this subsection, we list a couple of specific graphs and introduce the two classes of rooted graphs mentioned in the introduction.
These will be important for our work later.
Therein, we sometimes make use of a single root~$t$.
All other names are interchangeable.
We denote the complete graph with~$\ell$ nodes by~$\mathbf{K_\ell}$ and the complete bipartite graph with~$a$ nodes on one side and~$b$ nodes on the other side by~$\mathbf{K_{a,b}}$.
That is, in~$K_\ell$ each pair of nodes is adjacent and in~$K_{a,b}$ there are two sets~$A$ and~$B$ of nodes (of size~$a$ and~$b$, respectively) such that there are no links between nodes in the same set and each pair~$u \in A$ and~$v \in B$ is adjacent.
The $\mathbf{a \times b}$ \textbf{\textit{grid}} is a graph with a node for each pair~$(x,y)$ with~$x \in [a]$ and~$y \in [b]$.
Two nodes with pairs~$(x_1,y_1)$ and~$(x_2,y_2)$, respectively, are adjacent if and only if (i) $x_1 = x_2$ and~$|y_1 - y_2| = 1$ or~(ii) $|x_1 - x_2| = 1$ and~$y_1 = y_2$, that is, either~$x_1 = x_2$ and~$y_1$ and~$y_2$ differ by exactly one or the other way around.

We next recall the names of the four rooted graphs depicted in~\cref{fig:obstructions} that we will show to be the minimal obstructions for perfect resilient rooted graphs.
For convenience, they are repeated in \cref{fig:obstructionsrepeat}.
\begin{figure}[t]
    \centering
    \hfill
    \begin{subfigure}{0.333\textwidth}
        \begin{tikzpicture}
            \node[ver,label=below:$p$] at(-3,5) (s) {};
            \node[te] at(-3,7) (t) {};
            \node[por,label=left:$u$] at(-5,4) (u) {} edge(s) edge(t);
            \node[por,label=right:$v$] at(-1,4) (v) {} edge(s) edge(t) edge(u);
            \node[por,label=right:$w$] at(-3,6) (w) {} edge(s) edge(t) edge(u) edge(v);
        \end{tikzpicture}
        \caption{\Kf}
    \end{subfigure}
    \hfill
    \begin{subfigure}{0.333\textwidth}
        \begin{tikzpicture}
            \node[te] at(3,7) (t) {};
            \node[por,label=left:$u$] at(1,5) (u) {} edge(t);
            \node[por,label=right:$v$] at(5,5) (v) {} edge(t);
            \node[ver,label=$p$] at(3,6) (p) {} edge(u) edge(v);
            \node[ver,label=below:$r$] at(3,4) (q) {} edge(u) edge(v);
            \node[ver,label=right:$q$] at(3,5) {} edge(p) edge(q);            
        \end{tikzpicture}
        \caption{\Kt}
    \end{subfigure}
    \hfill\hfill
    
    \medskip

    \hfill
    \begin{subfigure}{0.333\textwidth}
        \begin{tikzpicture}
            \node[te] at(3,2) (t) {};
            \node[por,label=right:$w$] at(3,1) (w) {} edge(t);
            \node[por,label=left:$u$] at(1,0) (u) {} edge(t);
            \node[por,label=right:$v$] at(5,0) (v) {} edge(t);
            \node[ter,label=below:$x$] at(3,0) {} edge(u) edge(v) edge(w);
            \node[ver,label=below:$p$] at(3,-1) {} edge(u) edge(v);
            \node[ver,label=right:$q$] at(3,-2) {} edge(u) edge(v);
        \end{tikzpicture}
        \caption{\Ktf}
    \end{subfigure}
    \hfill
    \begin{subfigure}{0.333\textwidth}
        \begin{tikzpicture}
            \node[por,label=$u$] at(-4,2) (w) {};
            \node[te] at(-2,2) (t) {} edge(w);
            \node[ter,label=left:$x$] at(-5,0) (u) {} edge(w);
            \node[por,label=right:$v$] at(-1,0) (v) {} edge(t);
            \node[ver,label=$p$] at(-3,1) {} edge(u) edge(v);
            \node[ver,label=$q$] at(-3,0) {} edge(u) edge(v);
            \node[ver,label=$r$] at(-3,-1) {} edge(u) edge(v);            
        \end{tikzpicture}
        \caption{\sK}
    \end{subfigure}
    \hfill\hfill
    \caption{The four structures inhibiting perfect resilience.}
    \label{fig:obstructionsrepeat}
\end{figure}
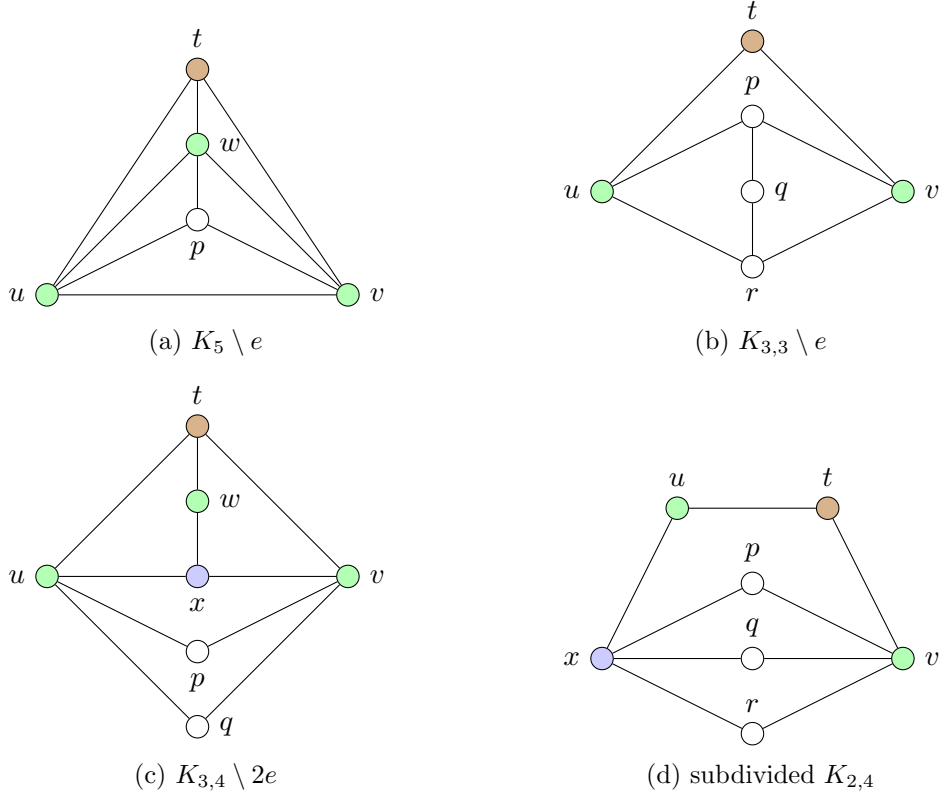%
We use $\mathbf{\Kf}$ and~$\mathbf{\Kt}$ to denote the complete graph on five nodes and the complete bipartite graph with three nodes on each side, respectively, where each time the root~$t$ is an arbitrary node in it and where one arbitrary link incident to~$t$ is removed.
Note that~$K_5$ and~$K_{3,3}$ are completely symmetric, so the choice of which node~$t$ is, is irrelevant.
We denote by~$\mathbf{\Ktf}$ the complete bipartite graph with three nodes~$\{u,v,w\}$ on side, four nodes~$\{p,q,t,x\}$ on the other side, and the two links~$\{w,p\}$ and~$\{w,q\}$ removed. 
Finally, we call the complete bipartite graph with two nodes~$v$ and~$x$ on one side, four nodes~$u,p,q$, and~$r$ on the other side, and the link~$\{u,v\}$ subdivided with the root~$t$ the \textbf{\textit{$\mathbf{\sK}$}}.

A \textbf{\textit{dipole outerplanar graph}} is a rooted graph~$(G,t)$ where~$t$ has exactly two neighbors~$u$ and~$v$.
Moreover, the graph~$G[V_i \cup \{u,v\}] \setminus \{u,v\}$ is outerplanar for each~$i\in [k]$, where~$V_1,V_2,\ldots,V_k$ are the sets of nodes of the connected components of~$G-\{u,v,t\}$.
A \textbf{\textit{ring of outerplanar graphs}} is a rooted graph~$(G,t)$ constructed as follows.
Start with a set of~$k \geq 3$ connected outerplanar graphs~$G_1,G_2,\ldots,G_k$.
In each graph~$G_i$, select two nodes~$u_i$ and~$v_i$.
Then, identify~$v_i$ with~$u_{i+1}$ for each~$i \in [k-1]$ and identify~$v_k$ with~$u_1$.
Now add~$t$ and make it adjacent to all nodes~$u_i$ with~$i \in [k]$.
An example of a ring of outerplanar graphs is shown in \cref{fig:ring}.
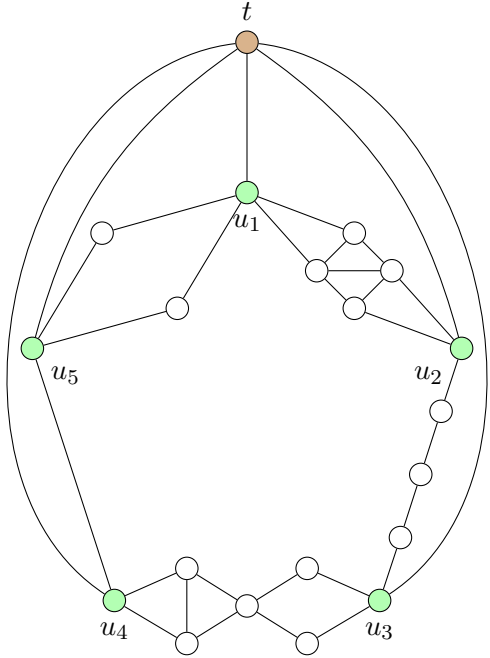
\begin{figure}
    \centering
    \begin{tikzpicture}
        \foreach \i/\j in {1/below,2/below left,3/below,4/below,5/below right}{
            \node[por, label=\j:$u_\i$] at (-72*\i-198:3cm) (u\i) {};
        }
        \node[te] at(0,5) {} edge(u1) edge[bend left=20](u2) edge[bend left=70](u3) edge[bend right=70](u4) edge[bend right=20](u5);
        
        \node[ver] at(1.9265,1.96) (p) {} edge(u2);
        \node[ver] at(1.4265,1.46) (q) {} edge(u2) edge(p);
        \node[ver] at(0.9265,1.96) (r) {} edge(u1) edge(p) edge(q);
        \node[ver] at(1.4265,2.46) (s) {} edge(u1) edge(p) edge(r);

        \node[ver] at(2.58,.09) (x) {} edge(u2);
        \node[ver] at(2.31,-.75) (y) {} edge(x);
        \node[ver] at(2.04,-1.59) (z) {} edge(u3) edge(y);
        
        \node[ver] at(-.8,-2) (a) {} edge(u4);
        \node[ver] at(-.8,-3) (b) {} edge(u4) edge(a);
        \node[ver] at(0,-2.5) (c) {} edge(a) edge(b);
        \node[ver] at(.8,-2) (d) {} edge(c) edge(u3);
        \node[ver] at(.8,-3) (e) {} edge(c) edge(u3);
        
        \draw (u4) to (u5);
        
        \node[ver] at(-.9265,1.46) (q) {} edge(u1) edge(u5);
        \node[ver] at(-1.9265,2.46) (q) {} edge(u1) edge(u5);
    \end{tikzpicture}
    \caption{An example of a ring of outerplanar graphs.}
    \label{fig:ring}
\end{figure}

\subsection{Sequences, Skipping Forwarding Functions, and Traps}
Let~$A=\{a_1,a_2,\ldots,a_\ell\}$ be a set.
We say a sequence~$\sigma=(a_1,a_2,\ldots,a_\ell)$ is a permutation of~$A$ if each element in~$A$ appears exactly once in~$\sigma$.
We also say that~$A$ is the set underlying~$\sigma$.
We denote the \textbf{\textit{concatenation}} of two sequences~$\sigma = (a_1,a_2,\ldots,a_\ell)$ and~$\sigma'=(b_1,b_2,\ldots,b_k)$ by~${\sigma \circ \sigma' = (a_1,a_2,\ldots,a_\ell,b_1,b_2,\ldots,b_k)}$.
If~$a_\ell = b_1$, then we say that~$(a_1,a_2,\ldots,a_\ell,b_2,b_3,\ldots,b_k)$ is the \textbf{\textit{gluing}} of~$\sigma$ and~$\sigma'$.
We will assume that sequences are implemented as doubly linked lists so that inserting a new element at some position takes constant time.

A rooted graph for which a perfectly resilient forwarding pattern exists is called \textbf{\textit{perfectly resilient}}.
Otherwise, it is called a \textbf{\textit{trap}}.
A trap is \textbf{\textit{minimal}} if it does not contain another trap as a rooted minor.

A \textbf{\textit{skipping priority list}} for a node~$v$ is a function~$\pi_v$ that takes as input a link in~$E_v$ (or~$\bot$ modeling that the package starts in~$v$) and outputs a permutation of~$E_v$.
The routing then chooses for the given in-port the out-port as the first link in the permutation that does not failed (or~$\bot$ if all incident links fail).
A \textbf{\textit{skipping forwarding pattern}} is a collection of skipping priority lists for each node~$v$.
For the sake of notational convenience, for a given node~$v$, we do not distinguish between its incident links and the corresponding adjacent nodes.
So instead of writing~$\ski{v}{\{v,u\}}=(\{v,t\},\{v,w\},\{v,u\})$, we simply write~$\ski{v}{u}=(t,w,u)$.
Moreover, for a given node~$v$ and a set~$F$ of failed links, we use~$F_v = E_v \cap F$ to denote the failed links incident to~$v$.
For a skipping priority list~$\pi_v$, an in-port~$e$, and a set~$F_v$ of incident failed links, we write~$\rouns{v}{F_v}{e}$ for the link chosen next by~$\pi_v$ for in-port~$e$ if the incident links~$F_v$ fail.
We again do not distinguish between incident links and adjacent nodes, so we write e.g.\ $\rou{v}{t,u,w}{x}=y$.
An example of skipping priority lists and the resulting routing is given in \cref{fig:skipping}.
\begin{figure}[t]
    \centering
    \begin{tikzpicture}
        \node[te] at(0,-2) (t) {};
        \node[por,label=$u$] at(-1,-3) (u) {} edge[red](t);
        \node[por,label=$v$] at(1,-3) {} edge(u) edge(t);
    \node at(6,-2.5) {\begin{tabular}{c|c|c}
        node & in-port & priority list \\\hline
        $u$ & $t$ & $(t,v)$\\
        $u$ & $v$ & $(t,v)$\\
        $u$ & $\bot$ & $(t,v)$\\
        $v$ & $t$ & $(t,u)$\\
        $v$ & $u$ & $(t,u)$\\
        $v$ & $\bot$ & $(u,t)$
    \end{tabular}};
    \end{tikzpicture}
    \caption{A rooted graph with two access nodes~$u$ and~$v$ (green). The skipping priority lists are depicted on the right. If~$s=v$ is the starting node and~$F=\{u,t\}$ (red link), then the resulting routing is~$(v,u,v,t)$ as in the first step~$\roue{v}{\bot}=u$, then~$\rou{u}{t}{v}=v$, and finally~$\roue{v}{u}=t$.}
    \label{fig:skipping}
\end{figure}
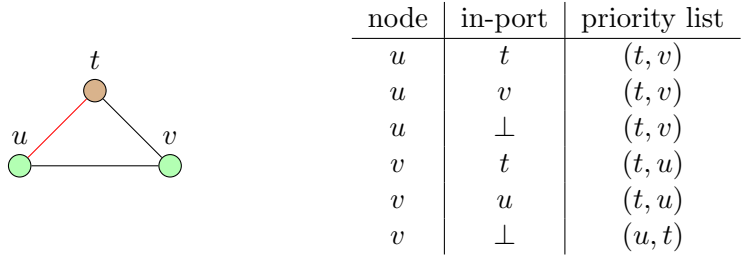
Given a node~$v$ in a rooted graph~$(G=(V,E),t)$ and a set~$F_v$ of incident failed links, we say that a node~$u$ is an \textbf{\textit{active neighbor}} with respect to~$F_v$ is~$\{u,v\} \in E \setminus F_v$.
An active neighbor of~$v$ is called \textbf{\textit{relevant}} with respect to~$F_v$ if there exists a path from~$u$ to~$t$ in~$G$ that does not contain any other active neighbor of~$v$.

We conclude with an important subclass of skipping priority lists called the \textbf{\textit{right-hand rule}}, which we denote by~$\lambda'_v$.
The right-hand rule works as follows.
Given a plane graph and a node~$v$, the skipping priority list lists for an in-port~$e \in E_v$ all links incident to~$v$ in counterclockwise fashion starting from the link next to~$e$ (such that~$e$ is the last link in the list).
An example is given in \cref{fig:righthand}.
A slight adaptation of the right-hand rule (which was also considered before) is to move the target root~$t$ to the front of every priority list for every access node.
We denote this updated right-hand rule for a node~$v$ by~$\lambda_v$.
We note that right-hand rules and updated right-hand rules are not uniquely defined for a given planar embedding as the output for~$\bot$ has to be specified.
To this end, we will pick a link~$e$ incident to~$v$ and write~$\lambda_v^e$ to denote the updated right-hand rule where the priority list for~$\bot$ is equal to the priority list for in-port~$e$.
A forwarding pattern where all forwarding functions are updated right-hand rules is called a \textbf{\textit{right-hand forwarding pattern}}. 
\begin{figure}[t]
    \centering
    \begin{tikzpicture}
        \node[ver] at(0,0) (t) {};
        \draw[blue,thick] (0,-1.4) to node [midway,right] {\color{black}\raisebox{.5pt}{\textcircled{\raisebox{-.9pt} {4}}}} (t);
        \draw (1,1) to node [midway,right] {\raisebox{.5pt}{\textcircled{\raisebox{-.9pt} {1}}}} (t);
        \draw (-1,1) to node [midway,above] {\color{black}\raisebox{.5pt}{\textcircled{\raisebox{-.9pt} {2}}}} (t);
        \draw (-1.4,0) to node [midway,below] {\raisebox{.5pt}{\textcircled{\raisebox{-.9pt} {3}}}} (t);
    \end{tikzpicture}
    \caption{A node and one of the incident links (blue) chosen as in-port. The numbering of the incident links shows the right-hand rule~$\lambda'$ for this node.}
    \label{fig:righthand}
\end{figure}
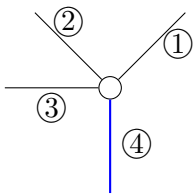

\section{New Fundamental Structures Supporting Perfect Resilience}
\label{sec:newpr}
In this section, we show that dipole outerplanar graphs and rings of outerplanar graphs are perfectly resilient.
Moreover, we will show that the updated right-hand rule~$\lambda_v^{e_v}$ for each node~$v$ and some chosen link~$e_v$ for~$v$ will be perfectly resilient in both cases.
This will later be important to construct perfectly resilient skipping forwarding patterns for general perfectly resilient rooted graphs.
We start with dipole outerplanar graphs.
Recall that a rooted graph~$(G,t)$ is dipole outer\-planar if the following two conditions hold.
First, the root~$t$ has exactly two neighbors~$u$ and~$v$.
Second,~${G[V_i \cup \{u,v\}] \setminus \{u,v\}}$ is outerplanar for each~$i\in [k]$, where~$V_1,V_2,\ldots,V_k$ are the sets of nodes of the connected components of~$G-\{u,v,t\}$.

Intuitively, to show that these rooted graphs are perfectly resilient, we compute an outerplanar embedding for each of the induced subgraphs, then ``stack'' these graphs on top of each other, and then argue that traversing an outer face of any of these graphs yields a solution.
For an intuitive example, consider the graph in \cref{fig:merge}.
If we for example start in the face incident to~$u,v,r,s$, and~$w$, and going ``in clockwise direction'', then starting from~$u$, we reach~$v$ unless the outerplanar graph containing~$u,v,p,q,r$, and~$s$ is disconnected by link failures.
If for example~$\{r,s\},\{p,q\},\{u,v\}$, and~$\{u,t\}$ fail, then we traverse the new bigger face in the order~$(u,r,p,u,w,v,t)$.
The main idea is that the next graph~$G_i$ leads to both~$u$ or~$v$ or is disconnected.
If it is disconnected, then we traverse along the outer face and reach the next component.
Finally, since we always assume that some path from the source to~$t$ remains, not all components can be disconnected.

\begin{proposition}
    \label{prop:dipole}
    Let~$(G,t)$ be a dipole outerplanar graph.
    Then, we can compute in~$O(n)$ time a planar embedding of~$G$ and a link~$e_v \in E_v$ for each node~$v$ such that~$(\lambda_v^{e_v})_{v \in V \setminus \{t\}}$ is a perfectly resilient forwarding pattern for~$(G,t)$.
\end{proposition}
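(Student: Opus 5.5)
We build an explicit ``stacked'' embedding and then prove that the updated right-hand rule forms a perfectly resilient pattern on it.

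\textbf{Construction of the embedding.}
Compute the components $V_1,\dots,V_k$ of $G-\{u,v,t\}$ in $O(n)$ time. For each $i$, let $H_i=G[V_i\cup\{u,v\}]\setminus\{u,v\}$, that is, $G[V_i\cup\{u,v\}]$ with the possible link $\{u,v\}$ removed. By \cref{cor:outerplanar}, compute an outerplanar embedding of each $H_i$. In this embedding $u$ and $v$ lie on the outer face. If $u$ or $v$ has no neighbor in $V_i$, the component is attached by a single node and is handled the same way.

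Draw $t$ at the top with links to $u$ (left) and $v$ (right). Place the $H_i$ one below another as nested ``lenses'' between $u$ and $v$. For each $H_i$, split its outer boundary at $u$ and $v$ into an upper arc $A_i$ and a lower arc $B_i$. Order the lenses so that $B_i$ and $A_{i+1}$ bound a common face $f_i$ for $i<k$. Put the link $\{u,v\}$, if present, as a separate lens in this order. The face above $A_1$ also contains $t$, and the face below $B_k$ is closed by the path $u,t,v$ on the outer side. The total work is linear.

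For each node $w\ne t$, choose $e_w$ as the link by which $w$ is entered when the face containing $w$ on $A_i$ or $B_i$ is traversed clockwise. Nodes in the interior of some $H_i$ do not lie on either arc. For them, take $e_w$ from the outer face of $H_i$, since $H_i$ is outerplanar and every node lies on that outer face. For $u$ and $v$, take the link to $t$.

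\textbf{Correctness.}
Fix $F$ and a source $s$ that remains connected to $t$. The key claim is a per-component statement in the spirit of \cref{thm:gt}. Consider a walk that enters $H_i\setminus F$ at $u$ (or $v$), or that starts inside $H_i$, and follows the right-hand rule. Then it leaves $H_i$ only at $u$ or $v$. Moreover, in each traversal it either reaches the opposite pole or returns to the pole it entered from, along the outer face of the component of $H_i\setminus F$ containing that pole. The latter happens exactly when $u$ and $v$ are disconnected in $H_i\setminus F$.

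This claim follows because in an outerplanar graph with all nodes on the outer face, the right-hand rule traverses the outer face of the current component. Starting with $e_w$ as the in-port guarantees that a walk starting at an interior node also joins this outer-face traversal, exactly as in the proof of \cref{thm:gt}.

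Next consider the behavior at the poles. At $u$, the updated rule first tries $t$. If $\{u,t\}$ has failed, the counterclockwise order at $u$ cycles through the lenses in the stacked order. Each lens is therefore either crossed to $v$ or bounced back, after which the next lens is tried. A symmetric argument applies at $v$.

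\textbf{Ruling out loops.}
Since $s$ is connected to $t$, some lens connects $s$'s side to a pole whose link to $t$ survives. Suppose the routing loops forever. Bounces never change the pole. A crossing from $u$ to $v$ followed by a failed $\{v,t\}$ continues cycling at $v$. We would argue that the sequence of lenses visited, ordered by (pole, position in the rotation), is periodic and covers every lens incident to each pole. A crossable lens therefore brings the walk to the pole whose link to $t$ survives.

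\textbf{Main obstacle.}
The main difficulty is the rotation-consistency argument. We must check that the counterclockwise order at $u$ and the clockwise order at $v$ traverse the lenses compatibly, so that the walk $u\to v$ through lens $i$ continues at $v$ into lens $i+1$ (or reaches $t$), rather than back into lens $i$. If the walk returned into lens $i$, it would create a two-lens ping-pong that never visits the other lenses.

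I would handle this by proving that every face of the embedding of $G\setminus F$ incident to $u$ or $v$ has the form ``lower arc of one surviving lens plus upper arc of the next.'' The right-hand walk is then a face traversal. One of these faces, or the merged face created by a bounce, contains $t$ whenever $s$ is connected to $t$. The walk, which follows that face, therefore reaches $t$.
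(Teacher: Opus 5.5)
Your construction is the paper's stacked embedding. Your per-lens claim (each lens is either crossed, or bounced along the outer face of the pole's component in $H_i\setminus F$), combined with cycling through the lens blocks at a pole, is a sound argument. However, the step you single out as the main obstacle never arises, and the face lemma you propose for it is false.

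The updated rule puts $t$ first in every priority list of $u$ and $v$. So the walk can leave a pole into a lens only if that pole's link to $t$ is in $F$. Moreover, $\{u,t\}$ and $\{v,t\}$ cannot both fail, since then $t$ would be disconnected from $s\neq t$. Consequences:
\begin{itemize}
\item A crossing $u\to v$ is always followed immediately by $t$.
\item Your scenario ``a crossing from $u$ to $v$ followed by a failed $\{v,t\}$'' is vacuous, so no ping-pong can occur.
\item Compatibility of the rotations at $u$ and $v$ is needed only for planarity of the embedding, not for the routing.
\end{itemize}

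With this observation your argument closes. The walk first reaches a pole: immediately if $s\in\{u,v\}$, otherwise along the outer face of the component of $s$ in $H_i\setminus F$. If that pole's link to $t$ has failed, the walk cycles through the blocks at this pole, bouncing off every lens in which $u$ and $v$ are disconnected. It continues until it enters a lens (or the link $\{u,v\}$) that connects them, crosses, and reaches $t$. Such a lens exists: the walk so far, together with an $s$-$t$-path (which must end with the other $t$-link), shows that $u$ and $v$ are connected in $(G-t)\setminus F$.

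Drop the claim that every face of $G\setminus F$ incident to $u$ or $v$ is ``the lower arc of one lens plus the upper arc of the next''. It fails already for $F=\emptyset$, because an internal face of a lens may contain $u$. After failures it fails further, since a merged face contains the whole outer boundary of a component of a disconnected lens. A minor point: every node of an outerplanar $H_i$ lies on its outer face and hence on $A_i$ or $B_i$, so your separate rule for ``interior'' nodes is harmless but unnecessary.

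The paper uses the same embedding but argues globally in one step:
\begin{itemize}
\item By the choice of $e_s$, the right-hand walk traces either the outer face or a face between consecutive components $G_i,G_{i+1}$, and each of these has both $u$ and $v$ on its boundary.
\item Failed links only merge faces, so the face traced in $G\setminus F$ still has $u$ and $v$ on its boundary and joins them.
\item Hence the walk reaches one pole and, if that pole's link to $t$ failed, continues along the same face to the other pole.
\end{itemize}
Your local decomposition makes explicit what the paper compresses into ``failures only enlarge faces'', in particular what happens when a lens becomes disconnected. It costs some extra bookkeeping at the poles. The paper's version is shorter but rests entirely on reading the right-hand rule with skipped links as a face traversal.
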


\begin{proof}
    Let~$u$ and~$v$ be the two neighbors of~$t$.
    We first compute all connected components in~${G - \{u,v,t\}}$ in~$O(n)$ time.
    Let~$V_1,V_2,\ldots,V_k$ be the sets of nodes of these connected components.
    Let~$G_i = G[V_i \cup \{u,t\}] \setminus \{u,v,t\}$ for each~$i \in [k]$.
    By assumption, each graph~$G_i$ is outerplanar.
    We can therefore compute an outerplanar embedding for it in~$O(|G_i|)$ time using \cref{cor:outerplanar}.
    We now traverse the outer face in this embedding in clockwise order (also in~$O(|G_i|)$ time) and for each node~$w$ except for~$u$ and~$v$, we set~$e_w$ to be the link entering~$w$ in this traversal.
    Let~$\sigma^u_i$ be the sequence of incident links of~$u$ in the computed embedding of~$G_i$, when going counterclockwise around~$u$ and starting at the outer face.
    Let~$\sigma^v_i$ be similarly defined for~$v$ but going in clockwise direction.
    Note that we can compute these sequences in~$O(|G_i|)$ time.

    In the next step, we combine the different outerplanar embeddings.
    To this end, we start with the embedding of~$G_1$ and then iteratively add the next graph~$G_i$ such that when going counterclockwise around~$u$, the first node in~$\sigma^i_u$ comes directly after the last node in~$\sigma_u^{i-1}$ and when going clockwise around~$v$, the first node of~$\sigma_v^{i}$ comes directly after the last node of~$\sigma_v^{i-1}$.
    See \cref{fig:merge} for an illustration.
    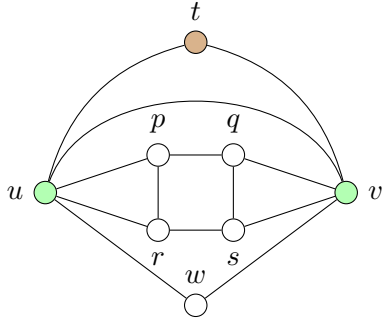
\begin{figure}
        \centering
        \begin{tikzpicture}
            \node[por,label=left:$u$] at(-2,0) (u) {};
            \node[por,label=right:$v$] at(2,0) (v) {} edge[bend right=70] (u);
            \node[ver,label=$w$] at(0,-1.5) (w) {} edge(u) edge(v);
            \node[ver,label=$p$] at(-.5,.5) (p) {} edge(u);
            \node[ver,label=$q$] at(.5,.5) (q) {} edge(v) edge(p);
            \node[ver,label=below:$r$] at(-.5,-.5) (r) {} edge(u) edge(p);
            \node[ver,label=below:$s$] at(.5,-.5) (s) {} edge(q) edge(r) edge(v);
            \node[te] at(0,2) {} edge[bend right=30](u) edge[bend left=30](v);
        \end{tikzpicture}
        \caption{An example of a dipole outerplanar graph. The sets of nodes of connected components of~$G - \{u,v,t\}$ are~$V_1 = \{w\}$ and~$V_2=\{p,q,r,s\}$. In the chosen embedding, $\sigma_u^1=\sigma_v^1=(w)$, $\sigma_u^2 = (r,p)$, and~$\sigma_v^2 = (s,q)$. Note that when going counterclockwise around~$u$ starting at the outer face, we encounter first~$w$, then~$r$, and then~$p$. When going clockwise around~$v$ starting on the outer face, we encounter~$w$ first, then~$s$, and then~$q$.}
        \label{fig:merge}
    \end{figure}
    To conclude the construction, if the link~$\{u,v\}$ exists in~$G$, then we place it ``after''~$G_k$ and~$t$ with its two incident links is placed after that.
    For~$u$, we choose~$e_u = \{u,t\}$ and for~$v$, we choose~$e_v=\{w,v\}$, where~$w$ is the first node in~$\sigma_v^1$.

    Before we show that~$(\lambda_w^{e_w})_{w \in V \setminus \{t\}}$ for the computed embedding and the chosen links~$e_w$ is perfectly resilient for~$(G,t)$, we first analyze the running time.
    Note that the combining step of the computed embeddings take~$O(|N(u)|+|N(v)|)$ time as we simply place all links incident to~$u$ and~$v$ in the order given by~$\sigma_u^i$ and~$\sigma_v^i$ for all~$i$.
    Moreover, whenever we add a new graph~$G_i$, we create one new face between~$G_i$ and~$G_{i-1}$ which takes constant time in the initialization of the doubly linked half-edge data structure.
    Since the running time for each individual graph~$G_i$ is in~$O(|G_i|)$ as analyzed above, this yields a total running time of~$O(\sum_{i=1}^k|G_i|) \subseteq O(n+m) = O(n)$.

    It remains to show that~$(\lambda_w^{e_w})_{w \in V \setminus \{t\}}$ for the computed embedding and the chosen links~$e_w$ is perfectly resilient for~$(G,t)$.
    To this end, consider an arbitrary starting node~$s$ and a set~$F$ of failed links such that an~$s$-$t$-path remains in~$G \setminus F$.
    Due to the choice of~$e_s$, the routing~$\rho$ corresponding to~$\lambda$, $s$, and~$F$ traverses either along the outer face or along one of the faces ``between'' two components~$G_i$ and~$G_{i+1}$ for some~$i \in [k-1]$.
    Since a path towards~$t$ remains, the routing~$\rho$ contains~$u$ or~$v$.
    We assume without loss of generality that it contains~$u$ first (as the other case is completely symmetric).
    If the link~$\{u,t\}$ does not fail, then~$t$ is reached in the next step due to the definition of the updated right-hand rule~$\lambda$.
    So assume that~$\{u,t\}$ fails and note that this means that~$\{v,t\}$ does not fail.
    Moreover, since the outer face and any face between two components~$G_i$ and~$G_{i+1}$ contain both~$u$ and~$v$, after removing the links in~$F$, it is still true that the new (potentially larger) face is incident to both~$u$ and~$v$ and there is still a path connecting~$u$ and~$v$ along this face (as otherwise there would be no path between~$u$ and~$v$ and therefore~$s$ and~$t$ would be disconnected).
    Hence~$\rho$ also contains~$v$ and since~$\{v,t\}$ does not fail, $\rho$ also contains~$t$.
\end{proof}

The second class of rooted graphs that we investigate in this section are rings of outerplanar graphs.
Recall that a rooted graph~$(G,t)$ is a ring of outerplanar if it can be constructed from chaining together a set of~$k \geq 3$ connected outerplanar graphs~$G_1,G_2,\ldots,G_k$ in a ring and making all of the intersections neighbors of~$t$.

\begin{proposition}
    \label{prop:ring}
    Let~$(G,t)$ be a ring of outerplanar graphs.
    Then, we can compute in~$O(n)$ time a planar embedding of~$G$ and a link~$e_v \in E_v$ for each node~$v$ such that~$(\lambda_v^{e_v})_{v \in V \setminus \{t\}}$ is a perfectly resilient forwarding pattern for~$(G,t)$.
\end{proposition}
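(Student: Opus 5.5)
We follow the template of the proof of \cref{prop:dipole}: compute outerplanar embeddings of the pieces, glue them into a planar embedding of~$G$ in which~$t$ lies in the outer face, pick the links~$e_w$ so that every routing travels along a well-understood face, and then argue by following that face.

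\textbf{Construction and running time.} Let~$u_1,\dots,u_k$ be the access nodes, so each~$G_i$ contains~$u_i$ and~$u_{i+1}$ (indices modulo~$k$). For each~$i$ we compute an outerplanar embedding of~$G_i$ with \cref{cor:outerplanar} in~$O(|G_i|)$ time. Since~$G_i$ is connected and both~$u_i$ and~$u_{i+1}$ lie on its outer face, its outer boundary walk splits into two $u_i$-$u_{i+1}$ walks. We call one of them the \emph{inner} side (facing the inner region of the ring) and the other the \emph{outer} side (facing~$t$).

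We place the~$G_i$ around a circle in cyclic order. At each shared node~$u_{i+1}$, the links of~$G_i$ and~$G_{i+1}$ form two consecutive blocks in the rotation, and the link~$\{t,u_{i+1}\}$ is inserted between the two blocks on the outer side. The node~$t$ is drawn outside the ring.

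The resulting plane graph has the following faces:
\begin{itemize}
\item one \emph{central} face, bounded by the concatenation of all inner sides;
\item for each~$i$, a face bounded by~$t$, $u_i$, the outer side of~$G_i$, and~$u_{i+1}$;
\item the internal faces of each~$G_i$.
\end{itemize}
For every non-access node~$w$, we set~$e_w$ to be the link entering~$w$ when the outer face of~$G_i$ is traversed clockwise, exactly as in \cref{thm:gt} and \cref{prop:dipole}. For~$u_i$, we take~$e_{u_i}$ to be the link that puts~$u_i$'s default on the same orientation (for instance, the last inner-side link). All of these steps are local to the~$G_i$ plus a rotation splice at the~$u_i$, so the total time is~$O(\sum_i |G_i|) = O(n)$.

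\textbf{Correctness.} Fix~$s$ and~$F$ such that an $s$-$t$ path survives. First, by the choice of~$e_s$ and by the right-hand rule, the routing follows the boundary of a face of~$G\setminus F$. This face is the merged outer face of some~$G_i$ together with whatever faces are merged with it. Since~$G_i$ is outerplanar, the faces of~$G_i\setminus F$ that are incident to its former outer boundary are exactly what the routing traverses, as in the proof of \cref{thm:gt}.

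Second, every node of~$G$ is connected to~$t$ only through the access nodes. So if a path to~$t$ survives, walking along this face eventually reaches some~$u_j$; the argument mirrors the one used in \cref{prop:dipole}. If~$\{u_j,t\}$ does not fail, the updated right-hand rule sends the packet to~$t$ immediately.

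Otherwise, at~$u_j$ the link to~$t$ is skipped, and the right-hand rule continues along the face in a fixed rotational direction. The key claim is that the routing then progresses monotonically around the ring. From~$u_j$ it proceeds along a boundary of~$G_j$ (or of~$G_{j-1}$, depending on orientation) towards~$u_{j+1}$. If the two are still connected inside that piece, it reaches~$u_{j+1}$. If they are disconnected there by~$F$, the face boundary wraps around the component and returns to~$u_j$ on the other side, after which it continues into the neighboring piece. In both cases the walk keeps visiting access nodes in cyclic order, without ever reversing direction permanently.

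Since some access node~$u_\ell$ with~$\{u_\ell,t\}\notin F$ is reachable from~$s$ in~$G\setminus F$, and since the central face together with the merged outer faces links all surviving connections around the ring, the walk must eventually arrive at such a~$u_\ell$ and then enter~$t$.

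\textbf{Main obstacle.} The delicate step is this cyclic-progress argument. When~$F$ cuts several pieces, the walk may alternate between the central face and the $t$-faces, since at an access node with a failed $t$-link those two faces merge. We must show that it never cycles among access nodes whose $t$-links all fail while a good access node is reachable.

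Our plan is a potential argument. The boundary walk of the merged face of~$G\setminus F$ containing the start visits every access node in the connected component of~$s$ in~$G\setminus F$ that lies on that face. Because each~$u_i$ lies on both the central face and the $t$-faces, every access node of that component lies on the single merged face. A face walk is periodic and covers its whole boundary before repeating, so the routing cannot cycle before reaching a good access node. Verifying that every such access node indeed lies on the same merged face is where the~$e_{u_i}$ choices and the ring structure (with~$k\ge 3$ ensuring a proper cycle) are used. If this fails for some degenerate configuration, the fallback is a case analysis on whether the walk is currently on the central side or on the $t$ side.
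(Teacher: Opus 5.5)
Your construction (outerplanar embeddings of the $G_i$ glued at the $u_i$, with $e_w$ taken from clockwise outer-face traversals) matches the paper, and so does the face-walk idea in your last paragraph. However, the correctness argument as written contains a false step and leaves the decisive one unproved. The claimed monotone progress around the ring fails. If $F$ separates $u_j$ from $u_{j+1}$ inside $G_j$, the walk circles the dead-end part and returns to $u_j$ on the other side. If also $\{u_j,t\}\in F$, it then runs back along the other side of $G_{j-1}$. For instance, let $k=4$ and let each $G_i$ be a path with one inner node, so that $G-t$ is the cycle $(u_1,a,u_2,b,u_3,c,u_4,d)$. Let $F=\{\{u_1,t\},\{u_2,t\},\{u_3,t\},\{b,u_3\}\}$. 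A routing that leaves $a$ towards $u_2$ is forced to be $(a,u_2,b,u_2,a,u_1,d,u_4,t)$, meeting the access nodes in the order $u_2,u_1,u_4$. So the proof must rest on your fallback. There, the key sentence (every access node of the component of $s$ lies on the merged face containing the start) is stated in the wrong graph and is false in $G\setminus F$, because surviving $t$-links keep the $t$-faces apart. For example, take $F=\emptyset$ and $s$ on the outer side of $G_i$. The face containing the start is the $t$-face between $u_i$ and $u_{i+1}$, which touches no other access node, although all of them lie in the component of $s$.

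The missing observation, which is essentially the paper's whole correctness proof, is to argue in $G-t$.
\begin{itemize}
\item Until the routing meets an access node whose $t$-link survives, the updated right-hand rule coincides with the plain right-hand rule of $(G-t)\setminus F$.
\item In $G-t$ only two faces are not internal to some $G_i$: the outer face (obtained by merging your $t$-faces) and the central face. \emph{Each} of them is incident to every $u_i$.
\item The choice of $e_s$ places the starting corner in one of these two faces. For $s=u_i$ this requires $e_{u_i}$ to enter $u_i$ along the outer or the central face in the traversal direction, rather than being followed by an internal face of $G_{i-1}$ or $G_i$. The paper takes the link through which $u_i$ is entered in the clockwise traversal of the outer face of $G-t$.
\end{itemize}
Hence the routing is the boundary walk, in the component $C$ of $s$ in $(G-t)\setminus F$, of the face containing this corner, and every access node of $C$ lies on that boundary. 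The component $C$ contains the last node before $t$ on a surviving $s$-$t$ path, which is an access node with a live $t$-link. The boundary walk of a face of a connected plane graph visits every node on that boundary. So the routing reaches a live access node and then $t$, with no monotonicity claim and no case analysis needed.
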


\begin{proof}
    Let~$U= \{u_1,u_2,\ldots,u_k\}$ be the neighbors of~$t$ in~$G$.
    Let~$G_i$ be the outerplanar graph connecting~$u_i$ with~$u_{i+1}$ (where~$u_{k+1} = u_k$).
    Note that we can compute the ordering within~$U$ and each graph~$G_i$ in overall~$O(n)$ time by first computing the (node sets~$V_1,V_2,\ldots,V_\ell$ of the) connected components~$C$ of~$G-(N(t) \cup \{t\})$, then checking for each node~$u\in U$ to which connected components they connect to, and then computing~$G_i = G[\{u_i,u_{i+1}\} \cup \{v \in V_j \mid j \in K_i\}]$, where~$K_i$ is the set of indices~$j$ such that the connected component corresponding to~$V_j$ connects to~$u_i$ and~$u_{i+1}$.
    Note that~$|K_i|=2$ is possible if~$G_i$ is a cycle.
    Next, since each graph~$G_i$ is by assumption outerplanar, we can compute an outerplanar embedding of it in~$O(|G_i|)$ time by \cref{cor:outerplanar}.
    We now traverse the outer face in this embedding in clockwise order (also in~$O(|G_i|)$ time) and for each node~$v$ except for~$u_i$ and~$u_{i+1}$, we set~$e_v$ to be the link entering~$v$ in this traversal.
    
    We next combine the different outerplanar embeddings.
    To this end, for each node~$u_i$, we place all incident links to neighbors in~$G_{i-1}$ in the computed order for the embedding for~$G_{i-1}$ and then all links to neighbors in~$G_i$ in the order corresponding to the computed embedding for~$G_i$.
    All other nodes keep their incident link structure from the embedding of their respective outerplanar graph.
    We then traverse around the outer face of the current graph~$G-t$ in clockwise order and for each node~$u_i$, we set~$e_{u_i}$ to be the link through which~$u_i$ is entered in this traversal.
    Note that each node~$u_i$ is on the outer face by construction.
    To conclude the construction, we add~$t$ and make it adjacent to all nodes in~$U$.
    Since all nodes~$u_i$ belong to the outer face, the constructed embedding is plane.
    Moreover, when starting at the link~$\{u_1,t\}$ and walking around~$t$ in counterclockwise order, the links incident to~$t$ appear in the order~$(\{u_1,t\},\{u_2,t\},\ldots,\{u_k,t\})$.
    
    The running time of the construction is~$O(n)$ as each link is considered for only one graph~$G_i$ and each node is considered at most twice.
    Hence~$\sum_{i=1}^k O(|G_i|) = O(n)$.
    Adding all links incident to~$t$ also takes at most~$O(n)$ time as adding one link takes constant time.
    Moreover, note that the faces of~$G-t$ can be characterized as follows.
    Each internal face of a graph~$G_i$ appears exactly the same in~$G-t$.
    The outer face of each graph~$G_i$ can be expressed as the gluing of two walks between~$u_i$ and~$u_{i+1}$.
    The outer face of~$G-t$ is the gluing of exactly one of these walks for each~$G_i$.
    Moreover, there is a new internal face~$F^*$ that is the gluing of the remaining walks.

    It remains to show that~$(\lambda_w^{e_w})_{w \in V \setminus \{t\}}$ for the computed embedding and the chosen links~$e_w$ is perfectly resilient for~$(G,t)$.
    To this end, consider an arbitrary starting node~$s$ and a set~$F$ of failed links such that an~$s$-$t$-path remains in~$G \setminus F$.
    Due to the choice of~$e_s$, the routing~$\rho$ corresponding to~$\lambda$, $s$, and~$F$ traverses either along the outer face or along the face~$F^*$.
    If at least one graph~$G_i\setminus F$ is disconnected, then both faces are the same.
    In either case, the updated right-hand rule traverses along the face (in~$G-t$) until an active neighbor of~$t$ is reached at which point, $t$ is the next node in the routing.
    Since the face is incident to all nodes in~$U$, the only possibility that~$t$ is not reached is if all links incident to~$t$ fail.
    However, this contradicts the assumption that an~$s$-$t$-path remains in~$G \setminus F$.
    This concludes the proof.
\end{proof}

\section{New Fundamental Structures Inhibiting Perfect Resilience}
\label{sec:notpr}
In this section, we show that the~\sK{} and the~\Ktf{} are not perfectly resilient.
These will be the basis for our characterization of perfectly resilient rooted graphs later on.
For reference, the two rooted graphs are depicted in~\cref{fig:nonres}, together with the naming of nodes we will use in this section.
\begin{figure}[t]
    \centering
    \begin{tikzpicture}
        \node[por,label=$u$] at(-4,2) (w) {};
        \node[te] at(-2,2) (t) {} edge(w);
        \node[ter,label=left:$x$] at(-5,0) (u) {} edge(w);
        \node[por,label=right:$v$] at(-1,0) (v) {} edge(t);
        \node[ver,label=$p$] at(-3,1) {} edge(u) edge(v);
        \node[ver,label=$q$] at(-3,0) {} edge(u) edge(v);
        \node[ver,label=$r$] at(-3,-1) {} edge(u) edge(v);

        \node[te] at(3,2) (t) {};
        \node[por,label=right:$w$] at(3,1) (w) {} edge(t);
        \node[por,label=left:$u$] at(1,0) (u) {} edge(t);
        \node[por,label=right:$v$] at(5,0) (v) {} edge(t);
        \node[ter,label=below:$x$] at(3,0) {} edge(u) edge(v) edge(w);
        \node[ver,label=below:$p$] at(3,-1) {} edge(u) edge(v);
        \node[ver,label=right:$q$] at(3,-2) {} edge(u) edge(v);
    \end{tikzpicture}
    \caption{The \sK{} (left) and \Ktf{} (right).}
    \label{fig:nonres}
\end{figure}
We start with the \sK.

\begin{proposition}
    \label{prop:sk}
    The \sK{} is not perfectly resilient.
\end{proposition}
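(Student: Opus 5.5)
The plan is to argue by contradiction. Fix an arbitrary forwarding pattern $\pi$ for the \sK{} with the node names of \cref{fig:nonres}: $t$ is adjacent to $u$ and $v$, $u$ is adjacent to $x$, and each of $p,q,r$ is adjacent to both $x$ and $v$. We exhibit a start node $s$ and a failure set $F$ such that $s$ and $t$ stay connected in $G \setminus F$ but the routing loops forever.

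The only interesting decisions are made at $x$ (neighbors $u,p,q,r$) and at $v$ (neighbors $p,q,r,t$). The middle nodes $p,q,r$ have degree two, so any sensible rule makes them simply forward a packet across. If a middle node bounces a packet back, we argue directly that this only helps us build a loop.

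\textbf{Step 1: force cyclic behaviour at $v$.} Let $\{v,t\}$ fail. Then $t$ is reachable only through $x$ and $u$, and all of $p,q,r$ are relevant neighbors of $v$ by the definition of relevance. \cref{lem:orbit} applied to $v$ with $F_v=\{\{v,t\}\}$ shows that $\pi_v$ restricted to in-ports $p,q,r$ is a cyclic permutation. Call it $\gamma_v$; up to renaming, $\gamma_v$ is either $p\to q\to r\to p$ or its reverse.

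\textbf{Step 2: force cyclic behaviour at $x$.} Let $\{u,t\}$ fail. Then $p,q,r$ are relevant neighbors of $x$ (each reaches $t$ via $v$), while $u$ is active but not relevant. \cref{lem:orbit} at $x$ with $F_x=\emptyset$ says that the orbit of each of $p,q,r$ contains all three of them. The orbit under $\pi_x$ is a cyclic arrangement of $p,q,r$ into which $u$ may be inserted somewhere. This defines an induced cyclic order $\gamma_x$ on $\{p,q,r\}$.

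\textbf{Step 3: build the trap.} A packet travelling $x \to a \to v \to \gamma_v(a) \to x \to \dots$ never meets $t$ provided it never exits via $u$ (or $u$ is useless) and $v$ never uses $\{v,t\}$. We compose the two cyclic orders: after entering $v$ from $a$, the packet returns to $x$ from $\gamma_v(a)$, and $x$ then sends it to the $\gamma_x$-successor (modulo $u$). There are two cases.
\begin{itemize}
\item If $\gamma_x$ and $\gamma_v$ are ``opposite'', the composed map $a\mapsto\gamma_x(\gamma_v(a))$ on $\{p,q,r\}$ has a fixed point or a short cycle avoiding one middle node. Then $x$ ping-pongs with $v$ through two middle nodes.
\item If they are ``equal'', the composed map has order three and visits all middle nodes.
\end{itemize}
In each case we then add failures of individual links $\{a,v\}$ or $\{a,x\}$ for middle nodes $a$, plus one of $\{u,t\}$ or $\{v,t\}$. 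The aim is to choose these so that the closed walk never reaches an active link to $t$, while a path to $t$ survives through a middle node the walk never visits, or through $u$ in a state where $x$ never selects $u$. The priority lists of $x$ and $v$ under these new local failure sets are not determined by $\gamma_x$ and $\gamma_v$. We therefore reapply \cref{lem:orbit} to the new $F_x$ and $F_v$, typically with only two relevant neighbors, which forces a swap, and enumerate the few resulting possibilities.

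The main obstacle is Step 3. The forwarding functions of $x$ and $v$ may behave completely differently for different local failure sets, so orbits from Steps 1 and 2 cannot simply be reused after failing extra links. I would first try to get a two-neighbor situation at $v$ (fail $\{v,t\}$ and one $\{a,v\}$), which forces $v$ to swap the remaining two middle nodes. I would then pick the failure at $x$ so that the orbit of $x$ restricted to those two nodes either ignores $u$ or returns to the same pair, yielding the loop. I expect a short case distinction on where $u$ sits in the orbit of $x$ to finish the argument.
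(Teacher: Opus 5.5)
Your Step 3 is where the whole proof has to happen, and it is only a plan. You never fix a start node and a failure set and trace the routing; you say yourself that you \emph{expect} a case distinction to finish. That is a genuine gap.

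The gap is real, not a formality. In your ``equal'' case, failing only $\{v,t\}$ provably does not trap the packet: from every in-port at $x$ the composed dynamics reach the in-port whose successor is $u$, and then $t$. So a further idea is needed. Your description of which extra links to fail, possibly including a link at $x$, is too vague to check. A failure at $x$ would also discard the information you have about $\pi_x$.

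Step 2 also contains an error that hides exactly the structure you need. Relevance is defined with respect to the local set $F_x$ and paths in $G$. Failing the non-incident link $\{u,t\}$ therefore changes nothing, and with $F_x=\emptyset$ the neighbor $u$ \emph{is} relevant, via the path $(u,t)$. Hence \cref{lem:orbit} forces $\pi_x(\cdot,\emptyset)$ to be a single $4$-cycle $u\to a_1\to a_2\to a_3\to u$ on $\{u,p,q,r\}$. In particular exactly one middle in-port, $a_3$, leads to $u$. Likewise, \cref{lem:orbit} alone forces each middle node with both links intact to pass packets across, so no separate ``bouncing'' argument is needed.

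The missing step is to use this $4$-cycle to decide which middle link to fail, with no failure at $x$. Take $s=a_2$ and $F=\{\{v,t\},\{a_2,v\}\}$, and trace the routing:
\begin{enumerate}
\item $a_2$ has only one active link, so it forwards to $x$.
\item $x$ receives the packet on in-port $a_2$ and sends it to $a_3$.
\item $a_3$ passes it to $v$, since both neighbors of $a_3$ are relevant.
\item At $v$ the active neighbors are $a_1$ and $a_3$, both relevant, so \cref{lem:orbit} forces $v$ to send the packet to $a_1$.
\item $a_1$ passes it to $x$, and $x$ sends it to $a_2$.
\item $a_2$ reflects it back to $x$, and the pattern repeats.
\end{enumerate}
This gives the routing $(a_2,x,a_3,v,a_1,x,a_2,x,\ldots)$, which never reaches $t$, even though the path $(a_2,x,u,t)$ survives. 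The failed link $\{a_2,v\}$ acts as a reflector: $x$ only ever sees in-ports $a_1$ and $a_2$, never $a_3$. The argument works for either cyclic orientation at $v$, so your Step 1 and the same/opposite case split are unnecessary.
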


\begin{proof}
    Assume towards a contradiction that~$\pi$ is a perfectly resilient forwarding pattern for the $\sK = (G,t)$.
    Note that for~$F_x = \emptyset$, all neighbors of~$x$ ($u$, $p$, $q$, and~$r$) are relevant for~$x$.
    Hence, $\pi_x(u,\emptyset) \neq u$ by \cref{lem:orbit}.
    Since~$p$, $q$, and~$r$ are all equivalent up to isomorphism, we assume without loss of generality that~$\pi_x(u,\emptyset)=p$.
    Again by \cref{lem:orbit}, $\pi_x(p,\emptyset) \notin \{u,p\}$.
    Again due to the symmetry of~$q$ and~$r$, we assume that~$\pi_x(p,\emptyset) = q$.
    The same argument also shows that we may assume~$\pi_x(q,\emptyset)=r$ and~$\pi_x(r,\emptyset)=u$.
    We will argue that for the starting node~$s=q$ and the set~$F=\{\{v,t\},\{q,v\}\}$ of failed links, the resulting routing~$\rho$ for~$\pi$, $s$, and~$F$ does not contain~$t$, contradicting the assumption that~$\pi$ is perfectly resilient as the path~$(q,x,u,t)$ from~$s=q$ to~$t$ remains in~$G \setminus F$.
    Note that since~$x$ is the only neighbor of~$q$ in~$G \setminus F$, it holds that~$\pi_q(f,\{v\}) = x$ for each in-port~$f$.
    Moreover, by \cref{lem:orbit} and the fact that for the set~$F_v = \{\{v,q\},\{v,t\}\}$ of failed links incident to~$v$ both~$p$ and~$r$ are relevant neighbors for~$v$, it holds that~$\pi_v(r,\{q,t\}) = p$.
    Similarly, $\pi_p(v,\emptyset) = x$ and~$\pi_r(x,\emptyset) = v$.
    Thus, $\rho = (q,x,r,v,p,x,q,x,\ldots)$.
    Note that the routing~$\rho$ repeats the subsequence~$(q,x)$ at this point and hence the routing will continue indefinitely with the same pattern not containing~$t$.
    This concludes the proof.
\end{proof}

We continue with the \Ktf.

\begin{proposition}
    \label{prop:ktf}
    The \Ktf{} is not perfectly resilient.
\end{proposition}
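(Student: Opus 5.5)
The plan is to argue by contradiction, as in \cref{prop:sk}: fix a perfectly resilient forwarding pattern~$\pi$ for the \Ktf{} with the node names of \cref{fig:nonres}. Recall that $t$ is adjacent to $u,v,w$, the node $x$ is adjacent to $u,v,w$, and $p,q$ are adjacent to $u,v$. The first step is to use \cref{lem:orbit} and the symmetries of the graph to pin down enough of $\pi$ to determine a routing completely. The second step is to exhibit a starting node~$s$ and a failure set~$F$ that leave an $s$-$t$-path, yet make the routing enter a repeating pattern that never reaches~$t$.

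\textbf{Step 1: forced local choices.} I would derive these in the following order.
\begin{compactitem}
    \item At $x$ with $F_x=\emptyset$, all three neighbors $u,v,w$ are relevant (via $u$-$t$, $v$-$t$, and $w$-$t$). By the remark after \cref{lem:orbit}, $\pi_x(\cdot,\emptyset)$ is a cyclic permutation of $\{u,v,w\}$. Using the symmetry swapping $u$ and $v$, we may assume $\pi_x(u,\emptyset)=v$, $\pi_x(v,\emptyset)=w$, and $\pi_x(w,\emptyset)=u$.
    \item At $w$ with $F_w=\emptyset$, both $x$ and $t$ are relevant, so $\pi_w(x,\emptyset)=t$. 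Hence any routing that enters $x$ from $v$ with $F_x=\emptyset$ and $\{w,t\}\notin F$ succeeds, and the failure set must be designed so that this transition never occurs.
    \item At $u$ with $F_u=\{\{u,t\}\}$, the neighbors $x$ (via $w$) and $p,q$ (via $v$) are relevant, so the orbit is a $3$-cycle on $\{x,p,q\}$. The same holds at $v$ with $F_v=\{\{v,t\}\}$. The symmetry swapping $p$ and $q$ fixes the orientation at one of $u,v$; the orientation at the other remains a genuine case split, giving two cases.
    \item I would also use the analogous forced choices at $p$ and $q$, and at $u$ or $v$ under one or two further incident failures, where exactly two neighbors are relevant. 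In those situations the forwarding function must send the packet to the other relevant neighbor, exactly as $\pi_v(r,\{q,t\})=p$ was used in \cref{prop:sk}.
\end{compactitem}

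\textbf{Step 2: the trapping failure set.} I would choose $F$ so that $t$ is reachable only through a single access node, and so that entering $x$ from $v$ is impossible or harmless. My first candidate is $F=\{\{v,t\},\{x,w\}\}$, possibly together with one link at $p$ or $q$. Here $t$ is still reachable through $u$, so $x$ now has only $u,v$ active, and the packet must be kept from reaching $u$ along an in-port after which $u$ forwards to~$t$. Since $\{u,t\}\notin F$ in this candidate, $u$ is not forced to forward to $t$. I would therefore also try the variant $F\supseteq\{\{u,t\},\{v,t\}\}$, where $t$ is reachable only via $w$. In that variant the goal is a cycle through $u,p,v,q$ that either never visits $x$ or visits $x$ only from~$u$, with $x$ then forwarding to~$v$.

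For each orientation case from Step 1, I would trace the routing, as in \cref{prop:sk}, until a pair of consecutive nodes repeats. If one choice of $F$ fails in some case, that case should itself force a further orbit choice, and I would then use a mirrored failure set in that case.

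\textbf{Main obstacle.} The hard part is finding a single failure set, or a small family of them indexed by the cases, that defeats every combination of orbit orientations at $u$ and $v$. The remaining freedom in $\pi$ for larger failure sets is not constrained by \cref{lem:orbit}. The failure sets must therefore keep every node the routing visits in a state where only two or three relevant neighbors remain, so that all its choices are forced. I expect some trial over the two orientation cases to be needed here.
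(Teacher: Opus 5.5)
Your Step~1 is correct: the cyclic orbit at $x$ normalized to $u\mapsto v\mapsto w\mapsto u$, the forced move $\pi_w(x,\emptyset)=t$, and the $3$-cycles on $\{x,p,q\}$ at $u$ and at $v$ when only their link to $t$ fails. But the proposal stops exactly where the proof has to happen. You never fix a starting node and a failure set, never trace a routing, and leave the orientation at $v$ as an unresolved case split, which you yourself flag as the ``main obstacle''. This is a genuine gap, and your variant $F=\{\{u,t\},\{v,t\}\}$ does not close it as it stands. Suppose that after normalizing, $\pi_u(x,\{\{u,t\}\})=p$ and also $\pi_v(x,\{\{v,t\}\})=p$. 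Then the forced choices (with $\pi_p$ and $\pi_q$ simply swapping $u$ and $v$) send a packet that enters $u$ from $x$ along $u,p,v,q,u,x,v,p,u,q,v,x,w,t$. Every node/in-port state of this failure scenario feeds into that walk, so every routing reaches $t$. Some further failure is therefore indispensable.

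The missing idea is to fail one more link at $v$, so that $v$'s choice becomes forced and the case split disappears. Use the $p\leftrightarrow q$ symmetry to normalize $\pi_u(x,\{\{u,t\}\})=p$, so that $u$ cycles $x\mapsto p\mapsto q\mapsto x$. Then take $s=q$ and $F=\{\{u,t\},\{v,t\},\{q,v\}\}$; the path $(q,u,x,w,t)$ survives.
\begin{itemize}
\item The only active neighbor of $q$ is $u$.
\item At $v$ only $x$ and $p$ are active, and both are relevant. So $\pi_v(x,\{\{q,v\},\{v,t\}\})=p$ by \cref{lem:orbit}, independently of how $v$ orients its $3$-cycle under $F_v=\{\{v,t\}\}$.
\item The node $p$ has no incident failure, so it must send in-port $v$ to $u$.
\end{itemize}
The routing is therefore $(q,u,x,v,p,u,q,u,\ldots)$. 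It re-enters $u$ from $q$, so it cycles forever and never visits $w$, contradicting perfect resilience. This is exactly the paper's argument. Your plan lists all of its ingredients, including failures that leave exactly two relevant neighbors, but without this concrete choice of $s$ and $F$ it does not yet prove the proposition.
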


\begin{proof}
    The proof is similar to that of \cref{prop:sk}.
    Assume towards a contradiction that~$\pi$ is a perfectly resilient forwarding pattern for~$\Ktf = (G,t)$.
    Note that for~$F_x = \emptyset$, all neighbors of~$x$ ($u$, $v$, and~$w$) are relevant for~$x$.
    Hence, $\pi_x(w,\emptyset) \neq w$ by \cref{lem:orbit}.
    Since~$u$, and~$v$ are equivalent up to isomorphism, we assume without loss of generality that~$\pi_x(w,\emptyset)=u$.
    The same argument shows that~$\pi_x(u,\emptyset) = v$ and~$\pi_x(v,\emptyset) = w$.
    Similarly, for the set~$F_u=\{u,t\}$ of failed links incident to~$u$, all three remaining active neighbors ($x$, $p$, and~$q$) are relevant for~$u$.
    Hence, $\rou{u}{t}{x} \neq x$ by \cref{lem:orbit} and since~$p$ and~$q$ are equivalent up to isomorphism, we assume without loss of generality that~$\rou{u}{t}{x} = p$.
    By \cref{lem:orbit}, this shows~$\rou{u}{t}{p}=q$ and~$\rou{u}{t}{q}=x$.
    We will argue that for the starting node~$s=q$ and the set~$F=\{\{u,t\},\{v,t\},\{q,v\}\}$ of failed links, the resulting routing~$\rho$ for~$\pi$, $s$, and~$F$ does not contain~$t$, contradicting the assumption that~$\pi$ is perfectly resilient as the path~$(q,u,x,w,t)$ from~$s=q$ to~$t$ remains in~$G \setminus F$.
    Note that since~$u$ is the only neighbor of~$q$ in~$G \setminus F$, it holds that~$\pi_q(f,\{v\}) = u$ for each in-port~$f$.
    Moreover, by \cref{lem:orbit} and the fact that for the set~$F_v = \{\{v,q\},\{v,t\}\}$ of failed links incident to~$v$ both~$p$ and~$x$ are relevant neighbors for~$v$, it holds that~$\rou{v}{q,t}{x} = p$.
    Similarly, $\roue{p}{v} = u$.
    Thus, $\rho = (q,u,x,v,p,u,q,u,\ldots)$.
    Note that the routing~$\rho$ repeats the subsequence~$(q,u)$ at this point and hence the routing will continue indefinitely with the same pattern not containing~$t$.
    This concludes the proof.
\end{proof}

To conclude this section, we mention that the choice of~$t$ matters in all four minimal traps we consider in this paper.
For the~\Kf{} and the~\Kt, note that if~$t$ is not incident to the missing link, then~$G-t$ is outerplanar in all cases (easily seen by considering the node~$u$ in \cref{fig:obstructionsrepeat}).
Hence, the resulting rooted graph is perfectly resilient by \cref{thm:gt}.
For the \sK, note that~$u$ (in \cref{fig:nonres}) is symmetric to~$t$ up to isomorphism.
For~$t \in \{x,v\}$, the graph~$G-t$ is again outerplanar.
For~$t \in \{p,q,r\}$, the resulting graph is a dipole outerplanar graphs and therefore perfectly resilient by \cref{prop:dipole}.
Finally, for the~\Ktf, $x$ (again in \cref{fig:nonres}) is symmetric to~$t$.
For~$t \in \{u,v\}$, the graph~$G-t$ is outerplanar and therefore perfectly resilient by \cref{thm:gt}.
For~$t \in \{p,q,w\}$, the resulting rooted graph is also not perfectly resilient as~\Kt{} is a rooted minor in this case.

\section{Preprocessing: Simplifying Structure}
\label{sec:pre}
In this section, we present a number of preprocessing routines to simplify the structure of the input graph by constructing an equivalent instance in terms of perfect resilience with the following additional properties.
We will first show in \cref{sec:pre1} how to preprocess the graph so that the result is a connected planar subgraph.
Afterwards, we will show in \cref{sec:pre2} how to ensure that the graph is biconnected.
These preprocessing routines will be relatively simple.
We conclude in \cref{sec:pre3} with a much more technically involved preprocessing routine to deal with separating links.
In order to do so, we show a couple of auxiliary results regarding separating links in \cref{sec:prehelp1} and define a new kind of planar embedding in \cref{sec:prehelp2}.
We also prove some lemmas regarding these embeddings and a relation to an important subclass of skipping forwarding patterns that we also define in \cref{sec:prehelp2}.
These preprocessing routines and the additional structural properties they provide will help us later in characterizing perfectly resilient rooted graphs in terms of forbidden rooted minors.
For the sake of readability, we will first present each of the preprocessing routines for \prd{} and afterwards describe how to generalize it for \prs, that is, how to construct perfectly resilient forwarding patterns for the original instance given a solution for the instance after the preprocessing.

\subsection{Ensuring Planarity and Connectivity}
\label{sec:pre1}
We start with the simple observation that we can assume without loss of generality that the input graph is connected.
To this end, consider any connected components that does not contain the root~$t$.
Note that if the routing starts in this component, then it can never reach~$t$.
This immediately yields the following observation.
\begin{observation}
    \label{obs:connected}
    Let~$(G,t)$ be a rooted graph and let~$G'$ be the connected component of~$G$ containing~$t$.
    Then, $(G,t)$ is perfectly resilient if and only if~$(G',t)$ is.
\end{observation}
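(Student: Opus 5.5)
The plan is to prove the two directions separately, transferring forwarding patterns between $(G,t)$ and $(G',t)$ in each case. The key facts are that a routing never leaves the connected component of its starting node, and that the perfect-resilience condition is vacuous for starting nodes outside the component of~$t$.

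\emph{($\Rightarrow$)} This direction is immediate from \cref{prop:subgraph}. Here $G'$ is a subgraph of~$G$ containing~$t$, so if $(G,t)$ is perfectly resilient, so is $(G',t)$. For a self-contained argument, one can also restrict a perfectly resilient pattern~$\pi$ for~$G$ to the nodes of~$G'$. For $v \in V(G')$ we have $E_v \subseteq E(G')$, since every neighbor of~$v$ lies in its component. Hence $\pi_v$ is already a forwarding function in~$G'$.

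Now take a start node $s\in V(G')$ and a failure set $F\subseteq E(G')$ such that $s$ and~$t$ are connected in $G'\setminus F$. Viewing~$F$ as a failure set in~$G$, $s$ and~$t$ are still connected in $G\setminus F$. The routing in~$G$ is determined only by the values $\pi_v(\cdot,E_v\cap F)$ at nodes it visits, and all of these lie in~$G'$. So the routing in~$G$ coincides with the routing in~$G'$, and it reaches~$t$.

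\emph{($\Leftarrow$)} Let $\pi'$ be a perfectly resilient pattern for $(G',t)$. Define $\pi_v=\pi'_v$ for $v\in V(G')\setminus\{t\}$, and let $\pi_v$ be an arbitrary forwarding function (e.g.\ any skipping priority list) for every $v\notin V(G')$. Take a start node~$s$ and a failure set $F\subseteq E$ such that $s$ and~$t$ are connected in $G\setminus F$. Then $s\in V(G')$. Each step of the routing follows a non-failed link incident to the current node, so by induction the routing stays in~$V(G')$. Moreover, it uses only the values $\pi'_v(\cdot,E_v\cap F)=\pi'_v(\cdot,E_v\cap(F\cap E(G')))$.

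Therefore the routing equals the routing in~$G'$ with failure set $F\cap E(G')$. Any $s$-$t$-path in $G\setminus F$ lies inside~$G'$, so $s$ and~$t$ remain connected in $G'\setminus(F\cap E(G'))$. By perfect resilience of~$\pi'$, the routing reaches~$t$.

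No step is a genuine obstacle. The only point needing care is checking that the routing is well defined and identical in both graphs, which rests on $E_v$ being the same in~$G$ and~$G'$ for every $v\in V(G')$.
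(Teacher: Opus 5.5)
Your proof is correct and follows the paper's reasoning. The paper justifies the observation only by remarking that a routing starting outside $t$'s component can never reach $t$, so the requirement is vacuous there, and in the synthesis step it likewise assigns arbitrary skipping priority lists to those nodes. Your argument makes explicit the facts the paper leaves implicit: $E_v$ is the same in $G$ and $G'$ for every $v\in V(G')$, so routings starting in $G'$ stay in $G'$ and coincide in both graphs.
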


Before we describe how to use this observation algorithmically for \prd{} and \prs, we first present a second result.
This should be seen as a small fix of the claim that any non-planar graph cannot be perfectly resilient~\cite{FHPST21} (which we showed in \cref{fig:planar} to be incorrect).

\begin{proposition}
    \label{prop:planar}
    Let~$(G,t)$ be a rooted graph where~$G$ is connected.
    If~$G$ is not planar, then~$(G,t)$ is not perfectly resilient.
\end{proposition}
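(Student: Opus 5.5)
By Wagner's theorem (\cref{thm:wagner}), a non-planar $G$ contains $K_5$ or $K_{3,3}$ as a minor. The plan is to turn this unrooted minor into a rooted minor $(H,t)$ with $H\in\{K_5,K_{3,3}\}$, or into a rooted minor containing $\Kf$ or $\Kt$. Non-perfect-resilience of $(G,t)$ then follows from \cref{prop:k5k33notpr} (resp.\ \cref{prop:notres}) together with \cref{cor:rootedminor}.

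Fix a model of the minor: disjoint connected branch sets $G_h\subseteq G$ for $h\in V(H)$, with a link of $G$ between $G_h$ and $G_{h'}$ whenever $hh'\in E(H)$. If $t$ lies in some $G_h$, then $(H,t)$ is already a rooted minor with root $h$. By \cref{prop:k5k33notpr}, $(H,t)$ is not perfectly resilient whatever the position of $t$, so we are done.

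Otherwise $t$ lies outside all branch sets. Because $G$ is connected, there is a shortest path $P$ from $t$ to $\bigcup_h V(G_h)$. Its internal nodes avoid all branch sets, and it ends at a node of some $G_{h_0}$. Enlarge $G_{h_0}$ to $G_{h_0}\cup P$. This set is still connected and still disjoint from the other branch sets, and all required adjacencies are preserved. Now $t\in G_{h_0}$, so $(H,t)$ is a rooted minor of $(G,t)$, and the previous case applies.

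The only genuine subtlety, and the reason connectivity is assumed (cf.\ \cref{fig:planar}), is this step of extending a branch set to absorb $t$ without destroying disjointness. Choosing $P$ as a shortest path to the union of the branch sets handles it cleanly, since then $P$ meets the minor only in its last node. Everything else is a direct invocation of earlier results.
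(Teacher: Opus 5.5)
Your proof is correct and follows the paper's argument essentially step for step. You apply Wagner's theorem and, if $t$ lies in no branch set, absorb into a branch set a path from $t$ to the first branch-set node it meets. Your shortest path plays the role of the paper's prefix of a $t$-$v$-path, and you conclude via \cref{prop:k5k33notpr} and \cref{cor:rootedminor}.
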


\begin{proof}
    Assume towards a contradiction that~$G$ is a connected non-planar graph and~$(G,t)$ is perfectly resilient.
    Since it is not planar, it contains~$K_5$ or~$K_{3,3}$ as a minor by \cref{thm:wagner}.
    Hence, there are (i) five node-disjoint subgraphs~$G_1,G_2,G_3,G_4,$ and~$G_5$ of~$G$ with at least one link between each pair of these five subgraphs or (ii) six subgraphs~$G_1,G_2,G_3,G_4,G_5,$ and~$G_6$ of~$G$ and for each~$i \in \{1,2,3\}$ and each~$j \in \{4,5,6\}$ there is at least one link between a node in~$G_i$ and a node in~$G_j$.
    We will show next that we can assume without loss of generality that~$t$ is contained in one of the five or six subgraphs.
    If this is not the case, then pick an arbitrary node~$v$ in~$G_1$.
    Since~$G$ is connected, there is a $t$-$v$-path~$P=(t,u_1,u_2,\ldots,u_k,v)$ in~$G$.
    Let~$w$ be the first node in~$P$ that is contained in one of the subgraphs.
    By assumption,~$w \neq t$, but~$w=v$ is possible.
    Let~$P'=(t,u_1,u_2,\ldots,w)$ be the beginning of~$P$ up to node~$w$.
    We will assume without loss of generality that~$w$ is contained in~$G_1$.
    The other cases are completely symmetric.
    By contracting all links in~$P'$, we obtain a new subgraph~$G'_1$ that is still node-disjoint from all other subgraphs and still has the above listed links towards other subgraphs.
    Thus, $(G,t)$ contains~$K_5$ or~$K_{3,3}$ as a minor where~$t$ is one of the nodes of the minor.
    By \cref{prop:k5k33notpr,cor:rootedminor}, the rooted graph~$(G,t)$ is not perfectly resilient, a contradiction.
    This concludes the proof.
\end{proof}

To conclude this subsection, we show how to apply the above results algorithmically.
For \prd{} the application follows directly from \cref{prop:embedding}.
We can test in~$O(n)$ time whether the connected component containing~$t$ is planar.
If it is not, then we can immediately conclude that the answer is no due to \cref{prop:planar}.
Otherwise, we can compute the connected component and a planar embedding of it.

\begin{corollary}
    \label{cor:pre1dec}
    Given a rooted graph~$(G,t)$, we can verify in~$O(n)$ time that~$(G,t)$ is not perfectly resilient or compute a rooted graph~$(G',t)$ such that~$(G,t)$ is perfectly resilient if and only if~$(G',t)$ is and where~$G'$ is a planar connected subgraph of~$G$.
    We can also compute a planar embedding of~$G'$ in the same time.
\end{corollary}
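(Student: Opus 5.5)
The plan is to chain \cref{prop:embedding}, \cref{prop:planar}, and \cref{obs:connected}. I would run the algorithm of \cref{prop:embedding} on~$G$ with the distinguished node~$v=t$. In~$O(n)$ time it decides whether the connected component~$C$ of~$G$ containing~$t$ is planar. If it is, the algorithm also returns a planar embedding of~$C$, stored as a doubly connected half-edge list. I only ask about the component of~$t$, not about all of~$G$, because of the counterexample in \cref{fig:planar}.

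If~$C$ is not planar, I output that~$(G,t)$ is not perfectly resilient. To justify this, I apply \cref{prop:planar} to the rooted graph~$(G[C],t)$. It is connected and non-planar by construction, so it is not perfectly resilient. By \cref{obs:connected}, $(G,t)$ is perfectly resilient if and only if~$(G[C],t)$ is, so~$(G,t)$ is not perfectly resilient either. Alternatively, $(G[C],t)$ is a rooted minor, indeed a subgraph, of~$(G,t)$, so \cref{cor:rootedminor} gives the same conclusion.

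If~$C$ is planar, I set~$G' = G[C]$. This graph is a connected planar subgraph of~$G$ containing~$t$, and \cref{obs:connected} gives the required equivalence. The embedding returned by \cref{prop:embedding} is the required planar embedding of~$G'$.

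The only delicate point is the running time, because it must be~$O(n)$ and not~$O(n+m)$. Computing~$C$ explicitly by a graph search could cost time linear in the number of links of~$C$. If~$C$ is planar, that number is at most~$3n$, so this is fine. In the non-planar case, the planarity test of \cref{prop:embedding} already stops within~$O(n)$ steps, for instance by aborting once more than~$3|C|-6$ links have been read. So I would rely on \cref{prop:embedding} as stated, which already gives the~$O(n)$ bound for both the decision and the construction of the component together with its embedding. I expect no real obstacle beyond making this accounting explicit.
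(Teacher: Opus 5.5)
Your proposal is correct and follows the paper's own argument: test planarity of $t$'s component in $O(n)$ time via \cref{prop:embedding}, answer no via \cref{prop:planar} if it is non-planar, and otherwise return that component together with its embedding, with \cref{obs:connected} giving the equivalence. One step the paper leaves implicit is that \cref{prop:planar} as stated requires a connected graph. You handle this by applying it to $G[C]$ and transferring the conclusion back to $(G,t)$ via \cref{obs:connected} (or \cref{cor:rootedminor}).
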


For \prs, we do exactly the same but in order to return a perfectly resilient forwarding pattern, we compute arbitrary skipping priority lists for each node in a different connected component than~$t$.
Note that there are~$O(m)$ possible pairs of a node and one of its in-ports by the handshaking lemma\footnote{The handshaking lemma states that~$\sum_{u\in V} |N(u)| = 2m$ \cite{Eul36}.} and the priority list has size~$O(n)$.
Since we cannot assume the number of links to be linear in the number of nodes in other connected components, the preprocessing for \prs{} takes~$O(nm)$ time.

\begin{corollary}
    \label{cor:pre1syn}
    Given a rooted graph~$(G,t)$, we can compute in~$O(n)$ time a rooted graph~$(G',t)$ where~$G'$ is planar and connected such that given a perfectly resilient (skipping) forwarding pattern for~$(G',t)$, we can compute in~$O(nm)$~time a perfectly resilient (skipping) forwarding pattern for~$(G,t)$.
\end{corollary}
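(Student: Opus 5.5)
We prove the statement in three steps, reusing the decision-version machinery. First, apply \cref{prop:embedding} to $G$ and the node $t$. In $O(n)$ time this computes the connected component $G'$ of $G$ containing $t$, decides whether it is planar, and if so produces a planar embedding of $G'$ as a doubly connected half-edge list. If $G'$ is not planar, then by \cref{prop:planar} the rooted graph $(G',t)$ is not perfectly resilient, and by \cref{obs:connected} neither is $(G,t)$. In that case we may output any planar connected rooted graph that is a trap, for instance a fixed copy of \Kf{}; no forwarding pattern for it will ever be supplied, so the second part of the claim holds vacuously. Otherwise we output $(G',t)$. One detail needs checking: the $O(n)$ bound must not depend on $m$. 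Here we rely on the fact, stated after \cref{prop:embedding}, that the planarity test can be restricted to the component of $t$ and aborts once it sees more than $3n$ links there. Hence we never read the links of other components.

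Second, we lift a solution. Suppose $\pi'=(\pi'_v)_{v\in V(G')\setminus\{t\}}$ is a perfectly resilient (skipping) forwarding pattern for $(G',t)$. Define $\pi$ on $G$ as follows. For each $v\in V(G')$, set $\pi_v=\pi'_v$; this is well defined because $E_v$ is the same in $G$ and in $G'$, since $G'$ is a whole connected component. For each node $v$ outside $G'$, and for each in-port in $E_v\cup\{\bot\}$, set the priority list to an arbitrary permutation of $E_v$, for example the adjacency order. This yields a skipping priority list.

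For correctness, take a start $s$ and a failure set $F$ with $s$ and $t$ connected in $G\setminus F$. Then $s\in V(G')$. Any routing that starts in $G'$ only ever uses links incident to nodes of $G'$, so it never leaves $G'$. It therefore coincides with the routing of $\pi'$ for $s$ and $F\cap E(G')$, and $s$ is still connected to $t$ in $G'\setminus F$. By perfect resilience of $\pi'$, this routing reaches $t$. Starts outside $G'$ impose no requirement.

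The running time of this step is dominated by writing the lists for nodes outside $G'$. There are at most $\sum_v(|N(v)|+1)=O(n+m)$ pairs of a node and an in-port by the handshaking lemma, and each list has length at most $n$, so the total is $O(nm)$ time. Copying $\pi'$ takes time linear in its size, which is also $O(nm)$.

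The step that needs care is not the correctness argument but the claim that the forward direction runs in $O(n)$ rather than $O(n+m)$. This rests entirely on the component-restricted, early-aborting planarity test in \cref{prop:embedding}. I would make this dependence explicit rather than re-derive it.
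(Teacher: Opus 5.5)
Your proposal is correct and follows the paper's route. You find $t$'s component and test its planarity via \cref{prop:embedding}, reuse the given pattern on that component, and write arbitrary skipping lists for all other nodes, which is where the $O(nm)$ comes from. One small fix: bound the list of $v$ by $\deg(v)$ rather than $n$, so the total is $\sum_v(\deg(v)+1)\deg(v)\le 2nm$; your $O(n+m)\cdot n$ count only gives $O(n^2+nm)$. Returning a fixed planar trap such as \Kf{} in the non-planar case is a harmless tidy-up that makes the statement literally true, whereas the paper simply reports a no-instance there.
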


\subsection{Removing Cut Nodes}
\label{sec:pre2}
We continue with a preprocessing routine to remove cut nodes.
For \prd, we simply compute all biconnected components and solve them individually.
\begin{proposition}
    \label{prop:pre2dec}
    Let~$(G,t)$ be a rooted graph where~$G$ is planar and connected.
    Let~$v$ be a cut node in~$G$ and let~$V_1,V_2,\ldots,V_k$ be the set of nodes in the connected components in~$G-v$ where~${t \in V_1}$ if~$v \neq t$.
    Let~$G_i = G[V_i \cup \{v\}]$ for each~$i \in [k]$.
    Then, $(G,t)$ is perfectly resilient if and only if all of~$(G_1,t), (G_2,v), (G_3,v), \ldots, (G_k,v)$ are.
\end{proposition}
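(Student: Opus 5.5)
The forward direction follows from the rooted-minor closure. For $(G_1,t)$, $G_1$ is a subgraph of $G$ containing $t$, so \cref{prop:subgraph} applies directly. For $i\ge 2$, take the subgraph $G_1\cup G_i$ of $G$. Since $G_1$ is connected and contains both $t$ and $v$, we can contract all links of $G_1$ into a single node. By the remark after \cref{prop:contraction}, contracting links incident to $t$ keeps the merged node as the root, so this yields $(G_i,v)$ with $v$ as root. \cref{cor:rootedminor} then gives that $(G_i,v)$ is perfectly resilient. The case $v=t$ is the same without any contraction, using only subgraphs.

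For the backward direction, fix perfectly resilient patterns $\pi^1$ for $(G_1,t)$ and $\pi^i$ for $(G_i,v)$, $i\ge 2$. Every node other than $v$ lies in exactly one $G_i$ and keeps the forwarding function of that piece. All links incident to it lie in that piece, so this is well defined.

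At $v$ (assuming $v\neq t$), we define $\pi_v$ for an in-port $e$ in piece $G_j$ and failure set $F_v$ as follows.
\begin{itemize}
\item If $j\ge 2$, follow $\pi^1_v$ restricted to $F_v\cap E(G_1)$ as if starting at $v$ (in-port $\bot$). If all links of $v$ in $G_1$ fail, instead cycle into the next component $G_{j+1}$ via $\pi^{j+1}_v(\bot,\cdot)$ in cyclic order.
\item If $j=1$, follow $\pi^1_v(e,\cdot)$.
\end{itemize}
The intended semantics is that a packet inside $G_i$ ($i\ge2$) first tries to reach $v$, and from $v$ it behaves as in $(G_1,t)$. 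Excursions from $v$ into $G_i$ return to $v$, but these interfere with $\pi^1$'s in-port information. To handle this, I would rather define the $G_1$-behaviour at $v$ more carefully: when the routing in $G_1$ arrives at $v$, it is always sent onward inside $G_1$ by $\pi^1_v$ and never into other pieces. It only enters a piece $G_i$ ($i\ge 2$) when $v$ is the starting node of a packet with $v$ itself disconnected in $G_1$, which cannot happen if an $s$-$t$ path exists with $s=v$.

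Correctness splits by the location of $s$. If $s\in V_1$, every $s$-$t$ path avoiding pieces $i\ge 2$ suffices: any path through $v$ into $G_i$ and back to $v$ can be shortcut. So an $s$-$t$ path exists in $G_1\setminus F$, and the routing coincides with the $\pi^1$ routing because $v$ never leaves $G_1$. It therefore reaches $t$. If $s\in V_i$ with $i\ge 2$, any $s$-$t$ path passes through $v$, so $s$ is connected to $v$ in $G_i\setminus F$. Following $\pi^i$ reaches $v$, and the portion inside $G_i$ is exactly the $\pi^i$ routing since $v$ is its target. At $v$ the packet arrives with an in-port from $G_i$ and is treated as a fresh start in $G_1$. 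From there a $v$-$t$ path in $G_1\setminus F$ exists, and the $\pi^1$ routing from start $v$ reaches $t$. Later returns to $v$ come via $G_1$ in-ports, which are handled by $\pi^1_v$ consistently with that start.

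The main obstacle is making the above consistent as an actual forwarding function. The first visit to $v$ must mimic the $\bot$-behaviour of $\pi^1_v$, while later visits use real in-ports. Since all in-ports from $G_i$ ($i\ge2$) map to $\pi^1_v(\bot,\cdot)$, and subsequent visits use $G_1$ in-ports, the simulated routing equals the $\pi^1$ routing started at $v$, so this works. If $v=t$, nothing needs to be defined at $v$, and each piece is handled independently by $\pi^i$.
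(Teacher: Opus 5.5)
Your proposal is correct and follows essentially the paper's proof. The forward direction uses rooted-minor closure. For the backward direction, $v$ uses $\pi^1_v$ (with failures restricted to $G_1$) with the actual in-port when that in-port lies in $G_1$, and with in-port $\bot$ otherwise. A packet starting in $V_i$ therefore reaches $v$ via $\pi^i$ and then continues exactly as the $\pi^1$ routing started at $v$.

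Two small remarks. Your fallback into other pieces when all $G_1$ links at $v$ fail is harmless but unnecessary, since no connected start can ever need it. The case $s=v$ deserves its own explicit line in your case split; the paper folds it into the case $s\in G_1$.
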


\begin{proof}
    We start by showing the forward direction, that is, we show that if~$(G,t)$ is perfectly resilient, then so are~$(G_1,t), (G_2,v), \ldots, (G_k,v)$.
    This follows directly from \cref{cor:rootedminor}.
    For~$(G_1,t)$, we can contract all nodes in~$V_2 \cup V_3 \cup \ldots \cup V_k$ into~$v$.
    This results exactly in the rooted graph~$(G_1,t)$ and hence shows that~$(G_1,t)$ is perfectly resilient.
    For any other rooted graph~$(G_i,v)$ with~${i \in \{2,3,\ldots,k\}}$, we can contract all nodes in~$(V_1 \cup V_2\ \cup \ldots \cup V_k) \setminus V_i$ into~$v$.
    Note that if~$v \neq t$, then~$t$ is contracted into~$v$ in the process and hence~$v$ becomes the new root.
    Moreover, the resulting rooted graph is precisely~$(G_i,v)$ showing that~$(G_i,v)$ is perfectly resilient.

    For the backward direction, assume that all rooted graphs~$(G_1,t),(G_2,v),\ldots,(G_k,v)$ are perfectly resilient and let~$\pi^i$ be a perfectly resilient forwarding pattern for~$(G_i,x_i)$ where~$x_1=t$ and~$x_i = v$ for all~$i > 1$.
    we make a case distinction whether~$v = t$ or~$v \neq t$.
    If~$v = t$, then for each node~$u \neq v$, we do the following.
    Let~$j_u$ be the index such that~$u \in V_{j_u}$.
    We use the forwarding function~$\pi_u^{j_u}$.
    Note that only the neighborhood of~$v=t$ is different in~$G_{j_u}$ and~$G$ and hence~$\pi_u^{j_u}$ is a valid forwarding function for each~$u \neq v$ in~$G$.
    Moreover, for any set~$F$ of failed links in~$G$ and any index~$j$, let~$F_{j}$ be the subset of~$F$ that appear in~$G_{j}$.
    Now, if the routing in~$G$ with failure set~$F$ starts in~$u$, then it is the same as the routing in~$G_{j_u}$ with failure set~$F_{j_u}$ as in each case the current node cannot distinguish between~$F$ and~$F_{j_u}$.
    Thus, any such routing reaches~$v=t$ if there exists an~$s$-$t$-path in~$G \setminus F$ by the assumption that~$\pi^{j_u}$ is a perfectly resilient forwarding pattern.
    Since the node~$u$ was chosen arbitrarily, it holds for each possible starting node~$s$ and any possible failure set~$F$ that a routing starting in~$s$ reaches~$t$ if~$s$ and~$t$ remain connected in~$G \setminus F$, that is~$(G,t)$ is perfectly resilient.

    If~$v \neq t$, then we also use the same forwarding function~$\pi^{j_u}$ for each node~$u$ except for~$v$ (and~$t$ which does not forward anyway).
    For the node~$v$, we do the following.
    For each set~$F_v$ of incident failed links and for each in-port~$e$, if no node~$w \in V_1$ remains an active neighbor of~$v$ with respect to~$F_v$, then there is no path from~$v$ to~$t$ in~$G \setminus F_v$ so the choice of the forwarding function does not matter.
    So assume that at least one node~$w \in V_1$ remains an active neighbor.
    If~$e$ is a link in~$G_1$, then we use the forwarding function~$\pi_v^1$ for the same in-port.
    Otherwise, we use the forwarding function~$\pi_v^1$ for the in-port~$\bot$.

    We will show that the resulting forwarding pattern is perfectly resilient.
    To this end, consider an arbitrary starting node~$s$ and an arbitrary set~$F$ of failed links such that there is an~$s$-$t$-path in~$G \setminus F$.
    Let~$\rho$ be the corresponding routing.
    We will show that~$\rho$ contains~$t$.
    If~$s$ belongs to~$G_1$ (in particular if~$s=v$), then~$\rho$ is identical to the routing for~$(G_1,t)$ with the set~$F_1$ of failed links that contain all links in~$F$ that appear in~$G_1$.
    Hence, $t$ is reached by the routing by the assumption that~$\pi^1$ is perfectly resilient and is therefore also contained in~$\rho$.
    If~$s$ belongs to some other set~$V_{j_s}$, then~$\pi^{j_s}$ guarantees that~$\rho$ contains~$v$.
    There, the routing continues by construction as if the routing started in~$v$ using the forwarding function~$\pi^1$.
    Thus, $\rho$ is the gluing of (i) the routing corresponding to~$\pi^{j_s}$, starting node~$s$, and the set~$F_{j_s}$ of failed links in~$G_{j_s}$ and (ii) the routing corresponding to~$\pi^1$, starting node~$v$, and the set~$F_1$ of failed links in~$G_1$.
    Thus, $\rho$ contains~$t$ and since the starting node~$s$ was chosen arbitrarily, this shows that~$(G,t)$ is perfectly resilient.
\end{proof}

For \prs{} we observe the following.
Let~$G$ be a connected planar graph, let~$v$ be a cut node in~$G$, and let~$G'$ be a connected component in~$G-v$ that does not contain~$t$.
Then, $G'$ does not contain any relevant neighbors of~$v$ for any set~$F_v$ of incident failed links.
This yields the following.

\begin{proposition}
    \label{prop:pre2syn}
    Let~$(G,t)$ be a rooted graph where~$G$ is planar and connected.
    We can compute in~$O(n)$ time a set~$\{(G_1,t_1),(G_2,t_2),\ldots,(G_k,t_k)\}$ of rooted minors of~$(G,t)$ where each graph~$G_i$ is planar and biconnected and~$(G,t)$ is perfectly resilient if and only if all~$(G_i,t_i)$ with~${i \in [k]}$ are.
    Moreover, given a perfectly resilient skipping forwarding pattern for each of the rooted graphs~$(G_i,t_i)$, we can construct a perfectly resilient skipping forwarding pattern for~$(G,t)$ in~$O(n^2)$ time.
\end{proposition}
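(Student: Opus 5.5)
Apply \cref{prop:pre2dec} once for every cut node simultaneously, via the block–cut tree. First compute all biconnected components and cut nodes of~$G$ with the standard DFS-based algorithm (Hopcroft–Tarjan). This takes~$O(n+m)=O(n)$ time because a planar graph has at most~$3n$ links. Root the block–cut tree at a block containing~$t$; if~$t$ is itself a cut node, root it at~$t$. For each block~$B$, let~$t_B=t$ if~$t\in B$, and otherwise let~$t_B$ be the parent cut node of~$B$, i.e., the unique node of~$B$ through which every path from~$B$ to~$t$ leaves~$B$. The instances are~$(G_B,t_B)=(G[B],t_B)$. Their total size is~$O(n)$, since the sum of block sizes is at most~$n$ plus the number of block–cut incidences, which is~$O(n)$. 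Each~$(G_B,t_B)$ is a rooted minor of~$(G,t)$: delete the other blocks hanging below~$B$, and contract everything on the~$t$-side into~$t_B$. Each~$G_B$ is planar (as a subgraph) and biconnected.

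The equivalence follows by induction on the number of cut nodes using \cref{prop:pre2dec}. Take a cut node~$v$ and split~$G$ into~$(G_1,t)$ and~$(G_i,v)$ for~$i \ge 2$. Each piece has fewer cut nodes, and the blocks of the pieces are exactly the blocks of~$G$. The root assigned to a block inside a piece agrees with~$t_B$: blocks in~$G_i$ adjacent to~$v$ get root~$v$, and deeper blocks get their parent cut node, as before. Hence~$(G,t)$ is perfectly resilient iff all~$(G_B,t_B)$ are. The forward direction alternatively follows directly from \cref{cor:rootedminor}.

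For synthesis, follow the backward direction of \cref{prop:pre2dec}, but keep everything skipping. For a non-cut node~$u$, use the skipping list of its unique block unchanged. A cut node~$v\ne t$ belongs to its parent block~$P$, where it is not the root, and to child blocks where it is the root (so it has no forwarding function there). For in-port~$e$ in~$P$, output~$\pi^P_v(e)$ and then append all links of~$v$ in child blocks in arbitrary order. For in-ports in child blocks or~$\bot$, output~$\pi^P_v(\bot)$ and then the child links. The appended child links are reached only when all links of~$v$ in~$P$ have failed. In that case~$v$ has no path to~$t$, because every path from~$v$ to~$t$ starts with a link of~$v$ inside~$P$, so behavior is irrelevant. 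Otherwise the skipping list behaves exactly like~$\pi^P_v$ with the corresponding in-port, which is the construction in \cref{prop:pre2dec}. If~$v=t$, nothing is needed. Correctness then follows as in \cref{prop:pre2dec}: a routing starting in block~$B$ stays in~$B$ under~$\pi^B$ until it reaches~$t_B$, which happens whenever~$s$ and~$t$ remain connected, since any surviving~$s$-$t$-path passes through~$t_B$. The routing then continues as a routing of the parent block started at~$t_B$, and inducting up the tree it reaches~$t$.

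The running time is~$O(\sum_v \deg(v)^2)\subseteq O(n\cdot m)=O(n^2)$, as there are~$O(m)$ lists of length~$O(n)$ and~$m\le 3n$. The main obstacle is this synthesis step. We need the merged list at a cut node to be a single permutation of~$E_v$ per in-port that simulates the parent-block behaviour, while tolerating arbitrary child-block links mixed in. The appending trick resolves it, using the observation before the proposition that child blocks contain no relevant neighbours.
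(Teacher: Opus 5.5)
Your proposal is correct and follows the paper's proof essentially step for step. The paper also computes the blocks with Hopcroft--Tarjan and roots each block not containing $t$ at its cut node closest to $t$ (found via a BFS from $t$ rather than by rooting the block--cut tree). For a cut node $u\neq t$ it uses $\pi^{j_u}_u(e)\circ\sigma$ for in-ports $e$ in its parent block and $\pi^{j_u}_u(\bot)\circ\sigma$ otherwise, and it argues correctness as you do: the routing reaches the block root and then restarts in the parent block, by induction on the distance to $t$.
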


\begin{proof}
    We start by performing a breadth-first search from~$t$ to compute the distance (length of a shortest path) between~$t$ and each other node in~$G$ in~$O(n+m)=O(n)$ time.
    We can then compute all cut nodes and all biconnected components~$G_1,G_2,\ldots,G_k$ in~$G$ in~$O(n)$ time using an algorithm by Hopcroft and Tarjan~\cite{HT73}.
    In each biconnected component~$G_i$ that does not contain~$t$, we make the node with minimal distance from~$t$ the new root.
    Note that this is the cut node~$v$ separating~$G_i$ from~$t$ in~$G-v$ and hence the construction is well-defined.
    Moreover, we can find the new root in time linear in the size of~$G_i$ and hence the whole procedure takes~$O(n)$ time overall.\footnote{We mention that cut nodes are copied in each biconnected component. However, the number of links does not change and the number of nodes increases by at most~$m$. Since~$m \leq 3n$, the sum of all instance sizes is still linear in the number of nodes in the original graph~$G$.}
    Note that the above procedure is equivalent to repeatedly applying the preprocessing routing described in \cref{prop:pre2dec}.
    Hence, $(G,t)$ is perfectly resilient if and only if all~$(G_i,t_i)$ with~$i \in [k]$ are.
    Moreover, by contracting all other biconnected components, it is easy to see that each rooted graph~$(G_i,t_i)$ is a minor of~$(G,t)$.

    For the second part of the proof, assume that we are given a perfectly resilient skipping forwarding pattern~$\pi^i$ for each rooted graph~$(G_i,t_i)$.
    The proof roughly follows the same structure as the proof of \cref{prop:pre2dec}.
    We first show how to combine the different perfectly resilient skipping forwarding patterns and then prove that the result is a perfectly resilient skipping forwarding pattern for~$(G,t)$.
    For each node~$u \neq t$, if~$u$ is not a cut node in~$G$, then let~$j_u$ be the index such that~$u$ is contained in~$G_{j_u}$.
    The skipping forwarding function for~$u$ is~$\pi^{j_u}_u$.
    For each cut node~$u \neq t$ in~$G$, let~$G_{j_u}$ be the biconnected component containing~$u$ for which~$u$ is not the new root for~$(G_j,t_j)$.
    Note that we can find this biconnected component in~$O(|N(u)|)$ time by finding a node with a smaller distance from~$t$ than~$u$ and that this biconnected component is unique.
    Let~$G_{i_1},G_{i_2},\ldots,G_{i_k}$ be the other biconnected components containing~$u$.
    The skipping priority list~$\ski{u}{e}$ for any in-port~$e$ is (i)~$\pi_u^{j_u}(e) \circ \sigma$ if~$e$ is a link in~$G_{j_u}$ and (ii) $\pi^{j_u}_u(\bot) \circ \sigma$, otherwise.
    Therein, $\sigma$ is an arbitrary permutation of all neighbors of~$u$ in~$G_{i_1},G_{i_2},\ldots,G_{i_k}$.
    Note that we indeed constructed valid skipping priority lists for each node and each possible in-port.
    Moreover, computing an arbitrary permutation of~$x$ elements takes~$O(x)$ time, so the time needed for computing one skipping priority list for a node~$u$ is in~$O(|N(u)|)$.
    Since there are~$|N(u)|+1$ possible in-ports for~$u$, computing a skipping forwarding function for a node~$u$ takes~$O(|N(u)|^2)$ time.
    Summed over all nodes and using the handshaking lemma and the fact that~$m \in O(n)$, the overall running time is in~$O(n^2)$.

    It remains to show that the constructed skipping forwarding pattern is perfectly resilient.
    To this end, consider an arbitrary starting node~$s$ and an arbitrary set~$F$ of failed links such that there is an~$s$-$t$-path in~$G \setminus F$.
    Let~$\rho=(s,u_1,u_2,\ldots)$ be the routing corresponding to the constructed skipping forwarding pattern, starting node~$s$, and set~$F$ of failed links.
    We will first show that~$u_\ell = t_{j_s}$ for some~$\ell \geq 1$.
    Afterwards, we will show how this implies that~$u_{k} = t$ for some~$k \geq 1$, that is, the routing reaches~$t$.
    Since the starting node~$s$ and the set~$F$ of failed links was chosen arbitrarily (up to the condition that an~$s$-$t$-path remains), this shows that the constructed forwarding pattern is perfectly resilient, concluding the proof.
    
    To verify that~$u_\ell = t_{j_s}$ for some~$\ell \geq 1$, note that as long as~$u_i \neq t_{j_s}$, $u_{i+1} = \pi_{u_i}^{j_s}(e_i,F_{j_s})$, where~$u_0 = s$,~$e_0 = \bot$ and~$e_i = \{u_{i-1},u_i\}$ for all~$i \geq 1$.
    This is due to the fact that only the skipping priority lists of cut nodes are different between~$\pi^{j_s}$ and the constructed skipping forwarding functions.
    Moreover, for cut nodes in~$G_j$ other than~$t_{j_s}$, all links to other biconnected components appear after all links in~$G_j$ and each such node that is contained in~$\rho$ has at least one incident link in~$G_j$ that does not fail.
    This holds as all such nodes are connected to~$s$ as witnessed by~$\rho$ and~$s$ is connected to~$t$ in~$G \setminus F$ by assumption, that is, there is path between~$s$ and~$t_{j_s}$.
    Hence, until~$t_{j_s}$ is reached by~$\rho$, $\rho$ is the same as the routing corresponding to~$\pi^{j_s}$ with starting node~$s$ and set~$F_{j_s}$ of failed links.
    By assumption, $\pi^{j_s}$ is perfectly resilient for~$(G_{j_s},t_{j_s})$ and therefore reaches~$t_{j_s}$ eventually.
    This proves that~$t_{j_s}$ is contained in~$\rho$ or equivalently, $t_{j_s} = u_\ell$ for some~$\ell \geq 1$.

    Finally, we show that~$u_{k} = t$ for some~$k \geq 1$.
    If~$t = t_{j_s}$, then we are trivially done.
    So assume otherwise and let~$i \geq 1$ be the smallest index such that~$u_i = t_{j_s}$ and consider the two sequences~$\rho_1 = (s,u_1,u_2,\ldots,u_i)$ and~$\rho_2=(u_i,u_{i+1},\ldots)$.
    Note that the distance between~$s$ and~$t$ is finite (in~$G$) as we assumed~$G$ to be connected.
    Moreover, by construction, when reaching~$t_{j_s}$ via a link in~$G_{j_s}$, the next node~$u_{i+1} = \pi^{j_{t_{j_s}}}_{t_{j_s}}(\bot,F)$ in~$\rho$ is the same as if the routing started in~$t_{j_s}$.
    Since the node~$s$ above was chosen arbitrarily, we can repeat the argument with starting node~$t_{j_s}$ to show that~$\rho$ also contains the node~$t_{j_{t_{j_s}}}$ which has even smaller distance from~$t$.
    Since the distance between~$s$ and~$t$ is finite, we can repeat this argument until the distance becomes zero, that is, this shows that the node~$t$ is eventually reached.
    This concludes the proof.
\end{proof}

\subsection{Structure of Separating Links}
\label{sec:prehelp1}
In order to present our preprocessing for \prs{} to remove separating links, we first show a number of auxiliary results here.
Recall that a link~$e=\{u,v\}$ is separating if~$G'= G-e = G[V \setminus \{u,v\}]$ is disconnected.
We will often distinguish between separating links which have~$t$ as an endpoint and other separating links.
To make this distinction and for notational convenience, we denote for a separating link~$e$ that is not incident to~$t$ by~$S_e$ the set of all links that are separated from~$t$ by~$e$.
For a separating link~$f$ incident to~$t$ or a non-separating link~$f$, we define~$S_f = \emptyset$.
Notice that if a separating link~$e$ and another link~$e'$ do not share any endpoints, then both endpoints of~$e'$ are in the same connected component of~$G'$.
\begin{observation}
    \label{obs:nocross}
    Let~$G$ be a graph and let~$e$ and~$e'$ be two links, where~$e$ is a separating link and~$e$ and~$e'$ do not share any endpoints.
    Then, both endpoints of~$e'$ are in the same connected component of~$G-e$.
\end{observation}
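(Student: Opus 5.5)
The claim is immediate from the definitions, so the plan is a direct one-step argument. Write $e=\{u,v\}$ and $e'=\{x,y\}$. Here $G-e$ denotes (as fixed at the start of this subsection) the graph $G[V\setminus\{u,v\}]$, not the graph with only the link removed. Since $e$ and $e'$ share no endpoints, $\{x,y\}\cap\{u,v\}=\emptyset$. So both $x$ and $y$ are nodes of $G[V\setminus\{u,v\}]$.

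Next, I will check that the link $e'$ itself survives in $G[V\setminus\{u,v\}]$. By the definition of an induced subgraph, $G[V']$ contains every link of $E$ whose endpoints both lie in $V'$. With $V'=V\setminus\{u,v\}$, the link $\{x,y\}$ therefore belongs to $G-e$. Hence $(x,y)$ is an $x$-$y$-path in $G-e$.

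By the definition of connected components, two nodes joined by a path lie in the same component. So $x$ and $y$ are in the same connected component of $G-e$, as claimed.

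There is no genuine obstacle. The only point needing care is the notational one: reading $G-e$ as deletion of both endpoints, consistent with the definition of separating links, rather than as $G\setminus e$. Under the latter reading the statement would still hold trivially, since $e'\neq e$ and so $e'$ would survive the deletion of $e$. Note also that the hypothesis that $e$ is separating is not actually used; it only explains why the observation is of interest.
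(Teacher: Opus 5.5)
Your proof is correct and matches the paper, which states this as an immediate remark without a separate proof: you read $G-e$ as $G[V\setminus\{u,v\}]$, and the link $e'$ survives in that induced subgraph, so it joins its two endpoints. Your remark that the separating hypothesis is never used is also accurate.
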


We will assume the input graph to be biconnected.
In order to apply the following arguments multiple times, we first show that removing a separating link from a biconnected graph results in a 2-connected graph.

\begin{lemma}
    \label{lem:stay2con}
    Let~$G$ be a biconnected graph and let~$e$ be a separating link in~$G$. Then~$G \setminus e$ is 2-connected.
\end{lemma}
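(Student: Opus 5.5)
The plan is to verify connectivity of $G\setminus e$ first and then rule out a cut node, using that the endpoints $u,v$ of $e=\{u,v\}$ are joined by many routes once $e$ is deleted. Here $G\setminus e$ is the graph with only the link $e$ removed. Since $e$ is separating, $G-\{u,v\}$ has at least two connected components $C_1,\dots,C_k$ with $k\ge 2$. Biconnectivity of $G$ gives the following fact: every component $C_i$ has a neighbor in both $u$ and $v$. Indeed, if $C_i$ had neighbors only in $\{u\}$, then $u$ would be a cut node of $G$ separating $C_i$ from $v$, a contradiction; symmetrically for $v$. Because $G$ is connected, $C_i$ has at least one neighbor in $\{u,v\}$, so it has both.

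For connectivity, every node lies in some $C_i$ or is $u$ or $v$. Each $C_i$ is connected and attaches to both $u$ and $v$. Hence $G\setminus e$ is connected, and $u$ and $v$ are joined by at least two internally disjoint paths, one through $C_1$ and one through $C_2$. Note that $G\setminus e$ has at least three nodes, so it is not the single-link graph. It therefore suffices, by the stated equivalence between 2-connectivity and biconnectivity, to show that $G\setminus e$ has no cut node.

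Suppose some node $w$ were a cut node of $G\setminus e$. I would split into two cases.

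\emph{Case $w=u$.} Then $(G\setminus e)-u=G-u$, which is connected since $u$ is not a cut node of $G$. This is a contradiction, and the case $w=v$ is symmetric.

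\emph{Case $w\in C_i$.} The remaining node set is $\{u,v\}\cup\bigcup_j C_j\setminus\{w\}$. Pick $j\ne i$. Then $C_j$ together with its links to $u$ and $v$ connects $u$ and $v$ while avoiding $w$. So $u$ and $v$ lie in the same component $X$ of $(G\setminus e)-w$, and every other $C_{j'}$ with $j'\ne i$ also joins $X$. The nodes of $C_i\setminus\{w\}$ are the delicate part; this is the main obstacle. Take such a node $y$ and suppose it is not in $X$. In $G-w$, which is connected, there is a $y$-$u$-path. Its portion up to the first node in $\{u,v\}$ avoids the link $e$, since $e$ joins two nodes of $\{u,v\}$. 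That portion therefore lies in $(G\setminus e)-w$ and connects $y$ to $u$ or $v$, hence to $X$. This is a contradiction.

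This argument in fact handles every candidate $w$ uniformly, because the key step is only that $u$ and $v$ stay connected without using $e$. Thus $G\setminus e$ is biconnected, and with at least three nodes it is 2-connected.
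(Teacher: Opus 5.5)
Your proof is correct. It shares the paper's key fact but uses a different mechanism to exclude a cut node. The shared fact is that $G-\{u,v\}$ has at least two components and each touches both $u$ and $v$, so $u$ and $v$ remain connected without $e$ after deleting any other single node.

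The paper works with the two-disjoint-paths form of 2-connectivity. It assumes a cut node $q$ of $G\setminus e$ separating some $p$ and $r$, and takes two internally disjoint $p$-$r$-paths in $G$. One of them must use $e$, and the paper replaces $e$ by a $u$-$v$-detour through another component.

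You stay with the cut-node definition throughout:
\begin{itemize}
\item For $w\in\{u,v\}$, the graph $(G\setminus e)-w$ equals $G-w$, which is connected.
\item For $w$ inside a component, you only need that $G-w$ is connected and that a path truncated at its first node in $\{u,v\}$ cannot use $e$. The detour through another component is needed only to join $u$ and $v$ themselves.
\end{itemize}

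Your version never uses internally disjoint paths in $G$. It also states explicitly what the paper's write-up leaves implicit: the rerouted path is the one avoiding $q$, and the detour goes through a component not containing $q$. The paper's instead chooses a component different from the one holding $P_2$, and asserts without justification that $P_2$ lies inside a single component. The paper's version is shorter once the two-path characterization is taken for granted. Both arguments finish with the stated equivalence between biconnectivity (for graphs other than the single-link graph) and 2-connectivity.
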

\begin{proof}
    Note that the graph~$K_2$ does not contain a separating link and hence~$G$ is 2-connected.
    Let~$e=\{u,v\}$ and let~$V_1,V_2,\ldots,V_k$ be the set of nodes in the connected components in~$G-e$.
    Since~$e$ is separating, there are~$k \geq 2$ such components.
    Moreover, each such component contains a neighbor of~$u$ and a neighbor of~$v$ in~$G$ as if one set~$V_i$ contains no neighbor of~$u$, then~$v$ is a cut node in~$G$, contradicting that~$G$ is 2-connected.
    This also shows that~$G\setminus e$ is connected.
    Let~$G_i = G[V_i \cup \{u,v\}] \setminus \{v,t\}$ for each~$i \in [k]$.
    Now, assume towards a contradiction that~$G \setminus e$ is not 2-connected.
    Then, there exists a cut node~$q$ in~$G\setminus e$, that is, there are nodes~$p,q,r$ such that all~$p$-$r$-paths contain~$q$.
    Since~$G$ is 2-connected, there exist two internally disjoint~$p$-$r$-paths~$P_1$ and~$P_2$ in~$G$.
    As~$G$ and~$G \setminus e$ only differ in the link~$e$, one of these paths contains the link~$e$.
    Let without loss of generality be~$P_1$ the path that contains~$e$.
    Then, $P_2$ is completely contained in one graph~$G_i$.
    Hence, we can replace the link~$e$ by an arbitrary path in a component in~$\{G_1,G_2\}\setminus \{G_i\}$.
    As argued above, each such component contains neighbors of both~$u$ and~$v$ and since they are connected, they provide the desired connectivity.
    This contradicts the assumption that~$G \setminus e$ is not 2-connected and concludes the proof.
\end{proof}

We next prove a a lemma that will later allow us to temporarily remove separating links from a graph.
We prove that this does neither create new separating links nor makes any other separating links be no longer separating.
Moreover, the removal of a link~$e$ does not change the set~$S_f$ for any other link~$f$ except that~$e$ is removed from it (if $e \in S_f$ in the first place).

\begin{lemma}
    \label{lem:onego}
    Let~$G$ be a biconnected graph.
    Let~$f$ be a separating link in~$G$ and let~$e \neq f$ be a link in~$G$.
    Then,~$f$ is separating in~$G \setminus e$ and if~$e$ is separating in~$G \setminus f$, then~$e$ is also separating in~$G$ and the set~$S'_e$ of links separated from~$t$ by~$e$ in~$G \setminus f$ satisfies~$S'_e = S_e \setminus \{f\}$.
\end{lemma}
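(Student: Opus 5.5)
The key observation is that the notion ``$f=\{u,v\}$ is separating'' depends only on the graph $G-\{u,v\}$, that is, on $G$ with both endpoints of $f$ deleted. I therefore plan to compare the node-deleted graphs of $G$ and $G\setminus e$ directly and handle the three claims in order.

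\emph{First claim: $f$ stays separating in $G\setminus e$.} Let $f=\{u,v\}$.
\begin{compactitem}
    \item If $e$ shares an endpoint with $f$, then $(G\setminus e)-\{u,v\} = G-\{u,v\}$, since $e$ is removed together with $u$ or $v$ anyway. So nothing changes.
    \item Otherwise $(G\setminus e)-\{u,v\}$ is obtained from $G-\{u,v\}$ by deleting one more link. Deleting a link never merges components; it only splits or preserves them.
\end{compactitem}
In both cases the node-deleted graph remains disconnected, so $f$ is separating in $G\setminus e$.

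\emph{Second claim: if $e=\{a,b\}$ is separating in $G\setminus f$, then it is separating in $G$.} We know $(G\setminus f)-\{a,b\}$ is disconnected, and $G-\{a,b\}$ differs from it by at most the single link $f$.
\begin{compactitem}
    \item If $f$ shares an endpoint with $e$, the two node-deleted graphs coincide, and we are done.
    \item Otherwise, adding $f$ back could a priori reconnect two components. Since $f$ is separating in $G$, \cref{obs:nocross} (applied with the roles of the two links swapped) shows that both endpoints of $e$ lie in one component $C$ of $G-\{u,v\}$. Let $D$ be another component of $G-\{u,v\}$.
\end{compactitem}
By biconnectivity of $G$, the argument in the proof of \cref{lem:stay2con} shows that every component of $G-\{u,v\}$ contains a neighbor of $u$ and a neighbor of $v$. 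Hence $D\cup\{u,v\}$ induces a connected subgraph avoiding $a$ and $b$, and it gives a $u$-$v$-path $P$ that avoids $e$'s endpoints. Consequently, in $G-\{a,b\}$ the link $f$ only connects $u$ and $v$, which are already connected via $P$ in $(G\setminus f)-\{a,b\}$. Adding $f$ therefore merges no components, and $G-\{a,b\}$ has exactly the same components (node sets) as $(G\setminus f)-\{a,b\}$. So $e$ is separating in $G$.

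\emph{Third claim: $S'_e=S_e\setminus\{f\}$.} The same reasoning shows that the component partition of $V\setminus\{a,b\}$ is identical in $G$ and $G\setminus f$ in all cases: trivially when $f$ touches $e$, and by the path $P$ otherwise. Whether a link $g$ is separated from $t$ by $e$ depends only on this partition and on the endpoints of $g$, so $S'_e$ and $S_e$ agree on every link other than $f$. The link $f$ itself is not present in $G\setminus f$, so $f\notin S'_e$. This gives $S'_e=S_e\setminus\{f\}$.

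The convention that $S_e=\emptyset$ when $e$ is incident to $t$ is consistent with this: the membership $t\in e$ is the same in both graphs, so in that case both sides are empty.

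\emph{Main obstacle.} The delicate step is producing the $u$-$v$-path $P$ that avoids $a$ and $b$, namely verifying that $D$ is a component different from the one containing $a$ and $b$ and that $D$ touches both $u$ and $v$. This is exactly where biconnectivity of $G$ and the separating property of $f$ are both used.
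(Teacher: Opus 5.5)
Your proof is correct and takes essentially the same route as the paper. The paper likewise uses \cref{obs:nocross} to put both endpoints of $e$ in one component of $G-f$, takes a different component that (by biconnectivity) contains neighbors of both endpoints of $f$, and reroutes around $f$ through it; it only phrases this as a contradiction on a single $u'$-$v'$-path rather than as equality of the component partitions. One small precision: choose $P$ with all internal nodes in $D$, which is possible since $u$ and $v$ each have a neighbor in the connected set $D$. This matters because $G[D\cup\{u,v\}]$ itself contains $f$, so a path taken in that subgraph need not avoid $f$.
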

    
\begin{proof}
    Since~$f$ is separating in~$G$, it separates at least two nodes~$u$ and~$v$ from one another, that is, all paths from~$u$ to~$v$ contain at least one endpoint of~$f$.
    Since removing a link~$e$ does not create any new paths, the same holds in~$G \setminus e$.
    
    For the second part of the proof, assume towards a contradiction that~$e$ is separating in~$G \setminus f$ but not in~$G$.    
    Let~$f=\{u,v\}$ and let~$u'$ and~$v'$ be two nodes that are separated from one another by~$e$ in~$G \setminus f$.
    We will show that~$u$ and~$v$ are also separated by~$e$ in~$G$.
    Assume towards a contradiction that this is not the case.
    Then, there exists an~$u'$-$v'$-path~$P$ in~$G-e$.
    Since~$P$ does not exist in~$(G \setminus f) - e$, $P$ contains the link~$f=\{u,v\}$.
    Since~$P$ exists in~$G-e$, it holds that~$e$ and~$f$ do not share any endpoints, that is~$e \cap f = \emptyset$.
    By \cref{obs:nocross} and the fact that~$f$ is separating in~$G$, it holds that both endpoints of~$e$ are in the same connected component of~$G - f$.
    Moreover, there exists a different connected component~$G^*$ in~$G-f$ (that does not contain~$e$) for the same reason.
    Note that there is a neighbor of~$u$ in~$G^*$ as otherwise~$v$ would be a cut node, contradicting that~$G$ is 2-connected (it is biconnected and contains at least two links~$e$ and~$f$).
    The same also holds for~$v$.
    Since~$G^*$ is by definition connected, there is an~$u$-$v$-path~$P'$ in~$G$ where all internal nodes are in~$G^*$.
    Thus, we can replace the link~$\{u,v\}$ in~$P$ by the path~$P'$ to get a walk from~$u'$ to~$v'$ that does not contain~$f$.
    Then, there also exists a path~$P^*$ with the same property.
    This path does not contain~$f$ and it therefore exists in~$(G \setminus f)-e$, contradicting that~$e$ separates~$u'$ and~$v'$ from one another in~$G \setminus f$.
    Note that the above argument also shows that~$S'_e = S_e \setminus \{f\}$ since it holds for each pair of nodes that they are separated by~$e$ in~$G$ if and only if they are separated by~$e$ in~$G \setminus f$.
    This concludes the proof.
\end{proof}

We next observe that the set~$S$ of all separating links can be computed in~$O(n^2)$ time.

\begin{observation}
    \label{obs:computeS}
    Let~$G$ be a planar graph.
    Then, we can compute the set~$S_t$ of all separating links incident to~$t$, the set~$S^*$ of separating links not incident to~$t$, and~$S_e$ for each link~$e$ in~$G$ in overall~$O(n^2)$ time.
\end{observation}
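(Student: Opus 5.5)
The plan is to exploit planarity, which gives $m \le 3n$. Everything then reduces to $O(n)$ connectivity computations, each taking $O(n+m)=O(n)$ time.

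\emph{Finding the separating links.} For each link $e=\{u,v\}$, compute the connected components of $G-\{u,v\}$ by a BFS or DFS, and record the resulting component labels. This takes $O(n)$ per link. There are $O(n)$ links, so the total is $O(n^2)$. A link is separating exactly when at least two components arise. We put it into $S_t$ if $t\in e$, and into $S^*$ otherwise.

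\emph{Computing $S_e$ for $e\in S^*$.} For each $e\in S^*$ we still have the labelling of $G-e$. A link $f\ne e$ belongs to $S_e$ if and only if some endpoint of $f$ outside $e$ carries a label different from the label of $t$. By \cref{obs:nocross} one could even test just one endpoint, but testing both is fine. Since $t\notin e$, $t$ has a well-defined label. Scanning all links $f$ is $O(m)=O(n)$ per separating link, and there are at most $m=O(n)$ separating links. This gives $O(n^2)$ overall, and it includes writing down each $S_e$ explicitly, since each has size $O(n)$. For all other links we set $S_f=\emptyset$ in constant time each.

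\emph{Main obstacle.} The only subtle point is the output size. It stays within budget because the sets $S_e$ have total size at most $m\cdot m=O(n^2)$, again using $m\le 3n$ for planar graphs. No cleverer structure, such as SPQR-trees, is needed for this quadratic bound.
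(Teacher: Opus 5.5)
Your proposal is correct and follows the paper's proof essentially step for step. For each of the $O(n)$ links $e=\{u,v\}$ you run a BFS on $G-\{u,v\}$ in $O(n)$ time and sort $e$ into $S_t$ or $S^*$ according to whether at least two components arise; for $e\in S^*$ you collect into $S_e$ the links having an endpoint, not in $e$, that lies in a different component than $t$, for $O(n^2)$ time overall (your explicit output-size accounting is only a minor addition the paper leaves implicit).
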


\begin{proof}
Since~$G$ is planar, it contains at most~$O(n)$ links.
For each link~$e$, we can compute~$G-e$ in~$O(n)$ time.
We can then compute all connected component in~$G-e$ in~$O(n)$ time using breadth-first search.
If there are at least two such components, then~$e$ is a separating link and we add it to~$S_t$ if it contains~$t$ and to~$S^*$ otherwise.
If there is only a single connected component in~$G-e$ or if we added~$e$ to~$S_t$, then we set~$S_e = \emptyset$.
If~$t$ is not an endpoint of~$e$ and~$e$ is a separating link, then we add all links with at least one endpoint in a different connected component than~$t$ in~$G-e$ to~$S_e$.
Note that the time required for a single link is~$O(n)$ yielding an overall running time of~$O(n^2)$.
\end{proof}

Lastly, we need a lemma specifically for the case where~$G$ does not contain any separating links incident to~$t$.
Then, we show that if~$G$ contains a separating link, then it also contains a separating link that is not separated from~$t$ by any other separating link.
To this end, we first show an intermediate lemma regarding the set~$S_e$ of links separated from~$t$ by a link~$e$.

\begin{lemma}
    \label{lem:separating}
    Let~$(G,t)$ be a rooted graph where~$G$ is a 2-connected planar graph.
    Let~$e,f,g$ be three links in~$G$.
    It holds that
    \begin{compactenum}
        \item if~$f \in S_e$, then~$e \notin S_f$,
        \item if~$f \in S_e$ and~$g \in S_f$, then~$g \in S_e$, and
        \item if~$g \in S_e$ and~$g \in S_f$, then~$f \in S_e$ or~$e \in S_f$.
    \end{compactenum}
\end{lemma}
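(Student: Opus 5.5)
Throughout, I would work with the following explicit description of $S_e$. For a separating link $e=\{a,b\}$ not incident to~$t$, let $C_e$ be the node set of the connected component of $G-\{a,b\}$ that contains~$t$. Let $R_e = V\setminus(C_e\cup\{a,b\})$ be the set of nodes separated from~$t$ by~$e$. By definition, $S_e$ is exactly the set of links having an endpoint in~$R_e$. Since $R_e$ is a union of components of $G-\{a,b\}$, every link with an endpoint in $R_e$ has both endpoints in $R_e\cup\{a,b\}$. Symmetrically, every link with no endpoint in $R_e$ lies inside $C_e\cup\{a,b\}$.

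The key auxiliary fact comes from 2-connectivity, as in the proof of \cref{lem:stay2con}: every component of $G-\{a,b\}$ contains a neighbor of~$a$ and a neighbor of~$b$, since otherwise $b$ (resp.~$a$) would be a cut node. If $S_e=\emptyset$ or $S_f=\emptyset$, the relevant hypotheses are vacuous or the conclusion is immediate. So I assume that the sets involved are nonempty, hence that $e$ and $f$ are separating and not incident to~$t$. For item~3 I assume $e\neq f$, which is clearly intended.

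\emph{Item 1.} Let $f=\{c,d\}\in S_e$. Then $f\subseteq R_e\cup\{a,b\}$, so $c,d\notin C_e$. Hence $G[C_e]$ is a connected subgraph containing~$t$ that avoids $c$ and $d$, which gives $C_e\subseteq C_f$. Consider $a$. If $a\notin f$, then $a$ has a neighbor in $C_e$, so $a\in C_f$. If $a\in f$, then $a\notin R_f$ trivially. The same argument applies to~$b$. Thus no endpoint of $e$ lies in $R_f$, i.e., $e\notin S_f$.

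\emph{Item 2.} Let $f\in S_e$ and $g\in S_f$, and suppose for contradiction that $g\notin S_e$, so $g\subseteq C_e\cup\{a,b\}$. Take an endpoint $y$ of $g$ with $y\in R_f$; in particular $y\notin f$. If $y\in C_e$, then $y\in C_f$ by the inclusion $C_e\subseteq C_f$ from item~1, a contradiction. If $y\in\{a,b\}$, then item~1 shows $y\in C_f$ (since $y\notin f$), again a contradiction.

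\emph{Item 3.} This is the step I expect to need the most care. Suppose $g\in S_e\cap S_f$ but $f\notin S_e$ and $e\notin S_f$. Then $f\subseteq C_e\cup\{a,b\}$ and $e\subseteq C_f\cup\{c,d\}$.

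First I claim $R_e\subseteq C_f$. Let $K$ be any component of $G-\{a,b\}$ inside~$R_e$. Then $K$ avoids $c$ and $d$, because those lie in $C_e\cup\{a,b\}$. Since $e\neq f$, some endpoint of $e$, say $a$, is not in~$f$, and then $a\in C_f$. By the auxiliary fact, $K$ contains a neighbor of~$a$, so $K\cup\{a\}$ is connected in $G-\{c,d\}$, giving $K\subseteq C_f$. Symmetrically, $R_f\subseteq C_e$.

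Now $g$ has an endpoint $x\in R_e\subseteq C_f$, and its other endpoint $y$ lies in $R_e\cup\{a,b\}$. Since $g\in S_f$, some endpoint of $g$ lies in $R_f\subseteq C_e$. But $x\in R_e$ is disjoint from $C_e$, and $y\in R_e\cup\{a,b\}$ is also disjoint from $C_e$. This is a contradiction.

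Planarity is not needed for any of the three items; only 2-connectivity is used.
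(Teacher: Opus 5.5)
Your proof is correct, and it takes a different route from the paper's.

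\textbf{The paper's route.} The paper argues mostly with explicit paths.
\begin{itemize}
\item For item~1 it takes an endpoint $w\in R_e$ of $f$ and two internally disjoint $w$-$t$-paths (Menger's theorem). Each path meets exactly one endpoint of $e$, and at most one meets the other endpoint of $f$. It then uses the tail of the good path.
\item Item~2 is a component argument much like yours. The endpoint of $g$ outside $e$ lies in the $t$-component of $G-e$, and that component avoids $f$.
\item For item~3 it takes a $u$-$t$-path that avoids the common endpoint of $e$ and $f$. It stops at the first node of $e\cup f$ and splices on a path supplied by the hypothesis $c\notin S_d$.
\end{itemize}
Several shared-endpoint cases are handled only implicitly, through \cref{obs:nocross} and brief ``without loss of generality'' steps.

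\textbf{Your route.} You phrase everything through $C_e$ and $R_e$, and use 2-connectivity only once: every component of $G-\{a,b\}$ contains neighbors of both $a$ and $b$.
\begin{itemize}
\item Item~1 becomes the containments $C_e\subseteq C_f$ and $e\setminus f\subseteq C_f$.
\item Item~2 follows directly from these containments.
\item Item~3 follows from $R_f\subseteq C_e$.
\end{itemize}

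\textbf{Comparison.} Your version avoids Menger's theorem, makes item~2 a corollary of item~1, and treats the shared-endpoint cases explicitly. The paper's version reuses tools it has already set up but is less uniform.

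\textbf{Minor remarks.}
\begin{itemize}
\item In item~3, only $R_f\subseteq C_e$ is needed for the final contradiction. Since $g\in S_e$ forces $g\subseteq R_e\cup\{a,b\}$, which is disjoint from $C_e$, the claim $R_e\subseteq C_f$ is just a template for the symmetric claim.
\item You are right that item~3 needs $e\neq f$; the paper also uses this silently.
\item You are also right that neither argument uses planarity.
\end{itemize}
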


\begin{proof}
    For the first point, assume that~$f \in S_e$.
    Let~$w$ be an endpoint of~$f$ that is separated from~$t$ by~$e$.
    Note that~$w$ is not an endpoint of~$e$.
    Since~$e$ is separating (as~$f \in S_e$), it holds by \cref{obs:nocross} that the other endpoint of~$f$ is also an endpoint of~$e$ or it is also separated from~$t$ by~$e$.
    Since~$G$ is 2-connected, there are two internally node-disjoint~$w$-$t$-paths~$P_1$ and~$P_2$ in~$G$.
    Since~$e$ separates~$w$ from~$t$, each of these paths contains exactly one endpoint of~$e$.
    Since~$P_1$ and~$P_2$ only overlap in~$w$ and~$w$ is not an endpoint of~$e$, it holds that at most one of~$P_1$ and~$P_2$ contain the other endpoint of~$f$.
    Let, without loss of generality, $P_1$ be a path that does not contain the other endpoint of~$f$.
    Then, it holds that the subpath of~$P$ from an endpoint of~$e$ to~$t$ does not contain an endpoint of~$f$.
    Thus, $f$ cannot separate this node from~$t$.
    By \cref{obs:nocross}, the other endpoint of~$e$ is contained in~$f$ or is also not separated from~$t$ by~$f$.
    Hence, $e$ is not separated from~$t$ by~$f$, that is, $e \notin S_f$.

    For the second point, assume towards a contradiction that~$f \in S_e$, $g \in S_f$, and~$g \notin S_e$.
    Let~${g=\{u,v\}}$.
    We assume without loss of generality that~$u$ is not an endpoint of~$e$.
    Then, since~$g \notin S_e$, it holds by \cref{obs:nocross} that~$u$ is in the same connected component~$G_1$ as~$t$ in~$G - e$.
    As~$f \in S_e$, at least one endpoint of~$f$ is in a different connected component~$G_2$ and by \cref{obs:nocross}, the other endpoint is shared with~$e$ or is also in~$G_2$.
    Hence, there is a connection between~$u$ and~$t$ within~$G_1$, that is, there is a path that does not use either endpoint of~$e$ or~$f$, yielding~$g \notin S_f$, a contradiction.

    For the third point, assume towards a contradiction that~$g \in S_e$, $g \in S_f$, $f \notin S_e$, and~$e \notin S_f$.
    Let~$g = \{u,v\}$, where $u$ is not an endpoint of~$e$.
    Note that such a node exists as~$e \neq g$. 
    Moreover, since~$e \neq f$, there is at most one node that is an endpoint of both~$e$ and~$f$.
    Since~$G$ is 2-connected, there exists a path~$P$ from~$u$ to~$t$ in~$G$ that does not contain such a common endpoint of~$e$ and~$f$.
    Let~$w$ be the first node in~$P$ that is contained in~$e$ or~$f$ (where potentially~$u=w \in f$) and in any case~$w$ exists as~$u$ is separated from~$t$ by~$e$.
    To avoid case distinctions, we rename~$e$ and~$f$ to~$c$ and~$d$, where~$c$ is the link that contains~$w$ as an endpoint.
    Since~$c \notin S_d$ and~$w \notin d$, there is a path from~$w$ to~$t$ that does not contain any endpoint of~$d$.
    The union of the subpath of~$P$ from~$u$ to~$w$ (which does not contain an endpoint of~$d$ by assumption) and this~$w$-$t$-path is a path from~$u$ to~$t$ that does not contain any endpoint of~$d$, that is,~$u$ is not separated from~$t$ by~$d$.
    This contradicts~$g \in S_e$ if~$d=e$ and~$g \in S_f$ if~$d=f$ as~$u \notin d$ in both cases.
    This concludes the proof.
\end{proof}

As a direct consequence, we get the above claim that there always exists a separating link that is not separated from~$t$ by any other separating link.
Consequently, we can order the links in~$S^*$ such that whenever~$f \in S_e$, then~$e$ appears before~$f$ in the ordering.

\begin{lemma}
    \label{lem:rootlink}
    Let~$(G,t)$ be a rooted graph where~$G$ is biconnected and planar.
    Let~$S$ be the separating links in~$G$.
    If~$S$ does not contain any links incident to~$t$, then there is an ordering~$(e_1,e_2,\ldots,e_{|S|})$ of the links in~$S$ such that whenever~$e_j \in S_{e_i}$, then~$i < j$.
    This ordering can be computed in~$O(n^2)$ time.
\end{lemma}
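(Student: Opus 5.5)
The plan is to define a binary relation $\prec$ on $S$ by $e \prec f$ if and only if $f \in S_e$, and to show that $\prec$ is a strict partial order. A topological sort of $\prec$ then gives the desired ordering: whenever $e_j \in S_{e_i}$, we have $e_i \prec e_j$, so $i<j$.

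First, we reduce to the setting of \cref{lem:separating}. If $G$ has a separating link at all, it has at least two links, so the biconnected graph $G$ is not $K_2$ and is therefore 2-connected. By assumption, no link in $S$ is incident to $t$. Hence every $e \in S$ is a separating link not incident to $t$, and $S_e$ is the set of links separated from $t$ by $e$.

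The relation $\prec$ is irreflexive, because $e \notin S_e$: the endpoints of $e$ are removed, so $e$ cannot separate one of its own endpoints from $t$. Asymmetry is exactly point~1 of \cref{lem:separating}. Transitivity is point~2: if $f \in S_e$ and $g \in S_f$, then $g \in S_e$. So $\prec$ restricted to $S$ is a strict partial order and therefore acyclic. In particular, a linear extension exists, and any topological order of the digraph with vertex set $S$ and arcs $(e,f)$ for $f \in S_e \cap S$ works. Point~3 of \cref{lem:separating} is not needed for existence. It would only show that the predecessors of each link form a chain, and it is worth noting but unnecessary here.

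For the running time, I would first compute $S$ and all sets $S_e$ in $O(n^2)$ time via \cref{obs:computeS}. Then I build the digraph $D$ on $S$: for each $e \in S$ and each $f \in S_e$, I test membership $f \in S$ in constant time using a marker array indexed by links. Since $G$ is planar, $m = O(n)$, so $|S| = O(n)$ and each $|S_e| = O(n)$. The digraph therefore has $O(n^2)$ arcs and is built in $O(n^2)$ time. A topological sort via Kahn's algorithm runs in $O(|S| + |A(D)|) = O(n^2)$ time, and acyclicity guarantees that it outputs every link.

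The main obstacle is confirming that the hypotheses of \cref{lem:separating} (2-connectivity, and the definition of $S_e$ for links not incident to $t$) apply to every pair of links in $S$. The absence of separating links incident to $t$ is exactly what ensures this, since otherwise $S_f = \emptyset$ by convention and the relation would not capture the intended dependency. Once this is checked, the rest is a standard topological sort.
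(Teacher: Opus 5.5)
Your proposal is correct and takes essentially the paper's route: the digraph on $S$ with arcs $(e,f)$ for $f \in S_e$ is acyclic by points~1 and~2 of \cref{lem:separating}, and a topological sort of this digraph with $O(n^2)$ arcs gives the ordering in $O(n^2)$ time. Your direct strict-partial-order argument is slightly cleaner, since the paper also phrases existence as a recursive removal of a minimal link via \cref{lem:onego}, which the statement does not need.
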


\begin{proof}
    We first compute the set~$S$ and the set~$S_e$ for each~$e \in S$ in~$O(n^2)$ time using \cref{obs:computeS}.
    We first show that the claimed ordering exists and then how to compute it.
    Set initially~$S'=S$.
    We show that a separating link~$e_1$ exists such that~$e_1 \notin S_f$ for any~$f \in S'$.
    We then recursively construct a solution sequence for~$S' \setminus \{e_1\}$ (in~$G \setminus e_1$) and add the link~$e_1$ to the beginning of that solution sequence.
    Note that the constructed sequence fulfills the requirement of the lemma statement since~$S_f$ does not change for any link~$f \neq e_1$ by \cref{lem:onego}.
    We next show that such a link~$e_1$ exists.
    To this end, assume towards a contradiction no such link exists and consider an arbitrary link~$f_1 \in S$.
    Then, there exists a link~$f_2 \in S_{f_1}$.
    Moreover, there exists a link~$f_3 \in S_{f_2}$ and so on.
    Since the number of links in~$G$ is finite, it holds that~$f_i = f_j$ for some~$i < j$.
    By the second point of \cref{lem:separating}, it holds that~$f_{j-1} \in S_{f_i}$.
    Moreover, since~$f_i = f_j \in S_{f_{j-1}}$, this contradicts the first point of \cref{lem:separating}.
    Hence, a link~$e_1$ with~$e_1 \notin S_f$ for each~$f \in S'$ exists.

    We next describe how to compute the sequence in~$O(n^2)$ time.
    We start by building a directed graph~${D=(S,E')}$ where the nodes represent the set~$S$ of separating links and there is an arc~$(e,f)$ (a directed link) if and only if~$f \in S_e$.
    Note that the graph can be computed in~$O(m^2) = O(n^2)$ time.
    Moreover, the graph is acyclic as any cycle would contradict the exists of a sequence as constructed above.
    We then simply compute a topological ordering of~$D$ in~$O(n+m')=O(n^2)$ time, where~$m'$ is the number of arcs in~$D$.
    Note that the topological ordering is an ordering of~$S$ such that for all~$e_i,e_j \in S$ it holds that if~$(e_i,e_j) \in E'$ (meaning that~$e_j \in S_{e_i}$), then~$e_i$ comes before~$e_j$ in the ordering.
    This concludes the proof.
\end{proof}

\subsection{Hierarchical Embeddings and Forwarding Patterns}
\label{sec:prehelp2}
Before we can state the main result of this section, we require two last ingredients: \textbf{\textit{hierarchical planar embeddings}} and \textbf{\textit{hierarchical right-hand forwarding patterns}}.
For both of them, we will work with directed planar graphs where arcs either exist in both directions or in neither.
Such graphs are called \textbf{\textit{symmetric directed (planar) graphs}} and we say the process of turning a graph into a symmetric directed graph by replacing each link with a pair of directed arcs, one in each direction, is taking the \textbf{\textit{symmetric orientation}} of the graph.
Note that any pair of symmetric arcs between any two nodes~$u$ and~$v$ partition the plane into an interior region and an exterior region.
See \cref{fig:emptyfaces} for an example of interior regions and some of the concepts introduced in the following.
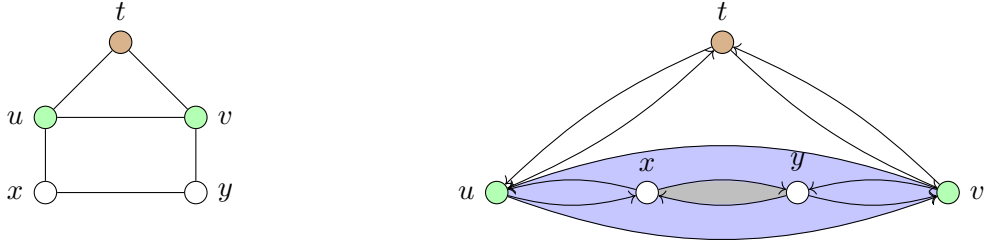
\begin{figure}[t]
    \centering
    \begin{tikzpicture}
        \node[te] at(-8,2) (t2) {};
        \node[por,label=left:$u$] at(-9,1) (u2) {} edge(t2);
        \node[por,label=right:$v$] at(-7,1) (v2) {} edge(t2) edge(u2);
        \node[ver,label=left:$x$] at(-9,0) (x2) {} edge(u2);
        \node[ver,label=right:$y$] at(-7,0) (y2) {} edge(v2) edge(x2);
    
        \node[te] at(0,2) (t) {};
        \node[por,label=left:$u$] at(-3,0) (u) {};
        \node[ver,label=$x$,fill=white] at(-1,0) (x) {};
        \node[ver,label=$y$,fill=white] at(1,0) (y) {};
        \node[por,label=right:$v$] at(3,0) (v) {};

        \draw[->,bend right=15] (u) to (x);
        \draw[->,bend right=15] (x) to (u);
        \draw[->,bend right=20] (u) to (v);
        \draw[->,bend right=20] (v) to (u);
        \draw[->,bend right=10] (u) to (t);
        \draw[->,bend right=10] (t) to (u);
        \draw[->,bend right=10] (v) to (t);
        \draw[->,bend right=10] (t) to (v);
        \draw[->,bend right=15] (v) to (y);
        \draw[->,bend right=15] (y) to (v);
        \draw[->,bend left=15] (x) to (y);
        \draw[->,bend left=15] (y) to (x);

        \begin{scope}[on background layer]
            \path [fill=blue!22] (u.center) to[bend right=21] (v.center) 
    to [bend right=21] (u.center);
            \path [fill=lightgray] (x.center) to[bend right=15] (y.center) 
    to [bend right=15] (x.center);
        \end{scope}
    \end{tikzpicture}
    \caption{A planar rooted graph without separating links incident to~$t$ on the left. The right shows its symmetric orientation. The interior region of~$\{u,v\}$ is shaded blue (including the gray-shaded part) and the interior region of~$\{x,y\}$ is shaded gray. The link~$\{u,v\}$ encloses~$\{u,x\}$, $\{x,y\}$, and~$\{y,v\}$.
    The embedding is not a hierarchical planar embedding as the arcs~$(x,y)$ and~$(y,x)$ enclose the interior region in the wrong orientation. Swapping both arcs results in a hierarchical planar embedding.}
    \label{fig:emptyfaces}
\end{figure}

Since we consider planar embeddings, it holds for each other arc~$f$ that~$f$ is either fully contained in the interior region or fully contained in the exterior region.
Moreover, it is contained in the interior region, if and only if both endpoints are contained in the interior region (where we say that~$u$ and~$v$ are contained in the interior region).
Given a planar embedding of a symmetric directed graph that is the result of taking the symmetric orientation of an undirecetd planar graph~$G=(V,E)$, we say that~$\{u,v\} \in E$ has an \textbf{\textit{empty face}} if the interior region of~$(u,v)$ and~$(v,u)$ does not contain any nodes or links (other than~$u$, $v$, $(u,v)$, and~$(v,u)$).
Otherwise, it has a \textbf{\textit{virtual face}}.
If~$\{u,v\}$ has a virtual face, then note that for each arc~$(x,y)$ contained in the virtual face, it also contains the arc~$(y,x)$ as it contains both~$x$ and~$y$ (and~$\{x,y\} \neq \{u,v\}$).
In this case, we say that~$\{u,v\}$ \textbf{\textit{encloses}}~$\{x,y\}$ in the embedding.

We next define \textbf{\textit{hierarchical planar embeddings}}.
Such an embedding is defined for a rooted graph~$(G,t)$ where~$G$ is a biconnected undirected planar graph without separating links incident to~$t$.
It is an embedding of the symmetric orientation of~$G$ such that for each link~$e=\{u,v\}$ it holds that
\begin{compactitem}
    \item when traversing through~$(u,v)$ and then through~$(v,u)$ one traverses the interior region in counterclockwise order, and
    \item for each other link~$f$, it holds that~$e$ encloses~$f$ if and only if~$f \in S_e$.
\end{compactitem}

We will show a bit later (\cref{lem:hierembedding}) that such a hierarchical planar embedding always exists.
Before doing so, we introduce the last concept in this section: \textbf{\textit{hierarchical right-hand forwarding patterns}}. 
Such forwarding patterns are also defined for rooted graphs~$(G,t)$ where~$G$ is a biconnected undirected planar graph without separating links incident to~$t$.
It also requires a hierarchical planar embedding.
With these prerequisites given, a hierarchical right-hand forwarding pattern consists of a skipping forwarding function~$\Lambda_v^{e_v}$ for each node~$v \in V \setminus \{t\}$ such that for any in-port~$f \in E_v \cup \{\bot\}$ the following holds:
\begin{compactitem}
    \item if~$f = \bot$, then~$\Lambda_v^{e_v}(f) = \Lambda_v^{e_v}(e_v)$ and
    \item if~$f = \{u,v\} \in E_v$, then~$\Lambda_v^{e_v}(f) = \sigma_t \circ \sigma_p \circ \sigma_r$, where
    \begin{compactitem}
        \item $\sigma_t = (t)$ if~$\{v,t\} \in E_v$ and~$\sigma_t = ()$ (the empty sequence) otherwise,
        \item $\sigma_p$ is any permutation of any subset~$D$ of~$\{e \mid f \in S_e\}$, and
        \item $\sigma_r$ is the right-hand rule for~$v$ and in-port~$f$ for the given hierarchical planar embedding in~$G \setminus (D \cup \{v,t\})$, that is, starting from the arc~$(u,v)$, it goes counterclockwise around~$v$ and lists all outgoing arcs in order unless they already appear in~$\sigma_t$ or~$\sigma_p$.
\end{compactitem}
\end{compactitem}
See \cref{fig:embedding} for an example.
We say that the first part~$\sigma_t$ is called the \emph{$t$-prefix}, the second part~$\sigma_p$ is called the \emph{priority part}, and the last part~$\sigma_r$ is called the \emph{regular part}.
\begin{figure}[t]
    \centering
    \begin{tikzpicture}
        \node[te] at(0,2) (t) {};
        \node[por,label=left:$u$] at(-3,0) (u) {};
        \node[ver] at(-1,0) (x) {};
        \node at(-1,.3) {$x$};
        \node[ver,label=$y$] at(1,0) (y) {};
        \node[por,label=right:$v$] at(3,0) (v) {};
        \node[ver,label=left:$p$] at(-3,-1.5) (p) {};
        \node[ver,label=below:$q$] at(-3,-3) (q) {};

        \draw[->,bend right=15] (u) to (x);
        \draw[->,bend right=15] (x) to (u);
        \draw[->,bend right=20] (u) to (y);
        \draw[->,bend right=20] (y) to (u);
        \draw[->,bend right=25] (u) to (v);
        \draw[->,bend right=25] (v) to (u);
        \draw[->,bend right=10] (u) to (t);
        \draw[->,bend right=10] (t) to (u);
        \draw[->,bend right=10] (v) to (t);
        \draw[->,bend right=10] (t) to (v);
        \draw[->,bend right=15] (v) to (y);
        \draw[->,bend right=15] (y) to (v);
        \draw[->,bend right=15] (x) to (y);
        \draw[->,bend right=15] (y) to (x);
        \draw[->,bend right=15] (u) to (p);
        \draw[->,bend right=15] (p) to (u);
        \draw[->,bend right=15] (p) to (q);
        \draw[->,bend right=15] (q) to (p);
        \draw[->,bend right=15] (u) to (q);
        \draw[->,bend right=15] (q) to (u);
        \draw[->,bend right=10] (q) to (v);
        \draw[->] (v) to (q);
    \end{tikzpicture}
    \caption{An example of a hierarchical planar embedding. The priority list~$\Lambda_u^{e_u}(x) = (t,y,q,p,v,x)$ is a valid choice since it can be written as~$(t) \circ (y) \circ(q,p,v,x)$, $\{u,y\}$ encloses~$\{u,x\}$, and starting from the arc~$(x,u)$ and going counterclockwise around~$u$, the remaining outgoing arcs appear in the order~$((u,q),(u,p),(u,v),(u,x))$. The priority list~$\Lambda_u^{e_u}(x) = (t,y,q,v,p)$ would not be valid since only~$\{u,y\}$ and~$\{u,v\}$ enclose~$\{u,x\}$, so everything after~$\{u,q\}$ has to appear in the regular part of the priority list, that is, the links have to appear in counterclockwise order, which it does not ($(u,v)$ appears after~$(u,q)$).}
    \label{fig:embedding}
\end{figure}
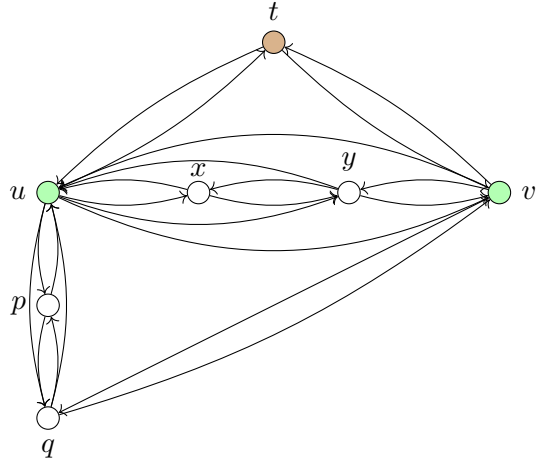%
Note that if~$G$ does not contain any separating links, then~$\Lambda_v^{e_v}$ is completely determined by the hierarchical planar embedding.

We next show that a hierarchical planar embedding always exists for~$(G,t)$ if~$G$ is biconnected, planar, and does not contain any separating links incident to~$t$.
To this end, we first start with the easy case where~$G$ does not contain any separating links.
In this case, we do basically the same we already did when initializing the doubly connected half-edge lists data structure.
For the sake of readability, we say that a graph~$G$ that is planar, biconnected, and does not contain any separating links is \textbf{\textit{\nice}}.

\begin{observation}
    \label{obs:hier}
    Let~$(G,t)$ be a rooted graph where~$G$ is \nice.
    For any planar embedding of~$G$, there is also a hierarchical planar embedding for it such that~$\lambda_v^{e_v} = \Lambda_v^{e_v}$ for each node~$v \in V \setminus \{t\}$.
    This embedding can be computed in linear time.
\end{observation}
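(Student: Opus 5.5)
Since $G$ is \nice{}, it has no separating links at all. So $S_e=\emptyset$ for every link $e$, and in particular no link is incident to $t$ and separating, which means hierarchical planar embeddings are defined for $(G,t)$. The second condition of a hierarchical planar embedding then says that no link encloses any other link. In other words, every pair of symmetric arcs $(u,v),(v,u)$ must bound an empty face, traversed in counterclockwise order. This suggests the construction: take the given planar embedding of $G$ and replace each link $\{u,v\}$ by two parallel arcs drawn within an arbitrarily thin neighborhood of the original curve. We orient them so that going along $(u,v)$ and then back along $(v,u)$ traverses the thin lens between them counterclockwise. Which of the two parallel curves carries which direction is exactly what fixes this orientation, and both choices are available.

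The key steps are as follows. First, the lens between the two arcs contains no node or other arc, since the neighborhood can be chosen disjoint from the rest of the drawing. So every link has an empty face, and no link encloses another. This matches $f\in S_e$ never holding, so the second condition holds vacuously on both sides. The first condition holds by the choice of orientation, and the drawing is still planar. Second, we compare the forwarding functions. Since $S_e=\emptyset$ for all $e$, the priority part $\sigma_p$ of $\Lambda_v^{e_v}$ must be empty, because $D\subseteq\{e\mid f\in S_e\}=\emptyset$. The $t$-prefix coincides with the updated right-hand rule's move of $t$ to the front. The regular part goes counterclockwise around $v$ starting from the arc $(u,v)$. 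The arcs around $v$ occur in adjacent in/out pairs, one pair per link, in the same cyclic order as the links in the original embedding. So the outgoing arcs appear in exactly the counterclockwise order of the incident links starting after $f$ and ending with $f$, which is $\lambda'_v$ for in-port $f$. For in-port $\bot$, both rules are defined by copying the list of $e_v$. Hence $\lambda_v^{e_v}=\Lambda_v^{e_v}$.

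The only point needing care is checking that the rotation around $v$ really places the out-arc of link $f$ last and gives the same starting point. That is, starting "from the arc $(u,v)$" in the doubled embedding must correspond to starting "next to $f$" in the original. This depends on the consistent choice of which parallel copy is the in-arc. With the counterclockwise lens convention at every link, the in-arc $(u,v)$ is immediately clockwise-adjacent to the out-arc $(v,u)$ around $v$ for every link. The first outgoing arc met counterclockwise after $(u,v)$ therefore belongs to the next link, and $(v,u)$ comes last.

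For the running time, we walk over the doubly connected half-edge list of the given embedding. Each link already corresponds to two half-edges, and we interpret them as the two arcs with this orientation. This takes $O(n+m)=O(n)$ time, since $m\le 3n$ for planar graphs.
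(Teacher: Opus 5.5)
Your proof is correct and follows the paper's argument. The paper also doubles every link into an out-arc and an in-arc that are consecutive in the rotation at each endpoint. As a result, every link gets an empty, counterclockwise-traversed face and no link encloses another, which matches $S_e=\emptyset$. The paper then reads off that the counterclockwise order of out-arcs starting from $(u,v)$ is exactly the right-hand order with $t$ moved to the front.

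One wording nit: the fact you actually use, and which holds under your counterclockwise-lens convention, is that at $v$ the in-arc $(u,v)$ directly follows the out-arc $(v,u)$ in counterclockwise order. Equivalently, $(v,u)$ is the immediate clockwise neighbor of $(u,v)$, so make sure ``clockwise-adjacent'' is read in that direction.
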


\begin{proof}
    Starting with the planar (undirected) embedding for~$G$, we replace each link by two symmetric arcs as follows.
    For each node~$v \in V$, let~$k$ be the number of links incident to~$v$ and let~$(e_1=\{v,u_1\},e_2=\{v,u_2\},\ldots,e_{k}=\{v,u_{k}\})$ be the cyclic ordering of the links incident to~$v$ in the undirected embedding.
    For the hierarchical planar embedding, we set the list of incident arcs to~$((v,u_1),(u_1,v),(v,u_2),(u_2,v),\ldots,(v,u_{k}),(u_{k},v))$.
    Note that each link has an empty face in the embedding and the interior face of any link is traversed in counterclockwise order when traversing two symmetric arcs.
    Thus, the embedding is indeed a hierarchical planar embedding.
    We next show that~$\lambda_v^{e_v} = \Lambda_v^{e_v}$ holds for each node~$v$.
    Consider any in-port~$f$ of~$v$.
    If~$f = \bot$, then replace~$f$ by~$e_v$ in the following argument.
    Let~$f=\{u,v\}$ and let~$\{v,w\}= g \neq h=\{v,x\} $ be two links (which might be equal to~$f$).
    Note that by construction, if we encounter~$g$ before~$h$ when going counterclockwise around~$v$ starting with link next to~$f$, then we also encounter~$(v,w)$ before~$(v,x)$ when going counterclockwise around~$v$ in the constructed hierarchical planar embedding when starting with~$(u,v)$.
    Thus, $\lambda_v^{e_v} = \Lambda_v^{e_v}$.
    The running time is linear since we iterate over all nodes and then over all incident arcs once.
    By the handshaking lemma, the running time is in~$O(m)$ which is~$O(n)$ in planar graphs.
\end{proof}

We next show that a hierarchical planar embedding always exists.
To this end, we only show that the two symmetric arcs of a separating link that are not separated by any existing links can be added to any hierarchical planar embedding.
To show that this implies a hierarchical planar embedding for~$G$, start with a hierarchical planar embedding for~$G \setminus S$, which exists by \cref{obs:hier}.
Initially, set~$S' = S$.
By \cref{lem:rootlink}, we can then iteratively find a link~$e \in S'$ such that~$f \notin S_e$ for all other links~$f \in S'$.
We then remove~$e$ from~$S'$ and add the two symmetric arcs corresponding to~$e$ to the existing hierarchical planar embedding.
We then repeat this process until all links in~$S$ have been embedded.

\begin{lemma}
    \label{lem:hierembedding}
    Let~$(G,t)$ be a rooted graph where~$G$ is planar and biconnected.
    Let~$S$ be the set of separating links in~$G$ and assume that~$S$ contains no links incident to~$t$.
    Let~$e \in S$ such that~$e \notin S_f$ for all~$f \in S$.
    If a hierarchical planar embedding for~$G \setminus e$ is given, then we can add the two symmetric links corresponding to~$e$ to get a hierarchical planar embedding for~$G$.
    The positions where these links are added can be found in linear time.
\end{lemma}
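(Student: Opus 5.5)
Let $e=\{u,v\}$ and let $C_t$ be the connected component of $G-\{u,v\}$ containing $t$ (note $t\notin e$ by assumption). By \cref{lem:stay2con}, $G\setminus e$ is 2-connected, and by \cref{lem:onego} the separating links of $G\setminus e$ are exactly $S\setminus\{e\}$, with $S'_f=S_f\setminus\{e\}$ for every link $f$. Thus the given hierarchical embedding of $G\setminus e$ already satisfies both conditions for every link other than $e$ with respect to the sets $S_f$ of $G$; the only exception is a link $f$ with $e\in S_f$, and this is excluded by the choice of $e$. It therefore remains to place the two arcs $(u,v)$ and $(v,u)$ so that three things hold. First, the new embedding is planar. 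Second, the interior region of $e$ contains exactly the links of $S_e$, namely those with an endpoint outside $C_t\cup\{u,v\}$, and is traversed counterclockwise. Third, the arcs of $e$ lie in the interior of no other link $f$, which is consistent with $e\notin S_f$.

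The construction goes as follows. Let $C_1,\dots,C_k$ ($k\ge 1$) be the components of $G-\{u,v\}$ other than $C_t$. Each $C_i$ has neighbours of both $u$ and $v$ by 2-connectivity. Let $B=G[V(C_1)\cup\dots\cup V(C_k)\cup\{u,v\}]$ minus the links between $u$ and $v$. I claim that around $u$ the arcs belonging to $B$ form a contiguous block in the cyclic order, that the same holds around $v$, and that the links of $C_t$ lie in a single face-side of $B$. The argument is as follows. Because $G\setminus e$ is 2-connected, $\{u,v\}$ is a 2-separator in $G\setminus e$, so in any planar embedding the pieces hanging off $\{u,v\}$ (the bridges $C_i$ and $C_t$) appear in the same cyclic order around $u$ and around $v$ (reversed). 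Each piece occupies a contiguous angular interval at $u$ and at $v$, except when a piece is nested inside the virtual face of some separating link. That case is handled as follows. If a bridge $C_i$ were enclosed by a link $g$ while $C_t$ is not, then $g$ separates $C_i$ from $t$, so $g\in S$ with $S_g\supseteq$ edges of $C_i$. Since edges of $C_i$ are in $S_e$ too, part 3 of \cref{lem:separating} gives $e\in S_g$ or $g\in S_e$; the former is excluded by the choice of $e$. Hence $g\in S_e$ lies inside $B$, and the enclosure is internal to $B$. Consequently there is a face $\phi$ of the embedding of $G\setminus e$ incident to both $u$ and $v$ that separates the $B$-block from $C_t$ (for instance, the face crossed when moving from the last bridge $C_i$ to $C_t$ around $u$). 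We insert $(u,v)$ and $(v,u)$ inside $\phi$, bent so that together they cut off all of $B$ from $C_t$. This keeps the embedding planar, and the interior region then contains exactly $B$, i.e.\ exactly $S_e$. We orient the pair (choose which arc is drawn on which side) so that traversing $(u,v)$ then $(v,u)$ goes counterclockwise around the interior, as in \cref{fig:emptyfaces}. Finally, no link $f\notin S_e\cup\{e\}$ ends up enclosed by $e$, and $e$ lies inside no other link's region, because $\phi$ is not inside any virtual face whose owner fails to enclose $B$; this again uses the argument above.

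The running time is linear: with the doubly connected half-edge list, we compute $C_t$ by BFS in $G-\{u,v\}$, walk around $u$ to find the position where arcs switch from $B$ to $C_t$ (and similarly at $v$), and insert the two arcs in constant time.

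I expect the main obstacle to be the contiguity/separation claim. It requires showing that the existing nested (virtual) faces never interleave $B$ and $C_t$ around $u$ or $v$, and this is where part 3 of \cref{lem:separating} together with $e\notin S_f$ must be used carefully. The planarity argument about the cyclic order of bridges at a 2-separator also requires some care.
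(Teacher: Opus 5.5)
Your plan follows the paper's outline:
\begin{itemize}
\item show that the arcs of links in~$S_e$ form one contiguous block around~$u$ and around~$v$;
\item insert the arcs of~$e$ at the ends of that block;
\item transfer the hierarchical property of all other links via \cref{lem:onego};
\item use part~3 of \cref{lem:separating} to rule out that~$e$ is enclosed.
\end{itemize}

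\textbf{The gap: both arcs go into one face.} Your construction puts both~$(u,v)$ and~$(v,u)$ into a single face~$\phi$, and this fails. The symmetric orientation of~$G\setminus e$ is 2-connected (\cref{lem:stay2con}), so~$\phi$ is an open disk bounded by a simple cycle. Two arcs drawn inside~$\phi$ joining~$u$ and~$v$ therefore bound a lens contained in~$\phi$. One side of the new 2-cycle contains no nodes or links and the other contains all of them. So~$e$ gets an empty face, or encloses~$C_t$ together with~$B$, but never exactly~$S_e$.

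\textbf{The fix.} The~$S_e$-block at~$u$ has two ends. A face of a 2-connected plane graph meets~$u$ in at most one angle, so these two gaps lie in two different faces~$\phi\neq\phi'$, and one arc must go into each. The paper puts~$(u,v)$ just before the first marked arc and~$(v,u)$ just after the last marked arc, and does the same at~$v$.

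What then has to be proved is that each of~$\phi,\phi'$ reaches~$v$ exactly at the corresponding end of the~$S_e$-block there. The paper shows this by tracing the face from the first marked arc: the trace uses only links of~$S_e$ until it meets~$u$ or~$v$, and meeting~$u$ first would make~$u$ a cut node of~$G\setminus e$. Your bridge-order fact gives the same conclusion. The face between two consecutive bridges at~$u$ is bounded by a cycle that leaves~$u$ into one bridge and returns from the other, so it passes through~$v$ between the same two bridges. You must apply this at both ends of the block. Your only remaining freedom is which of~$\phi,\phi'$ receives~$(u,v)$, and that choice fixes the counterclockwise orientation.

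\textbf{Two smaller points.}

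First, the nesting exception in your contiguity claim never occurs. Suppose arcs of one bridge at~$u$ were interleaved with arcs of other bridges. A cycle through~$u$ inside that bridge (possibly the 2-cycle of a single link) would then put an arc of another bridge on the side not containing~$v$. But that bridge reaches~$v$ without meeting the cycle, a contradiction. This holds in any planar embedding of the symmetric orientation, virtual faces included.

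Second, your last step ``$\phi$ is not inside any virtual face whose owner fails to enclose~$B$'' does not handle owners that do enclose~$B$. Here is a clean argument. At~$u$, each of~$\phi,\phi'$ is bounded by an arc of some~$h_2\in S_e$ and an arc of some~$h_1\notin S_e$. A link~$g$ whose interior contains~$\phi$ would enclose whichever of~$h_1,h_2$ differs from~$g$.
\begin{itemize}
\item If~$g\notin S_e$, then~$h_2\in S_g\cap S_e$, so part~3 together with~$e\notin S_g$ forces~$g\in S_e$, a contradiction.
\item If~$g\in S_e$, then~$h_1\in S_g$, so part~2 forces~$h_1\in S_e$, a contradiction.
\end{itemize}
Alternatively, use the paper's route: if~$f$ encloses~$e$, then~$f$ encloses every link of~$S_e$; part~3 gives~$f\in S_e$, so~$e$ encloses~$f$, which is impossible.
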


\begin{proof}
    We will first show how to add the two symmetric arcs corresponding to~$e$ to the existing hierarchical planar embedding in~$O(n)$ time.
    We will the show that the resulting embedding satisfies all requirements of a hierarchical planar embedding, in particular, that the embedding is planar.
    Let~$e = \{u,v\}$.
    We first mark all arcs in the hierarchical planar embedding that belong to links in~$S_e$.
    This takes~$O(n)$ time.
    Let~$e_u$ and~$e_v$ be links incident to~$u$ and~$v$, respectively, such that~$e_u,e_v \notin S_e$.
    Note that such links exist as~$G$ is biconnected.
    We go around~$u$ and~$v$ in counterclockwise direction starting from an arc corresponding to~$e_u$ and~$e_v$, respectively, and find the position just before the first marked link and just after the last marked link.
    For node~$u$, we add the outgoing arc~$(u,v)$ just before the first marked link and the incoming arc~$(v,u)$ just after the last marked link.
    We do the same for~$v$.
    See \cref{fig:embedding2} for an example.
    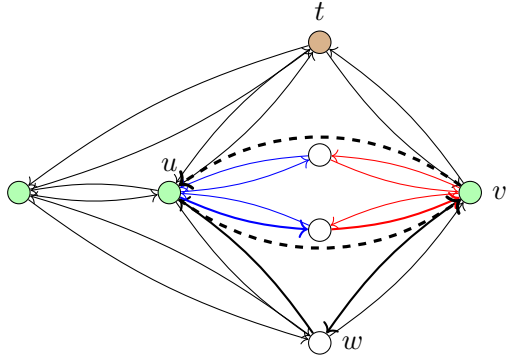
\begin{figure}[t]
        \centering
        \begin{tikzpicture}
            \node[te] at(1,2) (t) {};
            \node[por] at(-3,0) (x) {};
            \node[por,label=$u$] at(-1,0) (u) {};
            \node[ver] at(1,.5) (y) {};
            \node[ver] at(1,-.5) (z) {};
            \node[por,label=right:$v$] at(3,0) (v) {};
            \node[ver,label=right:$w$] at(1,-2) (a) {};
        
            \draw[->,bend right=10] (t) to (x);
            \draw[->,bend right=10] (t) to (u);
            \draw[->,bend right=10] (t) to (v);
            \draw[->,bend right=10] (x) to (t);
            \draw[->,bend right=10] (u) to (t);
            \draw[->,bend right=10] (v) to (t);
            \draw[->,bend right=10] (a) to (x);
            \draw[->,bend right=10,thick] (a) to (u);
            \draw[->,bend right=10] (a) to (v);
            \draw[->,bend right=10] (x) to (a);
            \draw[->,bend right=10] (u) to (a);
            \draw[->,bend right=10,thick] (v) to (a);
            \draw[->,bend right=10] (x) to (u);
            \draw[->,bend right=10] (u) to (x);
            \draw[->,bend right=10,blue] (u) to (y);
            \draw[->,bend right=10,blue,thick] (u) to (z);
            \draw[->,bend right=10,red] (v) to (y);
            \draw[->,bend right=10,red] (v) to (z);
            \draw[->,bend right=10,blue] (y) to (u);
            \draw[->,bend right=10,blue] (z) to (u);
            \draw[->,bend right=10,red] (y) to (v);
            \draw[->,bend right=10,red,thick] (z) to (v);
            \draw[->,very thick,dashed,bend right=36] (u) to (v);
            \draw[->,very thick,dashed,bend right=36] (v) to (u);
        \end{tikzpicture}
        \caption{An example of how we add two symmetric arcs to a hierarchical planar embedding. The marked arcs incident to~$u$ are colored blue and the marked arcs incident to~$v$ are colored red. The bold arcs indicate where we trace the face in which we add~$(u,v)$ (lower dashed arc) in the proof of \cref{lem:hierembedding}. The two colored arcs show the path using links in~$S_{\{u,v\}}$ and the black arcs show the links not in~$S_{\{u,v\}}$ when we go from~$u$ to~$v$.}
        \label{fig:embedding2}
    \end{figure}%
    Note that this takes~$O(n)$ time overall.

    We first show that all marked links form a consecutive interval around~$u$ and~$v$, that is, the positions we computed are unique.
    Assume towards a contradiction that there are arcs~$a_1,a_2,a_3,a_4$ such that~$a_1$ and~$a_3$ are marked, $a_2$ and~$a_4$ are not marked, and the arcs appear (not necessarily consecutively) in order~$(a_1,a_2,a_3,a_4)$ around~$u$.
    Let~$x_1,x_2$ be the two neighbors of~$u$ that are linked with~$u$ through~$a_1$ and~$a_3$.
    Similarly, let~$y_1$ and~$y_2$ be the neighbors of~$u$ corresponding to arcs~$a_2$ and~$a_4$.
    We do not assume that~$x_1 \neq x_2$ or~$y_1 \neq y_2$.
    Since~$G$ is biconnected, there is a path from~$x_1$ to~$u$ that does not pass through~$v$.
    This path only contains links in~$S_e$ and similar paths exist between~$x_1$ and~$v$, $x_2$ and~$u$, and~$x_2$ and~$v$.
    Since~$y_1$ and~$y_2$ are not separated from~$t$ by~$e$, there are also paths from~$y_1$ to~$t$ and from~$y_2$ to~$t$ not passing through either~$u$ or~$v$.
    Note that at least one of these paths has to cross at least one of the four paths between~$u/v$ and~$x_1/x_2$.
    Hence, one of these four paths contains a node~$w$ such that a~$w$-$t$-path exists that does not contain~$u$ or~$v$.
    This contradicts the assumption that~$x_1$ and~$x_2$ are separated from~$t$ by~$e=\{u,v\}$.
    Thus, the marked links indeed form a consecutive interval.
    
    We next show that the constructed embedding is indeed planar.
    To this end, consider the face where we added~$(u,v)$.
    Traversing the face alongside the first marked arc (and traversing links in either direction) must lead to~$v$ as each arc corresponds to a link in~$S_e$ until either~$u$ or~$v$ is reached.
    If~$u$ is reached before~$v$, then removing~$u$ separates all links that where traversed from~$t$, that is, $G \setminus e$ is not biconnected.
    This contradicts \cref{lem:stay2con}, so~$v$ is reached.
    Symmetrically, if we traverse the face in the other direction (along the last non-marked link), then we must also reach~$v$ eventually since~$v$ is incident to the considered face and before we reach~$u$ for a second time.
    Thus, all arcs used correspond to links not in~$S_e$.
    In particular at node~$v$, the face is incident to a marked and a non-marked link.
    See \cref{fig:embedding2} for an example.
    When going counterclockwise around~$v$, the marked link comes before the unmarked link and hence this is precisely the spot where we added the link~$(u,v)$ for~$v$.
    The argument for the link~$(v,u)$ is analogous.

    It remains to show that the other conditions of a hierarchical planar embedding are satisfied, that is, the arcs are added in such a way that when traversing two symmetric arcs, then the interior region is traversed in counterclockwise order and for each pair~$f,g$ of links it holds that~$g$ encloses~$g$ if and only if~$g \in S_f$.
    To this end, first note that the arcs~$(u,v)$ and~$(v,u)$ are placed in such a way that the enclosed links are always on the left side, that is, going along these links traverses the interior region in counterclockwise order.
    Since the embedding of links other than~$e$ did not change, it only remains to show that~$e$ is not enclosed by any other link and that~$e$ encloses precisely the links in~$S_e$.

    Let~$f \neq e$ be a link.
    We show that~$f$ is enclosed by~$e$ if and only if~$f \in S_e$.
    To this end, first assume towards a contradiction that~$f \notin S_e$ but~$e$ encloses~$f$.
    Since~$f \notin S_e$, there is a path from an endpoint of~$f$ to~$t$ that does not pass through an endpoint of~$e$. 
    Since this path can never cross the two links corresponding to~$e$, it holds that~$t$ is enclosed by~$e$.
    By the same argument, every node that is not separated from~$t$ by~$e$ is enclosed by~$e$, a contradiction to the construction that leaves an unmarked arc incident to~$u$ in the exterior region.
    Now assume towards a different contradiction that~$f \in S_e$ and~$f$ is not enclosed by~$e$.
    By construction, all arcs corresponding to links incident to~$u$ in~$S_e$ are enclosed by~$e$.
    Since~$f \in S_e$ and since~$G$ is biconnected, there exists a path from~$u$ to an endpoint of~$f$ in~$G$ that does not pass through~$v$.
    As shown above, the first link (incident to~$u$) in this path is enclosed by~$e$.
    Hence, the path must leave the interior region of~$e$ somewhere.
    Note that this implies that it contains~$u$ or~$v$ or crosses one of the two arcs~$(u,v)$ and~$(v,u)$.
    All of these options lead to a contradiction as we showed above that the embedding is planar, $v$ is not part of the path by assumption, and~$u$ is already contained in the path and thus by definition not contained again.
    
    Finally, assume towards a contradiction that~$e$ is enclosed by some link~$f$.
    Then~$f$ encloses all links that are enclosed by~$e$, that is, each link in~$S_e$ as shown above.
    Let~$g \in S_e$ be such a link.
    It then also holds that~$f$ encloses~$g$ in the original hierarchical planar embedding.
    This means that~$g \in S_f$.
    By \cref{lem:separating} and the fact that~$e \notin S_f$, it holds that~$f \in S_e$.
    As shown above, this implies that~$e$ encloses~$f$, a contradiction to the assumption that~$f$ encloses~$e$ (and~$e \neq f$) since two regions in the plane with distinct borders cannot both contain the other.
    This concludes the proof.
\end{proof}

\subsection{Removing Separating Links}
\label{sec:pre3}
We are finally in a position to present our preprocessing routines for \prd{} and \prs.
We again start with \prd{} and show that a rooted graph~$(G,t)$ with a separating link~$e$ is perfectly resilient if and only if~$(G \setminus e,t)$ is.
That is, we can simply remove all separating links.

\begin{lemma}
    \label{lem:noseplink}
    Let~$(G,t)$ be a rooted graph where~$G$ is biconnected and let~$e$ be a separating link in~$G$.
    Then,~$(G,t)$ is perfectly resilient if and only if~$(G \setminus e,t)$ is.
\end{lemma}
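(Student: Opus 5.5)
The forward direction is immediate: $(G\setminus e,t)$ is a subgraph of $(G,t)$ containing $t$, so by \cref{prop:subgraph} (or \cref{cor:rootedminor}) it is perfectly resilient whenever $(G,t)$ is. The real content is the backward direction. Given a perfectly resilient forwarding pattern $\pi'$ for $(G\setminus e,t)$, I would build a pattern $\pi$ for $(G,t)$ that changes only the forwarding functions of the two endpoints $u,v$ of $e=\{u,v\}$. The failure sets are handled by treating $e$ as a shortcut whose use never harms and is only ever taken in situations where the original routing would eventually cross from $u$ to $v$ anyway.

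First, the structure. Let $V_1,\dots,V_k$ be the components of $G-\{u,v\}$ with $t\in V_1$ (if $t\in e$, every non-$t$ endpoint situation is similar; I would treat $t\in e$ separately by noting that then $e$ is a link to $t$ and putting $t$ at the front of the priority list is harmless, since reaching $t$ ends the routing). By biconnectivity, each $V_i$ has neighbours of both $u$ and $v$.

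Second, the rule. At $u$ with in-port $f$ and incident failure set $F_u$, I define $\pi_u(f,F_u)=\pi'_u(f,F_u\setminus\{e\})$ unless that output is a link into some $V_i$ with $i\ge 2$ ("the far side") while $e$ is active. In that case the packet will wander inside $G[V_i\cup\{u,v\}]$. The key claim is that, since $V_i$ contains no relevant path to $t$ except through $u$ or $v$, the routing of $\pi'$ started from that point returns to $\{u,v\}$. I would then let $\pi_u$ replace this excursion by taking $e$ directly when the excursion would end at $v$. The in-port $e$ at $v$ would be mapped to what $\pi'_v$ does with the in-port through which the excursion arrives at $v$.

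The difficulty is that $u$ cannot know locally where the excursion ends, since it depends on failures inside $V_i$. So I would instead aim for the simpler invariant: inputs with in-port $e$ are simulated as if the packet arrived at $v$ from $u$'s side, and $e$ is used only as a last resort (appended to the end of each priority list, like $\sigma$ in \cref{prop:pre2syn}). Concretely, $\pi_u(f,F_u)=\pi'_u(f,F_u)$ whenever $u$ has another active link, and $\pi_u(\cdot,F_u)=e$ only if $e$ is the unique active link. For the in-port $e$, I choose an in-port $g$ of $G\setminus e$ and map $\pi_v(e,F_v):=\pi'_v(g,F_v)$ for a suitable canonical $g$, for example the $\bot$ in-port.

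Third, the correctness argument. Take $s$ and $F$ with an $s$-$t$-path in $G\setminus F$. If $e\in F$ or $e$ is never traversed, the routing coincides with $\pi'$ under $F\setminus\{e\}$. Here I need that an $s$-$t$-path also exists in $(G\setminus e)\setminus F$. This holds because any path using $e$ can be rerouted, except when all far components are cut off. In that case the path uses $e$ only when $u$ or $v$ is isolated in $G\setminus e\setminus F$, which is exactly the last-resort case. If $e$ is traversed, say from $u$ to $v$, then $u$ had no other active link, so $u$ is a dead end in $G\setminus(F\cup\{e\})$. The routing from $v$ on is then $\pi'$ started at $v$ with in-port $\bot$, which reaches $t$ provided $v$ is connected to $t$ in $G\setminus(F\cup\{e\})$. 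This holds since $u$'s only exit was $e$. If the routing returns to $u$ via $e$, we get a loop $u\to v\to u$; I would rule this out using \cref{lem:orbit}-type reasoning, since at $v$ the neighbour $u$ is not relevant.

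The main obstacle I expect is justifying the reroute claim when $u$ still has active links only into far components $V_i$, $i\ge2$, which reach $v$ but not $t$. There $\pi'$ must traverse $V_i$ to $v$. A rule that falls back to $e$ only as a last resort is still correct, because $\pi'$ itself is correct on $G\setminus e$ and the path through $V_i$ exists there. I would carefully check that every $s$-$t$-path in $G\setminus F$ through $e$ yields one in $(G\setminus e)\setminus F$ through some $V_i$, unless a far side is entirely failed at $u$ or $v$.
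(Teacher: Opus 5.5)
Your forward direction is fine. The backward construction you settle on, however, does not work: using $e$ only when it is the unique active link at $u$ or $v$, and otherwise copying $\pi'$, fails. The claim you rely on is false. The link $e$ can be indispensable for $s$-$t$ connectivity in $G\setminus F$ while neither $u$ nor $v$ is isolated in $(G\setminus e)\setminus F$, because the far side may be cut in its interior.

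Concretely, let $G$ be the cycle $(u,s,t,v,w)$ plus the chord $e=\{u,v\}$, which separates $w$ from $t$. Then $(G\setminus e,t)$ is perfectly resilient by \cref{thm:gt}, since $(G\setminus e)-t$ is a path. For $F=\{\{s,t\},\{v,w\}\}$ the unique $s$-$t$-path in $G\setminus F$ is $(s,u,v,t)$. Yet $u$ always keeps its active links to $s$ and $w$, so your rule never selects $e$ at $u$. Since $s$ and $w$ can only return to $u$, the packet circulates in $\{s,u,w\}$ forever, whatever $\pi'$ is. In such scenarios $\pi'$ under $F\setminus\{e\}$ carries no guarantee, so copying it does not help.

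The missing idea is how to overcome the obstacle you correctly identified, namely that $u$ cannot know locally where an excursion into the far side ends. The paper makes the excursion deterministic.
\begin{itemize}
\item Fix a $u$-$v$-path $P$ through the far side, with end links $p_u$ and $p_v$.
\item While $e$ is active, $u$ evaluates $\pi'_u$ as if all its incident links in $S_e$ other than $p_u$ had failed. This is a purely local, static condition. Whenever that evaluation outputs $p_u$, $u$ outputs $e$ instead.
\item A packet arriving at $u$ via $e$ is treated as arriving via $p_u$.
\item A packet entering $u$ via a link of $S_e$ restarts as if from $\bot$.
\item If $e$ has failed, $u$ simply copies $\pi'$. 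The rule at $v$ is symmetric.
\end{itemize}

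For a start on the near side, the resulting routing is exactly the $\pi'$-routing under $(F\cup S_e)\setminus E_P$, with each traversal of $P$ replaced by one use of $e$. A traversal, once begun, must be complete, because internal nodes of $P$ keep only their two $P$-links (cf.\ \cref{lem:orbit}). Moreover, $s$ stays connected to $t$ in that scenario because $P$ substitutes for $e$. For a far-side start, the routing agrees with $\pi'$ under $F\cap S_e$ until it first reaches $u$ or $v$.

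Your case $t\in e$ needs the same trick of un-failing a substitute path that the packet cannot observe. Putting $t$ first at $u$ only helps once you show that the routing actually reaches $u$. That is not covered by the guarantee of $\pi'$ when $s$ is connected to $t$ only via $e$.
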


\begin{proof}
    The forward direction follows directly from \cref{prop:subgraph}.
    So we focus on the backward direction and assume that~$(G \setminus e,t)$ is perfectly resilient.
    Let~$\pi$ be a perfectly resilient forwarding pattern for~$(G \setminus e,t)$.
    We make a case distinction whether~$e$ is incident to~$t$ or not.
    If~$e$ is incident to~$t$, then let~$e = \{u,t\}$.
    We construct a perfectly resilient forwarding pattern~$\pi^*$ as follows.
    Note that only the neighborhoods of~$u$ and~$t$ are different in~$G$ and~$G \setminus e$.
    For each node~$v \in V \setminus \{u,t\}$, we set~$\pi^*_v = \pi_v$.
    For node~$u$, any in-port~$f \in (E_u \setminus \{e\}) \cup \{\bot\}$, and any set~$F_u$ of incident failed links, we set~$\pi_u^*(f,F_u) = t$ if~$e \notin F_u$ and~$\pi_u^*(f,F_u) = \pi_u(f,F_u)$ otherwise.
    For in-port~$e$, we always return~$e$ as the out-port.

    We next show that the constructed forwarding pattern is perfectly resilient.
    Consider any starting node~$s$ and any set~$F$ of failed links such that an~$s$-$t$-paths remains in~$G \setminus F$.
    Let~$\rho^*$ be the routing corresponding to the constructed forwarding pattern~$\pi^*$, starting node~$s$, and set~$F$ of failed links and let~$\rho$ be the routing corresponding to~$\pi$, $s$, and~$F'=F \setminus \{e\}$.
    If~$e \in F$, then~$\rho = \rho^*$ by construction.
    Since an~$s$-$t$-path exists in~$G \setminus F = (G \setminus e) \setminus F'$ by assumption, $\rho$ contains~$t$ and therefore~$\rho^*$ contains~$t$.
    If~$e \notin F$, then whenever~$\rho^*$ contains~$u$, it contains~$t$ (in the very next step) by construction.
    So it only remains to show that~$\rho^*$ contains~$u$ or~$t$.
    Since~$e$ is a separating link, there is a node~$x$ that is in a different connected component than~$s$ in~$G - e$.
    Since~$G$ is biconnected, there exist two internally disjoint paths between~$s$ and~$x$.
    Note that each of the two paths contains one of~$u$ and~$t$ each.
    Let~$P_1$ and~$P_2$ be the two subpaths from~$u$ to~$x$ and from~$t$ to~$x$, respectively.
    Note that the union of these two paths forms a path from~$u$ to~$t$.
    Let~$E_P$ be the links in that path.
    Consider the routing~$\rho'$ corresponding to~$\pi$, $s$, and~$F''=F' \setminus E_P$.
    Since an~$s$-$t$-paths exists in~$G \setminus F$, there is also an~$s$-$t$-path in~$(G \setminus e) \setminus F''$ as the path~$P$ ensures that~$u$ and~$t$ are still connected.
    Since~$\pi$ is perfectly resilient, this shows that~$\rho'$ contains~$t$.
    Note that~$\rho'$ and~$\rho^*$ only differ once they reach~$u$.
    Thus, $\rho^*$ contains~$u$ or~$t$ and therefore always contains~$t$.
    Hence, $\pi^*$ is perfectly resilient whenever~$e$ is incident to~$t$.
    
    If~$e$ is not incident to~$t$, then let~$e=\{u,v\}$, let~$w$ be a node that is separated from~$t$ by~$e$, and let~$S_e$ be the set of links separated from~$t$ by~$e$.
    We will first construct a path~$P$ between~$u$ and~$v$ that only consists of links that are separated from~$t$ by~$e$.
    Since~$G$ is biconnected and contains three nodes~$u,v,t$, it is also 2-connected.
    Hence, there exist two internally node-disjoint paths~$P_1,P_2$ between~$t$ and~$w$.
    Note that each of the two paths has to contain~$u$ or~$v$ and since they do not share any internal nodes, we can assume without loss of generality that~$P_1$ contains~$u$ and~$P_2$ contains~$v$.
    Consider the subpath of~$P_1$ from~$u$ to~$w$ and the subpath of~$P_2$ from~$w$ to~$v$.
    The gluing of these two paths is a path from~$u$ to~$v$ as both subpaths only intersect in~$w$.
    Let~$P$ be this path and note that each link in~$P$ is separated from~$t$ by~$e$.
    Let~$w_u$ and~$w_v$ be the nodes adjacent to~$u$ and~$v$ in~$P$, respectively, and let~$p_u = \{u,w_u\}$ and~$p_v = \{v,w_v\}$.
    That is, $p_u$ and~$p_v$ are the first and last link in~$P$.
    
    We next construct the perfectly resilient forwarding pattern~$\pi^*$ for~$(G,t)$.
    We simply copy the forwarding function~$\pi_x$ for each node~${x \in V \setminus \{u,v,t\}}$.
    For the node~$u$, a set~$F_u$ of incident failed links, and an in-port~$f \in (E_u \setminus e) \cup \{\bot\}$, we make the following case distinction.
    If~$e \in F_u$, then we set~$\pi^*_u(f,F_u) = \rouns{u}{F_u \setminus \{e\}}{f}$.
    If~$e \notin F_u$ and~$f \notin S_e\cup \{e\}$ (in particular if~$f = \bot$), then we pretend that all links in~$S_e$ except for those in~$P$ fail and whenever~$\pi$ returns~$p_u$, then~$\pi^*$ returns~$e$ instead.
    That is, we set
    \begin{equation*}
        \pi^*_u(f,F_u) =
        \begin{cases}
            \rouns{u}{((F_u \setminus \{e\}) \cup (S_e \cap E_u)) \setminus \{p_u\}}{f}&\text{if }\rouns{u}{((F_u \setminus \{e\}) \cup (S_e \cap E_u)) \setminus \{p_u\}}{f} \notin S_e\\
             e&\text{otherwise.}\\
        \end{cases}
    \end{equation*}
    If~$e \notin F_u$ and~$f \in S_e$, then we set~${\pi^*_u(f,F_u) = \pi^*(\bot,F_u)}$.
    That is, whenever~$\pi$ would reach~$u$ via a link in~$S_e$ and~$e \notin F_e$, then we ``restart'' the routing at~$u$.
    Finally, for the in-port~$f=e$, we model the in-port~$e$ by a scenario where all incident links in~$S_e$ except for~$p_u$ fail and the in-port is~$p_u$ instead.
    That is, we set
    \begin{equation*}
        \pi^*_u(e,F_u) =
        \begin{cases}
            \rouns{u}{((F_u \setminus \{e\}) \cup (S_e \cap E_u)) \setminus \{p_u\}}{p_u}&\hspace*{-2.2mm}\text{if }\rouns{u}{((F_u \setminus \{e\}) \cup (S_e \cap E_u)) \setminus \{p_u\}}{p_u} \notin S_e\\
             e&\hspace*{-2.2mm}\text{otherwise.}\\
        \end{cases}
    \end{equation*} 
    The construction for the node~$v$ is completely symmetric to the construction for~$u$ and therefore omitted.

    It remains to show that the constructed forwarding pattern~$\pi^*$ is indeed perfectly resilient for~$(G,t)$.
    To this end, consider an arbitrary starting node~$s$ and a set~$F$ of failed links such that an~$s$-$t$-path remains in~$G \setminus F$.    
    Let~$\rho^*$ be the corresponding routing and let~$\rho$ be the routing corresponding to~$\pi$, the same starting node~$s$, and the set~$F'=F\setminus \{e\}$ of failed links.
    If~$e \in F$, then note that~$\rho^* = \rho$.
    Hence, by the assumption that~$\pi$ is perfectly resilient for~$(G \setminus e,t)$ and that an~$s$-$t$-paths exists in~$G \setminus F = (G \setminus e) \setminus F'$, $t$ is contained in both routings.
    So we may assume in the following that~$e \notin F$.
    
    We will next show that we may assume without loss of generality that~$s$ is not separated from~$t$ by~$e$.
    To this end, consider the routing~$\rho'$ corresponding to~$\pi$, $s$, and the set~$F' \cap S_e$ of failed links.
    Note that since an~$s$-$t$-path remains in~$G \setminus F$, the same also holds in~$(G \setminus e) \setminus F''$.
    Thus, $\rho'$ contains~$t$ since~$\pi$ is perfectly resilient.
    Since~$e$ separates~$s$ from~$t$, $\rho'$ contains~$u$ or~$v$ (or both).
    Since~$\rho$ and~$\rho'$ start identically until~$u$ or~$v$ is reached, the same also holds for~$\rho$.
    Thus, by construction it holds that~$\rho^*$ is the gluing of two sequences~$\rho^*_1$ and~$\rho^*_2$, where~$\rho^*_1$ is identical to the beginning of~$\rho$ until~$u$ or~$v$ is reached for the first time and~$\rho^*_2$ is the rest.
    Moreover, the link between the last two nodes in~$\rho^*_1$ is contained in~$S_e$.
    By construction, $\rho^*_2$ then behaves as if the routing started in the last node~$u$ or~$v$, which is not separated from~$t$ by~$e$.
    So we may assume in the following that~$s$ is not separated from~$t$ by~$e$.
    
    Since we assume that~$e \notin F$ and $s$ is not separated from~$t$, note that the forwarding pattern~$\pi^*$ will never return any link in~$S_e$ by construction.
    Now consider the routing~$\rho'$ corresponding to~$\pi$, the starting node~$s$ and the set~$F'=(F \cup S_e) \setminus E_P$, where~$E_p$ contains all links in the path~$P$.
    Note that~$s$ is still connected to~$t$ in~$G \setminus F'$ since~$s$ is not separated from~$t$ by~$e$ and~$E_P$ ensures that~$u$ and~$v$ are still connected.
    Hence,~$\rho'$ contains~$t$.
    Moreover, whenever~$\rho'$ enters~$P$, it must traverse the entire path~$P$ by \cref{lem:orbit}.
    Thus, $\rho'$ contains links in~$S_e$ only in the subsequence~$(u,p_u,\ldots,p_v,v)$ and/or~$(v,p_v,\ldots,p_u,u)$ corresponding to traversing~$P$ in either direction.
    Replacing each of these subsequences by the link~$e$ results in a new sequence~$\rho''$.
    To conclude the proof, we will show that~$\rho'' = \rho^*$.
    Note that by the construction of~$F'$ and~$\pi^*$, the routings~$\rho'$ and~$\rho^*$ never diverge at other nodes than~$u$ and~$v$.
    Whenever~$\rho'$ contains~$u$ and the returned out-port is not~$p_u$, then~$\pi^*$ returns the same out-port by construction.
    If~$\pi$ outputs~$p_u$, then~$\rho'$ contains the entire subsequence~$(p_u,\ldots,p_v,v)$ and then the in-port at~$v$ is~$p_v$.
    By construction~$\rho^*$ then uses link~$e$ and the next out-port at~$v$ is by construction identical to the next node in~$\rho'$.
    Thus~$\rho''=\rho^*$, concluding the proof.
\end{proof}

We mention that we are not aware of a way to remove all separating links in a connected planar graph in~$O(n)$ time.
So in order to keep the running time for \prd{} linear in~$n$, we do not remove separating links in a preprocessing step but instead use \cref{lem:noseplink} indirectly.
For \prs, however, we do perform an explicit preprocessing to remove separating links.
This preprocessing will be the main result of this section.
Before we show how to deal with separating links, we first show a helpful lemma that will be extensively used later.

\begin{lemma}
    \label{lem:interestingchange}
    Let~$(G,t)$ be a rooted graph.
    Let~$e=\{u,v\}$ be a link in~$G$, let~$\pi$ be a perfectly resilient skipping forwarding pattern for~$(G \setminus e,t)$, and let~$\pi^*$ be a skipping forwarding pattern for~$G$ that is the result of inserting~$e$ at any position in all priority lists in~$\pi_u$ and~$\pi_v$ (and not changing any of the other list in~$\pi_w$ with~$w \notin \{u,v\}$).
    For any starting node~$s$ and any set~$F$ of failed links where~$e \in F$ and where an~$s$-$t$-path remains in~$G \setminus F$, the routing~$\rho^*$ corresponding to~$\pi^*$, $s$, and~$F$ contains~$t$.
\end{lemma}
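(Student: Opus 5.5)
The plan is to compare the routing $\rho^*$ with the routing $\rho$ obtained from $\pi$, the same starting node $s$, and the failure set $F' = F \setminus \{e\}$ in $G \setminus e$, and to show that the two sequences coincide step by step. The key fact is that skipping priority lists choose the first non-failed link in the list. Since $e \in F$, inserting $e$ anywhere in a priority list of $u$ or $v$ never changes which link is selected. The first non-failed entry of $\pi^*_u(f)$ is the same as the first non-failed entry of $\pi_u(f)$, because the only new entry, $e$, is failed. The same holds for $v$, and every other node uses an identical list.

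The steps are as follows. First, observe that every in-port actually used by $\rho^*$ is a link of $G \setminus e$ (or $\bot$). The link $e$ is failed, so it is never chosen as an out-port, and therefore it never serves as an in-port; the lists $\pi^*_u(e)$ and $\pi^*_v(e)$ are simply never consulted. Second, prove by induction on $i$ that $\rho^*_i = \rho_i$. At each step the current node, the in-port, and the set of incident non-failed links agree in both settings: $E_x \cap F$ in $G$ differs from $E_x \cap F'$ in $G\setminus e$ only by the failed link $e$. By the skipping observation above, the chosen out-port is therefore the same. Third, note that $G \setminus F = (G\setminus e)\setminus F'$, so an $s$-$t$-path exists in $(G\setminus e)\setminus F'$. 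Since $\pi$ is perfectly resilient for $(G\setminus e,t)$, $\rho$ contains $t$, and hence so does $\rho^* = \rho$.

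I do not expect a real obstacle here. The only point needing care is the formal bookkeeping around the priority lists of $u$ and $v$ for the in-port $e$ and for $\bot$. The in-port $\bot$ is covered because $\pi^*_u(\bot)$ is also $\pi_u(\bot)$ with $e$ inserted. I would also state explicitly that if all links at a node fail, both patterns return $\bot$ consistently. Since that situation cannot arise along a routing whose start remains connected to $t$ until $t$ is reached, it does not affect the argument.
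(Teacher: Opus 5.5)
Your proposal is correct and follows the same approach as the paper. The paper likewise compares $\rho^*$ with the routing for $\pi$, $s$, and $F' = F \setminus \{e\}$ in $G \setminus e$. It notes $\rho^* = \rho$ because the patterns differ only in the failed link $e$, and concludes from $G \setminus F = (G \setminus e) \setminus F'$ that $\rho$ reaches $t$; you simply spell out the skipping and in-port bookkeeping in more detail.
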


\begin{proof}
    Consider the routing~$\rho$ corresponding to~$\pi$, $s$, and~$F'=F \setminus \{e\}$.
    Since an~$s$-$t$-path remains in~$G \setminus F$, the same path also exists in~$(G \setminus e) \setminus F'$.
    Thus~$\rho$ contains~$t$ as~$\pi$ is perfectly resilient.
    Moreover, $\rho^* = \rho$ as~$e \in F$ and~$\pi$ and~$\pi^*$ only differ in~$e$.
    Thus,~$\rho^*$ contains~$t$.
\end{proof}

We next show how to handle the set~$S_t$ of separating links incident to~$t$.

\begin{proposition}
    \label{prop:pre3syn}
    Let~$(G,t)$ be a rooted graph where~$G$ is planar and biconnected and let~$S_t$ be the set of separating links in~$G$ that are incident to~$t$.
    Given a perfectly resilient skipping forwarding pattern for the rooted graph~${(G \setminus S_t,t)}$, we can construct a perfectly resilient skipping forwarding pattern for~$(G,t)$ in~$O(n)$ time.
\end{proposition}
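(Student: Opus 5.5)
The plan is to add the links of $S_t$ one at a time and reuse the idea from the proof of \cref{lem:noseplink} for separating links incident to~$t$, now restricted to skipping priority lists. Let $\pi$ be the given perfectly resilient skipping forwarding pattern for $(G\setminus S_t,t)$. For each $e=\{u,t\}\in S_t$ we only change the lists of $u$ ($t$ has no lists). For every in-port $f\in E_u\cup\{\bot\}$ we take the list $\pi_u(f)$ and insert $t$ at the front. For the new in-port $e$ we use the list $\pi_u(\bot)$ with $t$ in front; this in-port never actually occurs, since $t$ does not forward. Inserting one element at the head of a doubly linked list takes constant time, and $\sum_{u} |N(u)| = O(m)=O(n)$ in planar graphs. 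Hence the whole modification costs $O(n)$ time if we reuse the existing list objects rather than copying them.

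Before the correctness proof, one structural point needs checking. Removing a link of $S_t$ must not destroy biconnectivity and must not change which other links of $S_t$ are separating. Otherwise, after partial removal, the remaining links might no longer be separating in the intermediate graph and the argument would not go through. \cref{lem:onego} together with \cref{lem:stay2con} gives exactly this: every $f\in S_t\setminus\{e\}$ stays separating in $G\setminus e$, and $G\setminus e$ stays 2-connected. So we can apply the single-link argument inductively, inserting the links of $S_t$ in any order into graphs $G\setminus S'$ with $S'\subseteq S_t$. Alternatively, we can do all insertions at once and argue for all of $S_t$ simultaneously, which gives the same $O(n)$ bound.

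Correctness for a single link $e=\{u,t\}$ follows the skipping version of the first case of \cref{lem:noseplink}. Take a start node $s$ and a failure set $F$ such that $s$ and $t$ are connected in $G\setminus F$. If $e\in F$, the new lists behave exactly like $\pi$ after skipping $t$, so \cref{lem:interestingchange} applies directly. If $e\notin F$, the routing goes to $t$ as soon as it visits $u$, so it suffices to show that the routing reaches $u$ or $t$. As in \cref{lem:noseplink}, pick $x$ separated from $s$ in $G-e$ and build a $u$-$t$ path $P$ through $x$ from two internally disjoint $s$-$x$ paths. Consider the $\pi$-routing in $G\setminus e$ with failure set $(F\setminus\{e\})\setminus E_P$. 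It reaches $t$, and it coincides with our routing up to the first visit of $u$.

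The main obstacle is exactly this coincidence claim. Removing $E_P$ from the failure set changes which links are active, so the two routings could in principle diverge before reaching $u$ or $t$. I expect to handle this the way \cref{lem:noseplink} does: $P$ lies entirely in a component of $G-\{u,t\}$ different from the one containing $s$. The routing from $s$ can leave its component only through $u$ or $t$, so it never sees the links of $P$ before hitting $u$ or $t$. Hence the failure status of $E_P$ is invisible to it, and the routings coincide up to that point.
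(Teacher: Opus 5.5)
Your proposal is correct and follows essentially the same route as the paper. You prepend $t$ to every list of the access node, handle $e\in F$ via \cref{lem:interestingchange}, and for $e\notin F$ compare against a $\pi$-routing under a modified failure set that cannot be distinguished from $F$ before the routing reaches $u$ or $t$; you then apply this link by link using \cref{lem:stay2con,lem:onego}. The only difference is cosmetic: the paper restricts $F$ to $s$'s side ($F^s=F\cap E^s$) and uses a $v$-$t$ path through the other side, whereas you un-fail an explicit path $P$ as in the first case of \cref{lem:noseplink}, which rests on the same observation.
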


\begin{proof}
    Let~$\pi$ be the perfectly resilient skipping forwarding pattern for~$(G \setminus S_t,t)$.
    For each separating link~${e = \{v,t\} \in S_t}$ and each in-port~$f \neq e$ for~$v$, we add~$e$ to the beginning of~$\pi_v(f)$.
    We also set~${\pi_v(e) = (t) \circ \sigma}$ where~$\sigma$ is any permutation of all other incident links of~$v$.
    We will show that the resulting skipping forwarding pattern~$\pi'$ is perfectly resilient for~$(G,t)$.
    To simplify the argument, we will assume that~$S_t$ contains a single link~$\{v,t\}$ and will not use the fact that~$G$ does not contain any other separating links incident to~$t$.
    Note that we can then simply iteratively apply the same argument for each link in~$S_t$ to show that the result always remains a perfectly resilient skipping forwarding pattern as each time the graph remains planar and biconnected by \cref{lem:stay2con} and the set~$S_t$ of separating links incident to~$t$ does not change (except for the removal of~$\{v,t\}$) by \cref{lem:onego}.
    
    So consider an arbitrary starting node~$s$ and a set~$F$ of failed links such that an~$s$-$t$-paths remains in~$G \setminus F$.
    Let~$\rho'$ be the routing corresponding to~$\pi'$, $s$, and~$F$.
    We will show that~$\rho'$ contains~$t$.
    If~$\{v,t\} \in F$, then $\rho'$ contains~$t$ by \cref{lem:interestingchange}.
    So assume that~$\{v,t\}\notin F$.
    Let~$x$ be a node in a different connected component than~$s$ in~$G-\{v,t\}$.
    Since~$\{v,t\}$ is separating, such a node must exist.
    Let~$V_s$ and~$V_x$ be the set of nodes in the connected components of~$G-\{v,t\}$ containing~$s$ and~$x$, respectively, and let~$G_s = (V^s,E^s) = G[V_s \cup \{v,t\}] \setminus \{v,t\}$ and~$G_x=G[V_x \cup \{v,t\}]\setminus \{v,t\}$.
    Let~$F^s = F \cap E^s$ be the set of links in~$G_s$ that fail.
    Consider the routing~$\rho$ corresponding to~$\pi$, $s$, and~$F^s$.
    We will next show that an~$s$-$t$-path exists in~$(G \setminus \{v,t\}) \setminus F^s$.
    Since~$\pi$ is perfectly resilient, this shows that~$\rho$ contains~$t$.
    Note that the beginning of~$\rho$ and~$\rho'$ are identical until either~$v$ or~$t$ is reached.
    If~$t$ is reached first, then~$\rho'$ contains~$t$ and we are done.
    Otherwise, $v$ is reached by~$\rho'$ and by construction, the first link in the priority list of~$v$ for any in-port is~$\{v,t\}$.
    Since we assume that this link does not fail, $\rho'$ contains~$t$.
    So it only remains to prove that an~$s$-$t$-path exists in~$(G \setminus \{v,t\}) \setminus F^s$.
    Consider an~$s$-$t$-path~$P$ in~$G \setminus F$, which exists by assumption.
    If~$P$ does not contain the link~$\{v,t\}$, then it also exists in~$(G \setminus \{v,t\}) \setminus F^s$.
    So assume that it contains the link~$\{v,t\}$ and let~$P'$ be the subpath between~$s$ and~$v$.
    Note that~$P'$ exists in~$(G \setminus \{v,t\}) \setminus F^s$.
    Since~$G$ is biconnected, $G_x$ is connected and hence there exists a path~$P''$ from~$v$ to~$t$ in~$G_x$.
    Note that the gluing of~$P'$ and~$P''$ is a path from~$s$ to~$t$ in~$(G \setminus \{v,t\}) \setminus F^s$.
    This concludes the proof of correctness.
    
    It remains to analyze the running time.
    Note that for each link~$\{v,t\} \in S_t$, we create a skipping priority list~$\pi_v(e)$ in~$O(|N(v)|)$ time.
    Moreover, we iterate over~$O(|N(v)|)$ possible in-ports and for each we add~$\{v,t\}$ to the beginning in constant time.
    Thus, the running time is upper bounded by~$O(\sum_{v\in V}|N(v)|) = O(m) = O(n)$ by the handshaking lemma and the fact that~$G$ is planar.
\end{proof}

Finally, we next prove the main result of this section.
It provides a~$O(n^2)$-time procedure for \prs{} to deal with all separating links in the input graph.
Note that removing all such links does not affect whether the rooted graph is perfectly resilient or not by \cref{lem:noseplink}.

\begin{theorem}
    \label{thm:pre4syn}
    Let~$(G,t)$ be a rooted graph where~$G$ is planar and biconnected and let~$S$ be the set of separating links in~$G$.
    Given a perfectly resilient right-hand forwarding pattern with a corresponding planar embedding for the rooted graph~${(G \setminus S,t)}$, we can construct a perfectly resilient skipping forwarding pattern for~$(G,t)$ in~$O(n^2)$ time.
\end{theorem}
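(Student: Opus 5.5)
First I would dispose of the separating links incident to~$t$. Let~$S_t\subseteq S$ be these links and~$S^*=S\setminus S_t$. By \cref{lem:stay2con} and \cref{lem:onego}, $G\setminus S^*$ is biconnected and planar, its separating links incident to~$t$ are exactly~$S_t$, and its other separating links form~$S^*$ minus the removed ones (empty). So it suffices to build a perfectly resilient skipping pattern for~$(G\setminus S_t,t)$; \cref{prop:pre3syn} then lifts it to~$(G,t)$ in~$O(n)$ time. From now on I assume~$S=S^*$, i.e.\ no separating link is incident to~$t$.

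The construction has four steps. First, since~$G\setminus S$ is nice, \cref{obs:hier} turns the given embedding into a hierarchical planar embedding of~$G\setminus S$ on which the given right-hand pattern~$(\lambda_v^{e_v})$ coincides with~$(\Lambda_v^{e_v})$. Second, I compute~$S$, all~$S_e$, and the ordering~$(e_1,\dots,e_{|S|})$ of \cref{lem:rootlink} in~$O(n^2)$ time. Third, I insert the links in \emph{reverse} order~$e_{|S|},\dots,e_1$ using \cref{lem:hierembedding}, each in~$O(n)$ time. This order is forced: when~$e_i$ is inserted, the graph already contains~$e_{i+1},\dots,e_{|S|}$, and since~$e_i\in S_{e_j}$ would require~$j<i$, no link present in this intermediate graph separates~$e_i$ from~$t$, as that lemma needs. \cref{lem:onego} guarantees that the sets~$S_f$ in the intermediate graphs are just restrictions of those in~$G$. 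Fourth, I define~$\Lambda_v^{e_v}$ on~$G$ with~$t$-prefix~$(t)$ when present, priority part~$D=\{e\in S\cap E_v : f\in S_e\}$ for in-port~$f$, ordered outermost first (by \cref{lem:separating}, separators of~$f$ are nested, hence totally ordered), and regular part given by the right-hand rule of the final hierarchical embedding with~$D$ removed. Each list costs~$O(\deg v)$, so the total is~$O(\sum_v \deg(v)^2)=O(n^2)$.

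Correctness is by induction on the number of inserted links, proving the hierarchical right-hand pattern perfectly resilient after each insertion. The base case is the given pattern. For the step, let~$e=\{u,v\}$ be newly inserted and fix~$s$ and~$F$ with a surviving~$s$-$t$-path. If~$e\in F$, the new lists differ from the old only by inserting~$e$ into lists at~$u$ and~$v$, provided each insertion position is consistent with the old hierarchical rule. Then \cref{lem:interestingchange} finishes this case. If~$e\notin F$, I follow the proof of \cref{lem:noseplink}. When the routing starts inside the region enclosed by~$e$, the old pattern with failures outside~$S_e$ forced drives it to~$u$ or~$v$, using the path~$P$ inside~$S_e$. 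Once it arrives at~$u$ or~$v$ via a link of~$S_e$, the priority part sends it along~$e$. Outside the region, the new right-hand order makes~$e$ exactly replace the boundary traversal of~$P$: in the embedding,~$(u,v)$ sits just before the marked interval at~$u$, so the rule picks~$e$ precisely where the old rule would enter the~$S_e$-region. After contracting each detour into~$e$, the new routing matches an old routing with the failure set~$F\cup S_e\setminus E_P$, which reaches~$t$.

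I expect the main obstacle to be this simulation: checking that the specific skipping lists realise the abstract forwarding function~$\pi^*$ of \cref{lem:noseplink} when many nested separating links interact. In particular, the ``restart at~$u$'' behaviour must come out of the priority part, and entering the~$S_e$ region from~$e$ must mimic in-port~$p_u$, which I would justify via the orbit property in \cref{lem:orbit} together with the face structure shown in \cref{lem:hierembedding}.
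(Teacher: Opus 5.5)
Most of your outline matches the paper's proof. This includes the reduction to $S_t=\emptyset$ via \cref{prop:pre3syn}, \cref{obs:hier}, inserting the links innermost-first via \cref{lem:rootlink,lem:hierembedding}, induction with \cref{lem:interestingchange} for $e\in F$, simulation for routings that never enter $S_e$, and the $O(n^2)$ bookkeeping. The gap is in the one real design decision, the priority part.

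\textbf{The gap.} You take $D=\{e\in S\cap E_v : f\in S_e\}$, so every separating link $e=\{u,v\}$ goes to the front, for in-ports from $S_e$, at \emph{both} of its endpoints. After such a jump the routing sits at the far endpoint with in-port $e$. In $G\setminus e$ this corresponds to having traversed $P$ \emph{away} from the endpoint that was hit first, i.e.\ walking the face on the other side of $P$. That is an arc state, not a start, and perfect resilience of the old pattern says nothing about it.

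\textbf{Counterexample.} Let $G$ have nodes $t,u,v,x,z,a$, links $\{u,x\},\{x,v\},\{v,z\},\{z,u\},\{z,a\},\{a,u\},\{a,t\},\{v,t\}$, and $e=\{u,v\}$.
\begin{itemize}
\item $G\setminus e$ is nice, and $(G\setminus e)-t$ is the $5$-cycle $(u,x,v,z,a)$ plus the chord $\{u,z\}$, which is outerplanar. So the pattern of \cref{thm:gt}, where every start walks the outer face of $G-t$, is a legitimate input.
\item $e$ is the only separating link of $G$, and $S_e=\{\{u,x\},\{x,v\}\}$.
\item After \cref{lem:hierembedding}, the face on the non-outer side of $e$ is the triangle $(u,v,z)$, whose only access node is $v$.
\end{itemize}
Take $s=x$ and $F=\{\{v,t\}\}$. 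Your pattern sends $x$ to the endpoint from which the old routing would continue along the outer face. The priority entry then sends it across $e$, and the right-hand rule circles the triangle $(u,v,z)$ forever. This happens although $(x,u,a,t)$ survives.

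\cref{lem:orbit} cannot repair this, since it only constrains the order of neighbours at a single node. A priority entry also does not realise the ``restart'' of \cref{lem:noseplink}, which re-enters the old routing at a genuine start.

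\textbf{What the paper does instead.} The missing idea is the endpoint selection $\mu(e_i)$:
\begin{itemize}
\item Take a $u$-$v$ path $P$ inside $S_{e_i}$ and simulate the old pattern from an interior node $w$ of $P$.
\item Let $\mu(e_i)$ be the endpoint where this routing first leaves $S_{e_i}$.
\item In the construction step, $e_i$ is inserted into the \emph{regular} part at $\mu(e_i)$ for every in-port. Only at the other endpoint is it placed at the front of the priority part, for in-ports from $S_{e_i}$.
\end{itemize}
At each endpoint the arcs of $S_{e_i}$ form one consecutive block framed by the two arcs of $e_i$. Hence the right-hand rule at $\mu(e_i)$, for an in-port in $S_{e_i}$, picks the same link as for in-port $e_i$ as soon as it picks a link outside $S_{e_i}$. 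So whenever the new routing leaves the region, it is in the state ``at $\mu(e_i)$, arrived via $e_i$''. This state corresponds to the old routing started at $w$ with only $P$ surviving inside, which is a genuine start and therefore reaches $t$.

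The case $N(t)=\{u,v\}$, where this simulation is unavailable, is treated separately. In the example above, the paper's rule lets the routing continue along the outer face to $a$ and then $t$.
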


\begin{proof}
We first compute~$S$ in~$O(n^2)$ time using \cref{obs:computeS}.
Next, we partition~$S$ into~$S_t$ and~$S^*$ where~$S_t$ is the set of all link in~$S$ incident to~$t$ and~$S^*$ is the remaining set of links.
We compute~${G' = G \setminus S_t}$ in~$O(n)$ time.
We will show in the following how to construct a perfectly resilient skipping forwarding pattern for~$(G',t)$ in~$O(n^2)$ time.
\cref{prop:pre3syn} then yields a perfectly resilient skipping forwarding pattern for~$(G,t)$ in~$O(n)$ time, concluding the proof.
In order to construct a perfectly resilient skipping forwarding pattern for~$(G',t)$, we will iteratively build graphs~$G_0 = G' \setminus S^*,G_1,\ldots,G_{|S^*|}=G'$ such that~$G_i$ is the result of adding a separating link~$e_i$ to~$G_{i-1}$.
We also show that~$G_0$ has a perfectly resilient hierarchical right-hand forwarding pattern and we show how to compute such a pattern for~$(G_i,t)$ based on such a pattern for~$(G_{i-1},t)$.
The above approach unfortunately does not yield a quadratic running time.
So afterwards, we will show how to speed up the computation by computing the same hierarchical right-hand forwarding pattern for~$(G',t)$ but not computing all of the intermediate forwarding patterns.
We conclude the proof by showing that this can be done in~$O(n^2)$ time.

The order in which we add the separating links matters and we next show how we choose this order and how to compute a perfectly resilient hierarchical right-hand forwarding pattern for~$G_0$.
For the former, we use \cref{lem:rootlink} to compute an ordering of~$S^*$.
It will be convenient for us to consider the reverse order.
So let~$(e_1,e_2,\ldots,e_{|S^*|})$ be this reverse ordering.
It holds for each~$e_i,e_j \in S^*$ that if~$e_j \in S_{e_i}$, then~$i > j$.
Now, we build the sequence of graphs and right-hand forwarding patterns as described above starting with~${G_0 = G' \setminus S^*}$.
Note that~$G_0$ is biconnected by \cref{lem:stay2con}.
Moreover, by the premise of the theorem, we are given a planar embedding for~$G_0$ and a corresponding perfectly resilient right-hand forwarding pattern~$\lambda_v^{e_v}$ for~$(G_0,t)$.
By \cref{obs:hier}, we can compute in~$O(n)$ time a hierarchical planar embedding for~$G_0$ and a corresponding perfectly resilient hierarchical right-hand forwarding pattern~$\Lambda_v^{e_v}$.

We next show how to compute a perfectly resilient hierarchical right-hand forwarding pattern for any graph~$G_i$ in the sequence.
So let~$G_i$ be any graph in the sequence and assume we are given a hierarchical planar embedding and a corresponding perfectly resilient hierarchical right-hand forwarding pattern~$\Lambda$ for~$(G_{i-1},t)$.
By \cref{lem:stay2con}, $G_i$ is biconnected.
Moreover,~$e_i \notin S_f$ for any link~$f$ in~$G_{i-1}$ by \cref{lem:onego,lem:rootlink}.
Let~$e_i = \{u,v\}$.
We can use \cref{lem:hierembedding} to find position for~$(u,v)$ and~$(v,u)$ in~$O(n)$ time such that adding these arcs at the respective positions yields a hierarchical embedding of~$G_i$.
We next show how to compute a perfectly resilient hierarchical right-hand forwarding pattern~$\Lambda^*$ for~$(G_i,t)$.
We make a case distinction whether~$N(t)=\{u,v\}$ or not.
If this is the case, then we add~$e_1$ to the beginning of the priority part of~$\Lambda_v^{e_v}(f)$ and~$\Lambda_u^{e_u}(g)$ for any in-ports~$f$ and~$g$, that is, at the second position (after the respective link to~$t$).
Note that this takes~$O(m) = O(n)$ time.
Consider any starting node~$s$ and any set~$F$ of failed links such that a path between~$s$ and~$t$ remains in~$G \setminus F$.
Let~$\rho$ be the routing corresponding to the computed hierarchical right-hand forwarding pattern, $s$, and~$F$.
We will show that~$\rho$ contains~$t$.
If~$e_i \in F$, then~$\rho$ contains~$t$ by \cref{lem:interestingchange}.
If~$e_i \notin F$, then let~$F' = F \setminus \{\{u,t\},\{v,t\}\}$.
Consider the routing~$\rho'$ corresponding to~$\Lambda$, $s$, and~$F'$.
Note that since an~$s$-$t$-path remains in~$G \setminus F$, there is a path from~$s$ to~$u$ or~$v$ in~$(G \setminus e) \setminus F'$.
Hence, there is also a path from~$s$ to~$t$ in the same graph and~$\rho'$ therefore contains~$t$ as~$\pi$ is perfectly resilient.
Note that since~$N(t)=\{u,v\}$, it holds that~$\rho'$ contains~$u$ or~$v$.
Since~$\rho$ and~$\rho'$ can only differ after they reached~$u$ or~$v$, also~$\rho$ contains~$u$ or~$v$.
We assume without loss of generality that~$u$ is the first node in~$\{u,v\}$ that is contained in~$\rho$.
If~$\{u,t\} \notin F$, then~$\rho$ contains~$t$ in the next step.
Otherwise, by construction of~$\Lambda^*$ and since~$e_i \notin F$, it contains~$v$ in the next step.
In the following step it contains~$t$ as the link~$\{v,t\}$ cannot fail without disconnecting~$s$ from~$t$ as~$N(t)=\{u,v\}$ and~$\{u,t\} \in F$.

So assume in the following that~$N(t) \neq \{u,v\}$.
We then compute an arbitrary path~$P$ between~$u$ and~$v$ using only links in~$S_{e_i}$.
Note that such a path exists as~$G_i$ is biconnected.
Let~$w$ be any node in that path other than~$u$ or~$v$.
Note that~$w$ is separated from~$t$ by~$e_i$.
Next, we compute the routing~$\rho'$ corresponding to~$\Lambda$ for the starting node~$w$ and the set~$F=E \cap \{\{u,t\},\{v,t\}\}$ of failed links.
Note that~$\rho'$ contains~$t$ as~$N(t) \neq \{u,v\}$.
Hence, each link is taken at most twice by the routing (at most once in each direction as otherwise~$\rho'$ never ends as it endlessly cycles through the same sequence of links).
Moreover, since~$F=E \cap \{\{u,t\},\{v,t\}\}$, the next node in the routing can be computed in constant time (it is one of the first two element in the respective priority list).
Let~$f$ be the first link that is used by the routing~$\rho'$ that is not contained in~$S_{e_i}$.
Note that~$f$ is incident to~$u$ or~$v$, but not both.
Let~$\mu(e_i) \in \{u,v\}$ be that node.
We next show how to construct~${\Lambda^*}_x^{e_x}$ for each node~$x \in V \setminus \{t\}$ ($e_x$ never changes).
For~$x \notin \{u,v\}$, we set~${\Lambda^*}_x^{e_x} = \Lambda_x^{e_x}$.
For~$x \in \{u,v\}$ and in-port~$e_i$, we use the updated right-hand rule, that is, for the incoming arc~$(u,v)$ of~$v$ or~$(v,u)$ of~$u$, we list all outgoing arcs except for~$(u,t)$ or~$(v,t)$ in counterclockwise order (the links towards~$t$ are placed at the beginning).
For~$x = \mu(e_i)$, and any in-port~$f \neq e_i$, we simply insert~$e_i$ into the regular part~$\sigma_r$ of~$\Lambda_x^{e_x}(f)$.
For~$x \in \{u,v\} \setminus \{\mu(e_i)\}$, we do the same for each in-port~$f \notin (S_{e_i} \cup \{e_i\})$ (where~$f=\bot$ is treated equally to~$f=e_x$).
Finally, for~$f \in S_{e_i}$, we add~$e_i$ to the beginning of the priority part~$\sigma_p$ in~${\Lambda^*}_x^{e_x}$, that is, we add~$e$ to the very beginning if~$\{\mu(e_i),t\} \notin E$ and in the second position otherwise.
This concludes the construction.

Recall that~$P$ is a path between~$u$ and~$v$ where all links are contained in~$S_{e_i}$ and where~$w$ is a node on~$P$.
Let~$E_P$ be the set of links of~$P$.
We next show that the result~$\Lambda^*$ is perfectly resilient for~$(G_i,t)$.
To this end, consider an arbitrary starting node~$s$ and a set~$F$ of failed links such that an~$s$-$t$-path remains in~$G_i \setminus F$.
Let~$\rho^*$ be the corresponding routing.
We will show that~$\rho^*$ contains~$t$.
If~$e_i \in F$, then this follows from \cref{lem:interestingchange}.
So assume that~$e_i \notin F$.
Recall that~$e_s$ is the link incident to~$s$ such that~$\Lambda_s(\bot) = \Lambda_s(e_s)$.
We make a case distinction whether~$e_s \in S_{e_i}$ or not and first consider the case where~$e_s \notin S_{e_i}$.
In this case, $\rho^*$ never uses any link in~$S_{e_i}$ as shown next.
Assume towards a contradiction that some link in~$S_{e_i}$ is used by~$\rho^*$ and let~$h$ be the first such link.
Note that~$h$ is used to leave~$u$ or~$v$.
Since both cases are symmetric, we assume that~$h$ is used to leave~$u$.
Let~$h'$ be the link corresponding to the in-port of~$u$ that led to out-port~$h$ (where~$h'=e_u$ if the in-port is~$\bot$).
By definition, $h'$ is not in~$S_{e_i}$.
Hence, $e_i$ is by construction inserted in the regular part of~$\Lambda_u^{e_u}(h')$.
Note that this contradicts the construction of~$\Lambda^*$ as all links in~$S_{e_i}$ appear after~$e_i$ in~${\Lambda^*}_u^{e_u}(h')$ as~$h' \notin S_{e_i}$.
Hence, $\rho^*$ does not use any links in~$S_{e_i}$ and can therefore not distinguish between the failure set~$F$ and~$F'=(F \cup S_{e_i}) \setminus E_P$.
Consider the routing~$\rho$ corresponding to~$\Lambda$, $s$, and~$F'' = F' \setminus \{e_i\}$ (in the graph~$G_{i-1}$).
It is now easy to verify that~$\rho$ and~$\rho^*$ are identical except for the fact that whenever~$\rho^*$ uses the link~$e_i$, then~$\rho$ uses the links in~$E_P$.
The next link afterwards is then again identical by construction.
Since~$s$ and~$t$ are the same connected component in~$G_i \setminus F$, they are also in the same connected component in~$G_{i-1} \setminus F''$.
Thus, $\rho$ contains~$t$ by the fact that~$\Lambda$ is perfectly resilient and thus the same holds for~$\rho^*$.

Now consider the case where~$e_s \in S_{e_i}$.
In this case, we consider two additional routings~(in~$G_{i-1}$).
The first routing~$\rho$ corresponds to the forwarding pattern~$\Lambda$, the starting node~$s$, and the failed link set~${F' = F \cap S_{e_i}}$.
The second routing~$\rho'$ corresponds to the forwarding pattern~$\Lambda$, the starting node~$w$, and the set~$F'' = (F \cup S_{e_1}) \setminus E_P$ of failed links.
It is easy to verify that~$s$ and~$t$ are in the same connected component in~$G_{i-1}\setminus F'$ and~$w$ and~$t$ are in the same connected component in~$G_{i-1} \setminus F''$.
Thus, $\rho$ and~$\rho'$ both contain~$t$.
Note that~$\rho^*$ and~$\rho$ start identically until~$\rho^*$ uses a link not in~$S_{e_i}$ for the first time (potentially~$e_i$).
This either happens when node~$\mu(e_i)$ is reached or when the other endpoint of~$e_i$ is reached via an in-port that then selects a link outside of~$S_{e_i}$ (potentially~$e_i$ if no other such links remain in~$G_i \setminus F$) via the right-hand-rule selection in the regular part~$\sigma_r$.
In either case, the routing~$\rho^*$ uses the first non-failed link in~${\Lambda^*}_{\mu(e_i)}(e_i)$ in the next/second to next step.
Let~$f$ be that link.
We next show that that~$\rho'$ also uses link~$f$ (in the same direction).
When starting in~$w$ and with failure set~$F''$, the routing uses the links in~$E_P$ by definition until~$\mu(e_i)$ is reached.
There, the next link is~$f$ by construction of~$\Lambda^*$ (and~$\mu(e_i)$).
After~$\rho^*$ used the link~$f$, it never uses another link in~$S_{e_i}$ again by the same argument as in the case where~$e_s \notin S_{e_i}$.
Moreover, after~$\rho^*$ and~$\rho'$ used the link~$f$, they continue identically except for the fact that whenever~$\rho^*$ uses~$e_i$, the routing~$\rho'$ uses all links in~$E_P$ once.
The next node in both sequences is then by construction of~$\Lambda^*$ the same.
Since~$\rho'$ contains~$t$, so does~$\rho^*$.
This shows that~$\Lambda^*$ is perfectly resilient.

It remains to show how to compute~$\Lambda^*$ for the graph~$G'=G_{|S^*|}$ in~$O(n^2)$ time.
Note that we can compute~$\mu(e_i)$ for each~$i$ in~$O(n)$ time, that is, we can compute~$\mu$ for all links in overall~$O(n^2)$ time.
Moreover, the hierarchical planar embedding of~$G'$ is computed in~$O(n^2)$ time by \cref{lem:hierembedding}.
Given the embedding, we compute~$\Lambda^*$ in~$O(n^2)$ time as follows.
For each combination of a node~$u$ and an in-port~$f=\{u,v\}$ for~$u$ (replace~$f$ with~$e_u$ if the in-port is~$\bot$), we find the link~$(v,u)$, go counterclockwise around~$u$, and add each outgoing arc~$(u,w)$ to either the~$t$-prefix (if~$w = t$) or the end of the regular part if~$f \notin S_{\{u,w\}}$ or~$\mu(\{u,w\}) \neq u$.
Note that this takes~$O(|N(u)|) \subseteq O(n)$ time for each combination of a node and one of its in-ports and thus~$O(n^2)$ time overall by the handshaking lemma.
Finally, we iterate over all links in~$S^*$ in the (reverse) order computed in the beginning by \cref{lem:rootlink} and for each link~$e$, we add~$e$ to the beginning of all priority lists for~$\mu(e)$ with in-ports in~$S_e$.
This takes constant time per priority list and therefore~$O(|N(\mu(e))|) \subseteq O(n)$ time for each link~$e$.
Thus, this step also takes~$O(n^2)$ time overall.
Finally, we can concatenate~$\sigma_t$,~$\sigma_p$, and~$\sigma_r$ for each combination of a node and one of its possible in-ports in constant time each and in~$O(n)$ time overall.
Note that this results in the same hierarchical right-hand forwarding pattern~$\Lambda^*$ for~$(G',t)$ since each of the three parts~$\sigma_t$, $\sigma_p$, and~$\sigma_r$ in each priority list is computed in the same way: links are divided into the three parts identically, the regular part is still ordered in counterclockwise fashion around the respective node, and links in~$S^*$ are added to the front of the respective priority part in the same order.
This concludes the proof.
\end{proof}

\section{Structural Analysis}
\label{sec:traps}
We are finally in a position to prove our first main result:

\structure*

In order to prove the second main result later, we show a slightly stronger result.
To this end, we first show a last helpful lemma.
Recall that a graph is \nice{} if it is planar, biconnected, and does not contain any separating links and that a minimal trap is a rooted graph that is not perfectly resilient but where each proper rooted minor of it is perfectly resilient.

\begin{lemma}
    \label{cor:nice}
    Let~$(G,t)$ be a minimal trap.
    Then, $G$ is \nice.
\end{lemma}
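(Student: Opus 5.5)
The plan is to verify the three defining properties of \nice{} one at a time. Each one follows from minimality combined with an earlier preprocessing result: if the property failed, we would obtain a proper rooted minor that is still a trap.

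\emph{Connectivity and planarity.} Suppose $G$ is disconnected and let $G'$ be the component containing $t$. By \cref{obs:connected}, $(G',t)$ is a trap. It is a proper subgraph and hence a proper rooted minor, which contradicts minimality. So $G$ is connected. If $G$ were non-planar, the proof of \cref{prop:planar} gives a rooted $K_5$ or $K_{3,3}$ minor containing $t$, and by \cref{prop:k5k33notpr} this minor is a trap. Since $K_5 \setminus e$ is already a trap by \cref{prop:notres}, the rooted $K_5$ (respectively $K_{3,3}$) contains a proper rooted minor that is a trap. Either $G$ equals that $K_5$/$K_{3,3}$, in which case it is not minimal, or $G$ properly contains it. In both cases minimality is contradicted, so $G$ is planar.

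\emph{Biconnectivity.} Suppose $v$ is a cut node. \cref{prop:pre2dec} says that $(G,t)$ is perfectly resilient if and only if all of $(G_1,t),(G_2,v),\dots,(G_k,v)$ are. So some $(G_i,x_i)$ is a trap. The forward direction of that proof shows each is a rooted minor of $(G,t)$, obtained by contracting the other components into $v$. Since $k\ge 2$ and each other component is nonempty, each is a proper rooted minor, which contradicts minimality. One degenerate case needs attention: if $G$ is a single link (or single node), it is trivially perfectly resilient and so not a trap.

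\emph{No separating links.} Suppose $e$ is a separating link in the biconnected graph $G$. By \cref{lem:noseplink}, $(G\setminus e,t)$ is also a trap. It is a proper subgraph and hence a proper rooted minor, which contradicts minimality.

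The only step needing real care is the planarity step. There we must make sure that the rooted $K_5$/$K_{3,3}$ minor is indeed proper or contains a smaller trap. Using \cref{prop:notres} handles this cleanly, because deleting a link incident to $t$ already yields a trap. The remaining steps are direct invocations of the earlier equivalences.
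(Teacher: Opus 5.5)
Your proof is correct and follows the paper's argument essentially step for step. You handle connectivity via \cref{obs:connected}, planarity via Wagner's theorem plus contracting a path into $t$ and then \cref{prop:notres} to get a proper trap minor, biconnectivity via \cref{prop:pre2dec}, and separating links via \cref{lem:noseplink}. You are somewhat more careful than the paper in two respects: you establish the hypotheses of \cref{prop:pre2dec} (connected and planar) and \cref{lem:noseplink} (biconnected) before invoking them, and you check that each trap obtained is a proper rooted minor.
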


\begin{proof}
    Assume towards a contradiction that~$G$ is not \nice.
    Then, it is not planar, not biconnected, or contains a separating link.
    \cref{obs:connected} shows that we may assume that~$G$ is connected.
    If~$G$ is not planar, then it contains~$K_5$ or~$K{3,3}$ as a minor by \cref{thm:wagner}.
    Since~$G$ is connected, if~$t$ is not part of the respective graph, then it can be contracted into one the nodes.
    Hence, $(G,t)$ contains~$(K_5,t)$ or~$(K_{3,3},t)$ as a rooted minor.
    Note that~$(G,t)$ is therefore not a minimal trap as it contains the~$\Kf$ or the~$\Kt$ as a proper rooted minor, a contradiction.
    If~$G$ is not biconnected, then it contains a cut node~$v$ and contracting all but one of the connected components in~$G-v$ into~$v$ results in a trap by \cref{prop:pre2dec}.
    Hence, $(G,t)$ again contains a trap as a proper rooted minor, contradicting that~$(G,t)$ is a minimal trap.
    Finally, if~$G$ contains a separating link~$e$, then~$(G \setminus e,t)$ is a trap by \cref{lem:noseplink}---a final contradiction to~$(G,t)$ being a minimal trap.
\end{proof}

We now show our first main result.

\begin{theorem}
    \label{thm:structure}
    A rooted graph~$(G,t)$ is perfectly resilient if and only if it does not contain the~$\Kf$, the~$\Kt$, the~$\sK$, or the~$\Ktf$ as a rooted minor.
    If it is perfectly resilient and \nice, then~$G-t$ is outerplanar, $(G,t)$ is a dipole outerplanar graph, or~$(G,t)$ is a ring of outerplanar graphs.
\end{theorem}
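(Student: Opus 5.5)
The ``only if'' direction is already in hand: the \Kf{} and \Kt{} are traps by \cref{prop:notres}, and the \sK{} and \Ktf{} are traps by \cref{prop:sk,prop:ktf}. By \cref{cor:rootedminor}, any rooted graph containing one of them as a rooted minor is a trap. The ``if'' direction reduces to the second sentence of the theorem. Suppose $(G,t)$ is a trap. By \cref{cor:mintrap} it contains a minimal trap $(H,t)$ as a rooted minor, and by \cref{cor:nice} $H$ is \nice. If I show that every \nice{} rooted graph avoiding the four obstructions satisfies one of the three structural outcomes, then $(H,t)$ is perfectly resilient by \cref{thm:gt,prop:dipole,prop:ring}, a contradiction. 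So the main task is this: let $(G,t)$ be \nice{} and suppose $G-t$ is not outerplanar. I will extract one of the four obstructions, unless $(G,t)$ is a dipole outerplanar graph or a ring of outerplanar graphs. The stated second sentence (``perfectly resilient and \nice{} implies the structure'') then follows from the contrapositive together with the first direction.

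\textbf{Step 1.} Since $G-t$ is not outerplanar, it contains a $K_4$ or $K_{2,3}$ minor. Fix a planar embedding of $G$ with $t$ on the outer face. By \cref{prop:cycle}, the faces around $t$ are bounded by cycles, and the access nodes $N(t)$ lie on the boundary of the face $f_t$ of $G-t$ that contained $t$. Biconnectivity plus the absence of separating links gives that $G-\{x,y\}$ is connected for every link $\{x,y\}$. I will use this repeatedly to reroute paths and to contract branch sets so that they reach access nodes.

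\textbf{Step 2: case split on $|N(t)|$.} If $|N(t)|=2$ with $N(t)=\{u,v\}$, consider the components of $G-\{u,v,t\}$. If every $G[V_i\cup\{u,v\}]\setminus\{u,v\}$ is outerplanar, we have a dipole outerplanar graph. Otherwise some component carries a $K_4$ or $K_{2,3}$ minor that avoids being outerplanar relative to $u,v$. Combined with the two paths to $t$ and a second component (which exists or is forced by the lack of separating links), this should give the \sK{} (from $K_{2,3}$ with the branch node $x$ and a subdivided path to $t$) or the \Kt{}. If $|N(t)|\ge 3$, look at the cyclic order of access nodes along $f_t$ and at the fracture nodes of $G-t$. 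Suppose all access nodes are fracture nodes cutting $G-t$ into a cyclic chain of blocks, each outerplanar with respect to its two attachment points. Then we get a ring of outerplanar graphs; note that the absence of separating links rules out chords between non-consecutive ring nodes. Otherwise some non-outerplanar piece of $G-t$ is ``seen'' by at least three access nodes, or by two access nodes together with a third reached around it. Taking a $K_{2,3}$ or $K_4$ minor in $G-t$ and routing three disjoint paths from it to distinct access nodes (Menger, using that no link separates), I contract these into a \Kf{} (from $K_4$ with three of its nodes hit), a \Kt{} or \Ktf{} (from $K_{2,3}$ attached to three access nodes, depending on whether the third attachment lands on the degree-three side), or an \sK.

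\textbf{Main obstacle.} The hard part is Step 2 for $|N(t)|\ge3$. One must show that whenever the access nodes cannot all be pushed to a single non-outerplanar block with at most two attachment points, enough disjoint paths exist to one of the four patterns. I expect a long case analysis on how the $K_4$ or $K_{2,3}$ branch sets meet the boundary of $f_t$. There the no-separating-link property has to supply the third disjoint connection that plain 2-connectivity would not. I would first try to choose a minimal non-outerplanar subgraph and an inclusion-minimal configuration of paths to $N(t)$, arguing by contradiction that any failure to form an obstruction produces a separating link or a cut node.
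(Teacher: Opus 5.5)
Your top-level reduction is the same as the paper's. The only-if direction comes from \cref{prop:notres,prop:sk,prop:ktf} and \cref{cor:rootedminor}. The if direction passes to a minimal trap, which is \nice{} by \cref{cor:nice} and would be perfectly resilient by \cref{thm:gt,prop:dipole,prop:ring} if it had one of the three structures. That part is routine, though. The whole content of the theorem is the claim that a \nice{} $(G,t)$ with $G-t$ not outerplanar, which is neither dipole outerplanar nor a ring, contains one of the four rooted minors. Your Step~2 does not prove this: it says what each case ``should give'' and defers the rest to an expected case analysis. The plan also rests on assertions that are false.

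(a) For $N(t)=\{u,v\}$, a second component of $G-\{u,v,t\}$ is not forced. The \sK{} itself is \nice, has two access nodes, and $G-\{u,v,t\}=\{x,p,q,r\}$ is connected. The fourth connection between the two degree-four nodes of the $K_{2,4}$ has to run through $t$ (the paper's path $(c'_a,c'_i,t,c'_j)$), not through another component.

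(b) Your ring hypothesis, ``all access nodes are fracture nodes cutting $G-t$ into a cyclic chain of blocks'', never holds. The block--cut tree is a tree. Moreover, a leaf block of $G-t$ must contain an access node that is not a cut node of $G-t$, since otherwise its attachment node would be a cut node of $G$. In a \nice{} ring, $G-t$ is in fact biconnected; the access nodes cut the outer cycle of $G-t$ into intervals, not $G-t$ itself.

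(c) Niceness does not exclude a chord $\{p_i,p_j\}$ between non-consecutive access nodes, because $t$ reconnects the two sides in $G-\{p_i,p_j\}$. For example, a $4$-cycle with one chord and $t$ joined to all four nodes is \nice. Such shortcut links have to be turned into an \sK{} using the non-outerplanarity.

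(d) ``Menger, using that no link separates'' does not give three disjoint paths to distinct access nodes. A non-outerplanar block can meet the rest of $G$ only in a non-adjacent pair $\{a,f\}$, with $a$ an access node and $f$ a fracture node leading to further access nodes. Then only two disjoint routes to $t$ exist even when $|N(t)|\ge 3$. This is exactly why the \Kt{} and \sK, which have only two access nodes, appear among the obstructions.

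What you are missing is a way to locate the non-outerplanarity relative to $t$ that makes the case analysis finite. The paper does not start from an abstract $K_4$ or $K_{2,3}$ minor. It fixes an embedding with $t$ on the outer face and picks $z$ not on the outer face of $G-t$. It then takes the block $G'$ of $G-t$ containing $z$, its outer cycle $C$, the bridge $G_z$, and the set $P\subseteq C$ of access and fracture nodes.

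The no-separating-link property is then used locally. If $|C|=3$, then $P=C$ and $G_z$ touches all of $C$; if $|C|>3$, then $G_z$ touches two non-consecutive nodes of $C$. In either case, otherwise some link would be separating. This yields the ``3 paths'' step: if $G_z$ meets both open arcs of $C$ between two non-consecutive nodes of $P$, contracting the arcs gives the \Kt.

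From there the case tree runs as follows:
\begin{itemize}
\item $|C|=3$ gives the \Kf.
\item $|P|=2$ is handled inside one component of $G-(P\cup\{t\})$, giving the \sK{} or the \Kt.
\item $|P|\ge 3$ with a fracture node gives the \sK, or the \Ktf{} in the access--fracture--access configuration.
\item With no fracture node, $G-t$ is biconnected, and one checks in turn interval-boundness (else \Kf{} or \Kt), shortcut links (\sK), and a non-outerplanar interval (\sK{} or \Kt). The remaining case is exactly a ring.
\end{itemize}
Without an argument of this kind, your proposal establishes only the easy reduction.
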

\begin{proof}
    The main part of the proof will be to show that if~$(G,t)$ is \nice, $G-t$ is not outerplanar, $(G,t)$ is not a dipole outerplanar graph, and~$(G,t)$ is not a ring of outerplanar graphs, then it contains one of the four mentioned rooted graphs as minor.
    This claim will conclude the proof as follows.

    We make a case distinction whether~$(G,t)$ is perfectly resilient or not.
    If it is, then we distinguish between the cases where~$G$ is \nice{} and where~$G$ is not \nice.
    If~$G$ is nice and~$(G,t)$ is perfectly resilient, then \cref{cor:rootedminor,prop:notres,prop:sk,prop:ktf} show that none of the four rooted graphs are a rooted minor of~$(G,t)$.
    The contraposition of the above claim yields that~$G-t$ is outerplanar, $(G,t)$ is a dipole outerplanar graphs, or~$(G,t)$ is a ring of outerplanar graphs.
    Note that this is exactly what the theorem states for this case.
    If~$G$ is not nice and~$(G,t)$ is perfectly resilient, then the theorem does not make any claim so it trivially holds.

    If~$(G,t)$ is not perfectly resilient, then it contains a minimal trap~$(H,t)$ (potentially~$H=G$).
    Moreover, this trap is nice by \cref{cor:nice}.
    Note that~$H-t$ is not outerplanar by \cref{thm:gt}, $(H,t)$ is not a dipole outerplanar graph by \cref{prop:dipole}, and~$(H,t)$ is not a ring of outerplanar graphs by \cref{prop:ring}.
    Hence, the above claim yields that one of the four mentioned rooted graphs is a rooted minor of~$(H,t)$ and therefore also a rooted minor of~$(G,t)$.
    This is again what the theorem states.

    So it remains to show the claim.
    To this end, assume that~$(G,t)$ is a rooted graph, where~$G$ is \nice, $G-t$ is not outerplanar, $(G,t)$ is not a dipole outerplanar graph, and~$(G,t)$ is not a ring of outerplanar graphs.
    Since~$G$ is \nice, it is planar and there exists a planar embedding of~$G$ where~$t$ belongs to the outer face.
    Let~$z$ be a node that does not belong to the outer face in the assumed embedding.
    Note that such a node exists as~$G-t$ is not outerplanar.
    Let~$G'=(V',E')$ be the biconnected component of~$G-t$ containing~$z$ and let~$C$ be the set of nodes of~$G'$ that are incident to the outer face in~$G'$ (using the same embedding).
    Note that~$z \notin C$ and~$|C| \geq 3$ as the nodes of~$C$ form the outer face of a 2-connected graph.
    Hence, they form a simple cycle by \cref{prop:cycle}.    
    Let~$E_C$ be the set of links incident to the outer face in~$G'$ and let~$G'' = G' - C$ be the induced subgraph of~$G'$ excluding the nodes on the outer face.
    For a node~$x$ in~$G''$, let~$V_x$ be the set of all nodes in the connected component of~$G''$ containing~$x$ and let~$G_x$ be the induced subgraph of~$G'$ containing all nodes in~$V_x$ and all nodes that are adjacent to nodes in~$V_x$ in~$G$.
    Note that the latter set of nodes only contains nodes from~$C$.
    Let~$P \subseteq C$ be the set of nodes in~$C$ such that for each~$p \in P$, it holds that~$\{p,t\} \in E$ ($p$ is an access node) or~$p$ is a cut node in~$G-t$ ($p$ is a fracture node).
    Note that since~$G$ is biconnected, all cut nodes in~$G-t$ are incident to the outer face and hence there are no fracture nodes outside~$P$ in~$G'$.
    For an example, see \cref{fig:triangle}.
    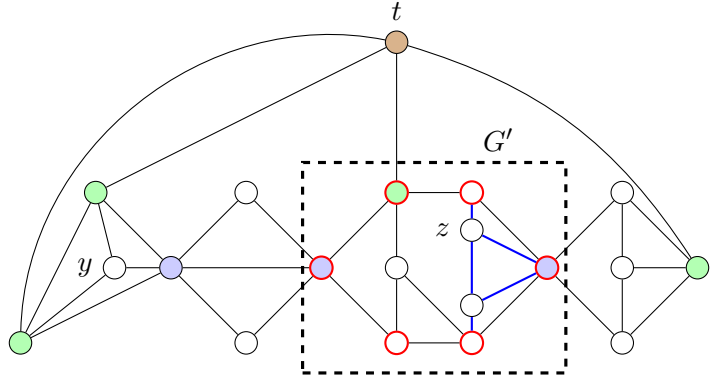
\begin{figure}
        \centering
        \begin{tikzpicture}
            \node[te] at(0,3) (t) {};
            \node[por] at(-5,-1) (a1) {} edge[bend left=50](t);
            \node[por] at(-4,1) (a2) {} edge(a1) edge(t);
            \node[ter] at(-3,0) (a3) {} edge(a1) edge(a2);
            \node[ver] at(-2,1) (a4) {} edge(a3);
            \node[ver] at(-2,-1) (a5) {} edge(a3);
            \node[ver,label=left:$y$] at(-3.75,0) (a6) {} edge(a1) edge(a2) edge(a3);
            \node[ter,draw=red,thick] at(-1,0) (b1) {} edge(a3) edge(a4) edge(a5);
            \node[por,draw=red,thick] at(0,1) (b2) {} edge(b1) edge(t);
            \node[ver,draw=red,thick] at(1,1) (b3) {} edge(b2);
            \node[ter,draw=red,thick] at(2,0) (b4) {} edge(b3);
            \node[ver,draw=red,thick] at(1,-1) (b5) {} edge(b4);
            \node[ver,draw=red,thick] at(0,-1) (b6) {} edge(b5) edge(b1);
            \node[ver] at(3,1) (c1) {} edge(b4);
            \node[ver] at(3,-1) (c2) {} edge(b4);
            \node[por] at(4,0) (c3) {} edge(c2) edge(c1) edge[bend right=20](t);
            \node[ver] at(3,0) (c4) {} edge(c1) edge(c2) edge(c3);
            \node[ver] at(0,0) {} edge(b2) edge(b6) edge(b5);
            \node[ver,label=left:$z$] at(1,.5) (z) {} edge[thick,blue](b3) edge[thick,blue](b4);
            \node[ver] at(1,-.5) {} edge[thick,blue](z) edge[thick,blue](b4) edge[thick,blue](b5);
            \node[very thick,dashed,rectangle,draw,minimum width=3.5cm,minimum height=2.8cm,inner ysep=0pt,label=70:$G'$] at(0.5,0) {};
        \end{tikzpicture}
        \caption{A nice graph~$G$ with a root~$t$. The links in~$G_z$ are drawn in blue and the nodes in~$C$ are marked with a red border. Green nodes are access nodes and blue nodes are fracture nodes. The nodes in~$P$ are the nodes that are green or blue and have a red border.}
        \label{fig:triangle}
    \end{figure}%
    For the sake of notational convenience, let~$C=\{c_1,c_2,\ldots,c_{k}\}$, where~$k = |C|$, $\{c_i,c_{i+1}\} \in E_C$ for each~$i \in [k-1]$, and~$\{c_k,c_1\} \in E_C$.
    To avoid case distinctions, we also set~$c_0=c_k$ and~$c_{k+1}=c_1$.
    Finally, note that~$|P| \geq 2$ as if~$|P|=\emptyset$, then~$t$ is not connected to~$z$ in~$G$, that is, the graph is disconnected, and if~$|P|=1$, then the unique node in~$P$ is a cut node in~$G$, contradicting that~$G$ is \nice.
    
    In the following, we make multiple nested case distinctions and show in each case that one of the four rooted graphs stated in the claim is a rooted minor of~$(G,t)$.
    \cref{fig:overview} shows an overview of these case distinctions for reference.
    Some of the conditions require additional definitions based on the previous cases.
    Those will be introduced throughout the proof.
    \cref{fig:obstructions2} depicts the four mentioned rooted graphs (the same as in \cref{fig:obstructions}).
    Since most of these rooted graphs appear multiple times throughout the proof where the same node has different names, we only included names that are consistent throughout the proof in all cases.
    Instead, all nodes that are the result of contracting links in~$E_C$ are marked with a red border.
    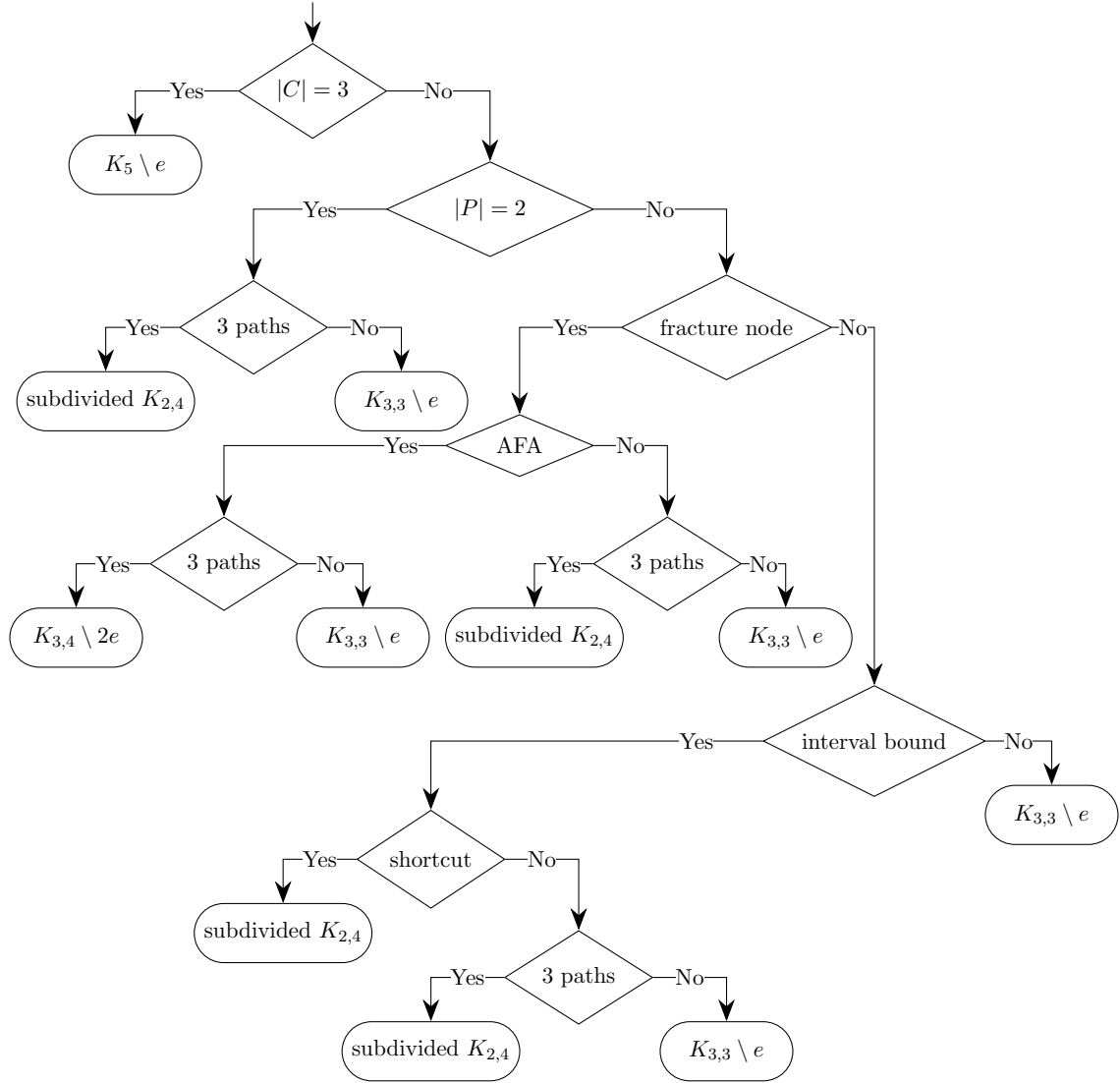
\begin{figure}[t]
        \centering
        \begin{tikzpicture}[scale=.8,every node/.style={scale=0.8}]
            \tikzset{>={Latex[width=3mm,length=3mm]}}
            
            \node[diamond,draw,minimum width=2.5cm,inner sep=0] at (-1,10) (con1) {$|C| = 3$};
            \node[draw,rounded rectangle,minimum width=2.5cm,minimum height=1cm] at (-4,8.75) (res1) {\Kf};
            \node[diamond,draw,minimum width=3.5cm,inner sep=0] at (2,8) (con2) {$|P|=2$};
            \node[diamond,draw,minimum width=2.5cm,inner sep=0] at (-2,6) (con3) {3 paths};
            \node[diamond,draw,minimum width=2.5cm,aspect=2] at (6,6) (con4) {fracture node};
            \node[draw,rounded rectangle,minimum width=2.5cm,minimum height=1cm] at (-4.5,4.75) (res2) {\sK};
            \node[draw,rounded rectangle,minimum width=2.5cm,minimum height=1cm] at (.5,4.75) (res3) {\Kt};
            \node[diamond,draw,minimum width=2.5cm,inner sep=0] at (2.5,4) (con6) {AFA};
            \node[diamond,draw,minimum width=2.5cm,aspect=2] at (8.5,-1) (con8) {interval bound};
            \node[diamond,draw,minimum width=2.5cm,inner sep=0] at (-2.5,2) (con9) {3 paths};
            \node[diamond,draw,minimum width=2.5cm,inner sep=0] at (5,2) (con10) {3 paths};
            \node[draw,rounded rectangle,minimum width=2.5cm,minimum height=1cm] at (-5,.75) (res4) {\Ktf};
            \node[draw,rounded rectangle,minimum width=2.5cm,minimum height=1cm] at (-.15,.75) (res5) {\Kt};
            \node[draw,rounded rectangle,minimum width=2.5cm,minimum height=1cm] at (2.75,.75) (res6) {\sK};
            \node[draw,rounded rectangle,minimum width=2.5cm,minimum height=1cm] at (7,.75) (res7) {\Kt};
            \node[draw,rounded rectangle,minimum width=2.5cm,minimum height=1cm] at (11.5,-2.25) (res8) {\Kt};
            \node[diamond,draw,minimum width=2.5cm,inner sep=0] at (1,-3) (con7) {shortcut};
            \node[draw,rounded rectangle,minimum width=2.5cm,minimum height=1cm] at (-1.5,-4.25) (res9) {\sK};
            \node[diamond,draw,minimum width=2.5cm,inner sep=0] at (3.5,-5) (con11) {3 paths};
            \node[draw,rounded rectangle,minimum width=2.5cm,minimum height=1cm] at (1,-6.25) (res10) {\sK};
            \node[draw,rounded rectangle,minimum width=2.5cm,minimum height=1cm] at (6,-6.25) (res11) {\Kt};
            
            \draw[-{Stealth[length=3mm]}] (-1,11.5) -| (con1);
            \draw[-{Stealth[length=3mm]}] (con1) -| (res1)
    node[pos=0.25,fill=white,inner sep=0]{Yes};
            \draw[-{Stealth[length=3mm]}] (con1) -| (con2)
    node[pos=0.25,fill=white,inner sep=0]{No};
            \draw[-{Stealth[length=3mm]}] (con2) -| (con3)
    node[pos=0.25,fill=white,inner sep=0]{Yes};
            \draw[-{Stealth[length=3mm]}] (con2) -| (con4)
    node[pos=0.25,fill=white,inner sep=0]{No};
            \draw[-{Stealth[length=3mm]}] (con3) -| (res2)
    node[pos=0.25,fill=white,inner sep=0]{Yes};
            \draw[-{Stealth[length=3mm]}] (con3) -| (res3)
    node[pos=0.25,fill=white,inner sep=0]{No};
            \draw[-{Stealth[length=3mm]}] (con4) -| (con6)
    node[pos=0.25,fill=white,inner sep=0]{Yes};
            \draw[-{Stealth[length=3mm]}] (con4) -| (con8)
    node[pos=0.25,fill=white,inner sep=0]{No};
            \draw[-{Stealth[length=3mm]}] (con6) -| (con9)
    node[pos=0.1,fill=white,inner sep=0]{Yes};
            \draw[-{Stealth[length=3mm]}] (con6) -| (con10)
    node[pos=0.25,fill=white,inner sep=0]{No};
            \draw[-{Stealth[length=3mm]}] (con9) -| (res4)
    node[pos=0.25,fill=white,inner sep=0]{Yes};
            \draw[-{Stealth[length=3mm]}] (con9) -| (res5)
    node[pos=0.25,fill=white,inner sep=0]{No};
            \draw[-{Stealth[length=3mm]}] (con10) -| (res6)
    node[pos=0.25,fill=white,inner sep=0]{Yes};
            \draw[-{Stealth[length=3mm]}] (con10) -| (res7)
    node[pos=0.25,fill=white,inner sep=0]{No};
            \draw[-{Stealth[length=3mm]}] (con8) -| (con7)
    node[pos=0.1,fill=white,inner sep=0]{Yes};
            \draw[-{Stealth[length=3mm]}] (con8) -| (res8)
    node[pos=0.25,fill=white,inner sep=0]{No};
            \draw[-{Stealth[length=3mm]}] (con7) -| (res9)
    node[pos=0.25,fill=white,inner sep=0]{Yes};
            \draw[-{Stealth[length=3mm]}] (con7) -| (con11)
    node[pos=0.25,fill=white,inner sep=0]{No};
            \draw[-{Stealth[length=3mm]}] (con11) -| (res10)
    node[pos=0.25,fill=white,inner sep=0]{Yes};
            \draw[-{Stealth[length=3mm]}] (con11) -| (res11)
    node[pos=0.25,fill=white,inner sep=0]{No};
        \end{tikzpicture}
        \caption{Overview of the case distinctions in the proof of \cref{thm:structure} and which rooted minor is found in each case.}
        \label{fig:overview}
    \end{figure}%
    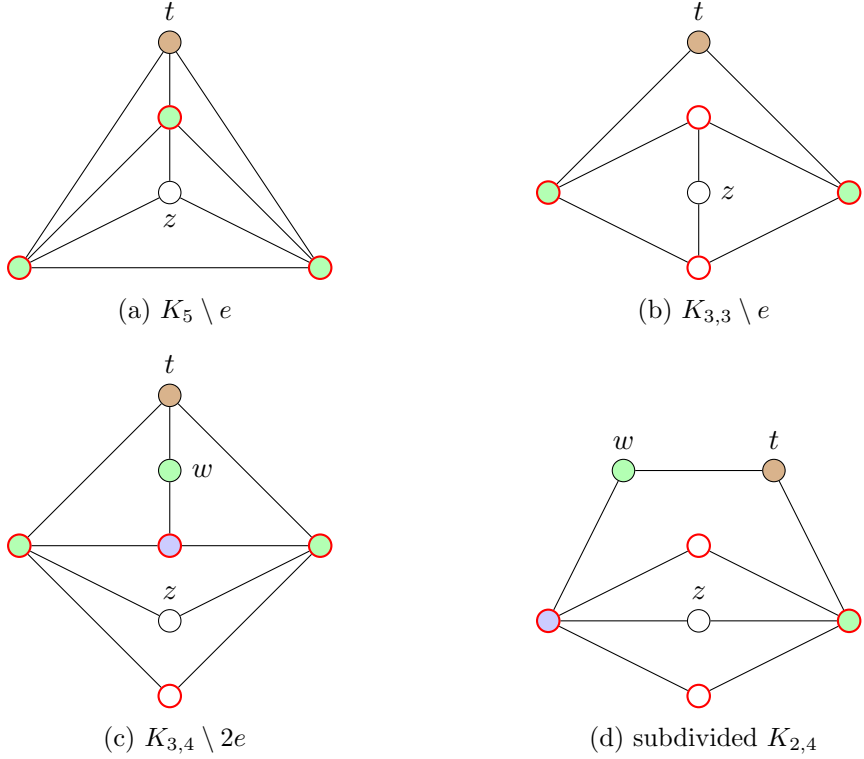
\begin{figure}[t]
        \centering
        \hfill
        \begin{subfigure}{0.27\textwidth}
            \begin{tikzpicture}
                \node[ver,label=below:$z$] at(-3,5) (s) {};
                \node[te] at(-3,7) (t) {};
                \node[por,draw=red,thick] at(-5,4) (u) {} edge(s) edge(t);
                \node[por,draw=red,thick] at(-1,4) (v) {} edge(s) edge(t) edge(u);
                \node[por,draw=red,thick] at(-3,6) (w) {} edge(s) edge(t) edge(u) edge(v);
            \end{tikzpicture}
            \caption{\Kf}
        \end{subfigure}
        \hfill
        \begin{subfigure}{0.27\textwidth}
            \begin{tikzpicture}
                \node[te] at(3,7) (t) {};
                \node[por,draw=red,thick] at(1,5) (u) {} edge(t);
                \node[por,draw=red,thick] at(5,5) (v) {} edge(t);
                \node[ver,draw=red,thick] at(3,6) (p) {} edge(u) edge(v);
                \node[ver,draw=red,thick] at(3,4) (q) {} edge(u) edge(v);
                \node[ver,label=right:$z$] at(3,5) {} edge(p) edge(q);            
            \end{tikzpicture}
            \caption{\Kt}
        \end{subfigure}
        \hfill\hfill
        
        \medskip
    
        \hfill
        \begin{subfigure}{0.27\textwidth}
            \begin{tikzpicture}
                \node[te] at(3,2) (t) {};
                \node[por,label=right:$w$] at(3,1) (w) {} edge(t);
                \node[por,draw=red,thick] at(1,0) (u) {} edge(t);
                \node[por,draw=red,thick] at(5,0) (v) {} edge(t);
                \node[ter,draw=red,thick] at(3,0) {} edge(u) edge(v) edge(w);
                \node[ver,label=$z$] at(3,-1) {} edge(u) edge(v);
                \node[ver,draw=red,thick] at(3,-2) {} edge(u) edge(v);
            \end{tikzpicture}
            \caption{\Ktf}
        \end{subfigure}
        \hfill
        \begin{subfigure}{0.27\textwidth}
            \begin{tikzpicture}
                \node[por,label=$w$] at(-4,2) (w) {};
                \node[te] at(-2,2) (t) {} edge(w);
                \node[ter,draw=red,thick] at(-5,0) (u) {} edge(w);
                \node[por,draw=red,thick] at(-1,0) (v) {} edge(t);
                \node[ver,draw=red,thick] at(-3,1) {} edge(u) edge(v);
                \node[ver,label=$z$] at(-3,0) {} edge(u) edge(v);
                \node[ver,draw=red,thick] at(-3,-1) {} edge(u) edge(v);            
            \end{tikzpicture}
            \caption{\sK}
        \end{subfigure}
        \hfill\hfill
        \caption{The four mentioned rooted graphs. Access nodes are green, fracture nodes are blue, and the red border shows the nodes in~$C$.}
        \label{fig:obstructions2}
    \end{figure}
    
    We start with a case distinction based on the size of~$C$.
    Since there is a simple cycle with node set~$C$, it holds that~${|C|\geq 3}$.
    We first deal with the case where~$|C|=3$.
    In this case, note that all three nodes in~$C$ are pairwise adjacent.
    Since~$G$ is \nice, $P=C$ and~$G_z$ contains all three nodes in~$P$, as if one node in~$C$ is (i) neither a fracture node nor an access node or (ii) not contained in~$G_z$, then the link between the two remaining nodes is a separating link (separating~$t$ from~$z$).
    If we now contract everything outside~$G'$ into~$t$ and all nodes in~$V_z$ into~$z$, then the resulting rooted minor is (potentially a supergraph\footnote{In the following, we will not explicitly mention that we potentially remove certain nodes that are not relevant for forming the rooted minor.} of)~$\Kf$.
    As a simple example, consider the node~$y$ in \cref{fig:triangle}.
    Note that all nodes in~$V_z$ induce a connected graph by definition.
    Moreover, no node outside of~$G'$ can be disconnected from~$t$ by removing~$G'$ as each such node is by definition of~$G'$ only connected via a single node~$x$ in~$G'$ (as otherwise~$x$ would belong to the same biconnected component).
    If there is no other path to~$t$, then~$x$ is a cut node in~$G$, contradicting that~$G$ is \nice.
    Now, the three nodes in~$C$, the result of contracting all nodes in~$V_z$, and the result of contracting the outside into~$t$ forms the~\Kf.
    For the sake of readability, we will here and in the following refer to nodes that are the result of a contraction by any name of a node that was incident to one of the contracted links.
    To see that the five mentioned nodes form the \Kf, note that only the link between~$t$ and~$z$ is missing as the three nodes in~$C$ form a triangle and are adjacent to both~$t$ and~$z$ in the resulting graph.
    This concluding the case where~$|C|=3$.
    
    So we assume in the following that~$|C|>3$.
    Before we continue with the remaining cases, we first introduce one more definition, explain what the condition \emph{3 paths} in \cref{fig:overview} is, and show that if this condition is not met, then we can always find the~\Kt{} as a rooted minor.
    Let~$c_p,c_q \in P$ with~$p < q-1$.
    Let~$U'$ and~$L'$ be the two paths between~$c_p$ and~$c_q$ along the outer face of~$G'$, that is, $U' = (c_{p},c_{p+1},\ldots,c_{q-1},c_{q})$ and~$L'=(c_{p},c_{p-1},\ldots,c_1,c_k,c_{k-1},\ldots,c_{q})$.
    Let~$U$ and~$L$ be the same paths without the endpoints~$c_p$ and~$c_q$.
    Note that~$G_z$ contains at least two nodes~$c_a,c_b$ with~$a < b-1$ in~$C$ as otherwise the link~$\{c_a,c_{a+1}\}$ between the two unique nodes in~$G_z$ in~$C$ would be a separating link.
    Moreover, each node in~$C$ is contained in~$U'$ or~$L'$.
    The condition \emph{3 paths} in \cref{fig:overview} refers to the case that both~$c_a$ and~$c_b$ belong to~$U'$ or both to~$L'$.
    If this is not the case, then note that one of the two belongs to~$L$ and the other to~$U$.
    We will show that in this case the rooted graph~\Kt{} is always a minor of~$(G,t)$.
    Assume without loss of generality that~$c_a$ belongs to~$U$ (the other case is completely symmetric).
    Then, we contract the path~$(c_{a+1},c_{a+2},\ldots,c_{b-1})$ (containing~$c_q$) and the path~$(c_{b+1},c_{b+2}\ldots,c_k,c_1,c_2,\ldots,c_{b-1})$ (containing~$c_p$), and all nodes in~$V_z$ into a single node each.
    We also contract every node outside of~$G'$ into~$t$.
    For the same reason as in the case where~$|C|=3$ and since~$(C,E_C)$ is a simple cycle, all of these contractions are valid.
    The resulting rooted graph is the \Kt{} as the nodes~$c_p$ and~$c_q$ have common neighbors~$c_{a}$, $c_b$, and~$t$.
    Moreover, both~$c_a$ and~$c_b$ are adjacent to~$z$ by definition.

    We now continue with the case distinctions.
    We next consider the two cases~$|P|=2$ and~$|P| \geq 3$.
    Note that~$|P| \geq 2$ as~$G$ is biconnected and hence~$t$ cannot be disconnected from the rest of the graph by at most one node removal.
    We first analyze the case where~$|P|\geq 3$.
    We further distinguish between the cases whether~$P$ contains a fracture node or not.
    We start with the case where a fracture node~$c_i \in P$ exists.
    Here we consider the following two subcases: $c_i$ is the only fracture node,~$c_{i-1}$ and~$c_{i+1}$ are both access nodes, and no further nodes exist in~$P$ (we refer to this case as condition \emph{AFA} in \cref{fig:overview}---it stands for ``access, fracture, access'') or there exists a node~$c_j \in P$ such that~$c_i$ and~$c_j$ are not consecutive on~$C$, that is~$|i-j|>1$ and~$\{i,j\} \neq \{1,k\}$.
    Assume first that there is a node~$c_j \in P$ with~$|i-j|>1$ and~$\{i,j\} \neq \{1,k\}$.
    If~$G_z$ contains a node from~$U$ and a node from~$L$, then we have shown above that~\Kt{} is a rooted minor of~$(G,t)$.
    So assume that~$G_z$ contains two nodes~$c_a$ and~$c_b$ with~$a < b-1$ in either of the two paths (we assume without loss of generality~$U$).
    We also assume that~$i < j$ as the other case is symmetric.
    Then, we contract the path~$L$, the path~$(c_i,c_{i+1},\ldots,c_a)$, the path~$(c_{a+1},\ldots,c_{b-1})$, the path~$(c_{b+1},c_{b+2},\ldots,c_{j})$, and all nodes in~$V_z$ into a single node each.
    We also contract every node outside of~$G'$ except for a single neighbor~$w$ of~$c_i$ other than~$t$ and nodes separated by the link~$\{c_i,w\}$ into~$t$.
    Note that such a node~$w$ exists by definition as~$c_i$ is a fracture node.
    Moreover, $w$ is a neighbor of~$t$ in the resulting graph as otherwise~$c_i$ would be a cut node.
    The resulting rooted graph is the \sK{} as the nodes~$c_i$ and~$c_j$ have common neighbors~$c_{a+1}$, $z$, and~$c_{i-1}$.
    Moreover, $c_i$ is adjacent to~$w$ and both~$w$ and~$c_j$ are adjacent to~$t$.
    
    To conclude the case where~$|P| \geq 3$ and~$P$ contains at least one fracture node, we next analyze the case where~$c_i$ is the only fracture node and~$c_{i-1}$ and~$c_{i+1}$ are both access nodes.
    We consider the paths~$L$ and~$U$ between~$c_{i-1}$ and~$c_{i+1}$.
    Note that~$U'=(c_{i-1},c_i,c_{i+1})$ and~$U$ only contains~$c_i$.
    Again, if~$G_z$ contains a node from~$L$ and a node from~$U$, then the no-3-paths argument shows that the~\Kt{} is a rooted minor of~$(G,t)$.
    So assume that~$G_z$ contains two nodes~$c_a$ and~$c_b$ with~$a < b-1$ in~$L'$.
    Let~$w$ be a neighbor of~$c_i$ outside of~$G'$.
    Let without loss of generality be~$i < a$ (the other case is symmetric).
    We contract everything outside~$G'$ except for~$w$ and nodes separated by the link~$\{c_i,w\}$ into~$t$ and everything in~$V_z$ into~$z$.
    We also contract the paths~$(c_{i+1},c_{i+2},\ldots,c_{a-1},c_a)$, $(c_{a+1},c_{a+2},\ldots,c_{b-1})$, and~$(c_{b},c_{b+1},\ldots,c_k,c_1,c_2,\ldots,c_{i-1})$ into a single node each.
    The result is (a supergraph of) the \Ktf{} as~$c_{i-1}$ and~$c_{i+1}$ have common neighbors~$c_i$, $c_{a+1}$, $z$, and~$t$.
    Moreover,~$w$ is adjacent to~$c_i$ and~$t$.
    This concludes the case where~${|P| \geq 3}$ and~$P$ contains at least one fracture node.

    We stay in the case where~$|P| \geq 3$ and now consider the case where~$P$ does not contain any fracture nodes.
    Note that in this case~$G-t$ is biconnected by definition.
    Hence, $G' = G-t$ and~$C$ contains all nodes on the outer face of~$G-t$.
    Since~$P \subseteq C$ by definition, it holds for each~$p_j \in P$ that~$p_j=c_{a_j}$ for some~$a_j \in [k]$.
    Let~$P=\{p_1,p_2,\ldots,p_{|P|}\}$ such that~$a_i < a_j$ whenever~$i < j$.
    We next define \emph{intervals} of~$C$.
    For any~$j \in [|P|-1]$, the~$j$\textsuperscript{th} interval of~$C$ contains all nodes~$c_\ell$ with~$a_j \leq \ell \leq a_{j+1}$.
    The~$|P|$\textsuperscript{th} interval of~$C$ contains all nodes~$c_\ell$ with~$\ell \leq a_1$ or~$\ell \geq a_{|P|}$.
    Note that each node in~$P$ is contained in exactly two intervals and all other nodes in~$C$ are contained in exactly one interval.
    We say that~$(G,t)$ is \emph{interval bound} if the graph~$G_x$ for each~$x \in V \setminus (C \cup \{t\})$ contains nodes from only one interval of~$C$ (it contains at least two nodes from~$C$ as~$G$ is 2-connected).
    
    If~$(G,t)$ is not interval bound, then there exists some node~$z \notin C$ such that~$G_z$ contains nodes from two intervals of~$C$.
    We distinguish between the following three cases:
    (i) $G_z$ contains a node in~$C \setminus P$, (ii) $G_z$ contains at least three nodes in~$P$, and (iii) $G_z$ contains exactly two nodes in~$C$, they both belong to~$P$ and are not in the same interval (not consecutive).
    We start with the case where~$G_z$ contains at least three nodes in~$P$.
    We find the~\Kf{} similar to the case where~$|C|=3$:
    Contracting~$V_z$ into a single node and~$C$ into three nodes (three nodes in~$P$ in~$G_z$) yields the~\Kf{} since the nodes in~$P$ then form a triangle and both~$z$ and~$t$ are adjacent to all three of these nodes.
    So we may assume that~$G_z$ contains at most two nodes in~$P$.
    
    We next consider the case where~$G_z$ contains a node in~$C \setminus P$.
    In this case, the~\Kt{} is a rooted minor of~$(G,t)$.
    Let~$c_i$ be a node in~$G_z$ in~$C \setminus P$ and let~$p_a,p_{a+1} \in P$ be the two nodes in~$P$ in the interval of~$C$ containing~$c_i$.
    The graph~$G_z$ then also contains a node~$c_j$ that does not belong to the unique interval of~$C$ containing~$c_i$ by the choice of~$z$ (recall that we assume~$(G,t)$ to not be interval bound).
    We contract all nodes in~$V_z$ into a single node and the cycle~$(C,E_C)$ into four nodes~$p_a$, $c_i$, $p_{a+1}$, and~$c_j$.
    Note that~$p_a$ and~$p_{a+1}$ have common neighbors~$c_i$, $c_j$, and~$t$ in the resulting graph and~$z$ has neighbors~$c_i$ and~$c_j$, that is, these six nodes form the \Kt.
    
    The last case to consider is that~$G_z$ contains two nodes~$p_a$ and~$p_b$ that are not contained in the same interval.
    Hence~$p_{a+1} \neq p_b$ and~$p_{b+1} \neq p_a$.
    We then contract all nodes in~$V_z$ into a single node and~$(C,E_C)$ into four nodes~$p_a$, $p_{a+1}$, $p_b$ and~$p_{b+1}$.
    Note that~$p_{a+1}$ and~$p_{b+1}$ have common neighbors~$p_a$, $p_b$, and~$t$ and~$z$ has neighbors~$p_a$ and~$p_b$.
    This forms the \Kt.
    So we assume in the following that~$(G,t)$ is interval bound.

    A \emph{shortcut link} is a link~$\{p_i,p_j\}$ with~$|i-j| > 1$ and~$\{i,j\} \neq \{1,|P|\}$.
    If~$G'=G-t$ contains such a shortcut link, then we show that~$(G,t)$ contains the \sK{} as a rooted minor.
    Consider the paths~$U$ and~$L$ between~$p_i$ and~$p_j$.
    Let~$p_a,p_b \in P \setminus \{p_i,p_j\}$ be a node on~$U$ and~$L$, respectively.
    Note that such nodes exists by definition of~$i$ and~$j$.
    Since~$(G,t)$ is interval bound, $G_z$ contains two nodes~$c_u$ and~$c_v$ such that~$u < v-1$ belonging to the same interval of~$C$.
    We assume without loss of generality this interval is on the region between~$p_i$ and~$p_a$ (the other three cases are symmetric).
    We then contract~$V_z$ into a single node and all nodes in~$(C,E_C)$ into five nodes~$p_i$ (containing~$c_u$), $c_{u+1}$, $p_a$ (containing~$c_v$), $p_j$, and~$p_b$.
    Note that~$p_i$ and~$p_a$ have common neighbors~$p_j$, $c_{u+1}$, and~$z$.
    Moreover, $p_i$ is adjacent to~$p_b$ and~$t$ is adjacent to both~$p_a$ and~$p_b$.
    This forms the claimed~\sK.    
    So we assume in the following that~$G$ does not contain any shortcut links.
    
    For any~$j \in [|P|]$, let~$V_j$ be the set of all nodes in the~$j$\textsuperscript{th} interval of~$C$ and all nodes~$y$ such that~$G_y$ contains nodes from that interval.
    Let~$G_j = G[V_j]$.
    Note that if each~$G_j$ is outerplanar, then~$(G,t)$ is a ring of outerplanar graphs as~$(G,t)$ is interval bound and~$G$ does not contain any shortcut links.
    Since we assume that this is not the case, there exists some~$j^*$ such that~$G_{j^*}$ is not outerplanar.
    Let~$z \in V_{j^*}$ be node such that~$z$ does not belong to the outer face of~$G_{j^*}$.
    Let~$p_a$ and~$p_{a+1}$ be the two nodes in~$P$ in~$G_{j^*}$ and let~$C'$ be all nodes on the outer face of~$G_{j^*}$ (using the same embedding as~$G$).
    Let~$V^*_z$ be the set of nodes in the connected component of~$G_{j^*}-C'$ containing~$z$ and let~$G^*_z$ be the subgraph of~$G$ induced by~$V^*_z$ and all nodes adjacent to nodes in~$V^*_z$.
    Let~$U^*$ and~$L^*$ be the two (not necessarily disjoint) paths between~$p_a$ and~$p_{a+1}$ going along the outer face of~$G^*_z$ (excluding~$p_a$ and~$p_{a+1}$).
    Note that both contain at least one node as the link~$\{p_a,p_{a+1}\}$ would be separating otherwise.
   
    If~$G^*_z$ contains nodes from both~$U^*$ and~$L^*$, then the usual no-3-paths argument shows that the~\Kt{} is a rooted minor of~$(G,t)$.
    Otherwise, all nodes in~$G^*_z$ in~$C'$ belong to one of the two paths (where this time~$p_a$ and~$p_{a+1}$ are allowed neighbors).
    We assume without loss of generality~$U^*$.
    Note that there are also two nodes~$c_u,c_v$ with~$u < v-1$ in~$G^*_z$ as otherwise the link between these two nodes is a separating link.
    We show that in this case the \sK{} is a rooted minor of~$(G,t)$.
    We contract all nodes in~$V^*_z$ into a single node, and~$U^*$ and~$L^*$ into four nodes~$p_{a}$ (containing~$c_u$), $c_{u+1}$, $p_{a+1}$ (containing~$c_{v}$), and one node~$d$ from~$L^*$ (containing all nodes in~$L^*$).
    We also contract all nodes in~$G-t$ except for nodes in~$G^*_j$ into a single node~$w$.
    Note that since~$|P| \geq 3$ and~$(C,E_C)$ is a cycle, it holds that $\{w,t\}$, $\{w,p_a\}$, and~$\{w,p_{a+1}\}$ are all links in the resulting graph.
    Note that~$p_a$ and~$p_{a+1}$ have common neighbors~$d$,~$z$, and~$c_{u+1}$.
    Moreover~$p_a$ and~$w$ are adjacent and~$t$ is adjacent to~$w$ and~$p_{a+1}$.\footnote{The links~$\{t,p_{a+1}\}$ and~$\{p_a,p_b\}$ also exist, but they are not needed to form the \sK.}
    This forms the \sK.
    This concludes the case where~$|P| \geq 3$.
    
    The last remaining case to consider is~$|P|=2$.
    Let~$P=\{c_i,c_j\}$ and~${G''=G-\{c_i,c_j,t\}}$.
    Since~$G$ is \nice, there is no link~$\{c_i,c_j\}$ as this link would be separating.
    Let~$V_1,V_2,\ldots,V_\ell$ be the set of nodes in the connected components of~$G''$.
    Since~$(G,t)$ is not a dipole outerplanar graph, at least one graph~$G_i = G[V_i \cup \{c_i,c_j\}]$ is not outerplanar.
    Let~$G_j$ be such a graph and let~$z$ be a node in it that does not belong to the outer face (using the embedding for~$G$).    
    Let~$G'_j$ be the biconnected component of~$G_j$ containing~$z$, let~$C''$ be the set of nodes on the outer face of~$G'_j$, and let~$c'_i$ and~$c'_j$ be the two nodes in~$C''$ such that all paths from~$z$ to~$c_i$ in~$G_j$ go through~$c'_i$ and all paths from~$z$ to~$c_j$ in~$G_j$ go through~$c'_j$ (potentially~$c'_i = c_i$ and/or $c'_j=c_j$).
    See \cref{fig:biconnected} for an example.
    \begin{figure}[t]
        \centering
        \begin{tikzpicture}
            \node[te] at(0,4) (t) {};
            \node[por,label=left:$c_i$] at(-3,0) (a1) {} edge[bend left=20](t);
            \node[ver] at(-2,1) (a2) {} edge(a1);
            \node[ver] at(-2,-1) (a3) {} edge(a1);
            \node[ver,label=$c'_i$] at(-1,0) (a4) {} edge(a2) edge(a3);
            \node[ver,thick,draw=blue] at(0,1) (a5) {} edge(a4);
            \node[ver,thick,draw=blue] at(1,1) (a6) {} edge(a5);
            \node[ver,thick,draw=blue] at(2,1) (a7) {} edge(a6);
            \node[por,label=right:$c_j$] at(3,0) (a8) {} edge(a7) edge[bend right=20](t);
            \node[ver] at(2,-1) (a9) {} edge(a8);
            \node[ver] at(1,-1) (a10) {} edge(a9);
            \node[ver] at(0,-1) (a11) {} edge(a10) edge(a4);

            \node[ver] at(0,0) (y) {} edge(a9) edge(a11);
            \node[ver,label=left:$z$,fill=gray,thick,draw=blue] at(1,0) (z) {} edge(a5) edge(a6);
            \node[ver,fill=gray,thick,draw=blue] at(2,0) {} edge(a6) edge(a7) edge(z);
            
            \node[ver] at(0,2) (c) {} edge[bend right=30](a1) edge[bend left=30](a8);
            \node[ver] at(0,-2) (b1) {} edge[bend left=30](a1) edge[bend right=30](a8);
            \node[ver] at(0,-3) (b2) {} edge[bend left=20](a1) edge[bend right=20](a8) edge(b1);

            \node[ellipse,draw,minimum width=2em,minimum height=1.8em,inner ysep=0pt, label=$V_1$] at(0,2) {};
            \node[ellipse,draw,minimum width=2em,minimum height=1.6cm,inner ysep=0pt, label=below right:$V_3$] at(0,-2.5) {};
            \node[very thick,dashed,rectangle,draw,minimum width=8cm,minimum height=3cm,inner ysep=0pt,label=left:$G_j$] at(0,0) {};
            \node[very thick,dashed,rectangle,draw,minimum width=5cm,minimum height=2.8cm,inner ysep=0pt] at(1.2,0) {};
            \node at(3.3,-1) {$G'_j$};
        \end{tikzpicture}
        \caption{An example of the definition of~$G'_j$ (the graph within the small dashed rectangle). Here,~$c'_j=c_j$ holds. The nodes in~$V''_z$ are filled gray and nodes in~$G''_z$ are marked with a blue border.}
        \label{fig:biconnected}
    \end{figure}
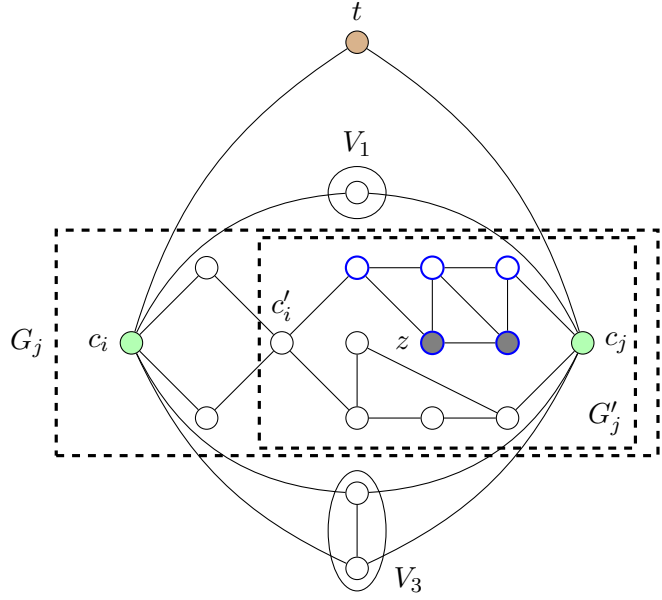%
    Let~$V''_z$ be the set of all nodes in the connected component of~$G'_j - C''$ containing~$z$ and let~$G''_z$ be the graph induces by~$V''_z$ and all nodes that are adjacent to nodes in~$V''_z$ in~$G$.
    Let~$U$ and~$L$ be the two (internally disjoint) paths between~$c'_i$ and~$c'_j$ along the outer face of~$G'_j$.
    Since~$\{c'_i,c'_j\}$ would be separating, each of~$U$ and~$L$ contains at least three nodes.
    If~$G''_z$ contains a node from~$U$ (except for~$c'_i$ or~$c'_j$) and a node from~$L$ (also excluding endpoints), then the usual no-3-paths argument shows that the \Kt{} is a rooted minor of~$(G,t)$.
    
    So assume that~$G''_z$ contains two nodes~$c'_a$ and~$c'_b$ from either of the two paths.
    Note that two such nodes with~$a < b-1$ exist as otherwise the link~$\{c'_a,c'_b\}$ between the two unique nodes in~$G''_z$ in~$C''$ would be separating.
    We assume without loss of generality that~$c'_a$ and~$c'_b$ belong to~$U$ as the other case is analogous.
    Let~$c'_d$ be a node in~$L$ other than~$c'_i$ and~$c'_j$.
    To conclude the entire proof, we show that the \sK{} is a rooted minor of~$(G,t)$ in this last case.
    We ignore any nodes that are not contained in~$G_j$ and that are not~$t$.
    We contract every node in~$G_j$ that is not contained in~$G'_j$ into either~$c_i$ or~$c_j$.
    We also contract~$V''_z$ into a single node.
    If~$c'_i \neq c_i$ or~$c'_j \neq c_j$, then we contract~$C''$ into four nodes~$c'_a$ (containing~$c'_i$), $c'_{a+1}$, $c'_{b}$ (containing~$c'_j$), and~$c'_d$.
    Note that~$c'_a$ and~$c'_b$ have common neighbors~$c'_{a+1}$, $c'_d$, and~$z$.
    Moreover, there is an additional path of length at least three from~$c'_a$ to~$c'_b$ that contains~$c_i$ or~$c_j$ (or both).
    Contracting this path into a path of length exactly three forms the claimed \sK.
    
    If~$c'_i=c_i$ and~$c'_j=c_j$, then note that there needs to be a link~$\{c'_u,c'_\ell\}$ where~$c'_u$ is a node on~$U$, $c'_\ell$ is a node on~$L$, and neither of these two nodes is~$c_i$ or~$c_j$ as otherwise~$U$ and~$L$ would only be connected through~$c_i$ and~$c_j$, contradicting the fact that~$G_j-\{c_i,c_j\}$ is connected.
    Note that~$c'_u$ is part of either the subpath from~$c'_i$ to~$c'_a$ or the subpath from~$c'_b$ to~$c'_j$ of~$U$ as any link from another node of~$U$ to a node on~$L$ would pass through some link of~$G''_z$ or some link incident to~$t$, contradicting the planarity of the chosen embedding.
    We assume without loss of generality that~$c'_u$ is part of the subpath from~$c'_i$ to~$c'_a$.
    Since~$c'_i = c_i$ and~$c'_u \neq c_i$, it holds that~$c'_u \neq c'_i$.
    We again contract all nodes in~$V''_z$ into a single node and all nodes in~$C''$ into five nodes~$c'_i$, $c'_a$ (containing~$c_u$), $c'_{a+1}$, $c'_j$ (containing~$c'_b$), and~$c'_\ell$.
    Note that~$c'_a$ and~$c'_j$ have common neighbors~$c'_{a+1}$, $z$, and~$c'_\ell$ (as~$\{c'_u,c'_\ell\} \in E$).
    Combined with the path~$(c'_a,c'_i,t,c'_j)$, this forms the \sK.
    This concludes the proof.    
\end{proof}

\section{Algorithms}
\label{sec:alg}
In this section, we show how to combine the results from all previous sections to show our second main result.
We restate the result here for convenience.

\algo*

\begin{proof}
    Let~$(G,t)$ be a rooted graph.
    We will first show how to solve \prd{} in~$O(n)$ time.
    We first check whether the connected component of~$G$ containing~$t$ is planar in~$O(n)$ time using \cref{prop:embedding}.
    If the component is not planar, then~$(G,t)$ is not perfectly resilient by \cref{prop:subgraph,prop:planar}.
    Without loss of generality, we can hence assume that~$G$ is connected to avoid introducing a new name for the connected component of~$G$ containing~$t$.
    Next, we check whether the branchwidth of~$G$ is at most~$9$ and if so, then we compute a branch decomposition of width at most~$9$ in~$O(n)$ time~\cite{BT97}.
    If the branchwidth is at most~$9$, then for each graph~$H \in \{\Kf,\Kt,\sK,\Ktf\}$, we check whether~$(H,t)$ is a rooted minor of~$(G,t)$ in~$O(n)$ time using \cref{cor:algforbw}.
    By~\cref{thm:structure}, $(G,t)$ is perfectly resilient if and only if none of the four is a rooted minor of~$(G,t)$.
    If the branchwidth is at least~$10$, then we show that~$(G,t)$ is not perfectly resilient.
    By \cref{cor:branchwidth}, $G$ contains the $4 \times 4$ grid as a minor.
    Since~$G$ is connected, if~$t$ is not already contained in the $4 \times 4$ grid after contractions, then it can be contracted into one of the 16 nodes of the~$4 \times 4$ grid, so~$(G,t)$ contains as a rooted minor the~$4 \times 4$ grid where~$t$ is any of the nodes.
    A simple case distinction on where the node~$t$ is reveals that the~\Kf{} is a rooted minor of~$(G,t)$ in each case.
    See \cref{fig:grid} for an illustration of this fact.
    \begin{figure}[t]
        \centering
        \begin{tikzpicture}
            \foreach \i in {0,1,2,3}{
                \foreach \j in {0,1,2,3}{
                    \node[ver] at(\j,\i) (v\i\j) {};
                }
            }
            \foreach \i in {0,1,2,3}{
                \foreach \j in {0,1,2}{
                    \pgfmathtruncatemacro\k{\j+1};
                    \draw (v\i\j) to (v\i\k);
                    \draw (v\j\i) to (v\k\i);
                }
            }
            
            \node[ver,label=below:$s$] at(8,1) (s) {};
            \node[te] at(8,3) (t) {};
            \node[por,label=left:$u$] at(6,0) (u) {} edge(s) edge(t);
            \node[por,label=right:$v$] at(10,0) (v) {} edge(s) edge(t) edge(u);
            \node[por,label=right:$w$] at(8,2) (w) {} edge(s) edge(t) edge(u) edge(v);
    
        \end{tikzpicture}%
        \vspace*{1cm}
        \begin{tikzpicture}
            % t in center
            \foreach \i in {0,1,2,3}{
                \foreach \j in {0,1,2,3}{
                    \node[ver] at(\j,\i) (v\i\j) {};
                }
            }
            \node[te] at(1,2) {};
            \node[label=below:$s$] at(1,0) {};
            \node[label=above right:$w$] at(1,1) {};
            \node at(0,3.5) {$u$};
            \node at(2,3.5) {$v$};
    
            \draw (-.3,-.3) -- (-.3,3.3) -- (1.3,3.3) -- (1.3,2.7) -- (0.3,2.7) -- (.3,-.3) -- cycle;
            \draw (1.7,-.3) -- (1.7,3.3) -- (2.3,3.3) -- (2.3,-.3) -- cycle;
    
            \draw[dashed] (v00) to (v10);
            \draw[dashed] (v10) to (v20);
            \draw[dashed] (v20) to (v30);
            \draw[dashed] (v30) to (v31);
            \draw[dashed] (v02) to (v12);
            \draw[dashed] (v12) to (v22);
            \draw[dashed] (v22) to (v32);
            
            \draw (v00) to (v01);
            \draw (v10) to (v11);
            \draw (v20) to (v21);
            \draw (v01) to (v02);
            \draw (v11) to (v12);
            \draw (v21) to (v22);
            \draw (v31) to (v32);
            \draw (v01) to (v11);
            \draw (v11) to (v21);
            
            %t on edge
            \foreach \i in {0,1,2,3}{
                \foreach \j in {5,6,7,8}{
                    \node[ver] at(\j,\i) (v\i\j) {};
                }
            }
            \node[te] at(6,3) {};
            \node[label=below:$s$] at(6,1) {};
            \node[label=above right:$w$] at(6,2) {};
            \node at(5,3.5) {$u$};
            \node at(7,3.5) {$v$};
    
            \draw (4.7,-.3) -- (4.7,3.3) -- (5.3,3.3) -- (5.3,.3) -- (6.3,.3) -- (6.3,-.3) -- cycle;
            \draw (6.7,-.3) -- (6.7,3.3) -- (7.3,3.3) -- (7.3,-.3) -- cycle;
    
            \draw[dashed] (v05) to (v15);
            \draw[dashed] (v15) to (v25);
            \draw[dashed] (v25) to (v35);
            \draw[dashed] (v05) to (v06);
            \draw[dashed] (v07) to (v17);
            \draw[dashed] (v17) to (v27);
            \draw[dashed] (v27) to (v37);
            
            \draw (v15) to (v16);
            \draw (v25) to (v26);
            \draw (v35) to (v36);
            \draw (v06) to (v07);
            \draw (v16) to (v17);
            \draw (v26) to (v27);
            \draw (v36) to (v37);
            \draw (v16) to (v26);
            \draw (v26) to (v36);
    
            %t on corner
            \foreach \i in {0,1,2,3}{
                \foreach \j in {10,11,12,13}{
                    \node[ver] at(\j,\i) (v\i\j) {};
                }
            }
            \node[circle, inner sep=3pt, draw,fill=brown!60!white] at(10,3) {};
            \node at(10,3.5) {$t$};
            \node[label=below:$s$] at(11,1) {};
            \node[label=above right:$w$] at(11,2) {};
            \node at(9.5,2) {$u$};
            \node at(12,3.5) {$v$};
    
            \draw (9.7,-.3) -- (9.7,2.3) -- (10.3,2.3) -- (10.3,.3) -- (11.3,.3) -- (11.3,-.3) -- cycle;
            \draw (11.7,-.3) -- (11.7,3.3) -- (12.3,3.3) -- (12.3,-.3) -- cycle;
            \draw (9.7,2.7) -- (9.7,3.3) -- (11.3,3.3) -- (11.3,2.7) -- cycle;
    
            \draw[dashed] (v010) to (v011);
            \draw[dashed] (v010) to (v110);
            \draw[dashed] (v110) to (v210);
            \draw[dashed] (v310) to (v311);
            \draw[dashed] (v012) to (v112);
            \draw[dashed] (v112) to (v212);
            \draw[dashed] (v212) to (v312);
            
            \draw (v110) to (v111);
            \draw (v210) to (v211);
            \draw (v210) to (v310);
            \draw (v011) to (v012);
            \draw (v111) to (v112);
            \draw (v211) to (v212);
            \draw (v311) to (v312);
            \draw (v111) to (v211);
            \draw (v211) to (v311);
        \end{tikzpicture}
        \caption{The $4 \times 4$ grid (top left) and the~\Kf{} (top right). The bottom row shows how the~\Kf{} is a rooted minor of the $4 \times 4$ grid for all possible placements of~$t$: In the center (left), on an edge (middle), or in the corner (right). Note that due to the symmetry of the~$4 \times 4$ grid, this covers all possible cases. The boxes indicate which nodes are contracted into a single node (along dashed links in the grid) and the regular links (which are also part of the grid) depict the links in the~\Kf.}
        \label{fig:grid}
    \end{figure}
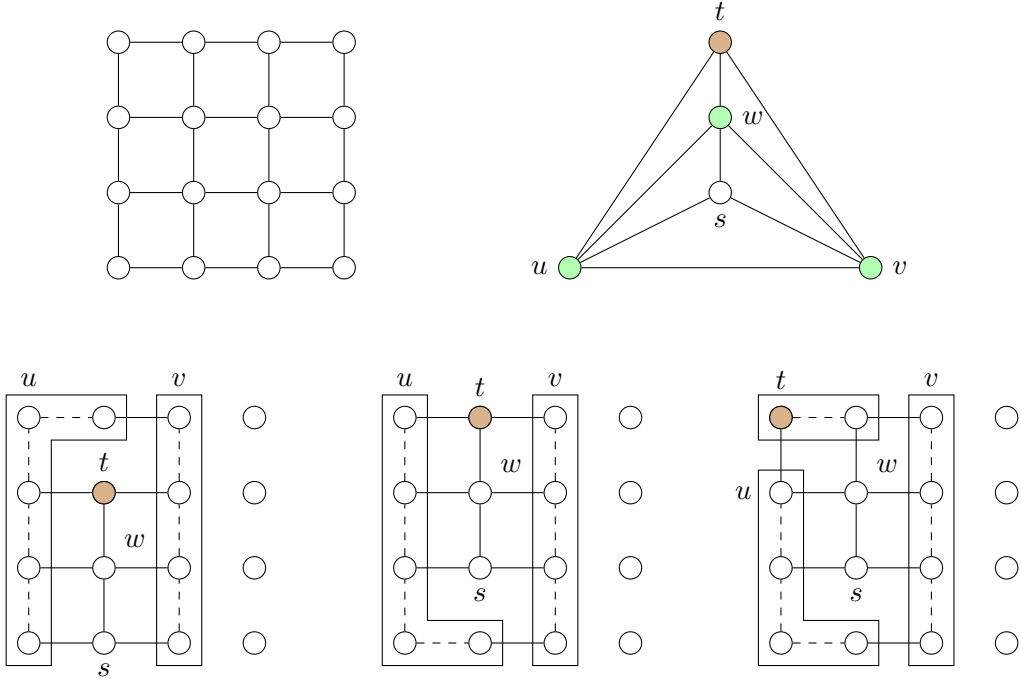%
    This concludes the proof for \prd.

    We next show how to solve \prs{} in~$O(nm)$ time using skipping forwarding patterns.
    We first use \cref{cor:pre1syn} to compute the connected component of~$G$ containing~$t$ in~$O(n)$ time.
    We will show in the following how to construct a perfectly resilient skipping forwarding pattern for it in~$O(n^2)$ time given that it is perfectly resilient.
    The corollary then gives us a perfectly resilient skipping forwarding pattern for~$(G,t)$ in~$O(nm)$ time.
    If it is not perfectly resilient, then we will detect this later.

    For the sake of readability, we will call the output of \cref{cor:pre1syn} still~$(G,t)$.
    We can now assume that~$(G,t)$ is connected and planar.
    We then use \cref{prop:pre2syn} to construct a set of some number~$k$ of rooted minors~$(G_i,t_i)$ of~$(G,t)$ whose total size is in~$O(n)$ and where each graph is biconnected and planar.
    We will show in the following how to construct a perfectly resilient skipping forwarding pattern for each rooted minor~$(G_i,t_i)$ in~$O(|G_i|^2)$ time (if they are all perfectly resilient).
    \cref{prop:pre2syn} then constructs a perfectly resilient skipping forwarding pattern for~$(G,t)$ in~$O(n^2)$ time.
    If at least one of the rooted graphs is not perfectly resilient, then we will detect this and output that the entire rooted graph is not perfectly resilient.
    This is correct by \cref{prop:pre2dec}.
    Since~$O(\sum_{i=1}^k |G_i|^2) \subseteq O((\sum_{i=1}^k |G_i|)^2) = O(n^2)$, this will conclude the proof.
    
    So it remains to show that if~$(G,t)$ is a perfectly resilient and~$G$ is biconnected and planar, then we can construct a perfectly resilient skipping forwarding pattern for it in~$O(n^2)$ time and if~$(G,t)$ is not perfectly resilient, then we need to detect this.
    We now compute the set~$S$ of separating links in~$G$ in~$O(n^2)$ time using \cref{obs:computeS}.
    If we can construct a perfectly resilient right-hand forwarding pattern and a corresponding planar embedding for~$(G \setminus S,t)$, then \cref{prop:pre3syn} constructs a perfectly resilient skipping forwarding pattern for~$(G,t)$ in~$O(n^2)$ time.
    To construct such a pattern for~$(G \setminus F,t)$, we observe that~$(G \setminus S,t)$ does not contain any separating links by \cref{lem:onego} and is therefore \nice.
    Hence, \cref{thm:structure} states that~$(G,t)$ is not perfectly resilient, ${G-t}$ is outerplanar, $(G,t)$ is a dipole outerplanar graph, or~$(G,t)$ is a ring of outerplanar graphs.

    Testing whether~$G-t$ is outerplanar and computing an outerplanar embedding if this is the case takes~$O(n)$ time by \cref{cor:outerplanar}.
    By \cref{thm:gt}, we can compute a perfectly resilient right-hand forwarding pattern for this embedding in~$O(n^2)$ time.
    If~$G-t$ is not outerplanar, then we can check whether~$(G,t)$ is a dipole planar graph in~$O(n)$ time by checking whether~$|N(t)|=2$.
    If so, then let~$u$ and~$v$ be the neighbors of~$t$.
    We compute all connected components of~$G-\{u,v,t\}$ in~$O(n)$ time and then check whether each is outerplanar in~$O(n)$ time again using \cref{cor:outerplanar}.
    By \cref{prop:dipole}, if~$(G,t)$ is a dipole outerplanar graph, then we can compute a planar embedding and a corresponding perfectly resilient right-hand forwarding pattern for it in~$O(n^2)$ time.
    Finally, we test whether~$(G,t)$ is a ring of outerplanar graphs.
    We first check that~$|N(t)| \geq 3$.
    If this is the case, then we compute~$G - (N(t) \cup \{t\})$ and all connected components in it in~$O(n)$ time.
    We can also check for each node~$v \in N(v)$, which connected components they connect to.
    Starting from an arbitrary pair that are connected to a common connected component, we can build the ring of outerplanar graphs,a planar embedding for it, and the corresponding perfectly resilient right-hand forwarding pattern in~$O(n^2)$ time using \cref{prop:ring}.
    If none of the above cases apply, then~$(G,t)$ is not perfectly resilient by \cref{thm:structure}.
    This concludes the proof.
\end{proof}

\section{Conclusion}

We charted a complete landscape of when perfect resilience can be achieved.
We also showed that
both the decision problem of whether a given instance is perfectly resilient as well as the
synthesis problem of constructing perfectly resilient rerouting tables in case such tables exist, can be solved efficiently.
Specifically, we showed that \prd{} and \prs{} can be solved in~$O(n)$ time and~$O(nm)$ time, respectively.
These are both optimal as long as skipping forwarding patterns are considered for~\prs{} (since the output for such patterns can have size~$\Theta(nm)$).
While the analysis is quite intricate, the actual algorithms are surprisingly simple.
In the case of \prs, the running time does not hide any huge constants and the algorithm does not rely on any complicated black-box results.
This makes it particularly promising for implementation and a practical evaluation on real-world networks.
In the case of \prd, we rely on existing algorithms for determining whether a graph of bounded branchwidth (at most 9) contains a rooted minor~$(H,t)$ with~$H \in \{\Kf,\,\Kt,\,\Ktf,\,\sK\}$.
It would be interesting to investigate whether the relatively simple structure of these four rooted graphs allows for simpler and faster algorithms than the general case.
This would allow us to eliminate the hidden constants in our~$O(n)$-time algorithm caused by calling the above algorithm as a subroutine.

Note that only a subset of (rooted) planar graphs are perfectly resilient. While some communication network types are indeed sparse~\cite{pignolet2017tomographic}, others are not. A very interesting avenue for future research is to study more powerful rerouting models like \textsc{Perfect Resilience Synthesis with Source Matching}, where rerouting rules can additionally depend on the source in addition to the in-port and the target, or \textsc{Perfect Resilience Synthesis with Header Rewriting}, where rerouting rules can additionally rewrite a constant number of packet header bits.
The hope is that these more powerful models allow larger classes of networks to provide perfect resilience.
We believe that our approach is versatile enough to also be used for these related problems.
In particular, all of our preprocessing steps generalize directly to \textsc{Perfect Resilience Synthesis with Source Matching}.
Additionally, it is already known that non-planar connected instances can be perfectly resilient if source matching is permitted.
The forbidden rooted minors and the classes of rooted graphs that allow for perfectly resilient forwarding patterns thus change significantly in this setting and their characterization remains a practically and theoretically well motivated major open problem.

Our work also shows that skipping priority lists are as powerful as arbitrary forwarding functions in the context of perfect resilience.
Whether this is also true in other contexts like ideal resilience remains an intriguing open problem.
Also the question whether ideal resilience can always be achieved, that is, a $k$-link-connected graph is always resilient against~$k-1$ link failures, remains a major open problem.
We note that our preprocessing step ensuring biconnectivity generalizes to ideal resilience and it can therefore be assumed that the input graph is~$2$-node-connected as well as~$k$-link-connected.
It would also be interesting to settle the complexity questions regarding ideal resilience.
Even if ideal resilience can always be achieved, a proof of this fact might be non-constructive or only imply an exponential-time algorithm to compute an ideally resilient forwarding pattern.

\section*{Acknowledgments}
The authors would like to thank Dario Giuliano Cavallaro and Jiří Srba for fruitful discussions. 
This research is supported by the German Research Foundation (DFG), SPP 2378 (Resilience in Connected Worlds: Mastering Failures, Overload, Attacks, and the Unexpected), ReNO-2: Unterstützung von Netzwerkbetreibern durch ML -- Ein kognitiver Ansatz (project number 511099228).

\bibliographystyle{alpha}
\bibliography{ref.bib}

\end{document}